\documentclass[times]{oupau}
\usepackage{graphicx,txfonts,bm,units,latexsym,yfonts}
\usepackage[table]{xcolor}

\begin{document}

\runningheads{Prodan}{Quantum transport: A study based on operator algebras}
\title{Quantum transport in disordered systems under magnetic fields: A study based on operator algebras}
\author{Emil Prodan}
\address{ Physics Department \\
              Yeshiva University \\
              245 Lexington Av.\\
              New York, NY 10016, USA}

\begin{abstract}
The linear conductivity tensor for generic homogeneous, microscopic quantum models was formulated as a noncommutative Kubo formula in Refs.~\cite{BELLISSARD:1994xj,Schulz-Baldes:1998vm,Schulz-Baldes:1998oq}. This formula was derived directly in the thermodynamic limit, within the framework of $C^*$-algebras and noncommutative calculi defined over infinite spaces. As such, the numerical implementation of the formalism encountered fundamental obstacles. The present work defines a $C^*$-algebra and an approximate noncommutative calculus over a finite real-space torus, which naturally leads to an approximate finite-volume noncommutative Kubo formula, amenable on a computer. For finite temperatures and dissipation, it is shown that this approximate formula converges exponentially fast to its thermodynamic limit, which is the exact noncommutative Kubo formula. The approximate noncommutative Kubo formula is then deconstructed to a form that is implementable on a computer and simulations of the quantum transport in a 2-dimensional disordered lattice gas in a magnetic field are presented. 
\end{abstract}
\keywords{Quantum transport, $C^*$-algebra, noncommutative Kubo formula, disorder, magnetic fields, IQHE}
\received{XXX}
\maketitle

\section{Introduction}\label{intro}

The purpose of this work is to demonstrate that the noncommutative Kubo formula \cite{BELLISSARD:1994xj,Schulz-Baldes:1998vm,Schulz-Baldes:1998oq,Bellissard:2000lj,Spehner:2001xr,Bellissard:2002zd,BellissardLectNotesPhys2003cy,BoucletaJFuncAnalysis2005vi} for the transport coefficients of aperiodic solids in magnetic fields can be efficiently evaluated on a computer. Our  main message is that the noncommutative Kubo formula, which is now well known for its pivotal role in the theoretical foundation of charge transport in aperiodic systems \cite{Schulz-Baldes:1998vm,BarbarouxHenriPoinacre1999mx,Bellissard:2000lj,Bellissard:2000hk,Bellissard:2002zd,BellissardLectNotesPhys2003cy,NakanoRevMathPhys2002bn}, can also provide an effective numerical solution to the difficult problem of computing the transport coefficients for realistic quantum models of aperiodic systems. For disordered quantum lattice systems in magnetic fields, we derive a ``canonical" finite-volume approximation to the noncommutative Kubo formula that can be efficiently evaluated on a computer. The advantage of this approximate formula is that it does not require twisted boundary conditions and integration over a Brillouin torus but only one (periodic-) boundary condition, and yet the convergence to its thermodynamic limit, which is the exact noncommutative Kubo formula, happens exponentially fast.

The transport measurements have always been regarded as fine tools for probing and understanding the physical properties of solids but there have been particular times in solid state research when the transport measurements have taken the central stage. This was certainly the case when the Integer \cite{Klitzing:1980vh} and Fractional \cite{TSUI:1982mx} Quantum Hall Effects were discovered, or when the quasicrystals were discovered \cite{ShechtmanPRL1984vb}. The recent discovery of the topological insulators \cite{HALDANE:1988rh,Kane:2005np,Kane:2005zw,Bernevig:2006hl,Koenig:2007ko,Moore:2007ew,Fu:2007vs,Hsieh:2008vm} has brought again the transport measurements to the central stage. These materials are postulated to have non-vanishing transport coefficients even in the presence of strong disorder. A concentrated experimental effort towards proving this principle has generated an incredible amount of fine transport data for these materials. However, the goal has not been achieved yet, despite of several years of progress in sample preparation and characterization. As such, there is an urgent need to ``reverse engineer" the available experimental transport data to get a better understanding of the microscopic structure of the samples. This is contingent on our ability to simulate the quantum transport for disordered materials in the presence of magnetic fields, based on microscopic quantum models. Even though we do not present any simulations for topological insulators, we wanted to bring the subject into the reader's attention because this was the primary motivation behind our effort. Simulations for topological insulators, using the formalism developed in this work, are reported in Ref.~\cite{Xue2012fh}.

The noncommutative Kubo formula for aperiodic systems appeared in Ref.~\cite{BELLISSARD:1994xj}, where the context was that of the Integer Quantum Hall Effect. It played a crucial role in explaining this effect. In the same reference, the authors derived the zero temperature limit of the Hall conductivity and linked it to the noncommutative Chern invariant. In the past several years, we have evaluated the noncommutative Chern invariant on a computer for several disordered quantum lattice models \cite{Prodan2010ew,Shulman2010cy,Prodan2011vy,ProdanJPhysA2011xk,XuPRB2012vu}. These simulations revealed to us the extraordinary efficiency of this formula, which enabled us to compute this invariant for extremely large system sizes. We were also impressed with the accuracy of the approach for the quantization of the Chern invariant was observed to occur with better than five digits of precision (in the presence of strong disorder!), whenever the simulation box exceeded the localization length of the system. 

The results mentioned above stimulated our interest in the finite temperature noncommutative Kubo formula for charge transport. As it is always the case, on a computer we can only simulate finite systems. The noncommutative Kubo formula has earned many accolades, among which the fact that it explicitly tells us what the thermodynamic limit of the linear conductivity is. However, from a computational point of view, this formulation in the thermodynamic limit presents a major difficulty because many elements appearing in this formula do not have a canonical equivalent at finite volumes. This brings up a more general question, namely, how to accurately compute correlation functions for aperiodic systems on a finite volume? By correlation function we mean the expected values of products of observables.\footnote{Since we are considering only the finite temperature regime, the observables in these products can be assumed to come from the analytic functional calculus with the Hamiltonian, which is assumed short-range.} 

The solution to the above question can be constructed in a $C^*$-algebraic setting. For this, we developed the concept of approximating $C^*$-algebras, which are simplified versions of some ``exact" $C^*$-algebras, with the essential requirement that approximate morphisms can be established between the exact and the approximating structures. These approximate morphisms can be used to transfer the correlation functions from the exact algebra to the approximate one and to ultimately compute the error bounds for this approximation process. Our strategy will be to apply the  concept of approximating algebras in two steps as illustrated in Fig.~\ref{ApproxDiagram}. The exact $C^*$-algebra is that of the observables defined in the presence of a uniform magnetic field (described by the antisymmetric tensor ${\bm F}$) on the entire $\mathbb{Z}^d$ lattice and on the entire disorder configuration space $\Omega$: $C^*(\Omega\times \mathbb{Z}^d,{\bm F})$. The first approximating algebra is that of the observables defined on the entire lattice but only on those disorder configurations leading to periodic potentials with a repeating cell $\Lambda$: $C^*(\Omega^\Lambda_{\mathrm{per}} \times \mathbb{Z}^d,{\bm F})$. The differential calculus transfers automatically over $C^*(\Omega^\Lambda_{\mathrm{per}} \times \mathbb{Z}^d,{\bm F})$ but a new trace is required. For the second step, $C^*(\Omega^\Lambda_{\mathrm{per}} \times \mathbb{Z}^d,{\bm F})$ becomes the exact algebra and the approximating algebra is that of the observables defined on a real-space torus of volume $\Lambda$ and on a specific disorder configuration space $\Omega_\Lambda$: $C^*(\Omega_\Lambda \times \mathbb{T}_d,{\bm F})$. For the latter algebra we need to define an approximate differential calculus and a new trace. The approximate morphisms provide a canonical way to approximate the correlation functions, in particular, the non-commutative Kubo formula. For both approximating steps, we derive explicit error bounds on the finite-volume approximate Kubo formula and establish that the errors vanish exponentially fast as $\Lambda$ is taken to infinity. This is the main result of our work.

\begin{figure}
\center
  \includegraphics[width=10cm]{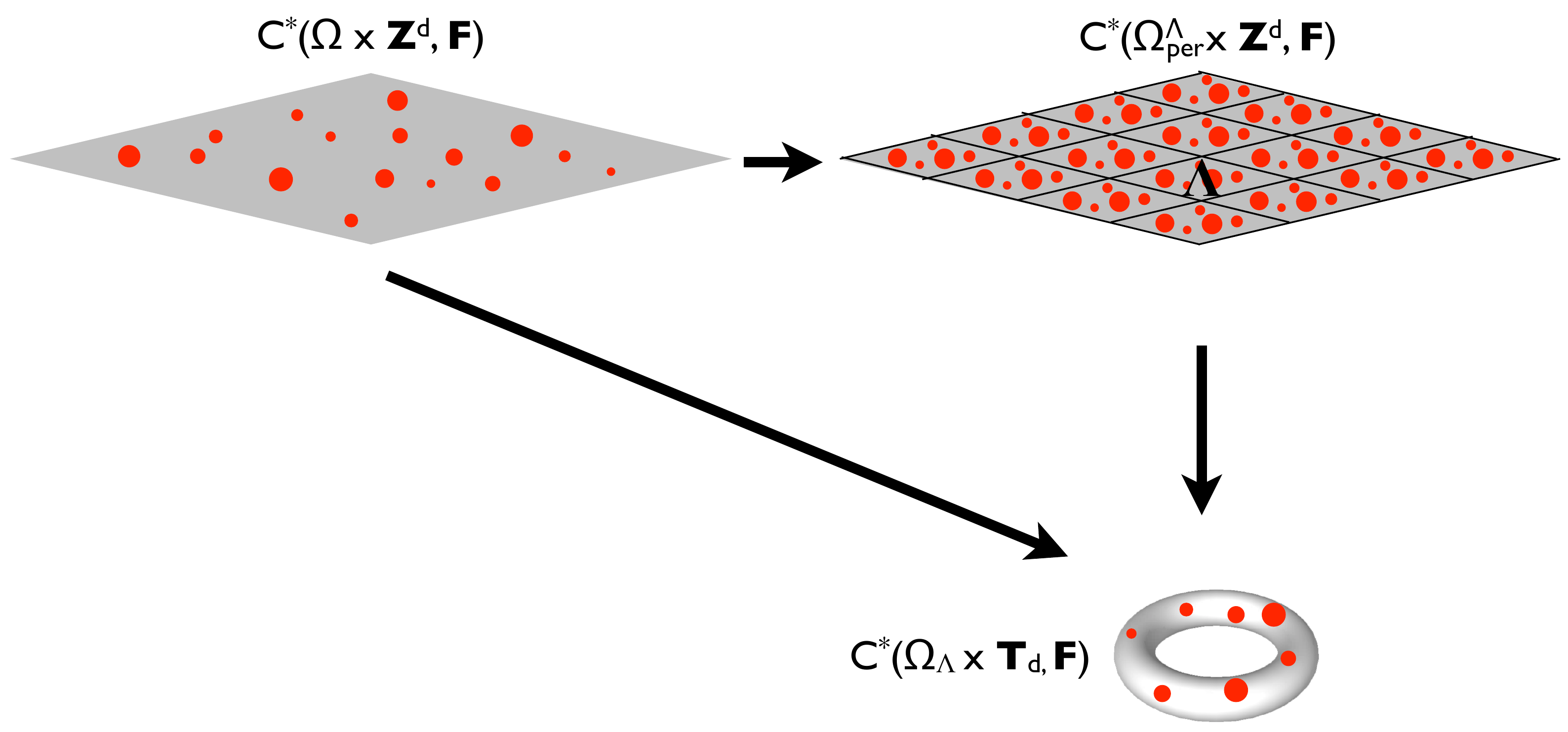}\\
  \caption{The diagram summarizes the sequence of approximations. The random potential, represented here by the size-varying dots, is approximated first by a periodically repeating potential, and then the whole system is considered on a real-space torus. Comparative estimates are derived for the pair of algebras $C^*(\Omega\times \mathbb{Z}^d,{\bm F})$ and $C^*(\Omega^\Lambda_{\mathrm{per}} \times \mathbb{Z}^d,{\bm F})$, and for the pair of algebras $C^*(\Omega^\Lambda_{\mathrm{per}} \times \mathbb{Z}^d,{\bm F})$ and $C^*(\Omega_\Lambda \times \mathbb{T}_d,{\bm F})$.}
 \label{ApproxDiagram}
\end{figure} 

Since the results could be of interest to the condensed matter physics community and to the pure, applied and computational mathematics communities, we decided to make the presentation self-contained. The paper is organized as follows. In Section 2 we discuss the quantum lattice models for electrons in solids and we argue that, as long as one is interested in the linear response regime, these models should not be taken as some crude approximation of the physical reality, but rather as accurate representations of it, with a systematic way of improving the accuracy. In Section 2 we also introduce an explicit model that will be used throughout the paper.  

In Sections 3 and 4 we review the noncommutative theory of charge transport in aperiodic solids developed by Bellissard, Schulz-Baldes and their collaborators \cite{BELLISSARD:1994xj,Schulz-Baldes:1998vm,Schulz-Baldes:1998oq}, touching only on the essential points. The interested reader can consult the comprehensive lecture notes by Bellissard \cite{BellissardLectNotesPhys2003cy} for a detailed exposition of the subject. Section 3 introduces the noncommutative $C^*$-algebra of the observables for a disordered electron gas in a magnetic field, and gives a short discussion of the functional calculus for $C^*$-algebras. It also introduces the noncommutative differential and integral calculus on this $C^*$-algebra. Together, these structures extend the notion of Brillouin torus to that of a noncommutative manifold, the famous noncommutative Brillouin torus \cite{BELLISSARD:1994xj}. Section 4 reviews the basics of the noncommutative theory of electron transport and gives the derivation of the noncommutative Kubo formula. In both Section 3 and Section 4 we follow closely the work of Bellissard, Schulz-Baldes and their collaborators \cite{BELLISSARD:1994xj,Schulz-Baldes:1998vm,Schulz-Baldes:1998oq,Bellissard:2000lj,Spehner:2001xr,Bellissard:2002zd,BellissardLectNotesPhys2003cy}.

Section 5 develops the Combes-Thomas technique \cite{Combes:1973nx} for the noncommutative Brillouin torus. This technique, which is well known to the analysts working with the Sch\"oedinger equation, will allow us to derive general and explicit exponential localization estimates for the analytic functional calculus with short-range observables. Section 6 introduces the periodic $C^*$-algebra $C^*(\Omega^\Lambda_{\mathrm{per}} \times \mathbb{Z}^d,{\bm F})$ and the noncommutative integro-differential calculus on it. It then presents the comparative estimates for the algebras $C^*(\Omega\times \mathbb{Z}^d,{\bm F})$ and $C^*(\Omega^\Lambda_{\mathrm{per}} \times \mathbb{Z}^d,{\bm F})$. 

Section 7 introduces the $C^*$-algebra $C^*(\Omega_\Lambda \times \mathbb{T}_d,{\bm F})$ over the torus. It also defines the noncommutative integral calculus and an approximate noncommutative differential calculus for this algebra. It then presents the comparative estimates for the algebras $C^*(\Omega^\Lambda_{\mathrm{per}} \times \mathbb{Z}^d,{\bm F})$ and $C^*(\Omega_\Lambda \times \mathbb{T}_d,{\bm F})$. Section 8 introduces the approximate noncommutative Kubo formula on the torus and derives the error estimates, which are shown to vanish exponentially fast in the thermodynamic limit. Section 8 also discusses the numerical implementation of the approximate noncommutative Kubo formula. Section 9 presents a numerical application to the Integer Quantum Hall Effect and Section 10 summarizes the main results of our work. 

\section{Lattice models for solids}\label{Settings}

The Hilbert space of a generic quantum lattice model is spanned by vectors of the form $|{\bm n},\alpha \rangle $, where ${\bm n}\in \mathbb{Z}^d$ is a node of the lattice and $\alpha=1,\ldots,D$ is a label for the orbitals associated with that node. The scalar product $\langle \cdot,\cdot \rangle$ is by definition such that $ \langle {\bm n},\alpha | {\bm m},\beta \rangle = \delta_{{\bm n}{\bm m}}\delta_{\alpha \beta}$. Intuitively, one can think of $|{\bm n},\alpha \rangle $ as the atomic orbitals of the elements present in the ${\bm n}$-th repeating cell of a material. Formally, the vectors $|{\bm n},\alpha \rangle $ can be viewed as the standard basis for the $\ell^2(\mathbb{Z}^d, \mathbb{C}^D)$ space. 

In the modern electronic structure theory, the quantum lattice models emerge as ``low energy" approximations of the continuum quantum models. For perfectly periodic solids, there is a standard machinery for generating such lattice models from continuum models. In short, the construction starts from an effective Hamiltonian $H_0$ which describes the equilibrium many-body state of the electrons in a material. A relevant spectral domain of $H_0$ is identified, usually by defining an energy cut-off, and the subspace corresponding to this spectral domain is represented in the so called maximally localized Wannier basis functions $\phi_{{\bm n}\alpha}$ \cite{MarzariPRB1997}. The number of Wannier functions per repeating cell depends on the number of atoms present in the repeating cell and on the energy cut-off. The vectors $|{\bm n},\alpha \rangle $, $\alpha =1,\ldots,D$, now represent the Wannier functions centered inside the ${\bm n}$-th repeating cell and the lattice Hamiltonian is simply:
\begin{equation}\label{LatticeHam}
H_{\mathrm{lattice}}=\sum_{{\bm n},\alpha}\sum_{{\bm m},\beta} |{\bm n},\alpha \rangle ( \phi_{{\bm n},\alpha},H_0 \phi_{{\bm m},\beta}) \langle {\bm m},\beta |,
\end{equation}
where $(\cdot,\cdot)$ is the scalar product of the continuum model. The above lattice Hamiltonian, although quadratic, it already incorporates interaction effects through $H_0$, which is tuned to exactly reproduce certain many-electron properties, such as the electron density for the case of the widely used Density Functional Theory \cite{HOHENBERG:1964vy,KOHN:1965cj}. The interaction effects, not accounted for already, are incorporated in the transport theory through the dissipation operator, to be discussed in Section 4.

The individual terms inside the sums in Eq.~\ref{LatticeHam} are called hoppings, and the matrix elements $( \phi_{{\bm n},\alpha} ,H_0 \phi_{{\bm m},\beta} )$ the hopping amplitudes. Generically, the maximally localized Wannier functions have a rapid spatial decay so the lattice Hamiltonians usually contain only a finite number of hopping terms between one lattice node and the rest of the nodes. It is quite often the case that just the hoppings between the nearest neighboring cells provide a reasonably accurate representation. If this is the case, we will use the terminology ``nearest-neighbor hopping Hamiltonian." Another important observation is that the lattice Hamiltonian reproduces exactly the band spectrum of the continuum Hamiltonian $H_0$, below the energy cut-off. Furthermore, the coupling integrals with the external driving fields can be exactly computed for the Wannier representation and even a lattice representation of a realistic disorder can be generated \cite{BerlijnPRL2011vu}. The main point of these observations is that we have a standard procedure to generate lattice models for periodic or disordered solids in external fields, whose accuracies can be systematically improved. Of course, the above statement is valid in the linear response regime where the representation based on the equilibrium Wannier functions makes sense. Several versions of this procedure are already implemented in the QUANTUM-ESPRESSO \cite{GiannozziJPCM2009fd} first principle electronic structure calculations package, and applications to electron transport in molecular structures have already appeared in the literature \cite{LeePRL2005se}.

The main conclusion of the present work, namely, that there is a canonical way to construct exponentially fast converging finite-volume approximations of the finite-temperature linear response coefficients, applies to a large class of disordered quantum lattice models. The essential characteristics of these models are: 1) existence of a triplet (a classical dynamical system) $(\Omega,dP(\omega),\mathfrak{t})$, where $(\Omega,dP(\omega))$ is a compact metrizable probability space and $\mathfrak{t}$ an ergodic, probability preserving action of the $\mathbb{Z}^d$ group on $\Omega$, and 2) existence of a bounded family of lattice Hamiltonians $H_\omega$ indexed by the points of $\Omega$, with the property of being covariant: $U_a H_\omega U_a^{-1}=H_{\mathfrak{t}_a\omega}$. Here $U_a$ represents the magnetic translation by $a$ \cite{BELLISSARD:1994xj}. However, in order to build our final arguments, we will need to fully identify all the constants and the parameters appearing in our error bounds. For this reason, we decided to simplify the exposition by working with a specific lattice model. This model has just one orbital per node and, with disorder and magnetic field, the Hamiltonian of the model reads: 
\begin{equation}\label{MainModel0}
H_\omega =  \sum_{{\bm n}\in \mathbb{Z}^d}\sum_{|{\bm p}|=1} e^{i \varphi_{{\bm p} {\bm n}}} \ |{\bm n}\rangle   \langle {\bm n}+{\bm p}| +W\sum_{{\bm n}\in \mathbb{Z}^d} \omega_{\bm n} |{\bm n}\rangle \langle {\bm n}|.
\end{equation}
The phase factor $ e^{i \varphi_{{\bm p} {\bm n}}}$ encodes the effect of the magnetic field through the Peierls substitution \cite{PanatiCMP2003rh}. Its specific form will be discussed in Section 3. Note that, in a short range isotropic model, all the hopping amplitudes must be equal in the absence of the magnetic field, in which case they can be set to unity like in Eq.~\ref{MainModel0} by choosing appropriate energy units.

The last term in Eq.~\ref{MainModel0}:
\begin{equation}
V_\omega=W\sum_{{\bm n}\in \mathbb{Z}^d} \omega_{\bm n} |{\bm n}\rangle \langle {\bm n}|,
\end{equation}
represents the disorder potential. This is just a simplified but very popular representation of the real effect of disorder in a solid. We will assume the amplitudes $\omega_{\bm n}$ to be independent random variables uniformly distributed within the interval $\left[-\frac{1}{2},\frac{1}{2}\right]$. The collection of amplitudes $\{\omega_{\bm n}\}_{{\bm n} \in \mathbb{Z}^d}$ can be viewed as a point $\omega$ in the space $\Omega=[-1/2,1/2]^{\mathbb{Z}^d}$. It is now quite standard \cite{BELLISSARD:1994xj} that $\Omega$ is a compact metrizable space, which accepts the following probability measure:
\begin{equation}
dP(\omega)=\prod_{{\bm n}\in \mathbb{Z}^d} d\omega_{\bm n}.
\end{equation}
There is a natural action of the additive $\mathbb{Z}^d$ discrete group on $\Omega$:
\begin{equation}
\mathfrak{t}_{\bm m}:\Omega \rightarrow \Omega, \ (\mathfrak{t}_{\bm m}\omega)_{\bm n}=\omega_{{\bm n}+{\bm m}},
\end{equation}
which is ergodic and leaves the measure $dP(\omega)$ invariant. Hence, the model of Eq.~\ref{MainModel0} gives a concrete representation of the generic quantum systems mentioned in the lines above.

\section{The noncommutative Brillouin torus}

In the Heisenberg's picture of Quantum Mechanics, the central objects are the observables, which are represented as elements of specific algebras. The main shift from the Schroedinger picture is the focus on the relations between the observables themselves, rather than on how the observables act on the wave functions. As such, the physics can be entirely formalized at a purely algebraic level. In this section, we follow Ref.~\cite{BELLISSARD:1994xj} and introduce the $C^*$-algebra of the observables for a disordered electron gas in a uniform magnetic field, and the noncommutative differential and integral calculus over this algebra. We expedite the presentation and mention only what is strictly needed for the derivation of the noncommutative Kubo formula at finite temperatures.  

\subsection{\underline{The $C^*$-algebra for a disordered electron gas in a uniform magnetic field}}

The Hamiltonian and its spatial translates generate an algebra ${\cal A}$, which in some sense is the smallest (and therefore the most useful) space to carry the calculations in. It contains all the observables of interest and it provides just the right framework to compute the correlation functions at finite temperatures.  Our first goal is to describe this algebra. 

The effect of a uniform magnetic field in a $d$-dimensional lattice model is captured entirely by a $d \times d$ antisymmetric matrix ${\bm F}$ \cite{BELLISSARD:1994xj}.  For example, in 2-dimensions, $|F_{12}|$ is equal to the magnetic flux through the repeating cell. Our units will be chosen such that the magnetic flux is expressed in the units of magnetic flux quantum of $\phi_0=h/e$. Also, we will use the notation ${\bm x}\cdot {\bm F} \cdot {\bm y}$ as a shorthand for $\sum_{j,k} x_j F_{jk} y_k$. 

The starting point of the construction is the algebra ${\cal A}_0$, generated by the continuous functions with compact support defined on $\Omega \times {\mathbb Z}^d$ with values in ${\mathbb C}$, and with the following addition and multiplication rules:
\begin{equation}\label{AlgRules}
\begin{array}{l}
(\alpha f)(\omega,{\bm n}) = \alpha f(\omega,{\bm n}) \ \ \mbox{(multiplication by scalars),} \medskip \\
(f+g)(\omega,{\bm n})=f(\omega,{\bm n})+g(\omega,{\bm n}) \ \ \mbox{(addition),} \medskip \\
(f*_{\bm F}g)(\omega,{\bm n})=\sum_{{\bm m} \in {\mathbb Z}^d} f(\omega, {\bm m})g(\mathfrak{t}_{{\bm m}}^{-1}\omega,{\bm n}-{\bm m})e^{i \pi( {\bm n}\cdot {\bm F}\cdot {\bm m})} \ \ \mbox{(multiplication).}
\end{array}
\end{equation} 
The index ${\bm F}$ attached to the multiplication operation is there to remind us that the multiplication depends explicitly on the magnetic field. The algebra ${\cal A}_0$ has an identity, denoted by ${\bm 1}$ and defined by: 
\begin{equation}
{\bm 1}(\omega,{\bm n})=\delta_{{\bm n},{\bm 0}}.
\end{equation}

Every element $f$ from ${\cal A}_0$ generates a covariant family of bounded linear operators $\{\pi_\omega f\}_{\omega\in\Omega}$ acting on $\ell^2({\mathbb Z}^d)$:
\begin{equation}\label{OpRep}
\ell^2({\mathbb Z}^d) \ni \phi \rightarrow \left((\pi_\omega f)\phi \right)({\bm n})=\sum_{{\bm m} \in {\mathbb Z}^d} f(\mathfrak{t}^{-1}_{\bm n}\omega, {\bm m}-{\bm n})e^{ i \pi ({\bm m} \cdot {\bm F}\cdot {\bm n})}\phi({\bm m}).
\end{equation}
This is the representation of $f\in {\cal A}_0$ as a linear operator acting on the quantum states, for a specific disorder configuration $\omega$. The covariant property means:
\begin{equation}
U_{\bm a} (\pi_\omega f) U_{\bm a}^{-1} = \pi_{\mathfrak{t}_{\bm a}\omega}f,
\end{equation}
where $U_{\bm a}$ are the magnetic translations on $\ell^2({\mathbb Z}^d)$:
\begin{equation}
\left(U_{\bm a}\phi\right)({\bm n})=e^{ i \pi ({\bm a}\cdot {\bm F}\cdot {\bm n})}\phi({\bm n}-{\bm a}).
\end{equation}
The map $\pi_\omega$ is a (faithful) representation of ${\cal A}_0$ algebra: 
\begin{equation}\label{faithful}
\pi_\omega(f*_{\bm F} g)=(\pi_\omega f) (\pi_\omega g).
\end{equation} 
The representation of the identity on $\ell^2({\mathbb Z}^d)$, $\pi_\omega {\bm 1}$, is simply the identity operator on $\ell^2({\mathbb Z}^d)$. Any short-range Hamiltonian can be generated from an element residing inside this algebra. For example, the first-neighbor hopping model of Eq.~\ref{MainModel0} corresponds to the element:
\begin{equation}\label{MainModel}
h(\omega,{\bm n})= \delta_{|{\bm n}|,1}({\bm n})+W\omega_{\bm n}\delta_{{\bm n}{\bm 0}},
\end{equation}
whose representation gives the covariant family of linear (self-adjoint) operators:
\begin{equation}
\pi_\omega h =  \sum_{{\bm n}\in \mathbb{Z}^d}\sum_{|{\bm p}|=1} e^{i \pi ({\bm p}\cdot {\bm F} \cdot {\bm n})} \ |{\bm n}\rangle   \langle {\bm n}+{\bm p}| +W\sum_{{\bm n}\in \mathbb{Z}^d} \omega_{\bm n} |{\bm n}\rangle \langle {\bm n}|.
\end{equation} 
The Peierls phase factor can be identified now as $\varphi_{{\bm p}{\bm n}}=\pi ({\bm p}\cdot {\bm F} \cdot {\bm n})$.

The algebra ${\cal A}_0$ is too small to include interesting elements, such as the time evolution $\exp(-ith)$, which does not have a compact support. However, all the interesting elements can be generated from ${\cal A}_0$ via limiting procedures and to rigorously define these limiting procedures we need to introduce first a norm. The norm on ${\cal A}_0$ is defined by:
\begin{equation}\label{Norm}
\|f\|=\sup_{\omega\in \Omega} \|\pi_\omega f\|,
\end{equation}
where the norm on the right-hand side refers to the operator norm on $\ell^2({\mathbb Z}^d)$: 
\begin{equation}
\|Q\|=\sup\limits_{\| \phi \|_{\ell^2({\mathbb Z}^d)}=1} \|Q\phi\|_{\ell^2({\mathbb Z}^d)}, \ \ \|\phi\|_{\ell^2({\mathbb Z}^d)}=\left[\sum_{{\bm n}\in \mathbb{Z}^d} |\phi({\bm n})|^2 \right ]^{1/2}.
\end{equation}
The following is an upper bound on the norm of an element $f\in {\cal A}_0$, which is useful because involves only a simple estimate on the kernel of $f$:
\begin{equation}
\|f\| \leq \left [ \sup_{\omega \in \Omega}\left \{\sum_{{\bm n}\in \mathbb{Z}^d} |f(\omega,{\bm n})| \right \} \ \sup_{\omega \in \Omega} \left \{\sum_{{\bm n} \in \mathbb{Z}^d} |f(\mathfrak{t}_{\bm n}\omega,{\bm n})| \right \} \right ]^{1/2}.
\end{equation}
Quite often, we will use the following simplified version of this upper bound:
\begin{equation}\label{NormA}
\|f\| \leq \sum_{{\bm n} \in \mathbb{Z}^d} \sup_{\omega \in \Omega} |f(\omega,{\bm n})|.
\end{equation}

Given that $\pi$ is a representation, and from the standard properties of the operator norm, the norm defined in Eq.~\ref{Norm} can bee seen to have the following special property:
\begin{equation}
\|f*_{\bm F} g\| \leq \|f\| \ \|g\|,
\end{equation}
which makes $({\cal A}_0,\| \ \|)$ into a normed-algebra. Also note that $\| {\bm 1} \|=1$. The completion of ${\cal A}_0$ under this norm becomes a Banach algebra, and this Banach algebra is denoted by ${\cal A}$. To make the Banach algebra ${\cal A}$ into a $C^*$-algebra we need to define first a $*$-operation, i.e. an operation with the following required properties:
\begin{equation}\label{StarProp}
f^{**}=f, \ \ (f+\alpha g)^*=f^*+\overline{\alpha} g^*, \ \ (f*_{\bm F}g)^*=g^* *_{\bm F} f^*, \ \ \ \|f^*\|=\|f\|. 
\end{equation}
Throughout our presentation, $\bar{z}$ will denote the complex conjugation of $z\in \mathbb{C}$. The following $*$-operation defined on ${\cal A}$ satisfies all the conditions stated in Eq.~\ref{StarProp}:
\begin{equation}\label{Star}
{\cal A} \ni f \rightarrow f^*, \ \ (f^*)(\omega,{\bm n})=\overline{f(\mathfrak{t}^{-1}_{\bm n}\omega,-{\bm n})}.
\end{equation}
Furthermore, this $*$-operation satisfies the fundamental relation:
\begin{equation}
\|f *_{\bm F} f^*\|= \|f\|^2,
\end{equation}
which makes ${\cal A}$ into a $C^*$-algebra. We collect all the above into the following central definition.

\begin{definition} The $C^*$-algebra of the observables for the disordered electron gas in a uniform magnetic field (represented by the antisymmetric tensor ${\bm F}$) is defined by the quadruple $({\cal A},*_{\bm F},\| \ \|,^*)$. We will use the notation $C^*(\Omega\times \mathbb{Z}^d,{\bm F})$ for this $C^*$-algebra. When there is no danger of confusion, we will just use ${\cal A}$.
\end{definition}

The computations at finite temperatures can be carried entirely in the $C^*$-algebra defined above. For zero temperature, one needs to extend ${\cal A}_0$ algebra to a von Neumann algebra but since that is outside our scope, we stop here.

\noindent {\bf Remark.} In the following we will drop the label ${\bm F}$ from $*_{\bm F}$ and simply write the multiplication in ${\cal A}$ as $f*g$.

\subsection{\underline{Spectrum, the resolvent function, the analytic functional calculus and all that}}

This subsection reviews some basic notions in operator algebras that are absolutely needed for our discussion. A good reference for this part is the short course on Spectral Theory by Arveson \cite{ArvesonBook2002}. 

\begin{definition} The resolvent set of an element $f$ belonging to an algebra with identity is the set:
\begin{equation}
\rho(f) = \{z \in {\mathbb C}\ | \ f-z{\bm 1} \ \mbox{is invertible}\}.
\end{equation}
The spectrum of $f$ is the set:
\begin{equation}
\sigma(f) = \{z \in {\mathbb C}\ | \ f-z{\bm 1} \ \mbox{is not invertible}\}.
\end{equation}
Clearly, $\rho(f)=\mathbb{C}-\sigma(f)$. 
\end{definition}
The notion of the spectrum exists for any algebra with identity, but the characterization of the spectrum becomes more refined (and therefore more useful) if the algebra is endowed with additional structures. In particular:

\begin{proposition} For any element $f$ in a Banach algebra with unit, $\sigma(f)$ is a non-empty compact subset of the complex plane. Furthermore, $\sigma(f)$ is located inside the circle of radius $\|f\|$ and centered at the origin. If the algebra is a $C^*$-algebra and $f$ is a self-adjoint element, i.e. $f^*=f$, then $\sigma(f)$ is located on the real axis. If $f$ is a unitary element, i.e. $f*f^*=f^**f=1$, then $\sigma(f)$ is located on the unit circle.
\end{proposition}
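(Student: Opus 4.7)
My plan is to prove the four assertions in sequence, with the Neumann-series criterion $\|{\bm 1} - a\| < 1 \Rightarrow a$ invertible serving as the uniform workhorse. For the containment inside the disc of radius $\|f\|$: if $|z| > \|f\|$, write $f - z{\bm 1} = -z({\bm 1} - z^{-1}f)$ and invert the bracket via Neumann, since $\|z^{-1} f\| < 1$; hence $z \in \rho(f)$. For closedness of $\sigma(f)$, equivalently openness of $\rho(f)$, fix $z_0 \in \rho(f)$ and use the factorization
\[
f - z {\bm 1} = (f - z_0 {\bm 1})\bigl[{\bm 1} - (z - z_0)(f - z_0 {\bm 1})^{-1}\bigr],
\]
which is Neumann-invertible whenever $|z - z_0| < \|(f - z_0 {\bm 1})^{-1}\|^{-1}$. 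Combined with the first estimate, this makes $\sigma(f)$ a closed subset of a bounded set, hence compact.

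Non-emptiness is the step where a genuinely new idea is needed, and I expect it to be the main obstacle. The plan is to argue by contradiction: if $\sigma(f) = \emptyset$, the resolvent $R(z) := (f - z{\bm 1})^{-1}$ is defined on all of ${\mathbb C}$, and the resolvent identity $R(z) - R(w) = (z - w)\, R(z) R(w)$ (a one-line algebraic check) implies that $z \mapsto \varphi(R(z))$ is entire for every continuous linear functional $\varphi$ on ${\cal A}$. The Neumann-series estimate $\|R(z)\| \le (|z| - \|f\|)^{-1}$ for large $|z|$ shows $\varphi \circ R \to 0$ at infinity, so Liouville's theorem forces $\varphi \circ R \equiv 0$. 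Since ${\cal A}^*$ separates points by Hahn--Banach, this gives $R \equiv 0$, contradicting the invertibility of each $R(z)$.

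For the refinements in a $C^*$-algebra, the plan is to exploit the $C^*$-identity $\|a^* * a\| = \|a\|^2$. I would dispatch the unitary case first: $\|u\|^2 = \|u^* * u\| = \|{\bm 1}\| = 1$ gives $\sigma(u) \subset \{|z| \le 1\}$ by the first part, and the same reasoning applied to $u^{-1} = u^*$, together with $\sigma(u^{-1}) = \sigma(u)^{-1}$, confines $\sigma(u)$ to the unit circle. For self-adjoint $f$, I would form $u_t := \sum_{k \ge 0} (itf)^k / k!$, which converges in norm in ${\cal A}$, and verify $u_t * u_t^* = u_t^* * u_t = {\bm 1}$ using $f^* = f$ and the fact that all the series terms commute; hence each $u_t$ is unitary. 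The polynomial spectral mapping theorem (obtained by factoring $p(z) - p(\lambda)$), extended to this power series by uniform approximation, then yields $\sigma(u_t) = \{e^{it \lambda} : \lambda \in \sigma(f)\}$. Since $\sigma(u_t)$ lies on the unit circle for every real $t$, we must have $|e^{it \lambda}| = 1$ for all $t \in {\mathbb R}$, which forces $\lambda \in {\mathbb R}$.
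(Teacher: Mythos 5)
The paper states this proposition without proof, deferring to Arveson's spectral theory text, so there is no ``paper's proof'' to compare against. Your proof is correct and follows the standard textbook development: Neumann series for the disc bound and openness of $\rho(f)$, Liouville's theorem applied to $\varphi \circ R$ (via Hahn--Banach separation) for non-emptiness, the $C^*$-identity plus the spectral image of an inverse for the unitary case, and the exponential $u_t = \exp(itf)$ for the self-adjoint case. This is the same route Arveson takes.

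One small caution on the last step. You assert the full spectral mapping identity $\sigma(u_t) = \{e^{it\lambda} : \lambda \in \sigma(f)\}$ and justify it by ``extending the polynomial spectral mapping theorem by uniform approximation.'' The full holomorphic spectral mapping theorem is true, but the passage from polynomial partial sums to the entire function is not an automatic consequence of uniform convergence --- you would need an upper-semicontinuity argument for the spectrum, or better, the contour-integral functional calculus directly. Fortunately, you don't need the full equality: the argument only requires the inclusion $\{e^{it\lambda} : \lambda \in \sigma(f)\} \subset \sigma(u_t)$, and that has a one-line proof. Factor $e^{itz} - e^{it\lambda} = (z - \lambda)\, h_t(z)$ with $h_t$ entire; then $u_t - e^{it\lambda}{\bm 1} = (f - \lambda {\bm 1}) * h_t(f) = h_t(f) * (f - \lambda {\bm 1})$, and if the left side were invertible, $(f - \lambda {\bm 1})$ would have both a left and a right inverse, contradicting $\lambda \in \sigma(f)$. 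With that inclusion in hand, your conclusion $|e^{it\lambda}| = 1$ for all real $t$, forcing $\mathrm{Im}(\lambda) = 0$, goes through exactly as you wrote it.
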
  
The spectrum depends on the element itself but also on the algebra. As such, it  is very common to use the wording ``the spectrum of $x$ in the algebra y,'' something that we will do quite often in the followings. Another central object for operator algebras is the resolvent function:
\begin{definition}
Let $f$ be an element of a Banach algebra ${\cal A}$. The following function:
\begin{equation}
\rho(f) \ni z \rightarrow (f-z{\bm 1})^{-1} \in {\cal A} 
\end{equation}
is called the resolvent function of $f$.
\end{definition}

\begin{proposition}
The resolvent function of any element $f$ of a Banach algebra ${\cal A}$ is an algebra valued analytic function of $z \in \rho(f)$, in the sense that the limit
\begin{equation}
\lim_{\zeta \rightarrow 0} \frac{(f-(z+\zeta){\bm 1})^{-1}-(f-z{\bm 1})^{-1}}{\zeta}
\end{equation}
exists in ${\cal A}$ and is independent of how $\zeta$ approaches the origin, for all $z \in \rho(f)$.
\end{proposition}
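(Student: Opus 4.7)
The plan is to establish the resolvent identity first, and then extract the derivative from it by exhibiting an explicit Neumann series expansion of the resolvent around $z$.

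\medskip

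\noindent\textbf{Step 1: The resolvent identity.} Write $R(z) := (f - z\mathbf{1})^{-1}$ for $z \in \rho(f)$. For $z,w \in \rho(f)$, insert the identity $\mathbf{1} = R(w)(f - w\mathbf{1})$ on the right of $R(z)$ and $\mathbf{1} = (f - z\mathbf{1})R(z)$ on the left of $R(w)$, then compute
\begin{equation*}
R(z) - R(w) = R(z)\bigl[(f-w\mathbf{1}) - (f-z\mathbf{1})\bigr] R(w) = (z-w)\, R(z)\, R(w).
\end{equation*}
Specialising to $w = z+\zeta$ and dividing by $\zeta$ gives, purely algebraically,
\begin{equation*}
\frac{R(z+\zeta) - R(z)}{\zeta} \;=\; R(z+\zeta)\, R(z),
\end{equation*}
provided both $z$ and $z+\zeta$ lie in $\rho(f)$.

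\medskip

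\noindent\textbf{Step 2: A neighbourhood of $z$ lies in $\rho(f)$.} For $|\zeta| < \|R(z)\|^{-1}$, factor
\begin{equation*}
f - (z+\zeta)\mathbf{1} \;=\; (f - z\mathbf{1})\bigl(\mathbf{1} - \zeta\, R(z)\bigr).
\end{equation*}
Since $\|\zeta R(z)\| < 1$, the Neumann series $\sum_{n \geq 0}\zeta^n R(z)^n$ converges absolutely in the Banach algebra ${\cal A}$ (this uses submultiplicativity of the norm and completeness), and its sum is the inverse of $\mathbf{1} - \zeta R(z)$. Consequently $(z+\zeta) \in \rho(f)$ and
\begin{equation*}
R(z+\zeta) \;=\; \bigl(\mathbf{1} - \zeta R(z)\bigr)^{-1} R(z) \;=\; \sum_{n=0}^\infty \zeta^n\, R(z)^{n+1}.
\end{equation*}

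\medskip

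\noindent\textbf{Step 3: Taking the limit.} Substituting the power series from Step~2 into the difference quotient from Step~1 yields
\begin{equation*}
\frac{R(z+\zeta) - R(z)}{\zeta} \;=\; \sum_{n=0}^\infty \zeta^n\, R(z)^{n+2} \;=\; R(z)^2 + \zeta \sum_{n=0}^\infty \zeta^n\, R(z)^{n+3}.
\end{equation*}
The remainder is bounded in norm by $|\zeta|\,\|R(z)\|^3 / (1 - |\zeta|\,\|R(z)\|)$, which tends to zero as $\zeta \to 0$ regardless of the path of approach, because the estimate is isotropic in $\zeta$. Hence the limit in the statement exists in ${\cal A}$, equals $R(z)^2$, and is independent of how $\zeta \to 0$.

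\medskip

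\noindent\textbf{Main obstacle.} There is no deep obstacle: the argument is standard Banach-algebra material. The only place demanding care is the convergence of the Neumann series in Step~2, which rests on the submultiplicativity $\|ab\| \leq \|a\|\|b\|$ together with the completeness of ${\cal A}$; both are built into the definition of a Banach algebra. Once this is in hand, Steps~1 and~3 are purely algebraic manipulations and uniform norm estimates.
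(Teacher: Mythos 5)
The paper states this proposition without proof, citing Arveson's lecture notes as background. Your proof is correct and is precisely the standard textbook argument: the resolvent identity turns the difference quotient into $R(z+\zeta)R(z)$, the Neumann series shows $\rho(f)$ is open and gives a convergent power-series expansion of $R(z+\zeta)$ in $\zeta$, and the isotropic norm bound on the remainder shows the limit $R(z)^2$ is attained independently of the path of approach. One tiny cosmetic remark: in Step~3 you could skip Step~1 entirely and just subtract $R(z)$ from the series of Step~2 and divide by $\zeta$; the resolvent identity is not strictly needed, though it does no harm.
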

Note that the above criterion of an analytic function, which we will use several times during our presentation, simply states that the function is differentiable and that the differential with respect to $\bar{z}$ is zero. Any other definition of an algebra valued analytic function is equivalent to the one considered above \cite{GamelinBook1984rt}. For Banach algebras, the notion of complex integration over $z$ of a continuously $z$-dependent element assumes the standard meaning via passing to the limit of the Riemann sum. Then complex integration and the resolvent function allows one to define the analytic functional calculus, i.e. a rigorous way to define functions of an element from the algebra. 

\begin{theorem}{(The analytic functional calculus)} Let $f$ be an element of a Banach algebra ${\cal A}$ and consider the algebra ${\cal F}_{\sigma(f)}$ of the analytic functions $\Phi$ in an open neighborhood of $\sigma(f)$ (with the pointwise addition and multiplication).  Then the following mapping:
\begin{equation}\label{AnalyticCalc}
{\cal F}_{\sigma(f)} \ni \Phi \rightarrow \frac{i}{2\pi} \oint_{\cal C} \Phi(z) (f-z{\bm 1})^{-1}dz \in {\cal A}
\end{equation}
is a homomorphism of algebras. Above, ${\cal C}$ is a contour surrounding $\sigma(f)$ and contained in the domain of $\Phi$. The right hand side of Eq.~\ref{AnalyticCalc} is regarded as $\Phi(f)$.
\end{theorem}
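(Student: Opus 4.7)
My plan is to verify the three properties that make $\Phi\mapsto\Phi(f)$ an algebra homomorphism: (i) well-definedness, i.e.\ independence of the contour ${\cal C}$; (ii) complex linearity; and (iii) multiplicativity, $\Phi(f)\Psi(f)=(\Phi\Psi)(f)$. Well-definedness follows from the preceding proposition, which guarantees that $z\mapsto(f-z{\bm 1})^{-1}$ is ${\cal A}$-valued analytic on $\rho(f)$. Consequently $\Phi(z)(f-z{\bm 1})^{-1}$ is analytic on the intersection of the domain of $\Phi$ with $\rho(f)$, an open neighborhood of $\sigma(f)$ minus $\sigma(f)$ itself, and any two admissible contours encircling $\sigma(f)$ inside this common domain are homotopic there. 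The Banach-algebra-valued Cauchy theorem, obtained by passing to the limit in Riemann-sum approximants of the scalar version, then yields the same value for both choices. Linearity is immediate from linearity of the contour integral.

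The substantive step is multiplicativity. I would pick two counterclockwise contours ${\cal C}_1$ and ${\cal C}_2$, both encircling $\sigma(f)$ inside the common analyticity domain of $\Phi$ and $\Psi$, with ${\cal C}_2$ strictly enclosing ${\cal C}_1$, and write
\begin{equation}
\Phi(f)\Psi(f) = -\frac{1}{4\pi^2}\oint_{{\cal C}_1}\oint_{{\cal C}_2}\Phi(z)\Psi(w)(f-z{\bm 1})^{-1}(f-w{\bm 1})^{-1}\,dw\,dz.
\end{equation}
The crucial input is the resolvent identity
\begin{equation}
(f-z{\bm 1})^{-1}(f-w{\bm 1})^{-1} = \frac{(f-w{\bm 1})^{-1}-(f-z{\bm 1})^{-1}}{w-z},
\end{equation}
obtained by algebraic manipulation of $(f-z{\bm 1})-(f-w{\bm 1})=(w-z){\bm 1}$ together with commutativity of the two resolvents. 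Inserting this identity and using linearity and Fubini (justified by norm-continuity on the compact product of the contours) to rearrange into two pieces, the problem reduces to evaluating two scalar inner integrals. For the piece carrying $(f-w{\bm 1})^{-1}$, the inner integral $\oint_{{\cal C}_1}\Phi(z)/(w-z)\,dz$ vanishes by Cauchy's theorem, since $w\in{\cal C}_2$ lies outside ${\cal C}_1$ and therefore $\Phi(z)/(w-z)$ is analytic in $z$ throughout the interior of ${\cal C}_1$. For the piece carrying $(f-z{\bm 1})^{-1}$, the inner integral $\oint_{{\cal C}_2}\Psi(w)/(w-z)\,dw$ equals $2\pi i\,\Psi(z)$ by the scalar Cauchy formula, since ${\cal C}_2$ encircles every $z\in{\cal C}_1$. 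Collecting the surviving factors produces $\frac{i}{2\pi}\oint_{{\cal C}_1}\Phi(z)\Psi(z)(f-z{\bm 1})^{-1}\,dz = (\Phi\Psi)(f)$.

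The main obstacle is not conceptual but technical: one has to lift the standard Cauchy toolkit — contour deformation, Fubini for iterated integrals, and the scalar Cauchy formula in the presence of an ${\cal A}$-valued coefficient — to Banach-algebra-valued integrands. Once the contour integral is defined as a Banach-norm limit of Riemann sums, each of these lifts reduces to its scalar counterpart uniformly over the contour, using the triangle inequality and the resolvent analyticity supplied by the preceding proposition. As a final sanity check I would verify that the unit of ${\cal F}_{\sigma(f)}$ maps to the unit of ${\cal A}$: taking ${\cal C}$ to be a circle of radius $R>\|f\|$ and expanding $(f-z{\bm 1})^{-1}=-\sum_{k\ge 0}z^{-k-1}f^k$, which converges in norm thanks to the bound $\sigma(f)\subset\{|z|\le\|f\|\}$ from the preceding proposition, only the $k=0$ term survives the contour integral and returns ${\bm 1}$.
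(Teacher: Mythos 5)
The paper states this theorem without proof and refers the reader to Arveson's spectral theory lecture notes, so there is no in-paper argument to compare against. Your proposal is the standard Riesz--Dunford derivation one finds in such references: nested contours ${\cal C}_1\subset{\cal C}_2$, the first resolvent identity, Fubini on the norm-continuous ${\cal A}$-valued integrand over the compact product of contours, the scalar Cauchy formula on the surviving inner integral, plus a Neumann-series sanity check of the unit. It is correct in structure, and you have tracked the paper's $i/(2\pi)$ normalization (which differs by a sign from the more common $1/(2\pi i)\oint\Phi(z)(z{\bm 1}-f)^{-1}\,dz$) consistently through the resolvent identity and the final evaluation.

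One point deserves more care than your phrasing gives it: you dismiss the $(f-w{\bm 1})^{-1}$ piece by asserting that $\Phi(z)/(w-z)$ is ``analytic in $z$ throughout the interior of ${\cal C}_1$.'' The hypothesis is only that $\Phi$ is analytic on some open neighborhood $U$ of $\sigma(f)$, and $U$ need not contain the full interior of ${\cal C}_1$ (for instance when $\sigma(f)$ is disconnected or $U$ is an annular region), so ``analytic throughout the interior'' is not literally available. The standard repair is to take ${\cal C}_1$ and ${\cal C}_2$ as cycles (or boundaries of Cauchy domains) that are null-homologous in $U$ and have winding number $1$ about every point of $\sigma(f)$ and $0$ about every point of $\mathbb{C}\setminus U$; then $\oint_{{\cal C}_1}\Phi(z)/(w-z)\,dz=0$ for $w\in{\cal C}_2$ follows from the \emph{homology} form of Cauchy's theorem, because $n({\cal C}_1,w)=0$, not because the integrand is analytic on a disk. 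With that amendment the argument is airtight.
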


The functional calculus can be pushed further to the algebra of continuous functions or even further to the algebra of bounded measurable functions. The latter will be needed if, for example, one does calculations at zero temperature where the Fermi-Dirac distribution has a jump at the Fermi energy. We however stop here, because we address only the finite temperature case.

\subsection{\underline{The noncommutative differential and integral calculus}}

In the case of translationally invariant models, the transport coefficients are computed with the help of the integro-differential calculus with the ordinary functions defined over the Brillouin torus. When disorder and magnetic fields are present, the commutative algebra of the ordinary functions over the Brillouin torus is replaced by the noncommutative $C^*$-algebra $C^*(\Omega\times \mathbb{Z}^d,{\bm F})$. Remarkably, one can define a noncommutative integro-differential calculus over $C^*(\Omega\times \mathbb{Z}^d,{\bm F})$, which transforms the $C^*$-algebra into a noncommutative manifold, the famous noncommutative Brillouin torus \cite{BELLISSARD:1994xj}.

For this, the ${\bm k}$-integration over the classical Brillouin torus is replaced by a trace over $C^*(\Omega\times \mathbb{Z}^d,{\bm F})$:
\begin{equation}\label{Trace}
{\cal T}(f)=\int_\Omega dP(\omega) \ f(\omega,{\bm 0}).
\end{equation}
By a trace, in general, is meant a linear functional with two basic properties:
\begin{enumerate}
\item Positivity: ${\cal T}(f*f^*)>0$ for all $f \neq 0$.
\item Cyclic-city: ${\cal T}(f*g)={\cal T}(g*f)$.
\end{enumerate}
The definition of the trace in Eq.~\ref{Trace} is natural since, for the translational invariant case, ${\cal T}$ reduces to the ordinary ${\bm k}$-integration over the classical Brillouin torus. Another important observation is that, likewise the ${\bm k}$-integration over the classic Brillouin torus, the trace defined in Eq.~\ref{Trace} allows one to compute the correlation functions directly in the thermodynamic limit, in the sense that:
\begin{equation}
\lim\limits_{\Lambda\rightarrow \infty}\frac{1}{|\Lambda|}\mathrm{Tr}_{\Lambda}\{(\pi_\omega f) (\pi_\omega g) \ldots \}={\cal T}(f*g\ldots),
\end{equation}
where $\mathrm{Tr}_\Lambda$ means the trace over the quantum states inside the box $\Lambda$. The above relation is an immediate consequence of the Birkhoff theorem and the ergodicity property of the translations on $\Omega$ \cite{SinaiBook1976uh}. Throughout our manuscript, $|S|$ will denote the cardinal of a set $S$, and $\chi_S$ will denote it's characteristic function.

The derivations with respect to the $k_j$'s of the ordinary functions defined over the Brillouin torus become automorphisms of algebra $C^*(\Omega\times \mathbb{Z}^d,{\bm F})$:
\begin{equation}
(\partial_j f)(\omega,{\bm n}) = i n_j f(\omega,{\bm n}), \ j=1,\ldots,d, \ (i=\sqrt{-1}).
\end{equation} 
Given a multi-index ${\bm \alpha}=(\alpha_1,\ldots,\alpha_d)$, we will use the notation $\partial_{\bm \alpha}f$ for $\partial_1^{\alpha_1}\ldots \partial_d^{\alpha_d}f$ and we will write $|{\bm \alpha}|$ for $\alpha_1+\ldots + \alpha_d$. We collect some of the fundamental properties of the derivations in the following statement.

\begin{proposition}\label{Derivation} For $j, \ k=1,\ldots,d$:
\begin{enumerate}
\item The derivations commute: 
\begin{equation}
\partial_j \partial_k=\partial_k \partial_j.
\end{equation}
\item The derivations are *-derivations:
\begin{equation}
\partial_j (f^*)=(\partial_j f)^*.
\end{equation}
\item The derivations satisfy the Leibniz rule: 
\begin{equation}
\partial_j(f*g)=(\partial_j f)*g + f*(\partial_j g).
\end{equation}
\item The operator representation is given by:
\begin{equation}
\pi_\omega (\partial_j f)= -i[x_j, \pi_\omega f],
\end{equation}
where ${\bm x}=(x_1,\ldots,x_d)$ is the position operator on $\ell^2(\mathbb{Z}^d)$.
\end{enumerate}
\end{proposition}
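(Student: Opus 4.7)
The plan is to verify all four properties by direct computation on the kernel $f(\omega,\bm{n})$, using the definitions of multiplication, $*$-operation, and representation laid out in Equations~\ref{AlgRules}, \ref{OpRep} and \ref{Star}. Each identity can first be established on the dense subalgebra ${\cal A}_0$ (where sums are finite and all manipulations are trivially justified) and then extended to ${\cal A}$ by continuity, provided one first checks that $\partial_j$ is a well-defined closable derivation. I would therefore treat $\partial_j$ throughout as acting on elements of ${\cal A}_0$ or on elements with rapidly decaying kernel, since no convergence issue arises in that setting.

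For part (1), commutativity is immediate: $(\partial_j\partial_k f)(\omega,\bm n)=(in_j)(in_k)f(\omega,\bm n)$ is symmetric in $j,k$. For part (2), I would substitute the definition of the $*$-operation and compute
\begin{equation*}
(\partial_j f)^*(\omega,\bm n) = \overline{(\partial_j f)(\mathfrak{t}_{\bm n}^{-1}\omega,-\bm n)} = \overline{-in_j\, f(\mathfrak{t}_{\bm n}^{-1}\omega,-\bm n)} = in_j\,\overline{f(\mathfrak{t}_{\bm n}^{-1}\omega,-\bm n)} = \partial_j(f^*)(\omega,\bm n),
\end{equation*}
the crucial sign flip coming from the $-\bm n$ in the definition of $f^*$ combined with complex conjugation.

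For part (3), I would expand $(\partial_j(f*g))(\omega,\bm n) = in_j\sum_{\bm m} f(\omega,\bm m) g(\mathfrak{t}_{\bm m}^{-1}\omega,\bm n-\bm m)\, e^{i\pi(\bm n\cdot \bm F\cdot \bm m)}$ and use the trivial identity $n_j = m_j + (n_j - m_j)$. The $m_j$ piece assembles into $((\partial_j f)*g)(\omega,\bm n)$ and the $(n_j-m_j)$ piece into $(f*(\partial_j g))(\omega,\bm n)$, giving the Leibniz rule. The magnetic phase factor plays no role here because it depends only on $\bm n$ and $\bm m$, which are the variables that survive untouched.

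For part (4), I would let $\phi\in \ell^2(\mathbb{Z}^d)$ and compute both sides pointwise. Using $x_j\phi(\bm n) = n_j\phi(\bm n)$, a direct substitution into Eq.~\ref{OpRep} gives
\begin{equation*}
\big([x_j,\pi_\omega f]\phi\big)(\bm n) = \sum_{\bm m}(n_j-m_j)\, f(\mathfrak{t}_{\bm n}^{-1}\omega,\bm m-\bm n)\, e^{i\pi(\bm m\cdot \bm F\cdot \bm n)}\phi(\bm m),
\end{equation*}
while the left-hand side $(\pi_\omega(\partial_j f)\phi)(\bm n)$ equals the same expression with the prefactor replaced by $i(m_j-n_j)$; matching the two yields the minus sign in $\pi_\omega(\partial_jf)=-i[x_j,\pi_\omega f]$. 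I do not expect a serious obstacle: the only care needed is keeping track of the argument shifts $\mathfrak{t}_{\bm m}^{-1}\omega$ versus $\mathfrak{t}_{\bm n}^{-1}\omega$ and the $\pm\bm n$ in the $*$-operation, since these are the places where an accidental sign would break the identities.
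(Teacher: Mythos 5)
Your proof is correct in all four parts. The paper itself does not supply a proof of this proposition (it is stated as a collection of standard properties borrowed from Bellissard et al.), so there is nothing to compare routes against; but your direct kernel computation is exactly the natural verification. The sign bookkeeping is handled properly: in (2) the $-\bm n$ in the definition of $f^*$ supplies one sign flip and complex conjugation of $-in_j$ supplies the other, so they cancel; in (3) the split $n_j = m_j + (n_j - m_j)$ is the right move and, as you note, the magnetic phase is untouched since it depends only on $\bm n$ and $\bm m$; in (4) the prefactor $i(m_j - n_j)$ from $\pi_\omega(\partial_j f)$ matches $-i(n_j - m_j)$ from $-i[x_j,\pi_\omega f]$. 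Your framing — first on ${\cal A}_0$ where all sums are finite, then noting extension by continuity to the domain ${\cal C}^1({\cal A})$ where $\partial_j$ is defined — is the correct level of care and consistent with how the paper later treats $\partial_j$ as an unbounded derivation.
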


It is important to note that the automorphisms $\partial_j$'s cannot be extended to the whole  $C^*$-algebra $C^*(\Omega\times \mathbb{Z}^d,{\bm B})$. As such, $\partial_j$'s are not derivations in the strict $C^*$-algebra language \cite{SakaiBook1971gu}. However, it is now customary to refer to $\partial_j$'s as unbounded derivations (since they are un-bounded automorphisms in the true sense). Due to their simple action, the $\partial_j$ derivations can be integrated and one can easily find that they generate a $d$-parameter group of isometric isomorphisms:
\begin{equation}\label{Trans}
\mathfrak{u}_{\bm k}:{\cal A}\rightarrow {\cal A}, \ (\mathfrak{u}_{\bm k}f)(\omega,{\bm n})=e^{-i{\bm k}\cdot{\bm n}}f(\omega,{\bm n}).
\end{equation}
This group is not uniformly continuous or differentiable of ${\bm k}$ over the entire ${\cal A}$.  The dense subspace of ${\cal A}$ containing the elements for which ${\bm k} \rightarrow \mathfrak{u}_{\bm k}f$ is $N$-times continuously differentiable will be denoted by ${\cal C}^N({\cal A})$. An element $f$ belongs to ${\cal C}^N({\cal A})$ if and only if $\|\partial_{\bm \alpha} f\|<\infty$, for all $\alpha$ with $|{\bm \alpha}| \leq N$. Using the Leibniz rule and the fundamental property of the norm, one can easily show that if $f$ and $g$ belong to ${\cal C}^N({\cal A})$, then their product $f*g$ also belongs to ${\cal C}^N({\cal A})$. In other words, ${\cal C}^N({\cal A})$ is a dense sub-algebra of $C^*(\Omega\times \mathbb{Z}^d,{\bm B})$. Finally, for any $f \in {\cal C}^1({\cal A})$, we have:
\begin{equation}
\partial_{k_j} (\mathfrak{u}_{\bm k}f) = -\partial_j (\mathfrak{u}_{\bm k}f).
\end{equation}
These technicalities will be needed when the quantum time evolution in the presence of an electric field will be discussed.

A few important classical rules of calculus extend to the noncommutative calculus.
\begin{proposition}
The following identities hold true:
\begin{enumerate}
\item For $f\in {\cal C}^1({\cal A})$ and $f$ invertible in ${\cal A}$:
\begin{equation}\label{DiffId1}
\partial_j f^{-1}=-f^{-1} * (\partial_j f) * f^{-1}.
\end{equation} 
\item If $\Phi$ and $\Phi'$ are any differentiable functions on the complex plane and $f\in {\cal C}^1({\cal A})$, then:
\begin{equation} \label{DiffId2}
{\cal T}(\Phi(f)*\partial_j \Phi'(f) ) = 0.
\end{equation}
This tells us that the noncommutative manifold has no boundaries, like the classic Brillouin torus \cite{BELLISSARD:1994xj}.
\item (The noncommutative residue theorem \cite{Prodan:2009od}.) If $\Phi$ is an analytic function in a neighborhood of the unit circle and $u$ is any unitary element, then:
\begin{equation}\label{DiffId3}
{\cal T}(\Phi(u) \partial_j u) = a_{-1} {\cal T}(u^{-1}\partial_j u),
\end{equation}
where $a_{-1}$ is the coefficient of $z^{-1}$ in the Laurent expansion $\Phi(z)=\sum_{n=-\infty}^\infty a_n z^n$.
\end{enumerate}
\end{proposition}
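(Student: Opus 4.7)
The plan is to dispatch the three identities in sequence, relying on the Leibniz rule from Proposition~\ref{Derivation}, the cyclicity of $\mathcal{T}$, and the key observation that $\mathcal{T}(\partial_j g) = 0$ for every $g \in \mathcal{C}^1(\mathcal{A})$, since $(\partial_j g)(\omega,\bm{0}) = i\cdot 0 \cdot g(\omega,\bm{0}) = 0$ annihilates the integrand in the definition of $\mathcal{T}$. Identity~(1) is then immediate: applying $\partial_j$ to $f*f^{-1} = \bm{1}$ and using $\partial_j \bm{1} = 0$ together with the Leibniz rule yields $(\partial_j f)*f^{-1} + f*(\partial_j f^{-1}) = 0$, from which left multiplication by $f^{-1}$ gives the claim.

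For identity~(2), the core step is the vanishing $\mathcal{T}(f^n * \partial_j f) = 0$ for every integer $n \geq 0$: iterating the Leibniz rule gives $\partial_j f^{n+1} = \sum_{k=0}^{n} f^{k}*(\partial_j f)*f^{n-k}$, and applying $\mathcal{T}$ (whose composition with $\partial_j$ vanishes) together with cyclicity reduces the right-hand side to $(n+1)\mathcal{T}(f^n*\partial_j f) = 0$. I would then represent $\Phi'(f)$ by its Riesz contour integral and push $\partial_j$ through by means of identity~(1) applied to the resolvent, so that the resulting triple product can be rearranged via the commutativity of $\Phi(f)$ with $(f-z\bm{1})^{-1}$ (both lying in the commutative subalgebra generated by $f$) and the cyclicity of $\mathcal{T}$ into a contour integral of $\mathcal{T}(\Xi_z(f)*\partial_j f)$, with $\Xi_z(w) = \Phi(w)/(w-z)^{2}$. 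A second application of the functional calculus to $\Xi_z(f)$ then reduces the inner trace to a contour integral of $\mathcal{T}((f-w\bm{1})^{-1}*\partial_j f)$, which vanishes on the unbounded component of $\rho(f)$ by Neumann expansion and the core identity above.

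For identity~(3), I would exploit the Laurent expansion directly. Since $\Phi$ is analytic on an annulus containing $\sigma(u)\subseteq\{|z|=1\}$, the series $\Phi(u) = \sum_{n\in\mathbb{Z}}a_n u^n$ converges in norm, so continuity of $\mathcal{T}$ and of multiplication give $\mathcal{T}(\Phi(u)*\partial_j u) = \sum_{n}a_n\mathcal{T}(u^n*\partial_j u)$. The terms with $n \geq 0$ vanish by the core identity of part~(2). For $n \leq -2$, set $n = -m-1$ with $m\geq 1$; iterating identity~(1) and the Leibniz rule yields $\partial_j u^{-m} = -\sum_{k=0}^{m-1} u^{-k-1}*(\partial_j u)*u^{-m+k}$, whose trace (via $\mathcal{T}\circ\partial_j = 0$ and cyclicity) produces $m\,\mathcal{T}(u^{-m-1}*\partial_j u) = 0$. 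Only the $n=-1$ contribution survives, yielding the stated residue formula.

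The hardest part is really a bookkeeping issue rather than a conceptual one. In part~(2) the contours for the nested applications of the functional calculus must be positioned so that $\Xi_z$ is analytic throughout the relevant enclosed regions while $\mathcal{T}((f-w\bm{1})^{-1}*\partial_j f)$ vanishes on the contour of integration, a point that becomes delicate when $\rho(f)$ is disconnected; and in part~(3) one must confirm that the term-by-term cancellations commute with the infinite Laurent sum, which follows from norm convergence together with the continuity of $\mathcal{T}$ and of the multiplication in $\mathcal{A}$.
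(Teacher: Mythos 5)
The paper states this Proposition without proof (it defers to the cited references), so there is nothing internal to compare against; judged on its own, your argument is the standard one and is essentially correct. The two pillars --- ${\cal T}\circ\partial_j=0$ because $(\partial_j g)(\omega,{\bm 0})=i\cdot 0\cdot g(\omega,{\bm 0})=0$, and the cyclicity-plus-Leibniz computation giving $(n+1){\cal T}(f^n*\partial_j f)={\cal T}(\partial_j f^{n+1})=0$ --- are exactly right, and they carry all three items. For item (3) the reduction of $\partial_j u^{-m}$ and the norm-convergence of the Laurent series (using $\|u^n\|=1$ and the geometric decay of the $a_n$) justify the term-by-term interchange, so only the $n=-1$ term survives.

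Two points are worth tightening. First, in item (1) the Leibniz rule as stated in the paper applies to elements of ${\cal C}^1({\cal A})$, so before differentiating $f*f^{-1}={\bm 1}$ you should verify that $f^{-1}\in{\cal C}^1({\cal A})$ (e.g.\ from the identity $\mathfrak{u}_{\bm k}(f^{-1})-f^{-1}=-f^{-1}*(\mathfrak{u}_{\bm k}f-f)*\mathfrak{u}_{\bm k}(f^{-1})$ and a difference-quotient estimate); otherwise the step is mildly circular. Second, the contour gymnastics you worry about in item (2) are unnecessary: since $\Phi$ and $\Phi'$ are entire, you may take all contours to be circles of radius exceeding $\|f\|$, every point of which lies in $\{|w|>\|f\|\}\subset\rho(f)$ where the Neumann expansion applies directly, so the disconnectedness of $\rho(f)$ never enters. (Equivalently, one can bypass contours altogether by expanding $\Phi$ and $\Phi'$ in globally norm-convergent power series and invoking the core identity termwise.) Note that item (2) genuinely requires entirety --- for $\Phi'(z)=z$ and $\Phi(z)=z^{-1}$ the trace is the generally nonzero winding-number term of item (3) --- which your reduction to nonnegative powers respects.
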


It is also important for our discussion to mention that, once a trace is defined over a $C^*$-algebra, one can define the $L^p$ norms:
\begin{equation}
\|f\|_{L^p}=\left [{\cal T}(|f|^p)\right]^{1/p},
\end{equation}
and the corresponding $L^p({\cal A},{\cal T})$ Banach spaces, defined as the closure of ${\cal A}$ under the $L^p$-norms. A particularly important space is $L^2({\cal A},{\cal T})$, since it can be endowed with a scalar product
\begin{equation}
\langle f,g\rangle = {\cal T}(f^**g),
\end{equation}
which transforms $L^2({\cal A},{\cal T})$ into a Hilbert space. As we shall see, certain automorphisms of the algebra ${\cal A}$ can be extended and viewed as self-adjoint operators on the Hilbert space $L^2({\cal A},{\cal T})$. An extremely fortunate consequence of all these is that we can use the functional calculus for self-adjoint operators on Hilbert spaces to define and compute functions of automorphisms. We will use this observation in our numerical analysis.

\section{The noncommutative theory of electron transport}

Here we give a telegraphic exposition of the electron transport theory in aperiodic systems, developed by Bellissard, Schulz-Baldes and their collaborators \cite{BELLISSARD:1994xj,Schulz-Baldes:1998vm,Schulz-Baldes:1998oq}. The goal is to present an explicit derivation of the noncommutative Kubo formula written in Eq.~\ref{KuboFormula}, in the most minimalist fashion possible. For a detailed exposition of the subject, the interested reader can consult the comprehensive lecture notes \cite{BellissardLectNotesPhys2003cy} by Bellissard. Before proceeding, we need a few more standard results in the theory of $C^*$-algebras.

\subsection{\underline{Derivations and automorphisms}}

Let $g$ be a generic self-adjoint element of $C^*(\Omega\times \mathbb{Z}^d,{\bm F})$, $g^*=g$. Then $g$ defines a bounded $*$-derivation (i.e. with the properties 2 and 3 of Proposition~\ref{Derivation}) via the following action:
\begin{equation}
{\cal L}_g(f)=i(g*f-f*g).
\end{equation}
The reverse statement is also true for a simple $C^*$-algebra with unity \cite{SakaiBook1971gu}, namely, any bounded $*$-derivation is generated by a self-adjoint element of the algebra via the action written above. Any bounded derivation generates a one-parameter uniformly continuous group of isometric automorphisms \cite{SakaiBook1971gu}. To make this statement fully general, let us assume that $g$ depends smoothly on the real parameter $t$ and write $g(t)$.

\begin{proposition}\label{Evolution}
The equation for $u(t,t')$:
\begin{equation}
\partial_t u(t,t')=-{\cal L}_{g(t)} \circ u(t,t'), \ \ u(t',t')=1 \ \mbox{( = the identity automorphism)}
\end{equation}
has a unique solution, and this solution defines a one-parameter uniformly continuous group of isometric automorphisms on $C^*(\Omega\times \mathbb{Z}^d,{\bm F})$. In other words:
\begin{enumerate}
\item $u(t,t') \circ u(t',t'') = u(t,t'')$,
\item $\|u(t,t')f\|=\|f\|$,
\item $u(t,t)=1$,
\end{enumerate}
for any real $t$, $t'$ and $t''$.
\end{proposition}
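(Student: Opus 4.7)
The plan is to treat the stated evolution equation as a linear ODE in the Banach algebra $\mathrm{End}(\mathcal{A})$ of bounded operators on $\mathcal{A}$, and construct $u(t,t')$ via a Dyson series (equivalently, Picard iteration). Set $M=\sup_{s\in I}\|g(s)\|<\infty$ on a compact interval $I$ containing $t,t'$. Since $\mathcal{L}_{g(s)}$ is bounded with $\|\mathcal{L}_{g(s)}\|\le 2M$, the partial sums
\begin{equation}
u_n(t,t') = \mathbf{1}+\sum_{k=1}^n (-1)^k \int_{t'}^t\!\! ds_1 \int_{t'}^{s_1}\!\! ds_2 \cdots \int_{t'}^{s_{k-1}}\!\! ds_k \; \mathcal{L}_{g(s_1)}\!\circ\cdots\circ \mathcal{L}_{g(s_k)}
\end{equation}
have $k$-th term bounded in operator norm by $(2M|t-t'|)^k/k!$, so the series converges absolutely and uniformly on $I$ to an element $u(t,t')\in\mathrm{End}(\mathcal{A})$. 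Term-by-term differentiation (justified by the same majorization) shows that $u(t,t')$ solves the ODE with $u(t',t')=\mathbf{1}$.

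For uniqueness, if $v(t,t')$ is another solution, then $w(t):=u(t,t')-v(t,t')$ satisfies $\partial_t w = -\mathcal{L}_{g(t)}\circ w$ with $w(t')=0$, and the Gronwall inequality, applied to $\|w(t)\|$ with the bound $\|\mathcal{L}_{g(t)}\|\le 2M$, forces $w\equiv 0$. The group property (1) then follows from uniqueness: both $t\mapsto u(t,t')\circ u(t',t'')$ and $t\mapsto u(t,t'')$ solve the same ODE and agree at $t=t'$. Specializing to $t''=t$ gives $u(t',t)\circ u(t,t')=\mathbf{1}$, so $u(t,t')$ is invertible with inverse $u(t',t)$.

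To see that $u(t,t')$ is a $*$-automorphism, fix $f,h\in\mathcal{A}$ and consider $D(t):=u(t,t')(f*h)-u(t,t')f * u(t,t')h$. Using the Leibniz rule (Proposition~\ref{Derivation}, property 3) on $\mathcal{L}_{g(t)}$, a direct computation gives $\partial_t D(t)=-\mathcal{L}_{g(t)}D(t)$ with $D(t')=0$, hence $D\equiv 0$ by uniqueness. The same argument applied to $E(t):=u(t,t')(f^*)-(u(t,t')f)^*$, using that $\mathcal{L}_{g(t)}$ is a $*$-derivation, shows $u(t,t')$ commutes with $^*$. Hence $u(t,t')$ is a $*$-homomorphism between $C^*$-algebras, and therefore norm-decreasing; since its inverse $u(t',t)$ is also a norm-decreasing $*$-homomorphism, both are isometric, proving (2). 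Property (3) is built into the construction. Uniform continuity in $(t,t')$ follows from $\|u(t,t')-u(s,t')\|\le 2M\,e^{2M|t-t'|}|t-s|$, obtained by integrating the ODE.

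There is no real obstacle here; everything reduces to standard linear ODE theory in a Banach algebra once one observes that a bounded derivation has finite operator norm on $\mathcal{A}$. The only mild subtlety is that the ``smoothness'' of $g(t)$ in $t$ is only needed to ensure the integrands in the Dyson expansion are well-defined and measurable; mere norm-continuity of $t\mapsto g(t)$ suffices for existence and uniqueness, while smoothness then transfers to $u(t,t')$ through the ODE.
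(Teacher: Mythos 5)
Your proof is correct and complete; the paper itself offers no proof of this proposition, deferring to Sakai's book, and your Dyson-series/Picard construction together with the uniqueness-based arguments for the group law, the multiplicativity, the $*$-compatibility, and the isometry (via the automatic contractivity of $*$-homomorphisms between $C^*$-algebras applied to both $u(t,t')$ and its inverse $u(t',t)$) is exactly the standard argument that the cited reference supplies. The only cosmetic point is the order of composition when you specialize $t''=t$ in the group law (you get $u(t,t')\circ u(t',t)={\bm 1}$ directly, with the reversed identity following by exchanging the roles of $t$ and $t'$), which does not affect the conclusion that $u(t,t')$ is invertible with inverse $u(t',t)$.
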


Below is another generic result in operator algebras, which will be important for our discussion.

\begin{proposition}\label{Laplace} Let $u(t,t')$ be a one parameter group, invariant to the simultaneous translation of its arguments, i.e. $u(t,t')=u(t-t')$. If $\|u(t)\|\leq e^{at}$ for some $a\in {\mathbb R}$, then the Laplace transform of the one-parameter group is well defined and given by:
\begin{equation}
\begin{array}{c}
\int_0^\infty dt \ e^{-\zeta t}u(t) = (\zeta + {\cal L})^{-1},
\end{array}
\end{equation}
for all $\zeta \in {\mathbb C}$ with $\mbox{Re}(\zeta) > a$. Above, ${\cal L}$ is the generator of the one-parameter group $u$.
\end{proposition}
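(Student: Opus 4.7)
My strategy is to define $R(\zeta) := \int_0^\infty dt\, e^{-\zeta t} u(t)$ as a bounded operator on $C^*(\Omega\times \mathbb{Z}^d,{\bm F})$ and then verify directly that it is the two-sided inverse of $\zeta + {\cal L}$.

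First I would check that the defining integral converges absolutely in the Banach space norm. The hypothesis $\|u(t)\| \leq e^{at}$ together with $\mbox{Re}(\zeta) > a$ gives $\|e^{-\zeta t} u(t)\| \leq e^{-(\mbox{Re}(\zeta)-a)t}$, which is integrable on $[0,\infty)$. Hence the Bochner integral defines a bounded operator $R(\zeta)$ of norm at most $(\mbox{Re}(\zeta) - a)^{-1}$, and the strong continuity of $t\mapsto u(t)$ (automatic from Proposition~\ref{Evolution}) guarantees that the integrand is measurable.

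The main computation is to show $(\zeta + {\cal L}) R(\zeta) = 1$. With the sign convention of Proposition~\ref{Evolution}, the generator satisfies ${\cal L} = \lim_{s\to 0^+} s^{-1}(1 - u(s))$. Using the group property and the change of variables $\tau = s+t$,
\begin{equation*}
u(s) R(\zeta) = \int_0^\infty dt\, e^{-\zeta t} u(s+t) = e^{\zeta s}\Big(R(\zeta) - \int_0^s d\tau\, e^{-\zeta \tau} u(\tau)\Big).
\end{equation*}
Subtracting $R(\zeta)$, dividing by $s$, and letting $s \to 0^+$ (invoking the fundamental theorem of calculus for the remainder integral, which needs only continuity of $\tau \mapsto u(\tau)$ at the origin) yields $-{\cal L} R(\zeta) = \zeta R(\zeta) - 1$, i.e.\ $(\zeta + {\cal L}) R(\zeta) = 1$. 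For the left-inverse identity $R(\zeta)(\zeta + {\cal L}) = 1$ I would integrate by parts, using $u(t){\cal L} = {\cal L} u(t) = -\tfrac{d}{dt}u(t)$:
\begin{equation*}
R(\zeta){\cal L} = -\int_0^\infty dt\, e^{-\zeta t}\tfrac{d}{dt}u(t) = 1 - \zeta R(\zeta),
\end{equation*}
the boundary term at infinity vanishing because $\|e^{-\zeta t} u(t)\| \to 0$ under the standing assumption $\mbox{Re}(\zeta) > a$. Combining both identities delivers $R(\zeta) = (\zeta + {\cal L})^{-1}$.

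The only subtle step is justifying the interchange of the limit $s\to 0^+$ (or of $d/dt$) with the Bochner integral. In the setting of Proposition~\ref{Evolution} the generator ${\cal L}$ is bounded and $u(t)$ is uniformly (not merely strongly) continuous, so all interchanges are immediate and the identities hold on all of $C^*(\Omega \times \mathbb{Z}^d,{\bm F})$ without any domain issues; a general unbounded-generator version would instead require restricting to the domain of ${\cal L}$ and appealing to strong continuity, but this extension is not needed for the applications envisioned in the paper.
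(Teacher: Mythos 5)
Your proof is correct, but it follows a genuinely different route from the paper's. The paper discretizes the Laplace integral as a Riemann sum, uses the group law to rewrite $u(n\Delta t)\,e^{-n\zeta\Delta t}$ as the $n$-th power of $e^{-\zeta\Delta t}u(\Delta t)$, sums the resulting geometric series to $\Delta t\,\bigl(1-e^{-\zeta\Delta t}u(\Delta t)\bigr)^{-1}$, and identifies the $\Delta t\to 0$ limit with $(\zeta+{\cal L})^{-1}$ by expanding $1-e^{-\zeta\Delta t}u(\Delta t)=\Delta t(\zeta+{\cal L})+o(\Delta t)$. You instead define the Bochner integral $R(\zeta)$ and verify directly that it is a two-sided inverse of $\zeta+{\cal L}$, via the shift identity $u(s)R(\zeta)=e^{\zeta s}\bigl(R(\zeta)-\int_0^s e^{-\zeta\tau}u(\tau)\,d\tau\bigr)$ on one side and integration by parts on the other. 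Your version is the standard resolvent computation from semigroup theory; it makes explicit the convergence and interchange-of-limits issues that the paper's argument glosses over (the passage of $\Delta t\to 0$ through the infinite sum, and the convergence of the geometric series, which requires exactly $\mathrm{Re}(\zeta)>a$), and, as you note, it extends with only domain bookkeeping to strongly continuous groups with unbounded generators. The paper's argument is shorter and makes vivid why the group property is the essential ingredient, but it is really a heuristic that your computation makes rigorous. Both sign conventions agree: with $\partial_t u=-{\cal L}\circ u$ one indeed has ${\cal L}=\lim_{s\to 0^+}s^{-1}(1-u(s))$, as you use.
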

\begin{proof} 
\begin{equation}
\begin{split}
\int_0^\infty dt \ e^{-\zeta t}u(t) & = \lim_{\Delta t \rightarrow 0} \sum_{n=0}^\infty e^{-n \zeta \Delta t} u(n\Delta t)\Delta t \medskip \\
& =\lim_{\Delta t \rightarrow 0} \sum_{n=0}^\infty \left (e^{-\zeta \Delta t} u(\Delta t)\right )^n \Delta t \medskip \\
& = \lim_{\Delta t \rightarrow 0} \Delta t (1- e^{-\zeta \Delta t} u(\Delta t))^{-1}\medskip \\
& =(\zeta +{\cal L})^{-1}.\qed
\end{split}
\end{equation}
\end{proof}

\subsection{\underline{The quantum time evolution with and without a driving electric field}}

As we already mentioned, the effective lattice Hamiltonians $h$ considered in condensed matter physics involve a finite number of hopping terms from a site to its neighboring sites, thus $h$ is a self-adjoint element of ${\cal A}_0$. We need to clarify that $h$ describes the intrinsic properties of a material, i.e. it does not include the generators of the dynamics due to external fields. A short-range Hamiltonian defines a bounded *-derivation:
\begin{equation}
\begin{split}
& {\cal L}_h : C^*(\Omega\times \mathbb{Z}^d,{\bm F}) \rightarrow C^*(\Omega\times \mathbb{Z}^d,{\bm F}), \medskip \\
 & {\cal L}_h(f) =i(h*f - f*h).
\end{split}
\end{equation}
The one parameter group of automorphisms generated by ${\cal L}_h$ is the quantum time evolution in the absence of any external fields. It will be denoted by $u(t,t')$ (or $u(t-t')$). Explicitly, $u(t,t')$ is given by:
\begin{equation}\label{TE}
C^*(\Omega\times \mathbb{Z}^d,{\bm F}) \ni f \rightarrow u(t,t')f=e^{-i(t-t')h}*f*e^{it(t-t')h},
\end{equation}
where the elements $e^{\pm i(t-t')h} \in C^*(\Omega\times \mathbb{Z}^d,{\bm F})$ are defined through the functional calculus.

The quantum evolution in the presence of a uniform electric field ${\bm E}$ is not as straightforward. It is generated by the Hamiltonian:
\begin{equation}
h_{\bm E}=h+{\bm E}{\bm x},
\end{equation}
and it will be denoted by $u_E(t,t')$ (or $u_E(t-t')$). Here we set the electron charge to $-1$ (atomic units). This $h_{\bm E}$ is not an element of the algebra $C^*(\Omega\times \mathbb{Z}^d,{\bm F})$. Nevertheless, note that $h_E$ still defines a well-behaved $*$-derivation over the algebra ${\cal A}_0$:
\begin{equation}
{\cal L}_{h_{\bm E}}={\cal L}_h-{\bm E}{\bm \nabla}, \ {\bm \nabla}=(\partial_1,\ldots,\partial_d),
\end{equation}
which can be extended to the dense sub-algebra ${\cal C}^1({\cal A})$ of $C^*(\Omega\times \mathbb{Z}^d,{\bm F})$. Furthermore, ${\cal L}_{h_{\bm E}}$ generates a one parameter group of isometric automorphisms on $C^*(\Omega\times \mathbb{Z}^d,{\bm F})$. This can be established as follows.

\begin{proposition} Let $h\in {\cal A}_0$ and $\tilde{h}(t)=\mathfrak{u}_{t{\bm E}}h$, with $\mathfrak{u}$ defined in Eq.~\ref{Trans}. Then $\tilde{h}(t)$ is an element of ${\cal A}_0$ for all $t \in {\mathbb R}$. As a consequence, the equation in $v(t,t')$,
\begin{equation}
\partial_t v(t,t') =- {\cal L}_{\tilde{h}(t)} \circ v(t,t'), \ \ v(t',t')=1,
\end{equation}
defines a uniformly continuous group of isometric automorphisms $v(t,t')$ on $C^*(\Omega\times \mathbb{Z}^d,{\bm F})$ (cf. Proposition~\ref{Evolution}). Then:
\begin{equation}\label{uE}
\begin{split}
& u_E(t,t') : C^*(\Omega\times \mathbb{Z}^d,{\bm F}) \rightarrow C^*(\Omega\times \mathbb{Z}^d,{\bm F}) \medskip  \\
& u_E(t,t')=\mathfrak{u}_{t{\bm E}}^{-1} \circ v(t,t') \circ \mathfrak{u}_{t'{\bm E}},
\end{split}
\end{equation}
defines a one parameter group of isometric automorphisms on $C^*(\Omega\times \mathbb{Z}^d,{\bm F})$, whose generator is precisely ${\cal L}_{h_{\bm E}}$.
\end{proposition}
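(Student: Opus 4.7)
\emph{Proof proposal.} The plan is to verify the claims in the natural order: first that $\tilde h(t)\in\mathcal A_0$ and that Proposition~\ref{Evolution} applies to produce $v(t,t')$, then that the conjugated family $u_E(t,t')$ is translation-invariant (hence a true one-parameter group), and finally that its generator equals $\mathcal L_{h_{\bm E}}$. For the first part, note that $(\mathfrak u_{\bm k}f)(\omega,\bm n)=e^{-i\bm k\cdot\bm n}f(\omega,\bm n)$ only multiplies the kernel by a unimodular phase, so compact support in $\bm n$ and continuity in $\omega$ are preserved; hence $\tilde h(t)\in\mathcal A_0$. A short computation using the explicit $*$-operation in Eq.~\ref{Star} shows $\tilde h(t)^*=\tilde h(t)$ whenever $h^*=h$. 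Because $h$ has finite support in $\bm n$, the norm bound in Eq.~\ref{NormA} combined with $|e^{-it\bm E\cdot\bm n}-e^{-it'\bm E\cdot\bm n}|\leq |t-t'|\,|\bm E\cdot\bm n|$ shows that $t\mapsto\tilde h(t)$ is norm-smooth, so Proposition~\ref{Evolution} applies and delivers the family $v(t,t')$ satisfying $\partial_t v(t,t')=-\mathcal L_{\tilde h(t)}\circ v(t,t')$ with $v(t',t')=1$.

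The crux of the argument is the ``interaction-picture'' identity
\begin{equation}
v(t+s,t'+s) \;=\; \mathfrak{u}_{s\bm E}\circ v(t,t')\circ \mathfrak{u}_{s\bm E}^{-1}.
\end{equation}
Its proof rests on two observations: the additivity $\mathfrak u_{(t+s)\bm E}=\mathfrak u_{s\bm E}\circ\mathfrak u_{t\bm E}$ yields $\tilde h(t+s)=\mathfrak u_{s\bm E}\tilde h(t)$, and because $\mathfrak u_{s\bm E}$ is an algebra automorphism one has the covariance $\mathcal L_{\mathfrak u_{s\bm E}g}=\mathfrak u_{s\bm E}\circ \mathcal L_g\circ \mathfrak u_{s\bm E}^{-1}$ for any $g$. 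Differentiating the right-hand side of the display in $t$ then produces $-\mathcal L_{\tilde h(t+s)}$ applied to the expression itself, with trivial initial data at $t=t'$, so the uniqueness clause in Proposition~\ref{Evolution} forces it to coincide with $v(t+s,t'+s)$. Substituting back into Eq.~\ref{uE} telescopes the auxiliary $\mathfrak u$-factors and yields $u_E(t+s,t'+s)=u_E(t,t')$, i.e.\ translation invariance. The group law $u_E(t,t')\circ u_E(t',t'')=u_E(t,t'')$ is then immediate from the analogous law for $v$ after the cancellation $\mathfrak u_{t'\bm E}\circ\mathfrak u_{t'\bm E}^{-1}=\mathrm{id}$, and isometry is inherited from that of $v$ and $\mathfrak u_{\bm k}$.

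It remains to identify the generator by differentiating at $t=t'$. The identity $\partial_t \mathfrak u_{t\bm E}^{-1}=\bm E\cdot\bm\nabla\circ\mathfrak u_{t\bm E}^{-1}$ follows at once from $(\partial_j g)(\omega,\bm n)=in_j g(\omega,\bm n)$, and the product rule applied to $u_E(t,t')=\mathfrak u_{t\bm E}^{-1}\circ v(t,t')\circ \mathfrak u_{t'\bm E}$ gives
\begin{equation}
\partial_t u_E(t,t')\big|_{t=t'} \;=\; \bm E\cdot\bm\nabla \;-\; \mathfrak u_{t'\bm E}^{-1}\circ \mathcal L_{\tilde h(t')}\circ \mathfrak u_{t'\bm E}.
\end{equation}
The covariance identity used above reduces the second term to $\mathcal L_h$, so the right-hand side equals $-(\mathcal L_h-\bm E\cdot\bm\nabla)=-\mathcal L_{h_{\bm E}}$, which identifies $\mathcal L_{h_{\bm E}}$ as the generator of the one-parameter group. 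I expect no deep analytical obstacle; the main risk is purely bookkeeping, namely keeping signs and the order of the $\mathfrak u_{t\bm E}$ versus $\mathfrak u_{t\bm E}^{-1}$ factors consistent across the several commutations appearing in the translation-invariance step and in the generator calculation.
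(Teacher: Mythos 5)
Your proposal is correct and follows essentially the same route as the paper: both hinge on the covariance identity ${\cal L}_{\mathfrak{u}_{t{\bm E}}h}=\mathfrak{u}_{t{\bm E}}\circ{\cal L}_h\circ\mathfrak{u}_{t{\bm E}}^{-1}$ and then differentiate the conjugated evolution at coincident times to read off $-{\cal L}_{h_{\bm E}}=-({\cal L}_h-{\bm E}{\bm \nabla})$. Your interaction-picture identity $v(t+s,t'+s)=\mathfrak{u}_{s{\bm E}}\circ v(t,t')\circ\mathfrak{u}_{s{\bm E}}^{-1}$ is a nice explicit justification of the translation invariance that the paper leaves as "straightforward to verify," but it does not change the substance of the argument.
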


\proof We start by showing that:
\begin{equation}\label{Id1}
{\cal L}_{\tilde{h}(t)} = \mathfrak{u}_{t{\bm E}} \circ {\cal L}_h \circ \mathfrak{u}_{t{\bm E}}^{-1}.
\end{equation}
Indeed:
\begin{equation}
\begin{split}
\left  (\mathfrak{u}_{t{\bm E}} \circ {\cal L}_h \circ \mathfrak{u}_{t{\bm E}}^{-1}\right )(f) & =\mathfrak{u}_{t{\bm E}}\left ( {\cal L}_h(\mathfrak{u}_{t{\bm E}}^{-1} f) \right)\medskip \\
& =i \mathfrak{u}_{t{\bm E}}\left (h*(\mathfrak{u}_{t{\bm E}}^{-1} f) - (\mathfrak{u}_{t{\bm E}}^{-1} f) *h\right)\medskip \\
&=i (\mathfrak{u}_{t{\bm E}}h)*(\mathfrak{u}_{t{\bm E}}(\mathfrak{u}_{t{\bm E}}^{-1} f))-i(\mathfrak{u}_{t{\bm E}} (\mathfrak{u}_{t{\bm E}}^{-1}f) )*(\mathfrak{u}_{t{\bm E}}h) \medskip \\
& =i((\mathfrak{u}_{t{\bm E}}h)* f-f *(\mathfrak{u}_{t{\bm E}}h)),
\end{split}
\end{equation}
and the last line is precisely ${\cal L}_{\tilde{h}(t)}f$. Given the group properties of $v(t,t')$, it is straightforward to verify that $u_E$ defined in Eq.~\ref{uE} has all the three properties listed in Proposition~\ref{Evolution}. Furthermore, we have:
\begin{equation}
u_E(t,t')-1=\mathfrak{u}_{t{\bm E}}^{-1} \circ v(t,t') \circ (\mathfrak{u}_{t'{\bm E}}-\mathfrak{u}_{t{\bm E}}) + \mathfrak{u}_{t{\bm E}}^{-1} \circ (v(t,t')-1) \circ \mathfrak{u}_{t{\bm E}},
\end{equation} 
so the following equation holds true on ${\cal C}^1({\cal A})$:
\begin{equation}
\lim_{t' \rightarrow t} \frac{u_E(t,t')-1}{t-t'} = \mathfrak{u}_{t{\bm E}}^{-1} \circ ({\bm E}{\bm \nabla})  \circ \mathfrak{u}_{t{\bm E}}  - \mathfrak{u}_{t{\bm E}}^{-1} \circ {\cal L}_{\tilde{h}(t)} \circ \mathfrak{u}_{t{\bm E}}.
\end{equation}
If we use the identity from Eq.~\ref{Id1} and the fact that $\mathfrak{u}$ commutes with ${\bm \nabla}$, we can see that the right hand side is exactly $-({\cal L}_h-{\bm E}{\bm \nabla})$.\qed

\subsection{\underline{The collision processes}}

Dissipation is a result of a perpetual sequence of electron scattering events. A scattering event can be described by a scattering potential $w \in C^*(\Omega\times \mathbb{Z}^d,{\bm F})$, whose exact form depends on the dissipation mechanism that is being considered. The scattering potential will, in general, fluctuate from one scattering event to another but for simplicity we will neglect such fluctuations. As discussed and exemplified in the lecture notes of Ref.~\cite{BellissardLectNotesPhys2003cy}, $w$ can be derived from microscopic many-electron quantum models describing the electron system, the environment (such as a phonon bath) and the coupling between the two system, by integrating out the degrees of freedom of the environment. General arguments show that $w$ must commute with the Hamiltonian $h$.

Following Refs.~\cite{BELLISSARD:1994xj,Schulz-Baldes:1998vm,Schulz-Baldes:1998oq,BellissardLectNotesPhys2003cy},  we introduce the collision processes via a time-dependent potential:
\begin{equation}\label{ScattPot}
w(t)=\sum_{j \in \mathbb{Z}} \delta(t-t_j)w. 
\end{equation} 
A sequence of collision events can be indexed by a countable subset ${\cal E}$ of ${\mathbb R}$, where each point of ${\cal E}$ represents the collision time of a particular scattering event from the sequence. The lengths $s_j$ of the time intervals between two consecutive collision events $t_j$ and $t_{j+1}$ are assumed to be independent random variables taking values in ${\mathbb R}_+=(0,\infty)$ and identically distributed according to the Poisson law $e^{-s/\tau}ds/\tau$.  $\tau$ is called the collision time and it gives the average time between two consecutive collision events. Thus, the collision processes can be parametrized by a sequence:
\begin{equation}
{\bm s} = (\ldots, s_{-1},s_0,s_1 \ldots).
\end{equation}
The label $0$ will be attached to the time interval that contains the origin. To complete the parametrization, we need one more parameter, which is taken to be:
\begin{equation}
q=\mbox{dist}\{0,{\cal E}\cap {\mathbb R}_+\}.
\end{equation}
In other words, $q$ is the time coordinate of the collision event occurring right after $t=0$. As such, $q$ takes values in the interval $[0,s_0]$. With this parametrization, the collision times are given by $t_j=q+s_1 + \ldots + s_{j-1}$.

To summarize, the collision processes are parametrized by a point $\eta=({\bm s},q)$ of the space
\begin{equation}
\Xi = ({\mathbb R}_+)^{\mathbb Z}\times {\mathbb R}_+,
\end{equation}
which is endowed with the probability measure:
\begin{equation}
d\mu(\eta)= \left (\prod\limits_{j\in{\mathbb Z}} e^{-s_j/\tau}ds_j/\tau \right ) \left (\chi_{[0,s_0]}(q)d q/\tau \right),
\end{equation}
where $\chi_{[0,s_0]}(q)$ is the characteristic function of the interval $[0,s_0]$. We will use the shorthand $d\mu(s)$ for $e^{-s/\tau}ds/\tau$ and $d\mu({\bm s})$ for $\prod_{j\in{\mathbb Z}} d\mu(s_j)$. A crucial observation is the existence of a natural action of the time translation $\alpha_{\Delta t}$ on ${\cal E}$ which simply shifts each collision time by a $\Delta t$. This action defines a one parameter group of automorphisms on the probability space $\Xi$ that are ergodic and measure preserving. As such, the time averaging and the $\eta$-averaging coincide \cite{SinaiBook1976uh} or, in other words, the collision processes are self-averaging. In the following, we will attach a label $\eta$ and write $w_\eta(t)$ for the time dependent potential of Eq.~\ref{ScattPot}.

\subsection{\underline{The effective time evolution}}

The time evolution of the system in the presence of a driving electric field and of the collisions events is generated by the Hamiltonian:
\begin{equation}
h_{\eta,E}(t)=h +E{\bm x}+w_\eta (t).
\end{equation}
Since the collisions are assumed to take place instantaneously, we can write down the explicit expression of the time evolution:
\begin{equation}\label{QEvolution}
u_{\eta,E}(t,t')=u_E(t-\max\{t_j\}) \left (\prod\limits_{t\geq t_j \geq t_{j-1}\geq t'} \hat{w} u_E(t_j-t_{j-1}) \right ) u_E(\min\{t_{j-1}\}-t'),
\end{equation}
where the max/min is taken over the $t_j$'s appearing in the product. The automorphism $\hat{w}$ is defined as:
\begin{equation}
\hat{w}f=e^{iw}*f*e^{-i w}.
\end{equation}
Another way of writing the time evolution is (from here on we choose $t'=0$):
\begin{equation}\label{GEvolution}
u_{\eta,E}(t)=\chi_{(t_1,0)}(t) u_E(t) +  \sum\limits_{j=1}^\infty  \chi_{(t_{j+1},t_j)}(t) 
u_E(t-t_j)\hat{w} \ldots u_E(t_2-t_1) \hat{w} u_E(t_1).
\end{equation}
The time evolution is a covariant automorphism:
\begin{equation}
u_{\eta,E}(t-\Delta t,t'-\Delta t)=u_{\alpha_{\Delta t}\eta,E}(t,t'),
\end{equation}
and, as such, the averaging over the collision configurations:
\begin{equation}
\bar{u}_E(t,t')=\int_\Xi d \eta \ u_{\eta,E} (t,t'),
\end{equation}
results in a one-parameter group of automorphisms that are invariant to time translations: $\bar{u}_E(t,t')=\bar{u}_E(t-t')$. 

Our goal now is to compute the averaged time evolution $\bar{u}_E(t)$. We will evaluate its Laplace transform, $L[\bar{u}_E]$, starting from Eq.~\ref{GEvolution}. Considering each term separately, we need to evaluate:
\begin{equation}
\int_0^\infty dt
    \ e^{-\zeta t}   \int d\mu({\bm s})\int_0^{s_0}dq/\tau   \  \ \chi_{(t_j,t_{j-1})}(t)u_E(t-t_j) \hat{w}  \ldots u_E(t_2-t_1) \hat{w} u_E(t_1).
 \end{equation}
 This is same as: 
\begin{equation}
\begin{split}
& \int d \mu({\bm s})\int_0^{s_0}dq/\tau \int_{t_j}^{t_{j+1}} dt \ e^{-\zeta(t-t_j)}u_E(t-t_j) \hat{w}  \medskip \\
& \ \ \  \times \ e^{-\zeta(t_j-t_{j-1})}u_E(t_j-t_{j-1}) \hat{w} \ldots e^{-\zeta(t_2-t_1)}u_E(t_2-t_1) \hat{w} e^{-\zeta t_1}u_E(t_1) \medskip \\
& = \int d\mu(s_j)\int_{0}^{s_j} dt \ e^{-\zeta t} u_E(t) \hat{w} \int d\mu(s_{j-1}) e^{-\zeta s_{j-1}} u_E(s_{j-1}) \hat{w} \ldots\medskip \\
& \ \ \  \int d\mu(s_1) e^{-\zeta s_1} u_E(s_1) \hat{w} \int d\mu(s_0)\int_0^{s_0} dq/\tau e^{-\zeta q} u_E(q).
\end{split}
\end{equation}
The emerging integrals can be evaluated explicitly, by directly applying Proposition~\ref{Laplace} or slight variants of it:
\begin{equation}
\int d\mu(s_j)\int_{0}^{s_j} dt \ e^{-\zeta t} u_E(t)=\tau \left (1+\tau(\zeta+{\cal L}_{h_E})\right)^{-1},
\end{equation}
\begin{equation}
\int d\mu(s_j) \ e^{-\zeta s_j} u_E(s_j)=(1+\tau(\zeta+{\cal L}_{h_E}))^{-1}, \ (j=i-1,\ldots,1),
\end{equation}
and
\begin{equation}
\int d\mu(s_0)\int_{0}^{s_0} dq/\tau \ e^{-\zeta q} u_E(q)= (1+\tau(\zeta+{\cal L}_{h_E}))^{-1}.
\end{equation}
Thus:
\begin{equation}
L[\bar{u}](\zeta)=\tau (1+\tau(\zeta+{\cal L}_{h_E}))^{-1} \sum_{j=0}^\infty  \left ( \hat{w} (1+\tau(\zeta+{\cal L}_{h_E}))^{-1} \right )^{j},
\end{equation}
or
\begin{equation}
L[\bar{u}](\zeta)=\left (\zeta + (1-\hat{w} )/\tau + {\cal L}_{h_E} \right )^{-1}. 
\end{equation}
Given the statement of Proposition~\ref{Laplace}, we can now draw the important conclusion that the generator of the effective time evolution $\bar{u}_E(t)$ is:
\begin{equation}
 (1-\hat{w} )/\tau +{\cal L}_{h_E}.
\end{equation}
We introduce the notation $\Gamma$ for the collision operator $(1-\hat{w} )/\tau$, and finally write: 
\begin{equation}\label{ubar}
\bar{u}_E(t)=e^{-t\left(\Gamma +{\cal L}_{h_E}\right)}.
\end{equation}
 The spectral properties of the collision operator have been discussed in Ref.~\cite{BellissardLectNotesPhys2003cy} for various dissipation mechanisms. In our numerical applications we will use a simplifying approximation for $\Gamma$, called the relaxation time approximation, where $\Gamma$ becomes proportional to the identity automorphism. 

\subsection{\underline{The noncommutative Kubo formula}}

We assume that the electric field is turned on at $t=0$, when the electron system was still in its finite temperature equilibrium state. In this state, the expected value (per unit volume) of an observable $f \in C^*(\Omega\times \mathbb{Z}^d,{\bm F})$ is given by ${\cal T}(f * \rho_0)$, where:
\begin{equation}
\rho_0=\Phi_{\mathrm{FD}}(h)=\left (1+e^{(h-E_F)/kT} \right )^{-1},
\end{equation}
with $\Phi_{\mathrm{FD}}$ being the Fermi-Dirac distribution, $T$ being the temperature, $k$ the Boltzmann constant and $E_F$ the Fermi energy. The Fermi energy is fixed by the electron density: $n_e={\cal T}(\rho_0)$. After the electric field was turned on, the expected values of the observables are given by ${\cal T}(f * \rho_t)$, where $\rho_t=u_{\eta,E}(t,0)\rho_0$. As such, the charge current density at time $t$ is:
\begin{equation}\label{CurDen}
{\bm J}_{\eta,E}(t)={\cal T}\left (\hat{{\bm j}} * u_{\eta,E}(t,0)\rho_0 \right).
\end{equation}
The charge current operator $\hat{{\bm j}}$ is explicitly given by $\hat{{\bm j}}=-{\bm \nabla}h$. Note that above we have a trace over volume, so Eq.~\ref{CurDen} gives indeed the current density. 

Now, the average over the collision events gives:
\begin{equation}
{\bm J}_{E}(t)=\int_\Xi d\mu(\eta) \ {\cal T}\left (\hat{{\bm j}} * u_{\eta,E}(t,0)\rho_0\right)={\cal T}\left (\hat{{\bm j}} * \bar{u}_{E}(t)\rho_0 \right).
\end{equation}
The time average of this effective charge current density,
\begin{equation}
\bar{{\bm J}}_E=\lim_{t' \rightarrow \infty} \frac{1}{t'}\int_0^{t'} dt \ {\bm J}_E(t),
\end{equation}
can be computed as:
\begin{equation}
\bar{{\bm J}}_E = \lim_{\delta \rightarrow 0} \delta \int_0^\infty dt \  e^{-\delta t} \ {\cal T}\left (\hat{{\bm j}} * \bar{u}_E(t)\rho_0 \right).
\end{equation}
At this step we see the Laplace transform of $\bar{u}_E(t)$ appearing:
\begin{equation}
\lim_{\delta \rightarrow 0} \delta \int_0^\infty dt \  e^{-\delta t} \ {\cal T}\left (\hat{{\bm j}} * \bar{u}_E(t)\rho_0 \right ) =\lim_{\delta \rightarrow 0} \delta \ {\cal T}\left (\hat{{\bm j}} * L[\bar{u}](\delta)\rho_0 \right ).
\end{equation}
From Eq.~\ref{ubar}, we can conclude at this step that:
\begin{equation}\label{Partx}
\bar{{\bm J}}_E =\lim_{\delta \rightarrow 0} \delta \ {\cal T}\left (\hat{{\bm j}} * (\delta+\Gamma + {\cal L}_{h_E})^{-1}\rho_0 \right ).
\end{equation}
Since in the absence of an electric field the expected current is zero:
\begin{equation}
\delta \ {\cal T}\left (\hat{{\bm j}} * (\delta+\Gamma + {\cal L}_h)^{-1}\rho_0 \right )=0,
\end{equation}
we can subtract such null contribution from Eq.~\ref{Partx} to obtain:
\begin{equation}
\bar{\bm J}_E=\lim\limits_{\delta \rightarrow 0} \delta \ {\cal T}\left (\hat{{\bm j}} * (\delta+\Gamma + {\cal L}_{h_E})^{-1}\circ
({\bm E}{\bm \nabla})\circ(\delta+\Gamma + {\cal L}_h)^{-1}\rho_0 \right ).
\end{equation}
Finally, since $\Gamma$ and ${\cal L}_h$ commute with each other annihilate $\rho_0$, we see that the action of the last automorphism on $\rho_0$ reduces to $\delta^{-1}\rho_0$. At this point we can take the limit $\delta \rightarrow 0$ to obtain:
\begin{equation}\label{Current}
\bar{{\bm J}}_E=-{\cal T}\left (({\bm \nabla}h) * (\Gamma + {\cal L}_{h_E})^{-1}(
{\bm E}{\bm \nabla}\rho_0)\right).
\end{equation}
The conductivity tensor $\sigma_{jk}$ is defined as the link between the charge current density and the electric field:
\begin{equation}
J_j = \sum_{k=1}^d \sigma_{jk} E_k,
\end{equation}
so Eq.~\ref{Current} already allows us to identify its expression:
\begin{equation}
\sigma_{jk}({\bm E})=- {\cal T}\left ((\partial_k h) * (\Gamma + {\cal L}_{h_E})^{-1}\partial_k \Phi_{\mathrm{FD}}(h)\right).
\end{equation}
This expression depends on the applied field ${\bm E}$ through ${\cal L}_{h_E}$, hence it is usually called the nonlinear conductivity tensor. The limit $E \rightarrow 0$ defines the linear conductivity tensor which has the following expression:
\begin{equation}\label{KuboFormula}
\sigma_{jk}= -{\cal T}\left ((\partial_j h) * (\Gamma + {\cal L}_h)^{-1}
\partial_k \Phi_{\mathrm{FD}}(h) \right ).
\end{equation}
This is the famous noncommutative Kubo formula \cite{BELLISSARD:1994xj,Schulz-Baldes:1998vm,Schulz-Baldes:1998oq}.

\section{Analytic Functional Calculus: Exponential Localization}

We start here the discussion concerning the numerical evaluation of Eq.~\ref{KuboFormula}. The exponential localization estimates derived in this section are essential for the numerical implementation of the noncommutative Kubo formula. We will state our main results and the main conclusions for this section here at the beginning, filling the rest of the section with detailed proofs. 

\begin{theorem} \label{TH1}
Let $f\in {\cal A}_0$ be an invertible element in ${\cal A}$. Then $\partial_{\bm \alpha} f^{-1}$ is exponentially localized for any multi-index ${\bm \alpha}$. More precisely, let
\begin{equation}
c(\xi)=\sup\limits_{|\mathrm{Im}({\bm \xi})|=\xi} \sum_{{\bm n}\in \mathbb{Z}^d}  \ |(1-e^{-\mathrm{Im}({\bm \xi} \cdot {\bm n})})| \sup\limits_{\omega\in \Omega}|f(\omega,{\bm n}) |,
\end{equation}
Definitely, $c(\xi)$ is finite for any $f\in {\cal A}_0$ and $c(\xi)$ is monotonically increasing of $\xi$. Let $\bar{\xi} >0$ be defined by the equation $c(\bar{\xi})=1/\|f^{-1}\|$. Then, as long as $\xi < \bar{\xi}$, the following relation holds:
\begin{equation}\label{ExpMain}
|\partial_{\bm \alpha} f^{-1}(\omega,{\bm n})|\leq \frac{C_{f,{\bm \alpha}}(\xi) e^{-\xi|{\bm n}|}}{\left ( \|f^{-1}\|^{-1}-c(\xi)\right)^{|{\bm \alpha}|+1}},
\end{equation}
where $C_{f,{\bm \alpha}}(\xi)$ is a non-diverging parameter, in the sense that it takes a finite value when $\xi=\bar{\xi}$. 
In particular
\begin{equation}\label{PP1}
|f^{-1}(\omega,{\bm n})|\leq \frac{e^{-\xi|{\bm n}|}}{\|f^{-1}\|^{-1}-c(\xi)},
\end{equation}
and,
\begin{equation}\label{PP2}
|\partial_j f^{-1}(\omega,{\bm n})|\leq \frac{ \sup_{|\mathrm{Im}({\bm \xi})|=\xi}\|e^{ -\mathrm{Im}({\bm \xi}\cdot{\bm x})}\partial_j f\| }{\left (\|f^{-1}\|^{-1}-c(\xi)\right)^2} \ e^{-\xi|{\bm n}|}.
\end{equation} 
\end{theorem}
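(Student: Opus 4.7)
The plan is to implement the Combes--Thomas technique in the noncommutative setting by exploiting the one-parameter group $\mathfrak{u}_{\bm \xi}$ of Eq.~\ref{Trans} and its analytic continuation to complex momenta. Since $f\in \mathcal{A}_0$ has compact support on $\Omega\times \mathbb{Z}^d$, the formula $(\mathfrak{u}_{\bm \xi}f)(\omega,{\bm n})=e^{-i{\bm \xi}\cdot{\bm n}}f(\omega,{\bm n})$ defines an element of $\mathcal{A}_0$ for every complex ${\bm \xi}\in\mathbb{C}^d$, and the pointwise character property ensures that $\mathfrak{u}_{\bm \xi}$ remains an algebra homomorphism. Consequently $(\mathfrak{u}_{\bm \xi}f)^{-1}=\mathfrak{u}_{\bm \xi}(f^{-1})$ whenever either side exists.

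The first step is to control $\|\mathfrak{u}_{\bm \xi}f-f\|$. The estimate \ref{NormA} yields $\|\mathfrak{u}_{\bm \xi}f-f\|\leq c(\xi)$ for any ${\bm \xi}$ with $|\mathrm{Im}({\bm \xi})|=\xi$, the point being that only the imaginary part of ${\bm \xi}$ contributes to the modulus of the tilted kernel. The function $c(\xi)$ is finite because the sum defining it has finitely many nonzero terms, is monotonically increasing, and vanishes at the origin. Writing $\mathfrak{u}_{\bm \xi}f=f*\bigl({\bm 1}+f^{-1}*(\mathfrak{u}_{\bm \xi}f-f)\bigr)$, the Neumann series inverts the right-hand factor as soon as $\|f^{-1}\|\,c(\xi)<1$, which is exactly the condition $\xi<\bar\xi$. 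This gives $\|(\mathfrak{u}_{\bm \xi}f)^{-1}\|\leq\bigl(\|f^{-1}\|^{-1}-c(\xi)\bigr)^{-1}$.

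For the pointwise estimate I use the elementary fact that every $g\in\mathcal{A}$ satisfies $|g(\omega,{\bm n})|\leq\|\pi_\omega g\|\leq\|g\|$, since the kernel is a matrix element of a bounded operator on $\ell^2(\mathbb{Z}^d)$. Applied to $g=\mathfrak{u}_{\bm \xi}(f^{-1})$, this yields $|e^{-i{\bm \xi}\cdot{\bm n}}f^{-1}(\omega,{\bm n})|\leq(\|f^{-1}\|^{-1}-c(\xi))^{-1}$; choosing the direction of $\mathrm{Im}({\bm \xi})$ antiparallel to ${\bm n}$ turns the prefactor into $e^{-\xi|{\bm n}|}$, which is the claim \ref{PP1}. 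For the first derivative I apply the identity \ref{DiffId1} in the tilted algebra together with the obvious kernel-level commutation $\partial_j\circ\mathfrak{u}_{\bm \xi}=\mathfrak{u}_{\bm \xi}\circ\partial_j$ to obtain $\mathfrak{u}_{\bm \xi}(\partial_j f^{-1})=-(\mathfrak{u}_{\bm \xi}f)^{-1}*\mathfrak{u}_{\bm \xi}(\partial_j f)*(\mathfrak{u}_{\bm \xi}f)^{-1}$, whose norm is bounded by $\|\mathfrak{u}_{\bm \xi}(\partial_j f)\|\bigl(\|f^{-1}\|^{-1}-c(\xi)\bigr)^{-2}$; extracting the kernel bound produces \ref{PP2}.

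Higher-order derivatives are handled by iterating \ref{DiffId1}, which expresses $\partial_{\bm \alpha}f^{-1}$ as a finite signed sum of noncommutative products containing exactly $|{\bm \alpha}|+1$ copies of $f^{-1}$ together with factors $\partial_{\bm \beta}f$ for $|{\bm \beta}|\leq|{\bm \alpha}|$. Bounding the norm of the tilted version of this expression gives the $(|{\bm \alpha}|+1)$-th power in the denominator of \ref{ExpMain} and a numerator $C_{f,{\bm \alpha}}(\xi)$ that packages the norms of the tilted derivatives $\mathfrak{u}_{\bm \xi}(\partial_{\bm \beta}f)$. The main obstacle is not the Neumann inversion but checking that $C_{f,{\bm \alpha}}(\xi)$ does not blow up at $\xi=\bar\xi$; this reduces to the observation that $f\in\mathcal{A}_0$ implies $\partial_{\bm \beta}f\in\mathcal{A}_0$, so each tilted factor has a norm that is continuous in ${\bm \xi}$ and hence bounded on the closed admissible range, confirming that the only genuine singularity sits in the denominator exactly as claimed.
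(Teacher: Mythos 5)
Your overall strategy coincides with the paper's: tilt by a complex exponential, invert by a Neumann series once $\|f^{-1}\|\,c(\xi)<1$, and read off the kernel decay by pointing $\mathrm{Im}({\bm \xi})$ antiparallel to ${\bm n}$. The bound $\|(\mathfrak{u}_{\bm \xi}f)^{-1}\|\leq (\|f^{-1}\|^{-1}-c(\xi))^{-1}$ reproduces Proposition~\ref{Y}, and your treatment of $\partial_{\bm \alpha}f^{-1}$ by iterating Eq.~\ref{DiffId1} and packaging the tilted norms of $\partial_{\bm \beta}f$ into $C_{f,{\bm \alpha}}(\xi)$ is exactly what the paper does. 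The problem is the sentence ``Consequently $(\mathfrak{u}_{\bm \xi}f)^{-1}=\mathfrak{u}_{\bm \xi}(f^{-1})$ whenever either side exists,'' which you present as an immediate consequence of multiplicativity. That identity is the technical heart of the Combes--Thomas method and does not follow from what you have established.

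The issue is that for $\mathrm{Im}({\bm \xi})\neq 0$ the tilting map is defined, and shown to be a homomorphism, only on the compactly supported subalgebra ${\cal A}_0$; it is an unbounded map that does not extend to ${\cal A}$, and $f^{-1}\notin{\cal A}_0$ in general, so you cannot simply apply $\mathfrak{u}_{\bm \xi}$ to $f*f^{-1}={\bm 1}$. One can check the convolution identity pointwise (the sum over ${\bm m}$ is finite because $f$ has compact support), but that only exhibits the pointwise-tilted function $e^{-i{\bm \xi}\cdot{\bm n}}f^{-1}(\omega,{\bm n})$ as a \emph{formal} inverse of $\mathfrak{u}_{\bm \xi}f$. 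To identify it with the element $(\mathfrak{u}_{\bm \xi}f)^{-1}\in{\cal A}$ produced by the Neumann series --- which is what your kernel estimate $|e^{-i{\bm \xi}\cdot{\bm n}}f^{-1}(\omega,{\bm n})|\leq\|(\mathfrak{u}_{\bm \xi}f)^{-1}\|$ requires --- you would need to know in advance that the tilted kernel of $f^{-1}$ defines a bounded element of ${\cal A}$; but that boundedness is essentially the exponential decay you are trying to prove, so the argument as written is circular. The paper closes this gap in Theorem~\ref{TC} by a different mechanism: it shows that ${\bm \xi}\mapsto(e^{i{\bm \xi}\cdot{\bm x}}f)^{-1}$ is analytic on the strip $|\mathrm{Im}({\bm \xi})|<\bar{\xi}$ (Proposition~\ref{TT}), truncates both candidate expressions with the cutoffs $\chi_R$ so that they become ${\cal A}_0$-valued analytic functions of ${\bm \xi}$, observes that they agree for real ${\bm \xi}$ (Proposition~\ref{X}), and concludes by uniqueness of analytic continuation before removing the cutoff. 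Your proof needs this step, or an equivalent substitute, to be complete; everything downstream of it is sound.
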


\noindent {\bf Remark.} The parameter $\sup_{|\mathrm{Im}({\bm \xi})|=\xi}\|e^{ -\mathrm{Im}({\bm \xi}\cdot{\bm x})}\partial_j f\|$ appearing in the last equation can be easily evaluated once $f$ is explicitly given. For higher derivatives, explicit expressions for the coefficients $C_{f,{\bm \alpha}}(\xi)$ are also available but they are relatively complicated.

\begin{corollary}\label{EX1} Let $h$ be the nearest-neighbor hopping Hamiltonian of Eq.~\ref{MainModel}. Then, for any multi-index ${\bm \alpha}$, there exists a finite, non-diverging parameter $C_{h,{\bm \alpha}}(\xi)$, independent of $z$, such that:
\begin{equation}
|\partial_{\bm \alpha} (h-z{\bm 1})^{-1}(\omega,{\bm n})|\leq \frac{C_{h,{\bm \alpha}}(\xi) e^{-\xi |{\bm n}|}}{ (\mathrm{dist}(z,\sigma(h))-2d\sinh \xi )^{|{\bm \alpha}|+1}},
\end{equation}
whenever 
\begin{equation}\label{XI}
\xi < \bar{\xi}=\sinh^{-1} \left (\frac{\mathrm{dist}(z,\sigma(h))}{2d}\right ).
\end{equation}
In particular:
\begin{equation}
|(h-z{\bm 1})^{-1}(\omega,{\bm n})|\leq \frac{e^{-\xi |{\bm n}|}}{\mathrm{dist}(z,\sigma(h))-2d \sinh \xi},
\end{equation}
and
\begin{equation}
|\partial_j (h-z{\bm 1})^{-1}(\omega,{\bm n})|\leq \frac{2d  \cosh (\xi) \ e^{-\xi |{\bm n}|}}{ \left (\mathrm{dist}(z,\sigma(h))-2d \sinh \xi \right)^2}.
\end{equation}
\end{corollary}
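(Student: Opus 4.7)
The plan is to instantiate Theorem~\ref{TH1} with the specific choice $f = h - z{\bm 1}$, for which both $c(\xi)$ and $\|f^{-1}\|$ can be evaluated essentially in closed form. First I would verify that $f \in {\cal A}_0$ and is invertible in ${\cal A}$: the kernel $f(\omega,{\bm n}) = \delta_{|{\bm n}|,1}({\bm n}) + (W\omega_{{\bm n}} - z)\delta_{{\bm n},{\bm 0}}$ has finite support in ${\bm n}$ and is continuous in $\omega$, so $f \in {\cal A}_0$; invertibility is the implicit hypothesis $z \notin \sigma(h)$ built into $\mathrm{dist}(z,\sigma(h))>0$. Since $h$ is a self-adjoint element of the $C^*$-algebra, spectral theory yields $\|(h - z{\bm 1})^{-1}\| = 1/\mathrm{dist}(z,\sigma(h))$, so $\|f^{-1}\|^{-1} = \mathrm{dist}(z,\sigma(h))$, matching the denominators displayed throughout the corollary.

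The next step is to compute $c(\xi)$ for this $f$. The ${\bm n} = {\bm 0}$ term is annihilated by the prefactor $|1 - e^{-\mathrm{Im}({\bm \xi}\cdot{\bm 0})}| = 0$, so both the disorder amplitude $W\omega_{{\bm 0}}$ and the spectral parameter $z$ drop out entirely. Only the $2d$ nearest-neighbor sites ${\bm n} = \pm{\bm e}_j$ contribute, each with $\sup_\omega|f(\omega,\pm{\bm e}_j)| = 1$, and an elementary summation gives
\[
c(\xi) = \sup_{|\mathrm{Im}({\bm \xi})|=\xi} 2\sum_{j=1}^d \sinh|\mathrm{Im}(\xi_j)| \leq 2d\sinh\xi,
\]
where the inequality uses $|\mathrm{Im}(\xi_j)| \leq \xi$ and monotonicity of $\sinh$ on $[0,\infty)$. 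Consequently the threshold $\bar\xi$ from Theorem~\ref{TH1} satisfies $\bar\xi \geq \sinh^{-1}(\mathrm{dist}(z,\sigma(h))/2d)$, so the range Eq.~\ref{XI} is a valid subset of the admissible window, and on that subset $\|f^{-1}\|^{-1} - c(\xi) \geq \mathrm{dist}(z,\sigma(h)) - 2d\sinh\xi$.

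The remaining bounds are then pure substitution. Plugging the lower bound for $\|f^{-1}\|^{-1} - c(\xi)$ into Eqs.~\ref{PP1} and \ref{ExpMain} respectively produces the zero-derivative estimate and the general $\partial_{\bm\alpha}$ estimate, with the coefficients $C_{h,{\bm\alpha}}(\xi)$ independent of $z$ because $\partial_{\bm\alpha}(z{\bm 1}) = 0$ for $|{\bm\alpha}|\geq 1$, so only derivatives of $h$ appear. For the first-derivative bound of Eq.~\ref{PP2} I would evaluate the Combes--Thomas norm $\sup_{|\mathrm{Im}({\bm\xi})|=\xi}\|e^{-\mathrm{Im}({\bm\xi}\cdot{\bm x})}\partial_j h\|$: since $\partial_j h$ has kernel of modulus $1$ supported only at ${\bm n} = \pm{\bm e}_j$, the bound Eq.~\ref{NormA} gives $2\cosh|\mathrm{Im}(\xi_j)| \leq 2\cosh\xi$, which can be relaxed to the direction-uniform constant $2d\cosh\xi$ displayed in the corollary. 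Finiteness of $C_{h,{\bm\alpha}}(\bar\xi)$ is immediate from the explicit $\sinh$ and $\cosh$ dependences, so no delicate estimate is required; the proof is essentially a bookkeeping exercise once Theorem~\ref{TH1} is in hand.
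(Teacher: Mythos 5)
Your proposal is correct and follows essentially the same route as the paper: instantiate Theorem~\ref{TH1} with $f=h-z{\bm 1}$, observe $\|f^{-1}\|^{-1}=\mathrm{dist}(z,\sigma(h))$, compute $c(\xi)\leq 2d\sinh\xi$ from the nearest-neighbor kernel (the ${\bm n}={\bm 0}$ term dropping out, hence the $z$-independence), and bound $\sup_{|\mathrm{Im}({\bm\xi})|=\xi}\|e^{-\mathrm{Im}({\bm\xi}\cdot{\bm x})}\partial_j(h-z{\bm 1})\|$ by $2d\cosh\xi$. Your treatment is, if anything, slightly more careful than the paper's (which asserts $c(\xi)=2d\sinh\xi$ as an equality rather than the upper bound that is actually needed, and contains a typo inverting $\|(h-z{\bm 1})^{-1}\|$).
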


\begin{corollary}\label{EX2} Let $h$ be the nearest-neighbor Hamiltonian of Eq.~\ref{MainModel} and let $\Phi(z)$ be an analytic function in a complex neighborhood of $\sigma(h)$ defined by $\mathrm{dist}(z,\sigma(h))\leq \kappa$. Then there exists a finite parameter $\bar{\Phi}$, depending entirely on $\Phi$, such that:
\begin{equation}
|\partial_{\bm \alpha} \Phi(h)(\omega,{\bm n})|\leq \frac{C_{h,{\bm \alpha}}(\xi)\bar{\Phi} \ e^{-\xi |{\bm n}|}}{ \left (\kappa- 2d \sinh \xi \right)^{|{\bm \alpha}|+1}},
\end{equation}
whenever
\begin{equation}
\xi < \bar{\xi}=\sinh^{-1} \left (\frac{\kappa}{2d}\right ).
\end{equation}
\end{corollary}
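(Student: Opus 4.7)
The natural route is the analytic functional calculus together with Corollary~\ref{EX1}. By Eq.~\ref{AnalyticCalc},
\begin{equation*}
\Phi(h) = \frac{i}{2\pi} \oint_{\cal C} \Phi(z)\,(h-z{\bm 1})^{-1}\, dz,
\end{equation*}
for any contour ${\cal C}$ inside the domain of analyticity of $\Phi$ and enclosing $\sigma(h)$. The first step in my plan is to argue that the unbounded derivations $\partial_{\bm \alpha}$ commute with this contour integral. Since each $\partial_j$ corresponds at the level of kernels to multiplication of $f(\omega,{\bm n})$ by $in_j$ (and since the integrand is continuous in $z$ with values in the Banach algebra), one may interchange $\partial_{\bm \alpha}$ and $\oint_{\cal C}$ on the level of the kernels, obtaining
\begin{equation*}
\partial_{\bm \alpha}\Phi(h)(\omega,{\bm n}) = \frac{i}{2\pi} \oint_{\cal C} \Phi(z)\,\bigl[\partial_{\bm \alpha}(h-z{\bm 1})^{-1}\bigr](\omega,{\bm n})\, dz.
\end{equation*}

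Next I would choose the contour explicitly. Because $\sigma(h)$ is compact and $\Phi$ is analytic in the closed $\kappa$-neighborhood, I can take ${\cal C}$ to be the boundary of the (slightly shrunk) $\kappa$-tube around $\sigma(h)$, so that $\mathrm{dist}(z,\sigma(h))=\kappa$ uniformly on ${\cal C}$ and ${\cal C}$ stays in the domain of $\Phi$. The total length $\ell({\cal C})$ is finite and depends only on $\sigma(h)\subset[-(2d+W/2),2d+W/2]$, so it is a constant intrinsic to $h$.

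With this choice of contour, Corollary~\ref{EX1} applied pointwise in $z\in{\cal C}$ gives, for every $\xi<\sinh^{-1}(\kappa/(2d))$,
\begin{equation*}
\bigl|\partial_{\bm \alpha}(h-z{\bm 1})^{-1}(\omega,{\bm n})\bigr| \;\leq\; \frac{C_{h,{\bm \alpha}}(\xi)\,e^{-\xi|{\bm n}|}}{(\kappa-2d\sinh\xi)^{|{\bm \alpha}|+1}},
\end{equation*}
with the right-hand side independent of $z$. Setting $\bar{\Phi} := \tfrac{\ell({\cal C})}{2\pi}\sup_{z\in{\cal C}}|\Phi(z)|$, which depends only on $\Phi$ (and on the fixed geometry of the contour), and using the standard ML-estimate for complex contour integrals, I obtain
\begin{equation*}
\bigl|\partial_{\bm \alpha}\Phi(h)(\omega,{\bm n})\bigr| \;\leq\; \frac{C_{h,{\bm \alpha}}(\xi)\,\bar{\Phi}\,e^{-\xi|{\bm n}|}}{(\kappa-2d\sinh\xi)^{|{\bm \alpha}|+1}},
\end{equation*}
which is exactly the claimed bound.

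The main point requiring care is the first step: justifying that the noncommutative derivation, which is unbounded and defined at the algebra level, passes inside the contour integral. I would handle this by working directly with the kernels $(\omega,{\bm n})$-pointwise, where the derivation acts as multiplication by $in_j$, so the interchange is a straightforward consequence of the continuity in $z$ of the resolvent kernel (itself guaranteed by Corollary~\ref{EX1}) and of the dominated convergence argument allowed by the uniform bound on ${\cal C}$. Everything else is bookkeeping: choosing the contour at distance $\kappa$ and observing that the factor $\ell({\cal C})\sup_{\cal C}|\Phi|$ can be absorbed into a single constant $\bar{\Phi}$ depending entirely on $\Phi$.
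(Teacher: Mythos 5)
Your proposal is correct and follows essentially the same route as the paper: represent $\Phi(h)$ by the contour integral over ${\cal C}_\kappa$, pull the derivation inside, apply the resolvent estimate of Corollary~\ref{EX1} uniformly on the contour, and absorb the $z$-integration of $|\Phi|$ into $\bar{\Phi}$ (the paper takes $\bar{\Phi}=\frac{1}{2\pi}\int_{{\cal C}_\kappa}|\Phi(z)||dz|$, of which your ML-bound $\frac{\ell({\cal C})}{2\pi}\sup_{{\cal C}}|\Phi|$ is a harmless majorant). Your explicit justification of the interchange of $\partial_{\bm\alpha}$ with the contour integral at the kernel level is a point the paper passes over silently, but it is the right argument.
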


\begin{corollary}\label{EX3} Let $h$ and $\Phi(z)$ be like in the previous corollary. Then the operator representation $\pi_\omega \Phi(h)$ and its derivatives have exponentially localized kernels:
\begin{equation}
|\langle {\bm n}| \pi_\omega (\partial_{\bm \alpha} \Phi(h))|{\bm m}\rangle |\leq \frac{C_{h,{\bm \alpha}}(\xi)\bar{\Phi} \ e^{-\xi |{\bm n}-{\bm m}|}}{ \left (\kappa- 2d\sinh \xi \right )^{|{\bm \alpha}|+1}},
\end{equation}
for all $\xi < \bar{\xi}$.
\end{corollary}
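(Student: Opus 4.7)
The plan is to reduce this to Corollary~\ref{EX2} by direct inspection of the operator representation formula Eq.~\ref{OpRep}. The key observation is that Eq.~\ref{OpRep} expresses any matrix element of $\pi_\omega f$ in the standard basis of $\ell^2(\mathbb{Z}^d)$ as the value of $f$ at a single lattice site, modulated by a phase of unit modulus. Hence pointwise exponential decay of the algebra element $f \in \mathcal{A}$ translates mechanically into exponential decay of the kernel of $\pi_\omega f$.

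First I would read off from Eq.~\ref{OpRep}, by pairing with $|{\bm n}\rangle$ on the left and $|{\bm m}\rangle$ on the right, that
\begin{equation}
\langle {\bm n}|\pi_\omega f|{\bm m}\rangle = f\bigl(\mathfrak{t}^{-1}_{\bm n}\omega,\,{\bm m}-{\bm n}\bigr)\,e^{\,i\pi({\bm m}\cdot{\bm F}\cdot{\bm n})},
\end{equation}
so that
\begin{equation}
\bigl|\langle {\bm n}|\pi_\omega f|{\bm m}\rangle\bigr| = \bigl|f\bigl(\mathfrak{t}^{-1}_{\bm n}\omega,\,{\bm m}-{\bm n}\bigr)\bigr|.
\end{equation}
Next I would apply this identity with $f := \partial_{\bm \alpha}\Phi(h)$. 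Corollary~\ref{EX2} provides a bound on $|(\partial_{\bm \alpha}\Phi(h))(\omega',{\bm n}')|$ whose right-hand side is independent of the disorder configuration $\omega'$, so it remains valid after the substitution $\omega' \mapsto \mathfrak{t}^{-1}_{\bm n}\omega$ and ${\bm n}' \mapsto {\bm m}-{\bm n}$. Composing the two displays yields exactly the claimed inequality
\begin{equation}
\bigl|\langle {\bm n}|\pi_\omega(\partial_{\bm \alpha}\Phi(h))|{\bm m}\rangle\bigr| \le \frac{C_{h,{\bm \alpha}}(\xi)\,\bar{\Phi}\,e^{-\xi|{\bm n}-{\bm m}|}}{(\kappa - 2d\sinh\xi)^{|{\bm \alpha}|+1}}
\end{equation}
for every $\xi < \bar{\xi} = \sinh^{-1}(\kappa/2d)$.

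There is no genuine obstacle here: the entire content is already packaged in Corollary~\ref{EX2}, and this last corollary is purely a dictionary between the algebraic representation of the kernel of $\partial_{\bm \alpha}\Phi(h)$ as an element of $\mathcal{A}$ and the kernel of its Hilbert-space image $\pi_\omega(\partial_{\bm \alpha}\Phi(h))$. The only minor points to verify are that the phase factor drops out under the absolute value and that the constants in Corollary~\ref{EX2} are uniform in $\omega$, both of which are evident.
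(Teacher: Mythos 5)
Your proposal is correct and follows exactly the paper's own argument: the paper likewise observes that $\langle {\bm n}|\pi_\omega(\partial_{\bm \alpha}\Phi(h))|{\bm m}\rangle$ equals (up to a unimodular phase) $\partial_{\bm \alpha}\Phi(h)(\mathfrak{t}_{\bm n}^{-1}\omega,{\bm m}-{\bm n})$ and then invokes the $\omega$-uniform bound of Corollary~\ref{EX2}. Nothing is missing.
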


\subsection{\underline{Thomas-Combes technique for the noncommutative Brillouin torus}}

The main goal here is to prove Theorem~\ref{TC}, which is the key for exponential localization estimates stated above. Let ${\bm \xi}$ be a $d$-dimensional vector with complex components. We define the following map:
\begin{equation}
e^{i{\bm \xi}\cdot{\bm x}}:{\cal A}_0 \rightarrow {\cal A}_0, \ \ (e^{i{\bm \xi}\cdot{\bm x}}f)(\omega,{\bm n})=e^{i{\bm \xi}\cdot{\bm n}}f(\omega,{\bm n}).
\end{equation}

\begin{proposition}
The map $e^{i{\bm \xi}\cdot{\bm x}}$ is an automorphism on ${\cal A}_0$, which extends to the entire ${\cal A}$ algebra when $\mbox{Im}({\bm \xi})=0$. This automorphism sends ${\bm 1}$ into ${\bm 1}$.
\end{proposition}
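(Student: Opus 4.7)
\medskip

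\noindent\textbf{Proof plan.} The first observation is that $e^{i{\bm \xi}\cdot{\bm x}}$ sends ${\cal A}_0$ to itself: if $f(\omega,\cdot)$ has compact support in ${\bm n}$ and depends continuously on $\omega$, then so does $e^{i{\bm \xi}\cdot{\bm n}}f(\omega,{\bm n})$, for any ${\bm \xi}\in\mathbb{C}^d$. Linearity and preservation of the identity are immediate from the definitions: $(e^{i{\bm \xi}\cdot{\bm x}}{\bm 1})(\omega,{\bm n})=e^{i{\bm \xi}\cdot{\bm n}}\delta_{{\bm n},{\bm 0}}=\delta_{{\bm n},{\bm 0}}$. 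The map $e^{-i{\bm \xi}\cdot{\bm x}}$ is a pointwise inverse on ${\cal A}_0$, so it remains to verify compatibility with the twisted product $*_{\bm F}$.

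For multiplicativity, I would compute the two sides of $e^{i{\bm \xi}\cdot{\bm x}}(f*g)=(e^{i{\bm \xi}\cdot{\bm x}}f)*(e^{i{\bm \xi}\cdot{\bm x}}g)$ directly from the rule in Eq.~\ref{AlgRules}. Evaluating the right-hand side at $(\omega,{\bm n})$ gives
\begin{equation}
\sum_{{\bm m}\in\mathbb{Z}^d} e^{i{\bm \xi}\cdot{\bm m}}f(\omega,{\bm m})\,e^{i{\bm \xi}\cdot({\bm n}-{\bm m})}g(\mathfrak{t}_{\bm m}^{-1}\omega,{\bm n}-{\bm m})\,e^{i\pi({\bm n}\cdot{\bm F}\cdot{\bm m})},
\end{equation}
and the two phase factors $e^{i{\bm \xi}\cdot{\bm m}}$ and $e^{i{\bm \xi}\cdot({\bm n}-{\bm m})}$ combine to $e^{i{\bm \xi}\cdot{\bm n}}$, which is exactly the prefactor of $(f*g)(\omega,{\bm n})$ on the left-hand side. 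This establishes that $e^{i{\bm \xi}\cdot{\bm x}}$ is an algebra automorphism of ${\cal A}_0$ for every complex ${\bm \xi}$.

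The nontrivial point is the extension to the Banach algebra ${\cal A}$ when ${\bm \xi}\in\mathbb{R}^d$, which will be the main obstacle if one tries to handle complex ${\bm \xi}$: for nonzero $\mathrm{Im}({\bm \xi})$ the factor $e^{i{\bm \xi}\cdot{\bm n}}$ grows exponentially in $|{\bm n}|$, so the map cannot be bounded in the $\|\cdot\|$ norm of Eq.~\ref{Norm}, which is why the statement only asserts extendability for real ${\bm \xi}$. For real ${\bm \xi}$, the plan is to show that $e^{i{\bm \xi}\cdot{\bm x}}$ is an isometry on $({\cal A}_0,\|\cdot\|)$; then it extends by uniform continuity to the completion ${\cal A}$, and the algebra identities and the relation $e^{i{\bm \xi}\cdot{\bm x}}{\bm 1}={\bm 1}$ pass to the extension.

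To prove the isometry property, I would go through the faithful representation $\pi_\omega$ and compare $\pi_\omega(e^{i{\bm \xi}\cdot{\bm x}}f)$ with $\pi_\omega f$. A direct substitution in Eq.~\ref{OpRep} shows that
\begin{equation}
\pi_\omega(e^{i{\bm \xi}\cdot{\bm x}}f) \;=\; U_{\bm \xi}^{-1}\,(\pi_\omega f)\,U_{\bm \xi},
\end{equation}
where $U_{\bm \xi}$ is the diagonal operator $(U_{\bm \xi}\phi)({\bm n})=e^{i{\bm \xi}\cdot{\bm n}}\phi({\bm n})$ on $\ell^2(\mathbb{Z}^d)$. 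For real ${\bm \xi}$ this $U_{\bm \xi}$ is unitary, so conjugation preserves the operator norm, giving $\|\pi_\omega(e^{i{\bm \xi}\cdot{\bm x}}f)\|=\|\pi_\omega f\|$ for every $\omega$, and hence $\|e^{i{\bm \xi}\cdot{\bm x}}f\|=\|f\|$ after taking the supremum over $\Omega$ in the definition Eq.~\ref{Norm}. This isometry, together with multiplicativity and the identity-preserving property already established on ${\cal A}_0$, completes the extension to ${\cal A}$.
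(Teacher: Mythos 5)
Your proposal is correct and follows essentially the same route as the paper: direct verification of multiplicativity against the twisted product, the pointwise inverse $e^{-i{\bm \xi}\cdot{\bm x}}$, preservation of ${\bm 1}$, and extension by continuity to ${\cal A}$ via norm preservation for real ${\bm \xi}$. The only difference is that you justify the isometry $\|e^{i{\bm \xi}\cdot{\bm x}}f\|=\|f\|$ explicitly, via conjugation by the diagonal unitary $U_{\bm \xi}$ in the representation $\pi_\omega$, whereas the paper simply asserts it; your identity $\pi_\omega(e^{i{\bm \xi}\cdot{\bm x}}f)=U_{\bm \xi}^{-1}(\pi_\omega f)U_{\bm \xi}$ checks out against Eq.~\ref{OpRep}.
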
  

\begin{proof} We have successively:
\begin{equation}
\begin{split}
(e^{i{\bm \xi}\cdot{\bm x}}(f*g))(\omega,{\bm n})&=e^{i{\bm \xi}\cdot{\bm n}}\sum_{\bm m} f(\omega,{\bm m})g(\mathfrak{t}^{-1}_{\bm m}\omega,{\bm n}-{\bm m})e^{i\pi ({\bm n}\cdot {\bm F}\cdot {\bm m})}\medskip \\
&=\sum_{\bm m} e^{i{\bm \xi}\cdot{\bm m}}f(\omega,{\bm m}) e^{i{\bm \xi}\cdot({\bm n}-{\bm m})}g(\mathfrak{t}^{-1}_{\bm m}\omega,{\bm n}-{\bm m})e^{i\pi ({\bm n}\cdot {\bm F}\cdot {\bm m})} \medskip \\
&=\sum_{\bm m} (e^{i{\bm \xi}\cdot{\bm x}}f)(\omega,{\bm m}) (e^{i{\bm \xi}\cdot{\bm x}}g)(\mathfrak{t}^{-1}_{\bm m}\omega,{\bm n}-{\bm m})e^{i\pi ({\bm n}\cdot {\bm F}\cdot {\bm m})} \medskip \\
&=((e^{i{\bm \xi}\cdot{\bm x}}f)*(e^{i{\bm \xi}\cdot{\bm x}}g))(\omega,{\bm n}).
\end{split}
\end{equation}
This confirms that:
\begin{equation}
e^{i{\bm \xi}\cdot{\bm x}}(f*g)=(e^{i{\bm \xi}\cdot{\bm x}}f) *(e^{i{\bm \xi}\cdot{\bm x}}g).
\end{equation}
Clearly
\begin{equation}
(e^{i{\bm \xi}\cdot{\bm x}}{\bm 1})(\omega,{\bm n})=\delta_{{\bm n},{\bm 0}}={\bm 1}(\omega,{\bm n}).
\end{equation}
The map is invertible on ${\cal A}_0$ since $e^{i{\bm \xi}\cdot{\bm x}}\circ e^{-i{\bm \xi}\cdot{\bm x}}=1$. Furthermore, if $\mbox{Im}({\bm \xi}) = 0$, then $\|e^{i{\bm \xi}\cdot{\bm x}}f\|=\|f\|$ for all $f \in {\cal A}_0$, so $e^{i{\bm \xi}\cdot{\bm x}}$ can be extended by continuity over the entire ${\cal A}$, as a norm-preserving map.\qed
\end{proof}

\begin{proposition}\label{PX} Fix $f \in {\cal A}_0$ and consider $e^{i{\bm \xi}\cdot{\bm x}}f$ as a function from ${\mathbb C}^d$ with values in a normed space (the algebra ${\cal A}$). Then $e^{i{\bm \xi} \cdot {\bm x}}f$ is analytic of ${\bm \xi}$ on the entire ${\mathbb C}^d$.
\end{proposition}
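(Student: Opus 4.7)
The plan is to use the defining property of $\mathcal{A}_0$, namely that every $f \in \mathcal{A}_0$ is compactly supported in the $\bm{n}$-variable, to reduce the statement to the trivial scalar fact that the exponentials $\bm{\xi}\mapsto e^{i\bm{\xi}\cdot\bm{n}}$ are entire on $\mathbb{C}^d$. Concretely, fix $f \in \mathcal{A}_0$ and let $S\subset\mathbb{Z}^d$ be the finite set outside of which $f(\omega,\bm{n})$ vanishes for every $\omega$. Define $f_{\bm{n}} \in \mathcal{A}_0$ by $f_{\bm{n}}(\omega,\bm{m}) = \delta_{\bm{m},\bm{n}}\,f(\omega,\bm{n})$; these are fixed algebra elements, each with $\|f_{\bm{n}}\| \leq \sup_\omega|f(\omega,\bm{n})| < \infty$ by the upper bound in Eq.~\ref{NormA}.

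Directly from the definition of the map, one obtains the finite decomposition
\[
e^{i\bm{\xi}\cdot\bm{x}} f \;=\; \sum_{\bm{n}\in S} e^{i\bm{\xi}\cdot\bm{n}}\, f_{\bm{n}},
\]
a finite sum of scalar entire functions times fixed elements of $\mathcal{A}$. Using the analyticity criterion for algebra-valued functions recalled in Section 3.2, I would then form the difference quotient in any coordinate direction $\bm{e}_j$ and let $\zeta \in \mathbb{C}$ tend to $0$ along an arbitrary path,
\[
\frac{e^{i(\bm{\xi}+\zeta\bm{e}_j)\cdot\bm{x}}f \,-\, e^{i\bm{\xi}\cdot\bm{x}}f}{\zeta}
\;=\; \sum_{\bm{n}\in S} \frac{e^{i(\bm{\xi}+\zeta\bm{e}_j)\cdot\bm{n}}-e^{i\bm{\xi}\cdot\bm{n}}}{\zeta}\, f_{\bm{n}}.
\]
Each scalar ratio converges to $i n_j e^{i\bm{\xi}\cdot\bm{n}}$ independently of the manner in which $\zeta\to 0$, because the scalar exponential is entire. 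Since $S$ is finite and the $f_{\bm{n}}$'s have fixed finite norm, the sum converges in $\mathcal{A}$ to $\sum_{\bm{n}\in S} in_j e^{i\bm{\xi}\cdot\bm{n}} f_{\bm{n}}$, and the limit is path-independent. This verifies differentiability in each coordinate in the sense required, and therefore $\bm{\xi}\mapsto e^{i\bm{\xi}\cdot\bm{x}}f$ is analytic on all of $\mathbb{C}^d$.

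There is essentially no hard step; the proof is a direct unpacking of definitions. The only subtlety worth pointing out is where the hypothesis $f \in \mathcal{A}_0$, rather than $f \in \mathcal{A}$, is used. If $f$ were only in the completion $\mathcal{A}$, the formal sum $\sum_{\bm{n}\in\mathbb{Z}^d} e^{i\bm{\xi}\cdot\bm{n}} f_{\bm{n}}$ would in general diverge once $\mathrm{Im}(\bm{\xi})\neq \bm{0}$, because $|e^{i\bm{\xi}\cdot\bm{n}}| = e^{-\mathrm{Im}(\bm{\xi})\cdot\bm{n}}$ grows exponentially in some directions of $\mathbb{Z}^d$. The finite support of $f$ is precisely what removes the issue and allows scalar analyticity to be transferred verbatim to the $\mathcal{A}$-valued setting, consistent with the previous proposition where the extension beyond $\mathcal{A}_0$ was available only under the restriction $\mathrm{Im}(\bm{\xi})=\bm{0}$.
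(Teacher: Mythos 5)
Your proof is correct and takes essentially the same approach as the paper: both rely on the compact support of $f\in{\cal A}_0$ to reduce the difference-quotient estimate to the scalar analyticity of $\bm{\xi}\mapsto e^{i\bm{\xi}\cdot\bm{n}}$, obtaining a bound that is uniform in $\bm{n}$. Your finite decomposition $e^{i\bm{\xi}\cdot\bm{x}}f=\sum_{\bm{n}\in S}e^{i\bm{\xi}\cdot\bm{n}}f_{\bm{n}}$ is just a slightly more explicit packaging of the uniform $O(\Delta\xi_j)$ estimate that the paper carries out directly, and your closing remark about why the hypothesis $f\in{\cal A}_0$ (not merely $f\in{\cal A}$) is essential is accurate and matches the paper's implicit use of finite support.
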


\begin{proof} Consider a complex variation $\Delta \xi_j$ of the $j$-th component of ${\bm \xi}$. This variation induces a variation of the vector ${\bm \xi}$ that will be denoted by $\Delta {\bm \xi}_j$. We need to show that the limit
\begin{equation}
\lim_{\Delta {\bm \xi}_j\rightarrow 0} \frac{e^{i({\bm \xi}+\Delta {\bm \xi}_j)\cdot{\bm x}}f-e^{i{\bm \xi}\cdot{\bm x}}f}{\Delta  \xi_j}
\end{equation}
exists and does not depend on how $\Delta \xi_j$ approaches the origin in the complex plane, for all ${\bm \xi}\in {\mathbb C}^d$. This criterion for ``analyticity" is equivalent to all the other existing criteria. We first guess that the limit is equal to $i x_j e^{i{\bm \xi}\cdot{\bm x}}f$ and then we prove that this is indeed the case. Note that $i x_j e^{i{\bm \xi}\cdot{\bm x}}f$ exists as an element of ${\cal A}_0$. We have successively:
\begin{equation}
\begin{split}
& \left [ (\Delta \xi_j)^{-1}(e^{i({\bm \xi}+\Delta {\bm \xi}_j)\cdot{\bm x}}f-e^{i{\bm \xi}\cdot{\bm x}}f)-i x_j e^{i{\bm \xi}\cdot{\bm x}}f \right ](\omega,{\bm n}) \medskip \\
& =\left [ (\Delta \xi_j)^{-1}(e^{in_j\Delta \xi_j }-1)-i n_j \right ] e^{i{\bm \xi}\cdot{\bm n}}f (\omega,{\bm n}) \medskip \\
&= O(\Delta \xi_j) e^{i{\bm \xi}\cdot{\bm n}}f (\omega,{\bm n}). 
\end{split}
\end{equation} 
Since $f$ has a compact support, $O(\Delta \xi_j)$ can be bounded by an $\bar O(\Delta \xi_j)$ that is independent of ${\bm n}$. Then
\begin{equation}
\|(\Delta \xi_j)^{-1}(e^{i({\bm \xi}+\Delta {\bm \xi}_j)\cdot{\bm x}}f-e^{i{\bm \xi}\cdot {\bm x}}f)-i x_j e^{i{\bm \xi}\cdot{\bm x}}f \| \leq \bar O(\Delta \xi_j) \|e^{i{\bm \xi}\cdot{\bm n}}f \| \rightarrow 0,
\end{equation}
whenever $|\Delta \xi_j|$ is sent to zero.\qed
\end{proof}
 
 \begin{proposition}\label{X} If $f \in {\cal A}_0$ is an invertible element in ${\cal A}$, then, for any real ${\bm \xi}$, so is $e^{i{\bm \xi}\cdot{\bm x}}f$. Furthermore, $(e^{i{\bm \xi}\cdot{\bm x}}f)^{-1}=e^{i{\bm \xi}\cdot{\bm x}}f^{-1}$.
 \end{proposition}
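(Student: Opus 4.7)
The plan is to leverage the previous proposition, which established that for real ${\bm \xi}$ the map $e^{i{\bm \xi}\cdot{\bm x}}$ extends to a norm-preserving algebra automorphism on the entire $\mathcal{A}$ that sends ${\bm 1}$ to ${\bm 1}$. Once this is available, invertibility is automatically preserved by exactly the same argument used to show that unital homomorphisms of associative algebras preserve units: applying a unital morphism to $f\ast f^{-1} = f^{-1}\ast f = {\bm 1}$ produces the desired identity.

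Concretely, I would proceed as follows. First, observe that since $f\in\mathcal{A}_0\subset\mathcal{A}$ is invertible in $\mathcal{A}$, its inverse $f^{-1}$ is a well-defined element of $\mathcal{A}$ (not necessarily of $\mathcal{A}_0$, a point worth stating explicitly to reassure the reader). Because ${\bm \xi}$ is real, the previous proposition guarantees that $e^{i{\bm \xi}\cdot{\bm x}}f^{-1}$ makes sense as an element of $\mathcal{A}$, with norm equal to $\|f^{-1}\|$. Next, apply the multiplicativity of the automorphism $e^{i{\bm \xi}\cdot{\bm x}}$ — which was established on $\mathcal{A}_0$ by direct computation in the previous proof and extends by continuity to $\mathcal{A}$ — to the identities $f\ast f^{-1} = {\bm 1}$ and $f^{-1}\ast f = {\bm 1}$. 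This yields
\begin{equation}
(e^{i{\bm \xi}\cdot{\bm x}}f)\ast(e^{i{\bm \xi}\cdot{\bm x}}f^{-1}) = e^{i{\bm \xi}\cdot{\bm x}}{\bm 1} = {\bm 1},
\end{equation}
and analogously with the factors reversed. By the uniqueness of the inverse in a unital Banach algebra, this forces $(e^{i{\bm \xi}\cdot{\bm x}}f)^{-1} = e^{i{\bm \xi}\cdot{\bm x}}f^{-1}$.

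There is no real obstacle here; the statement is a corollary of the preceding proposition. The only small subtlety to flag is that the extension of $e^{i{\bm \xi}\cdot{\bm x}}$ to $\mathcal{A}$ (needed because $f^{-1}$ is only known to live in $\mathcal{A}$, not $\mathcal{A}_0$) requires ${\bm \xi}$ to be real — which is precisely the hypothesis of the proposition — since for non-real ${\bm \xi}$ the factor $e^{i{\bm \xi}\cdot{\bm n}}$ grows exponentially and the map fails to be bounded. This is why the statement is restricted to real ${\bm \xi}$ and cannot be promoted, in this form, to complex parameters.
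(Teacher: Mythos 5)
Your proof is correct and takes the same route as the paper: invoke the preceding proposition to extend $e^{i{\bm\xi}\cdot{\bm x}}$ to a unital automorphism of ${\cal A}$ for real ${\bm\xi}$, then apply it to $f*f^{-1}={\bm 1}$. You are slightly more explicit than the paper (you also state the reversed identity and appeal to uniqueness of inverses), but the core argument is identical.
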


\begin{proof} Even though $f\in {\cal A}_0$, $f^{-1}$ will not be in ${\cal A}_0$. However, since $\mbox{Im}({\bm \xi})= 0$, the map $e^{i{\bm \xi}\cdot{\bm x}}$ extendeds to the entire algebra ${\cal A}$, where it continues to be an automorphism. We are then allowed to consider the element $e^{i{\bm \xi}\cdot{\bm x}}f^{-1}$. Note that with the knowledge we have so far about this map, we will not be able to make sense of $e^{i{\bm \xi}\cdot{\bm x}}f^{-1}$ if $\mbox{Im}({\bm \xi}) \neq 0$. Finally, we have:
\begin{equation}
(e^{i{\bm \xi}\cdot{\bm x}}f)*(e^{i{\bm \xi}\cdot{\bm x}}f^{-1})=e^{i{\bm \xi}\cdot{\bm x}}(f*f^{-1})=e^{i{\bm \xi}\cdot{\bm x}}{\bm 1}={\bm 1}.\qed
\end{equation}
\end{proof}

 \begin{proposition}\label{Y} If $f \in {\cal A}_0$ is an invertible element in ${\cal A}$, then there exists $\bar{\xi}>0$ (to be given during the proof) such that $e^{i{\bm \xi}\cdot{\bm x}}f$ is also invertible as long as $|\mathrm{Im}({\bm \xi})|<\bar{\xi}$.
 \end{proposition}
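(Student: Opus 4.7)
The plan is to handle the invertibility via a Neumann-series argument after first reducing to the case of purely imaginary ${\bm \xi}$. Writing ${\bm \xi}={\bm a}+i{\bm b}$ with ${\bm a},{\bm b}\in\mathbb{R}^d$, the kernel of $e^{i{\bm \xi}\cdot{\bm x}}f$ factorizes as $e^{i{\bm a}\cdot{\bm n}}e^{-{\bm b}\cdot{\bm n}}f(\omega,{\bm n})$, so $e^{i{\bm \xi}\cdot{\bm x}}f=e^{i{\bm a}\cdot{\bm x}}(e^{-{\bm b}\cdot{\bm x}}f)$, and the inner element $e^{-{\bm b}\cdot{\bm x}}f$ still lies in ${\cal A}_0$. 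By Proposition~\ref{X}, for real ${\bm a}$ the map $e^{i{\bm a}\cdot{\bm x}}$ extends to an automorphism of the full $C^*$-algebra ${\cal A}$ that preserves invertibility, so it is enough to prove the invertibility of $e^{-{\bm b}\cdot{\bm x}}f$. This is exactly the reason why $c(\xi)$ in Theorem~\ref{TH1} only involves the imaginary part of the dual variable.

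For the reduced problem I would split $e^{-{\bm b}\cdot{\bm x}}f=f+g_{\bm b}$ with $g_{\bm b}(\omega,{\bm n})=(e^{-{\bm b}\cdot{\bm n}}-1)f(\omega,{\bm n})$, note that $g_{\bm b}\in{\cal A}_0$, and then use the identity
\begin{equation*}
e^{-{\bm b}\cdot{\bm x}}f = f*({\bm 1}+f^{-1}*g_{\bm b}).
\end{equation*}
Since $f$ is invertible in ${\cal A}$, the invertibility of $e^{-{\bm b}\cdot{\bm x}}f$ reduces to that of ${\bm 1}+f^{-1}*g_{\bm b}$, which follows from a standard Neumann series as soon as $\|f^{-1}\|\,\|g_{\bm b}\|<1$. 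Applying the simple upper bound of Eq.~\ref{NormA} to $g_{\bm b}$ yields
\begin{equation*}
\|g_{\bm b}\|\leq \sum_{{\bm n}\in\mathbb{Z}^d}|1-e^{-{\bm b}\cdot{\bm n}}|\sup_{\omega\in\Omega}|f(\omega,{\bm n})|\leq c(|{\bm b}|),
\end{equation*}
so any ${\bm b}$ with $\|f^{-1}\|\,c(|{\bm b}|)<1$ does the job.

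To produce the $\bar{\xi}$ stated in the proposition, I would use the fact that the compact support of $f$ makes $c(\xi)$ a finite sum of continuous functions of $\xi$ with $c(0)=0$, and that $\xi\mapsto c(\xi)$ is nondecreasing; the intermediate value theorem then gives a positive $\bar{\xi}$ with $c(\bar{\xi})=1/\|f^{-1}\|$ whenever $c$ actually reaches that level, and otherwise any $\bar{\xi}>0$ works. Combining everything, for every ${\bm \xi}$ with $|\mathrm{Im}({\bm \xi})|<\bar{\xi}$ the element $e^{i{\bm \xi}\cdot{\bm x}}f$ is invertible. The only delicate point is the opening reduction: the map $e^{i{\bm \xi}\cdot{\bm x}}$ is a priori only known to extend continuously to all of ${\cal A}$ for real ${\bm \xi}$, but since $e^{-{\bm b}\cdot{\bm x}}f$ remains compactly supported the composition makes sense inside ${\cal A}_0$, and one can then safely invoke the real-parameter automorphism from Proposition~\ref{X} to transfer invertibility.
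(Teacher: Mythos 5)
Your proof is correct and follows essentially the same route as the paper's: reduce to purely imaginary ${\bm \xi}$ via the real-parameter automorphism of Proposition~\ref{X}, then invert $e^{-\mathrm{Im}({\bm \xi}\cdot{\bm x})}f$ by a Neumann series controlled by the bound of Eq.~\ref{NormA}, arriving at the same smallness condition $\|f^{-1}\|\,c(\xi)<1$ and the same $\bar{\xi}$. The only cosmetic difference is that you factor the perturbation on the right, $f*({\bm 1}+f^{-1}*g_{\bm b})$, whereas the paper factors it on the left, $({\bm 1}-(f-e^{-\mathrm{Im}({\bm \xi}\cdot{\bm x})}f)*f^{-1})*f$; both yield the identical estimate.
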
 
 \begin{proof}
 Since
 \begin{equation}
 e^{i{\bm \xi}\cdot{\bm x}}f = e^{i\mathrm{Re}( {\bm \xi}\cdot{\bm x})}(e^{-\mathrm{Im}( {\bm \xi}\cdot{\bm x})}f),
 \end{equation}
 and given the previous result, $e^{i{\bm \xi}\cdot{\bm x}}f $ is invertible if $e^{-\mathrm{Im}( {\bm \xi}\cdot{\bm x})}f$ is invertible. We have:
 \begin{equation}
 e^{-\mathrm{Im}( {\bm \xi}\cdot{\bm x})}f=(1-(f-e^{-\mathrm{Im}( {\bm \xi}\cdot{\bm x})}f)*f^{-1})*f,
 \end{equation}
 and, using Eq.~\ref{NormA},
 \begin{equation}\label{V}
 \|(f-e^{-\mathrm{Im} ({\bm \xi}\cdot{\bm x})}f)*f^{-1}\|  \leq   \|f^{-1}\|\sup\limits_{|\mathrm{Im}({\bm \xi})|=\xi} \sum_{{\bm n}\in \mathbb{Z}^d}  \ \left |\left (1-e^{-\mathrm{Im}({\bm \xi} \cdot {\bm n})}\right )\right | \sup\limits_{\omega\in \Omega}|f(\omega,{\bm n}) |.
 \end{equation}
Since the support of $f(\omega,\cdot)$ is contained in a bounded subset of $\mathbb{Z}^d$ for all $\omega$'s, the last line can be clearly seen to go to zero when $|\mathrm{Im}( {\bm \xi})|\rightarrow 0$. As such, there is a $\bar{\xi}>0$ such that 
 \begin{equation}
\|(f-e^{-\mathrm{Im}( {\bm \xi}\cdot{\bm x})}f)*f^{-1}\| < 1,
\end{equation}
for all ${\bm \xi}$'s with  $|\mathrm{Im} ({\bm \xi})| < \bar{\xi}$, in which case the sequence
\begin{equation}
\sum_{j=0}^\infty \left ((f-e^{-\mathrm{Im} ({\bm \xi}\cdot{\bm x})}f)*f^{-1}\right )^{*j}
\end{equation}
converges in the alegebra ${\cal A}$ to an element that is precisely 
\begin{equation}
\left (1-(f-e^{-\mathrm{Im}( {\bm \xi}\cdot{\bm x})}f)*f^{-1}\right)^{-1},
\end{equation}
 enabling us to conclude that $e^{i{\bm \xi}\cdot{\bm x}}f$ is invertible in these conditions. Furthermore:
\begin{equation}
\|(e^{i{\bm \xi}\cdot{\bm x}}f)^{-1}\| \leq \frac{1}{\|f^{-1}\|^{-1}- c(\xi)}
\end{equation}
where 
\begin{equation}
c(\xi)=\sup\limits_{|\mathrm{Im}({\bm \xi})|=\xi} \sum_{{\bm n}\in \mathbb{Z}^d}  \ |(1-e^{-\mathrm{Im}({\bm \xi} \cdot {\bm n})})| \sup\limits_{\omega\in \Omega}|f(\omega,{\bm n}) |, 
\end{equation}
is the constant appearing in Eq.~\ref{V}. The upper limit $\bar{\xi}$ can be identified as the unique solution to the equation: $c(\bar{\xi})=\|f^{-1}\|^{-1}$.\qed
 \end{proof}
 
 \begin{proposition}\label{TT} Let us fix an invertible (in ${\cal A}$) element $f\in {\cal A}_0$. Then the map:
 \begin{equation}
 {\bm \xi} \rightarrow \left (e^{i{\bm \xi}\cdot{\bm x}}f \right )^{-1}
 \end{equation}
 is analytic for all ${\bm \xi}\in {\mathbb C}^d$ with $|\mathrm{Im} {\bm \xi}| < \bar{\xi}$ (with $\bar{\xi}$ defined in the previous Proposition).
 \end{proposition}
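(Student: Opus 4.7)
The strategy is to reduce the analyticity of $\bm\xi\mapsto (e^{i\bm\xi\cdot\bm x}f)^{-1}$ to the analyticity of $\bm\xi\mapsto e^{i\bm\xi\cdot\bm x}f$ already established in Proposition~\ref{PX}, combined with the standard Banach algebra fact that inversion is analytic on the open set of invertible elements. Write $g(\bm\xi):=e^{i\bm\xi\cdot\bm x}f$, fix a base point $\bm\xi_0$ in the tube $|\mathrm{Im}(\bm\xi)|<\bar\xi$, and let $\Delta\xi_j$ be a complex increment of the $j$-th component only, producing the perturbed vector $\bm\xi_0+\Delta\bm\xi_j$. By Proposition~\ref{Y}, $g(\bm\xi_0)^{-1}\in{\cal A}$, and by Proposition~\ref{PX}, $\|g(\bm\xi_0+\Delta\bm\xi_j)-g(\bm\xi_0)\|\to 0$ as $\Delta\xi_j\to 0$.

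The first substantive step is to verify that $g(\bm\xi_0+\Delta\bm\xi_j)$ is itself invertible for all sufficiently small $|\Delta\xi_j|$ and that $\|g(\bm\xi_0+\Delta\bm\xi_j)^{-1}\|$ stays uniformly bounded. Both follow from the Neumann series representation $g(\bm\xi_0+\Delta\bm\xi_j)^{-1}=\bigl(\bm 1-g(\bm\xi_0)^{-1}*(g(\bm\xi_0)-g(\bm\xi_0+\Delta\bm\xi_j))\bigr)^{-1}*g(\bm\xi_0)^{-1}$, valid once $\|g(\bm\xi_0)-g(\bm\xi_0+\Delta\bm\xi_j)\|<\|g(\bm\xi_0)^{-1}\|^{-1}$; this condition holds for small $|\Delta\xi_j|$ by norm continuity, and the same estimate yields $g(\bm\xi_0+\Delta\bm\xi_j)^{-1}\to g(\bm\xi_0)^{-1}$ in norm.

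Applying the identity $A^{-1}-B^{-1}=B^{-1}*(B-A)*A^{-1}$ with $A=g(\bm\xi_0+\Delta\bm\xi_j)$ and $B=g(\bm\xi_0)$ gives
\begin{equation*}
\frac{g(\bm\xi_0+\Delta\bm\xi_j)^{-1}-g(\bm\xi_0)^{-1}}{\Delta\xi_j}=-g(\bm\xi_0)^{-1}*\frac{g(\bm\xi_0+\Delta\bm\xi_j)-g(\bm\xi_0)}{\Delta\xi_j}*g(\bm\xi_0+\Delta\bm\xi_j)^{-1}.
\end{equation*}
By Proposition~\ref{PX} the middle factor converges in ${\cal A}$ to $ix_j g(\bm\xi_0)$ independently of how $\Delta\xi_j$ approaches the origin. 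Combining this with the uniform bound on $\|g(\bm\xi_0+\Delta\bm\xi_j)^{-1}\|$ and submultiplicativity of the norm, the whole difference quotient converges in norm to $-ig(\bm\xi_0)^{-1}*(x_j g(\bm\xi_0))*g(\bm\xi_0)^{-1}$, a limit that is path-independent; this is precisely the analyticity criterion used throughout the paper. The only potentially delicate point is the uniform control on $\|g(\bm\xi_0+\Delta\bm\xi_j)^{-1}\|$, but as noted this is furnished by the Neumann series in the preceding step; the rest of the argument is purely algebraic manipulation plus norm continuity inherited from Proposition~\ref{PX}.
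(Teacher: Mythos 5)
Your proof is correct and follows essentially the same route as the paper: both rely on the resolvent identity to rewrite the difference quotient, the analyticity of ${\bm \xi}\mapsto e^{i{\bm \xi}\cdot{\bm x}}f$ from Proposition~\ref{PX} for the middle factor, and norm control of the perturbed inverse to pass to the limit, arriving at the same derivative $-(e^{i{\bm \xi}\cdot{\bm x}}f)^{-1}*(ix_je^{i{\bm \xi}\cdot{\bm x}}f)*(e^{i{\bm \xi}\cdot{\bm x}}f)^{-1}$. The only cosmetic difference is that you justify the uniform bound on $\|g({\bm \xi}_0+\Delta{\bm \xi}_j)^{-1}\|$ via a Neumann series, where the paper implicitly invokes the bound from Proposition~\ref{Y}; both are valid.
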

 
 \begin{proof} Considering again a variation $\Delta \xi_j$ of the $j$-th component of ${\bm \xi}$, we need to show that the limit
\begin{equation}\label{UU}
\lim_{\Delta \xi_j \rightarrow 0} \frac{(e^{i({\bm \xi}+\Delta {\bm \xi}_j)\cdot{\bm x}}f)^{-1}-(e^{i{\bm \xi}\cdot{\bm x}}f)^{-1}}{\Delta  \xi_j}
\end{equation}
exists and does not depend on how $\Delta \xi_j$ approaches the origin, for all ${\bm \xi}$ with $|\mathrm{Im} {\bm \xi}| < \bar{\xi}$. We first guess that this limit is $-(e^{i{\bm \xi}\cdot{\bm x}}f)^{-1}(ix_je^{i{\bm \xi}\cdot{\bm x}}f)(e^{i{\bm \xi}\cdot{\bm x}}f)^{-1}$ and then we prove that this is indeed the case. The inverse $(e^{i{\bm \xi}\cdot{\bm x}}f)^{-1}$ exists for all ${\bm \xi}$ with $|\mathrm{Im} {\bm \xi}| < \bar{\xi}$. We have:
\begin{equation}
\begin{split}
& \|(\Delta  \xi_j)^{-1}[(e^{i({\bm \xi}+\Delta {\bm \xi}_j)\cdot{\bm x}}f)^{-1}-(e^{i{\bm \xi}\cdot{\bm x}}f)^{-1}]+(e^{i{\bm \xi}\cdot{\bm x}}f)^{-1}(ix_je^{i{\bm \xi}\cdot{\bm x}}f)(e^{i{\bm \xi}\cdot{\bm x}}f)^{-1}\| \medskip \\
& \ \ \ =\|(e^{i({\bm \xi}+\Delta {\bm \xi}_j)\cdot{\bm x}}f)^{-1}e^{i{\bm \xi}\cdot{\bm x}}[(\Delta  \xi_j)^{-1}(f-e^{i(\Delta {\bm \xi}_j)\cdot{\bm x}}f)+ix_jf](e^{i{\bm \xi}\cdot{\bm x}}f)^{-1}\medskip \\
& \ \ \ -[(e^{i({\bm \xi}+\Delta {\bm \xi}_j)\cdot{\bm x}}f)^{-1}-(e^{i{\bm \xi}\cdot{\bm x}}f)^{-1}](ix_je^{i{\bm \xi}\cdot{\bm x}}f)(e^{i{\bm \xi}\cdot{\bm x}}f)^{-1}\|=O(|\Delta \xi_j|),
\end{split}
\end{equation}
as we've seen in our previous estimates in Proposition~\ref{PX}. Thus, the limit in Eq.~\ref{UU} exists and is equal to 
\begin{equation}
-(e^{i{\bm \xi}\cdot{\bm x}}f)^{-1}(ix_je^{i{\bm \xi}\cdot{\bm x}}f)(e^{i{\bm \xi}\cdot{\bm x}}f)^{-1}.\qed
\end{equation}
 \end{proof} 

\begin{theorem}[Thomas-Combes \cite{Combes:1973nx} for algebra ${\cal A}$]\label{TC}
Let $f\in {\cal A}_0$ be an invertible element in ${\cal A}$. Then there exists a constant $\bar{\xi}$, determined entirely by $f$, such that, for all ${\bm \xi}$'s with $|\mathrm{Im}({\bm \xi})|<\bar{\xi}$, $e^{i{\bm \xi}\cdot{\bm x}}f$ is invertible and we have the equality:
\begin{equation}
e^{i{\bm \xi}\cdot{\bm x}}f^{-1} = (e^{i{\bm \xi}\cdot{\bm x}}f)^{-1}.
\end{equation}
\end{theorem}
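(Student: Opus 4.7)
The plan is to leverage Proposition~\ref{Y} for the invertibility half of the statement, and then upgrade the real-${\bm \xi}$ identity of Proposition~\ref{X} to complex ${\bm \xi}$ in the strip by analytic continuation performed at the kernel level. First I would directly invoke Proposition~\ref{Y}: the constant $\bar{\xi}$ is determined as the unique (by monotonicity of $c$) solution of $c(\bar{\xi})=\|f^{-1}\|^{-1}$, depending only on $f$, and for every ${\bm \xi}$ with $|\mathrm{Im}({\bm \xi})|<\bar{\xi}$ the element $e^{i{\bm \xi}\cdot{\bm x}}f$ admits an inverse $(e^{i{\bm \xi}\cdot{\bm x}}f)^{-1}\in{\cal A}$, which settles the invertibility assertion.

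Next I would interpret the claimed equality pointwise in the kernel representation $g\leftrightarrow g(\omega,{\bm n})$, which is well-defined on all of ${\cal A}$ by density of ${\cal A}_0$ together with the bound $|g(\omega,{\bm n})|\leq\|g\|$, itself a direct consequence of the identification $g(\omega,{\bm m})=\langle{\bm 0}|\pi_\omega g|{\bm m}\rangle$ read off from Eq.~\ref{OpRep}. Fixing any $(\omega,{\bm n})$, pointwise kernel evaluation is thus a continuous linear functional on ${\cal A}$. Composing it with the ${\cal A}$-valued analytic map ${\bm \xi}\mapsto(e^{i{\bm \xi}\cdot{\bm x}}f)^{-1}$ supplied by Proposition~\ref{TT} produces a scalar analytic function
\begin{equation}
F_{\omega,{\bm n}}({\bm \xi}):=(e^{i{\bm \xi}\cdot{\bm x}}f)^{-1}(\omega,{\bm n})
\end{equation}
on the tube $\{{\bm \xi}\in\mathbb{C}^d\,:\,|\mathrm{Im}({\bm \xi})|<\bar{\xi}\}$.

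By Proposition~\ref{X}, for real ${\bm \xi}$ we have the identity $(e^{i{\bm \xi}\cdot{\bm x}}f)^{-1}=e^{i{\bm \xi}\cdot{\bm x}}f^{-1}$ in ${\cal A}$, so $F_{\omega,{\bm n}}({\bm \xi})=e^{i{\bm \xi}\cdot{\bm n}}f^{-1}(\omega,{\bm n})$ for every ${\bm \xi}\in\mathbb{R}^d$. The right-hand side is entire in ${\bm \xi}$, and two holomorphic functions on the connected tube that agree on the totally real slice must coincide throughout it. Reading the resulting pointwise identity back as a statement about functions on $\Omega\times\mathbb{Z}^d$ yields the desired $e^{i{\bm \xi}\cdot{\bm x}}f^{-1}=(e^{i{\bm \xi}\cdot{\bm x}}f)^{-1}$; the left-hand side is in effect \emph{defined} this way for nonreal ${\bm \xi}$, since $f^{-1}$ is typically not in ${\cal A}_0$ and the map $e^{i{\bm \xi}\cdot{\bm x}}$ has been extended to all of ${\cal A}$ only when $\mathrm{Im}({\bm \xi})=0$.

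The main conceptual obstacle is exactly this last point: giving meaning to $e^{i{\bm \xi}\cdot{\bm x}}f^{-1}$ outside ${\cal A}_0$ and outside the real slice, and doing so in a way that is compatible with the algebraic structure. The kernel-level analytic-continuation argument sketched above circumvents it cleanly and is the essential content of the Combes-Thomas technique in this noncommutative setting; everything else amounts to assembling the ingredients already packaged in Propositions~\ref{X}, \ref{Y}, and \ref{TT}.
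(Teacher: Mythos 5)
Your proof is correct, and it rests on the same three pillars as the paper's argument — Proposition~\ref{Y} for invertibility and the definition of $\bar{\xi}$, Proposition~\ref{TT} for analyticity of ${\bm \xi}\mapsto(e^{i{\bm \xi}\cdot{\bm x}}f)^{-1}$ on the tube, and Proposition~\ref{X} plus uniqueness of analytic continuation off the real slice — but the continuation step is implemented differently. The paper truncates: it applies the cutoff $\chi_R$ so that both candidates become well-defined analytic ${\cal A}$-valued maps, identifies $\chi_R(e^{i{\bm \xi}\cdot{\bm x}}f^{-1})$ with $\chi_R(e^{i{\bm \xi}\cdot{\bm x}}f)^{-1}$ on the whole cylinder by continuation from $\mathbb{R}^d$, and then removes the cutoff by letting $R\to\infty$, using the existence of $(e^{i{\bm \xi}\cdot{\bm x}}f)^{-1}$ in ${\cal A}$ to control the limit. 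You instead compose with the kernel-evaluation functionals $g\mapsto g(\omega,{\bm n})=\langle{\bm 0}|\pi_\omega g|{\bm n}\rangle$, which are norm-continuous by $|g(\omega,{\bm n})|\leq\|g\|$, and continue scalar holomorphic functions. This buys a cleaner argument — no cutoff and no limiting step — at the modest cost of establishing the identity only entry by entry; that is harmless here, since for nonreal ${\bm \xi}$ the left-hand side is in any case only defined entrywise, and membership of the common value in ${\cal A}$ is supplied by the right-hand side. The one point you (like the paper) leave implicit is that the identity theorem being invoked is the several-variables statement that a holomorphic function on a connected tube vanishing on the totally real slice $\mathbb{R}^d$ vanishes identically; it follows by applying the one-variable identity theorem coordinate by coordinate, and is worth a sentence if you want the argument airtight.
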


\begin{proof} We have already learned that $e^{i{\bm \xi}\cdot{\bm x}}$ is an analytic automorphism on ${\cal A}_0$. Unfortunately $f^{-1}$ is generally not in ${\cal A}_0$. So we will force $f^{-1}$ to be in ${\cal A}_0$ by applying the map $\chi_R:{\cal A}\rightarrow {\cal A}_0$ defined as $(\chi_R f)(\omega,{\bm n})=f(\omega,{\bm n})$ if $|{\bm n}|<R$ and zero otherwise. Then:
\begin{equation}
{\bm \xi} \rightarrow e^{i{\bm \xi}\cdot{\bm x}}(\chi_R f^{-1}),
\end{equation}
is an analytic map on the entire complex plane. Note that $\chi_R$ commutes with $e^{i{\bm \xi}{\bm x}}$: $e^{i{\bm \xi}\cdot{\bm x}} \circ \chi_R=\chi_R \circ e^{i{\bm \xi}\cdot{\bm x}}$ on ${\cal A}_0$, and this property can be extended to the whole ${\cal A}$ by continuity. Note that we can always make sense of $e^{i{\bm \xi}\cdot{\bm x}}f^{-1}$ as a complex valued function defined on $\Omega \times \mathbb{Z}^d$. As such, we can also say that:
\begin{equation}
{\bm \xi} \rightarrow \chi_R (e^{i{\bm \xi}\cdot{\bm x}}f^{-1}),
\end{equation}
is an analytic map on the entire complex plane. Now, we have already seen in Propostion~\ref{TT} that:
\begin{equation}
 {\bm \xi} \rightarrow (e^{i{\bm \xi}\cdot{\bm x}}f )^{-1}
 \end{equation}
 is analytic for all ${\bm \xi}\in {\mathbb C}^d$ with $|\mathrm{Im}({\bm \xi})| < \bar{\xi}$. Obviously the map
\begin{equation}
 {\bm \xi} \rightarrow \chi_R(e^{i{\bm \xi}\cdot{\bm x}}f )^{-1}
 \end{equation} 
shares the same property. So at this point, we have two maps, 
\begin{equation}
\chi_R (e^{i{\bm \xi}\cdot{\bm x}}f^{-1}) \ \ \mbox{and} \ \ \chi_R(e^{i{\bm \xi}\cdot{\bm x}}f )^{-1},
\end{equation}
which are analytic on the cylinder $|\mathrm{Im}({\bm \xi})| < \bar{\xi}$ and, according to Proposition~\ref{X}, they coincide on the real axis. Therefore the two maps are identical on the entire cylinder $|\mathrm{Im} {\bm \xi}| < \bar{\xi}$. Note that this statement is independent of how large $R$ is. We can take the limit $R\rightarrow \infty$ for $\chi_R(e^{i{\bm \xi}{\bm x}}f )^{-1}$ because we already know (see Proposition~\ref{Y}) that $(e^{i{\bm \xi}{\bm x}}f )^{-1}$ exists as an element of ${\cal A}$. In the light of the equality we just established, it follows that $\chi_R (e^{i{\bm \xi}{\bm x}}f^{-1})$ also has a limit in ${\cal A}$ as $R$ is taken to infinity, and this limit must be equal to $(e^{i{\bm \xi}{\bm x}}f )^{-1}$.\qed  
\end{proof}

\subsection{\underline{Proof of Theorem~\ref{TH1} and its corollaries}}

From the upper bounds on $(e^{i{\bm \xi}{\bm x}}f)^{-1}$ established in Proposition~\ref{Y}, we have
\begin{equation}
\left ( \|f^{-1}\|^{-1}-c(\xi)\right )^{-1} \geq \|(e^{i{\bm \xi}\cdot{\bm x}}f)^{-1}\|=\|e^{i{\bm \xi}\cdot{\bm x}}f^{-1}\| \geq |e^{i{\bm \xi}\cdot{\bm n}}f^{-1}(\omega,{\bm n})|,
\end{equation}
or:
\begin{equation}\label{XX}
|f^{-1}(\omega,{\bm n})|\leq \frac{e^{\mathrm{Im}({\bm \xi}\cdot {\bm n})}}{\|f^{-1}\|^{-1}-c(\xi)},
\end{equation}
independently of the direction of vector ${\bm \xi}$. Above we used the simple fact that $|f(\omega,{\bm n})|$ is always bounded by $\|f\|$. Then the upper bound of Eq.~\ref{PP1} follows by taking $\mathrm{Im}({\bm \xi})$ as $-\xi\frac{{\bm n}}{|{\bm n}|}$, with $\xi<\bar{\xi}$. 

To demonstrate the upper bound of Eq.~\ref{PP2}, we make use of the identity from Eq.~\ref{DiffId1}:
\begin{equation}\label{sh}
\partial_j f^{-1} = - f^{-1}*(\partial_j f)*f^{-1}.
\end{equation}
Since $e^{i{\bm \xi}\cdot {\bm x}}$ is a morphism, we can write:
\begin{equation}
e^{i{\bm \xi}\cdot {\bm x}} (\partial_j f^{-1}) = - (e^{i{\bm \xi}\cdot {\bm x}}f^{-1})*(e^{i{\bm \xi}\cdot {\bm x}}\partial_j f)*(e^{i{\bm \xi}\cdot {\bm x}}f^{-1}),
\end{equation}
to conclude:
\begin{equation}
|e^{i{\bm \xi}\cdot {\bm n}} (\partial_j f^{-1})(\omega,{\bm n})|\leq \|e^{i{\bm \xi}\cdot {\bm x}} (\partial_j f^{-1})\| \leq \|e^{i{\bm \xi}\cdot {\bm x}}\partial_j f\| \ \|e^{i{\bm \xi}\cdot {\bm x}}f^{-1}\|^2.
\end{equation}
Using again the upper bound established in Proposition~\ref{Y}, we can conclude:
\begin{equation}
| (\partial_j f^{-1})(\omega,{\bm n})|\leq \frac{e^{\mathrm{Im}({\bm \xi}\cdot {\bm n})}\sup_{|\mathrm{Im}({\bm \xi})|=\xi}\|e^{ -\mathrm{Im}({\bm \xi}\cdot{\bm x})}\partial_j f\| }{\left( \|f^{-1}\|^{-1}-c(\xi) \right )^2}
\end{equation}
and Eq.~\ref{PP2} follows by taking $\mathrm{Im}({\bm \xi})$ as $-\xi\frac{{\bm n}}{|{\bm n}|}$, with $\xi<\bar{\xi}$.

Estimates on the higher order derivatives can be obtained in a similar way. For example, taking another derivative on Eq.~\ref{sh},
\begin{equation}
\begin{split}
\partial_k\partial_j f^{-1} &=  f^{-1}*(\partial_k f)*f^{-1}*(\partial_j f)*f^{-1} \medskip \\
&\ \ \ - f^{-1}*(\partial_k\partial_j f)*f^{-1} +  f^{-1}*(\partial_j f)* f^{-1}*(\partial_k f)*f^{-1},
\end{split}
\end{equation}
and following the same steps as before, we obtain:
\begin{equation}
|e^{i{\bm \xi} \cdot {\bm n}}(\partial_k\partial_j f^{-1})(\omega,{\bm n})| 
\leq \frac{2\|e^{-\mathrm{Im}({\bm \xi}\cdot{\bm x})}(\partial_k f)\| \ \|e^{-\mathrm{Im}({\bm \xi}\cdot{\bm x})}(\partial_j f)\|}{\left( \|f^{-1}\|^{-1}-c(\xi) \right )^3} +   \frac{\|e^{-\mathrm{Im}({\bm \xi}\cdot{\bm x})}(\partial_k \partial_j f)\| }{\left( \|f^{-1}\|^{-1}-c(\xi) \right )^2}.
\end{equation}
This is the same as:
\begin{equation}
|(\partial_k\partial_j f^{-1})(\omega,{\bm n})| \leq \frac{C_{kj}(\xi) \ e^{\mathrm{Im}({\bm \xi}\cdot {\bm n})}}{\left( \|f^{-1}\|^{-1}-c(\xi) \right )^3},
\end{equation}
with
\begin{equation}
\begin{split}
C_{kj}(\xi) & =\sup\limits_{|\mathrm{Im}({\bm \xi})|=\xi}\left ( 2\|e^{-\mathrm{Im}({\bm \xi}\cdot{\bm x})}(\partial_k f)\| \ \|e^{-\mathrm{Im}({\bm \xi}\cdot{\bm x})}(\partial_j f)\| \right .\medskip \\
& \ \ \ \left . +\|e^{-\mathrm{Im}({\bm \xi}\cdot{\bm x})}(\partial_k \partial_j f)\|  ( \|f^{-1}\|^{-1}-c(\xi) ) \right ).
\end{split}
\end{equation}
At this point we can take $\mathrm{Im}({\bm \xi})$ as $-\xi\frac{{\bm n}}{|{\bm n}|}$, with $\xi<\bar{\xi}$ and the desired estimate follows. The above parameter can be easily computed once the element $f$ is explicitly given.  Higher order derivatives can be treated in the same way.

Corollary~\ref{EX1} is just a particular case of the main statement. For the case when $f=h-z{\bm 1}$, with $h$ the nearest-neighbor (self-adjoint) Hamiltonian of Eq.~\ref{MainModel}, we can compute the parameters explicitly. Indeed $c(\xi)$ can be computed from its very definition, to be:
\begin{equation}
c(\xi)=2 d \sinh {\xi}.
\end{equation}
It is important to remark that $c(\xi)$ is entirely determine by $h$ (so it is independent of $z$). Also $\|(h-z{\bm 1})^{-1}\|=\mathrm{dist}(z,\sigma(h))$, a well known fact in the spectral theory. It is also straightforward to show that 
\begin{equation}
\sup_{|\mathrm{Im}({\bm \xi})|=\xi}\|e^{\mathrm{Im}({\bm \xi}\cdot{\bm x})}\partial_j (h-z{\bm 1})\|\leq 2d\cosh \xi,
\end{equation}
 which is again entirely determined by $h$. Corollary~\ref{EX1} then follows.

For Corollary~\ref{EX2}, we have:
\begin{equation}
\Phi(h)=\frac{i}{2\pi}\int_{{\cal C}_\kappa}(h-z{\bm 1})^{-1} \Phi(z) dz,
\end{equation}
with ${\cal C}_\kappa$ the contour defined by $\mathrm{dist}(z,\sigma(h))=\kappa$. Using the estimates already established for the resolvent, we can automatically write:
\begin{equation}
\left \| e^{i{\bm \xi}{\bm x}}\partial_{\bm \alpha}\Phi(h) \right \| \leq  \frac{C_{h,{\bm \alpha}}(\xi)}{\left (\kappa- 2d\sinh \xi \right )^{|{\bm \alpha}|+1}} \frac{1}{2\pi}\int_{{\cal C}_\kappa} |\Phi(z)||dz|,
\end{equation}
and the statement follows, with:
\begin{equation}\label{PhiBar}
\bar{\Phi} = \frac{1}{2\pi}\int_{{\cal C}_\kappa} |\Phi(z)||dz|.
\end{equation}
Throughout the following sections, we will consistently use the notation from Eq.~\ref{PhiBar}. 

For Collorary~\ref{EX3}, it is enough to observe that:
\begin{equation}
|\langle {\bm n}|\pi_\omega(\partial_{\bm \alpha} \Phi(h))|{\bm m}\rangle | = \partial_{\bm \alpha} \Phi(h)(\mathfrak{t}_{\bm n}^{-1}\omega,{\bm m}-{\bm n}).
\end{equation}
Then the statement follow from the previous localization estimates.

\section{Approximating algebras: First Round}

In this section we construct the algebra of observables for the more computationally manageable situations when the on-site potential is random only inside a large super-cell $\Lambda$ and this super-cell is periodically repeated until the whole $\mathbb{Z}^d$ is filled. We call this algebra the periodic algebra. Note that no condition is imposed on the magnetic field, so the whole system is not periodic. While the periodic algebra is not the end of our string of approximation, we want to point out that, especially in 2-dimensions, explicit analytic and computer assisted analysis is still possible at the level of this algebra. This is because one can chose a special (asymmetric-) gauge to represent the matrix ${\bm F}$, so that the phases appearing in the multiplication law affect only one spatial direction. In this case the system becomes periodic in the remaining spatial direction and one can use the Bloch decomposition with respect to that direction. Hence, the 2-dimensional Schr\"odinger equation can be reduced to a 1-dimensional Harper's like equation, which has been studied computationally by various techniques \cite{CzychollSolStComm1988ds,CzychollZPhysB1988jd,MandalPhysB1998as,TanJPhysCondMatt1994rt,RochePRB1999te,SteffenPRB2004bn,ShengPRL1997cy,ShengPRB2001se,KoshinoPRB2006gh,MaitiPhysLettA2012gh,Dutta2012cv}. As such, the rigorous error bounds reported in this section (see Theorem~\ref{ThR1}) can still be of interest to numerical analysts.  

\subsection{\underline{The periodic $C^*$-algebra}}

Let $\mathfrak{t}_{\bm e_j}$ be the translation by one unit in the $j$-th direction and let ${\cal N}$ be a positive integer. We define the subset $\Omega_{\mathrm{per}}^{\cal N} \subset \Omega$ of periodic $\omega$'s as: 
\begin{equation}
\begin{array}{c}
\Omega_{\mathrm{per}}^{\cal N}=\{\omega \in \Omega \ | \ \mathfrak{t}_{(2{\cal N}+1){\bm e}_j}\omega = \omega, \ j=1,\ldots,d\}.
\end{array}
\end{equation}
The set $\Omega_{\mathrm{per}}^{\cal N}$ is closed (hence compact) and translationally invariant, in the sense that $\mathfrak{t}_{\bm n}\omega \in \Omega_{\mathrm{per}}^{\cal N}$, for all $\omega \in \Omega_{\mathrm{per}}^{\cal N}$ and ${\bm n}\in \mathbb{Z}^d$. An $\omega$ from $\Omega_{\mathrm{per}}^{\cal N}$ can be constructed by periodically repeating its components $\omega_{\bm n}$ for ${\bm n}$ in the unit cell defined as:
\begin{equation}
\Lambda_{\cal N}=\{-{\cal N},\ldots,{\cal N}\}^{d}.
\end{equation} 

We define the periodic $C^*$-algebra $C^*(\Omega_{\mathrm{per}}^{\cal N} \times \mathbb{Z}^d,{\bm F})$ in the following way. The elements are continuous functions $f(\omega,{\bm n})$ from $\Omega_{\mathrm{per}}^{\cal N} \times \mathbb{Z}^d$ to $\mathbb{C}$. Since $\Omega_{\mathrm{per}}^{\cal N}$ is invariant to translations, the multiplication law in Eq.~\ref{AlgRules}, the $*$-operation in Eq.~\ref{Star} and the operator representations in Eq.~\ref{OpRep}, defined for algebra ${\cal A}$, make sense when we restrict the $f$'s to $\Omega_{\mathrm{per}}^{\cal N}$. As such, we will keep $*_{\bm F}$, the operation in Eq.~\ref{Star} and the representation from Eq.~\ref{OpRep} as the multiplication, $*$-operation and operator representation  for the $C^*(\Omega_{\mathrm{per}}^{\cal N} \times \mathbb{Z}^d,{\bm F})$, respectively. The norm is defined by:
\begin{equation}
\|f\|=\sup_{\omega \in \Omega^{\cal N}_{\mathrm{per}}}\|\pi_\omega f\|.
\end{equation}
Obviously, $\|f^**f\|=\|f\|^2$, so $C^*(\Omega_{\mathrm{per}}^{\cal N} \times \mathbb{Z}^d,{\bm F})$ is a $C^*$-algebra.

The differential calculus can be defined as before because the automorphisms $\partial_j$ remain well defined when the functions are restricted to $\Omega_{\mathrm{per}}^{\cal N}$. However, the the trace defined in Eq.~\ref{Trace} for algebra $C^*(\Omega \times \mathbb{Z}^d,{\bm F})$ is of no use here because the set $\Omega_{\mathrm{per}}^{\cal N}$ has zero measure relative to $dP(\omega)$. We define a new probability measure (only on $\Omega_{\mathrm{per}}^{\cal N}$), through the following natural formula:
\begin{equation}\label{ML}
dP_{\mathrm{per}}(\omega)=\prod_{{\bm n}\in \Lambda_{\cal N}} d \omega_{\bm n}.
\end{equation}
Clearly this measure is invariant to the translations $\mathfrak{t}$ because the $\omega$'s are periodic, a property that allows us to define a trace over $C^*(\Omega_{\mathrm{per}}^{\cal N} \times \mathbb{Z}^d,{\bm F})$.

\begin{proposition} The following linear functional over $C^*(\Omega_{\mathrm{per}}^{\cal N} \times \mathbb{Z}^d,{\bm F})$:
\begin{equation}\label{TraceP}
{\cal T}_{\mathrm{per}}(f)= \int_{\Omega^{\cal N}_{\mathrm{per}}} dP_{\mathrm{per}}(\omega) \ f(\omega,0)
\end{equation}
has the following properties:
\begin{equation}
{\cal T}_{\mathrm{per}}(f * g)={\cal T}_{\mathrm{per}}(g * f), {\cal T}_{\mathrm{per}}({\bm 1})=1, \ \mathrm{and} \ {\cal T}_{\mathrm{per}}(f * f^*)>0 \ \mathrm{for} \ f \neq 0.
\end{equation}
As such, ${\cal T}_{\mathrm{per}}$ defines a trace over $C^*(\Omega_{\mathrm{per}}^{\cal N} \times \mathbb{Z}^d,{\bm F})$.
\end{proposition}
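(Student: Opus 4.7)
The plan is to verify the three defining properties of a trace directly from Eq.~\ref{TraceP}. I will work first on the dense subalgebra of continuous functions with compact support in ${\bm n}$ (the periodic analog of ${\cal A}_0$), where all the sums to be manipulated are automatically finite, and then extend to the full $C^*$-algebra via the continuity bound $|{\cal T}_{\mathrm{per}}(f)|\leq\|f\|$, which is immediate from the identity $f(\omega,{\bm 0})=\langle{\bm 0}|\pi_\omega f|{\bm 0}\rangle$ read off from Eq.~\ref{OpRep}. Normalization ${\cal T}_{\mathrm{per}}({\bm 1})=1$ then follows at once from ${\bm 1}(\omega,{\bm 0})=1$ and the fact that $dP_{\mathrm{per}}$ is a probability measure on $\Omega^{\cal N}_{\mathrm{per}}$.

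For cyclicity the decisive input is translation-invariance of $dP_{\mathrm{per}}$: each $\mathfrak{t}_{\bm m}$ cyclically permutes the finite family $\{\omega_{\bm n}\}_{{\bm n}\in\Lambda_{\cal N}}$ parametrizing $\Omega^{\cal N}_{\mathrm{per}}$, thanks to the $(2{\cal N}+1)$-periodicity built into the definition of $\Omega^{\cal N}_{\mathrm{per}}$, and the product measure of Eq.~\ref{ML} is manifestly invariant under any such coordinate permutation. Expanding with the rule in Eq.~\ref{AlgRules} at ${\bm n}={\bm 0}$, where the magnetic phase $e^{i\pi({\bm 0}\cdot{\bm F}\cdot{\bm m})}$ is trivially $1$, I obtain
\begin{equation*}
{\cal T}_{\mathrm{per}}(f*g)=\int_{\Omega^{\cal N}_{\mathrm{per}}}dP_{\mathrm{per}}(\omega)\sum_{{\bm m}\in\mathbb{Z}^d} f(\omega,{\bm m})\,g(\mathfrak{t}_{\bm m}^{-1}\omega,-{\bm m}).
\end{equation*}
Performing the substitution $\omega\to\mathfrak{t}_{\bm m}\omega$ inside each summand (legal by the invariance just noted) and then relabeling ${\bm m}\to-{\bm m}$ reproduces term-by-term the analogous expansion of ${\cal T}_{\mathrm{per}}(g*f)$.

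For positivity I exploit the representation $\pi_\omega$. Combining multiplicativity (Eq.~\ref{faithful}) with $\pi_\omega f^*=(\pi_\omega f)^*$ (the Hilbert-space adjoint, forced by the compatibility of the two $C^*$-structures) yields
\begin{equation*}
(f*f^*)(\omega,{\bm 0})=\langle{\bm 0}|\,\pi_\omega f\,(\pi_\omega f)^*\,|{\bm 0}\rangle=\bigl\|(\pi_\omega f)^*|{\bm 0}\rangle\bigr\|^2_{\ell^2(\mathbb{Z}^d)}\geq 0,
\end{equation*}
so integrating gives ${\cal T}_{\mathrm{per}}(f*f^*)\geq 0$ for every $f$. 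To upgrade to strict positivity for $f\neq 0$, I would pick $\omega_0,\,{\bm n}_0,\,{\bm m}_0$ with $\langle{\bm n}_0|\pi_{\omega_0}f|{\bm m}_0\rangle\neq 0$ and apply the covariance $U_{\bm a}\pi_\omega f\,U_{\bm a}^{-1}=\pi_{\mathfrak{t}_{\bm a}\omega}f$ together with $U_{\bm a}|{\bm 0}\rangle=|{\bm a}\rangle$ (the Peierls phase cancels because ${\bm a}\cdot{\bm F}\cdot{\bm a}=0$ by antisymmetry) to produce some $\omega_1\in\Omega^{\cal N}_{\mathrm{per}}$ with $(\pi_{\omega_1}f)^*|{\bm 0}\rangle\neq 0$. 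Norm-continuity of $\omega\mapsto\pi_\omega f$---which holds on the full $C^*$-algebra because $\|\pi_\omega f-\pi_\omega f_n\|\leq\|f-f_n\|$ uniformly in $\omega$ for any compactly supported approximation---then keeps $\omega\mapsto\|(\pi_\omega f)^*|{\bm 0}\rangle\|^2$ strictly positive on an open neighborhood of $\omega_1$, which carries positive $dP_{\mathrm{per}}$-measure since $dP_{\mathrm{per}}$ has full support on $\Omega^{\cal N}_{\mathrm{per}}$. The last two steps---extending the representation identity to the completion and making the measure-theoretic leap from a single nonvanishing matrix element to a positive-measure set---constitute the only nontrivial obstacle; the remainder is routine algebra.
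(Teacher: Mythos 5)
Your verification is correct and follows the same route the paper intends: the paper's proof is the one-line remark that the statement ``follows via a direct computation, using the fact that $dP_{\mathrm{per}}(\omega)$ is translational invariant,'' and your change of variables $\omega\to\mathfrak{t}_{\bm m}\omega$ in the cyclicity computation is exactly that invariance at work, while your positivity argument via $(f*f^*)(\omega,{\bm 0})=\|(\pi_\omega f)^*|{\bm 0}\rangle\|^2$ together with covariance and the full support of the product measure is the standard faithfulness check the paper leaves implicit. No gaps; you have simply written out the details the paper omits.
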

\begin{proof} The statement follows via a direct computation, using the fact that $dP_{\mathrm{per}}(\omega)$ is translational invariant.\qed
\end{proof}
\begin{proposition}\label{QQ} The following is an alternative, useful way to compute the trace:
\begin{equation}
{\cal T}_{\mathrm{per}}(f)=\frac{1}{|\Lambda_{\cal N}|} \int_{\Omega^{\cal N}_{\mathrm{per}}} dP_{\mathrm{per}}(\omega) \ \mathrm{Tr}\{(\pi_\omega f)\chi_{\Lambda_{\cal N}}\},
\end{equation}
where $\chi_{\Lambda_{\cal N}}$ is the characteristic function of $\Lambda_{\cal N}$.
\end{proposition}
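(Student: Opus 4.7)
The plan is to unwind the right-hand side using the explicit formula for $\pi_\omega$ given in Eq.~\ref{OpRep}, and then exploit the translation invariance of $dP_{\mathrm{per}}$ to recognize the original definition \eqref{TraceP} of $\mathcal{T}_{\mathrm{per}}$.

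First I would compute the diagonal kernel of $\pi_\omega f$. From Eq.~\ref{OpRep}, the matrix element in the standard basis of $\ell^2(\mathbb{Z}^d)$ is
\begin{equation}
\langle {\bm n}|\pi_\omega f|{\bm n}\rangle = f(\mathfrak{t}_{\bm n}^{-1}\omega,{\bm 0})\, e^{i\pi({\bm n}\cdot {\bm F}\cdot {\bm n})} = f(\mathfrak{t}_{\bm n}^{-1}\omega,{\bm 0}),
\end{equation}
since ${\bm F}$ is antisymmetric. Consequently,
\begin{equation}
\mathrm{Tr}\{(\pi_\omega f)\chi_{\Lambda_{\cal N}}\} = \sum_{{\bm n}\in \Lambda_{\cal N}} f(\mathfrak{t}_{\bm n}^{-1}\omega,{\bm 0}).
\end{equation}

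Next I would plug this into the right-hand side of the claimed identity, interchange the finite sum with the integral, and perform the change of variables $\omega' = \mathfrak{t}_{\bm n}^{-1}\omega$ in each summand. Because the measure $dP_{\mathrm{per}}$ is invariant under $\mathfrak{t}$ (an immediate consequence of the product structure \eqref{ML} combined with the periodicity constraint defining $\Omega_{\mathrm{per}}^{\cal N}$), each integral equals $\int_{\Omega^{\cal N}_{\mathrm{per}}} dP_{\mathrm{per}}(\omega')\, f(\omega',{\bm 0})$. Summing over the $|\Lambda_{\cal N}|$ sites and dividing by $|\Lambda_{\cal N}|$ cancels the normalization, leaving precisely $\mathcal{T}_{\mathrm{per}}(f)$ as defined in \eqref{TraceP}.

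There is no real obstacle here: the proof is essentially a one-line manipulation once one observes that the diagonal of $\pi_\omega f$ evaluates trivially because the Peierls phase $e^{i\pi({\bm n}\cdot {\bm F}\cdot {\bm n})}$ vanishes on the diagonal. The only small point that needs care is the translation invariance of $dP_{\mathrm{per}}$, which must be checked: although $\Omega$ carries the product measure $\prod_{\mathbb{Z}^d} d\omega_{\bm n}$ on all of $\mathbb{Z}^d$, the restricted measure in \eqref{ML} is a product only over the fundamental domain $\Lambda_{\cal N}$. Translation invariance is nevertheless inherited from the fact that any $\mathfrak{t}_{\bm m}$ merely permutes (modulo the $(2{\cal N}+1)\mathbb{Z}^d$ periodicity) the coordinates $\omega_{\bm n}$ with ${\bm n}\in\Lambda_{\cal N}$, and Lebesgue measure on $[-1/2,1/2]^{\Lambda_{\cal N}}$ is invariant under coordinate permutations.
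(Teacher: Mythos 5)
Your proof is correct and is precisely the argument the paper has in mind: the paper's own proof is the one-line remark that the statement "follows from the definitions and from the fact that $dP_{\mathrm{per}}(\omega)$ is translationally invariant," and your computation of the diagonal kernel $\langle {\bm n}|\pi_\omega f|{\bm n}\rangle = f(\mathfrak{t}_{\bm n}^{-1}\omega,{\bm 0})$ followed by the change of variables using translation invariance of $dP_{\mathrm{per}}$ is exactly the detailed version of that. Your closing remark about why the restricted product measure is still $\mathfrak{t}$-invariant (translations act as coordinate permutations modulo the periodicity) is the right point to flag.
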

\begin{proof} The statement follows from the definitions and from the fact that $dP_{\mathrm{per}}(\omega)$ is translationally invariant.\qed
\end{proof}

\subsection{\underline{Analytic Functional Calculus: Comparative estimates}}

Let $f_j \in C^*(\Omega \times \mathbb{Z}^d,{\bm F})$, $j=1,\ldots,N$, be defined via the analytic calculus with the Hamiltonian:
\begin{equation}\label{fis}
f_j=\partial_{{\bm \alpha}_j}\Phi_j(h),
\end{equation}
with $\Phi_j$'s analytic functions in a closed neighborhood of $\sigma(h)$ defined by $\mathrm{dist}(z,\sigma(h))\leq \kappa$. Generally, we are interested in computing expected values of the form:
\begin{equation}
{\cal T}\left (\prod_{j=1}^N f_j\right),
\end{equation}
because most, if not all, the linear physical response coefficients of a system to external fields can be written in this way at finite temperatures. 

Now, any $f$ from algebra ${\cal A}$ can be associated with an element from algebra $C^*(\Omega_{\mathrm{per}}^{\cal N} \times \mathbb{Z}^d,{\bm B})$ by just restricting its domain of definition from $\Omega \times \mathbb{Z}^d$ to $\Omega_{\mathrm{per}}^{\cal N} \times \mathbb{Z}^d$. Let us denote this map by $\mathfrak{P}_{\Lambda_{\cal N}}$. Since the restriction effectively occurs in the $\Omega$ space, the analytic calculus enjoys the following special property:
\begin{equation}\label{SP}
\Phi(\mathfrak{P}_{\Lambda_{\cal N}} f)=\mathfrak{P}_{\Lambda_{\cal N}} \Phi(f).
\end{equation}
As already mentioned in the short introduction for the present section, the correlation functions in the periodic algebra may be evaluated on a computer. Then a natural question arises, namely, what errors should one expect if the elements from $C^*(\Omega \times \mathbb{Z}^d,{\bm F})$ are mapped into elements of $C^*(\Omega_{\mathrm{per}}^{\cal N} \times \mathbb{Z}^d,{\bm F})$ via $\mathfrak{P}_{\Lambda_{\cal N}}$, and the analytic calculus and the expected evaluation are performed inside the algebra $C^*(\Omega_{\mathrm{per}}^{\cal N} \times \mathbb{Z}^d,{\bm F})$ instead of $C^*(\Omega \times \mathbb{Z}^d,{\bm F})$? This is the main question for the present section and the answer to this question is given below.

\begin{theorem}\label{ThR1} Let $f_j$'s be defined as above and assume $h$ is the nearest-neighbor Hamiltonian of Eq.~\ref{MainModel}. Then, for any $0<\xi<\sinh^{-1}(\kappa/2d)$, there exists a finite, completely identifiable (at least in the asymptotic limit of large ${\cal N}$'s) factor $\mathfrak{B}_1(\xi,\{{\bm \alpha}\})$ that is independent of $\Phi_j$'s, such that:
\begin{equation}
\left |{\cal T}\left(\prod_{j=1}^N f_j \right )-{\cal T}_{\mathrm{per}}\left (\prod_{j=1}^N \mathfrak{P}_{\Lambda_{\cal N}} f_j \right) \right | 
\leq  \mathfrak{B}_1(\xi,\{{\bm \alpha}\}) \left (\prod_{j=1}^N \bar{\Phi}_j \right ) {\cal N}^{-1} e^{-\frac{2\sqrt{2}}{3}\xi{\cal N}}.
\end{equation}
\end{theorem}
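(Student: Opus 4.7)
The plan is to exploit a natural measure-theoretic bridge between $\Omega$ and $\Omega^{\cal N}_{\mathrm{per}}$. For each $\omega\in\Omega$ let $\omega^{\mathrm{per}}\in\Omega^{\cal N}_{\mathrm{per}}$ denote the periodic extension of $\omega|_{\Lambda_{\cal N}}$. Since the $\omega_{\bm n}$ are i.i.d.~uniform, integrating out the coordinates outside $\Lambda_{\cal N}$ shows that the push-forward of $dP$ under $\omega\mapsto\omega^{\mathrm{per}}$ is $dP_{\mathrm{per}}$, and Eq.~\ref{SP} guarantees that $\mathfrak{P}_{\Lambda_{\cal N}}$ intertwines the analytic calculus and the multiplication on $\Omega^{\cal N}_{\mathrm{per}}$. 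This lets me rewrite
\begin{equation*}
{\cal T}\Bigl(\textstyle\prod_j f_j\Bigr)-{\cal T}_{\mathrm{per}}\Bigl(\textstyle\prod_j\mathfrak{P}_{\Lambda_{\cal N}}f_j\Bigr)=\int_\Omega dP(\omega)\,\langle{\bm 0}|\textstyle\prod_j\pi_\omega f_j-\textstyle\prod_j\pi_{\omega^{\mathrm{per}}}f_j|{\bm 0}\rangle,
\end{equation*}
reducing the bulk of the work to a uniform-in-$\omega$ bound on the operator difference inside the matrix element.

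I would next telescope that difference into $N$ terms of the form $\bigl(\prod_{j<k}\pi_\omega f_j\bigr)(\pi_\omega f_k-\pi_{\omega^{\mathrm{per}}}f_k)\bigl(\prod_{j>k}\pi_{\omega^{\mathrm{per}}}f_j\bigr)$ and insert resolutions of the identity at each junction. The bystander kernels $\langle\cdot|\pi_\omega f_j|\cdot\rangle$ enjoy the exponential localization of Corollary~\ref{EX3}, so the problem reduces to bounding the defect kernel $|\langle{\bm p}|\pi_\omega f_k-\pi_{\omega^{\mathrm{per}}}f_k|{\bm q}\rangle|$. To control the defect, I would use the contour representation $f_k=\frac{i}{2\pi}\oint_{{\cal C}_\kappa}\Phi_k(z)\,\partial_{{\bm\alpha}_k}(h-z{\bm 1})^{-1}dz$ together with the resolvent identity
\begin{equation*}
(\pi_\omega h-z)^{-1}-(\pi_{\omega^{\mathrm{per}}}h-z)^{-1}=(\pi_\omega h-z)^{-1}(\pi_{\omega^{\mathrm{per}}}h-\pi_\omega h)(\pi_{\omega^{\mathrm{per}}}h-z)^{-1},
\end{equation*}
noting that the middle factor is a diagonal multiplication operator of norm $\leq W$ supported on $\mathbb{Z}^d\setminus\Lambda_{\cal N}$. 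Inserting a resolution of the identity at the site ${\bm n}$ carrying this potential and applying the Combes--Thomas bounds of Corollary~\ref{EX1} to the two flanking resolvents produces, uniformly in $z\in{\cal C}_\kappa$, a pointwise estimate of order $\bar\Phi_k\sum_{{\bm n}\notin\Lambda_{\cal N}}e^{-\xi(|{\bm p}-{\bm n}|+|{\bm n}-{\bm q}|)}$, with the derivatives $\partial_{{\bm\alpha}_k}$ handled by the Leibniz rule applied to the resolvent identity and the ensuing position factors absorbed into the higher-order constants of Theorem~\ref{TH1}.

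The final step is a purely geometric summation. The bystander kernels confine ${\bm p}$ and ${\bm q}$ close to the origin, while ${\bm n}\notin\Lambda_{\cal N}$ forces the geodesic length $|{\bm p}-{\bm n}|+|{\bm n}-{\bm q}|$ to grow linearly with ${\cal N}$, producing the announced exponential decay; uniformity in $\omega$ makes the $dP$-integration trivial. I expect the main obstacle to be the sharp bookkeeping required for the precise rate $\tfrac{2\sqrt{2}}{3}\xi$ and the ${\cal N}^{-1}$ prefactor: one must split the $\xi$-budget among all propagation legs, balance an ellipsoidal inequality for $|{\bm p}-{\bm n}|+|{\bm n}-{\bm q}|$ against the polynomial surface-area sum over $\partial\Lambda_{\cal N}$, and collect the $N$ telescoping contributions together with the contour-length factor on ${\cal C}_\kappa$ and the position-operator corrections from the derivatives into a single constant $\mathfrak{B}_1(\xi,\{{\bm\alpha}\})$ independent of the $\Phi_j$'s.
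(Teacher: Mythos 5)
Your proposal is correct, and its analytic core coincides with the paper's: telescope the product into $N$ defect terms, reduce each defect via the resolvent identity to the diagonal potential difference $V_{\omega^{\mathrm{per}}}-V_\omega$ (norm $\leq W$, supported on $\Lambda_{\cal N}^c$, all derivatives vanishing so the Leibniz rule leaves only the finite sum $\sum_{{\bm\beta}+{\bm\beta}'={\bm\alpha}_k}\partial_{\bm\beta}(\cdot)^{-1}(V_{\omega^{\mathrm{per}}}-V_\omega)\partial_{{\bm\beta}'}(\cdot)^{-1}$), and close with the Combes--Thomas kernel bounds and a geometric sum, uniformly over $z\in{\cal C}_\kappa$. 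Where you genuinely diverge is in how the two traces are compared. The paper does not evaluate at the origin: it compares the normalized operator traces over an inner box $\tilde\Lambda_{\cal N}$ of one third the linear size, and must then invoke a measure-preserving ``shuffling map'' $\mathfrak{s}$ (permuting the $3^d$ sub-boxes of each repeating cell) to convert the inner-box average into the full-box average that Proposition~\ref{QQ} identifies with ${\cal T}_{\mathrm{per}}$. Your route — using that ${\cal T}$ and ${\cal T}_{\mathrm{per}}$ are by definition the $dP$- and $dP_{\mathrm{per}}$-averages of the single diagonal kernel entry at ${\bm 0}$, plus the push-forward identity $(\omega\mapsto\omega^{\mathrm{per}})_*dP=dP_{\mathrm{per}}$ — makes the shuffling argument unnecessary, and it strengthens the result: with the observation point pinned at ${\bm 0}$ the defect sits at distance ${\cal N}+1$ rather than $\sim 2{\cal N}/3$, so the closed path $|{\bm p}|+|{\bm p}-{\bm n}|+|{\bm n}-{\bm q}|+|{\bm q}|\geq 2|{\bm n}|\geq 2({\cal N}+1)$ yields any rate below $2\xi$ after the standard splitting of the $\xi$-budget, and $\tfrac{2\sqrt{2}}{3}\xi<2\xi$ follows with room to spare even without the ${\cal N}^{-1}$ prefactor. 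One correction to your bookkeeping remarks: the ${\cal N}^{-1}$ in the paper's bound arises from the volume average over $\tilde\Lambda_{\cal N}$, not from a surface-area count over $\partial\Lambda_{\cal N}$ (which contributes a growing ${\cal N}^{d-1}$); your scheme will not reproduce that prefactor, but it does not need to, since its exponential is strictly smaller than the one claimed.
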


\begin{proof} Since all $\Phi_j(h)$'s can be generates as 
\begin{equation}
\Phi_j(h)=\frac{i}{2\pi }\int_{{\cal C}_\kappa}\Phi_j(z) (h-z{\bm 1})^{-1}dz,
\end{equation} 
where ${\cal C}_\kappa$ is defined by $\mathrm{dist}(z,\sigma(h))=\kappa$, we can focus first on $f$'s of the form:
\begin{equation}\label{Fs}
f_j=\partial_{{\bm \alpha}_j}(h-z_j{\bm 1})^{-1},
\end{equation}
with $z_i \in {\cal C}_\kappa$. In the following, we will use $h-z$ instead of $h-z{\bm 1}$ in order to simplify the notation.

We partition the box $\Lambda_{\cal N}$ in $3^d$ smaller boxes (assuming $(2{\cal N}+1)\mathrm{mod} \ 3=0$), in which case, at the center of $\Lambda_{\cal N}$, there is a smaller box $\tilde{\Lambda}_{\cal N}$ surrounded from all sides by the remaining $3^d-1$ boxes of the same size. For any $\omega \in \Omega$, let $\mathfrak{p}_{\Lambda_{\cal N}}$ be the projector whose action is to take the restriction of $\omega$ to $\Lambda_{\cal N}$ and to repeat it periodically until a periodic $\omega$ is defined over the entire ${\mathbb Z}^d$. Clearly $\mathfrak{p}_{\Lambda_{\cal N}} \circ \mathfrak{p}_{\Lambda_{\cal N}} =\mathfrak{p}_{\Lambda_{\cal N}}$, so $\mathfrak{p}_{\Lambda_{\cal N}}$ is indeed a projector over $\Omega$

Our starting point is the following difference between the expected values computed on the inner box $\tilde{\Lambda}_{\cal N}$ and for the operator representation:
\begin{equation}\label{Difference}
\frac{1}{|\tilde{\Lambda}_{\cal N}|}\mathrm{Tr}\left \{\pi_\omega \left (\prod_{j=1}^N f_j \right ) \chi_{\tilde{\Lambda}_{\cal N}} \right \}-\frac{1}{|\tilde{\Lambda}_{\cal N}|}\mathrm{Tr}\left \{\pi_{\mathfrak{p}_{\Lambda_{\cal N}}\omega} \left (\prod_{j=1}^N f_j \right ) \chi_{\tilde{\Lambda}_{\cal N}} \right \}. 
\end{equation}
The first task will be to derive a bound on this difference. After simple manipulations, this difference takes the following equivalent expression:
\begin{equation}
\frac{1}{|\tilde{\Lambda}_{\cal N}|}\sum_{k=1}^N \mathrm{Tr}\left \{\pi_\omega \left (\prod_{j < k}  f_j \right ) (\pi_\omega f_k -  \pi_{\mathfrak{p}_{\Lambda_{\cal N}}\omega} f_k)
\pi_{\mathfrak{p}_{\Lambda_{\cal N}}\omega} \left (\prod_{j > k}  f_j \right )\chi_{\tilde{\Lambda}_{\cal N}} \right  \}.
\end{equation}
A basic identity for the calculus with the resolvent gives:
\begin{equation}
\begin{split}
&\partial_{\bm \alpha}(\pi_\omega h-z)^{-1}-\partial_{\bm \alpha}(\pi_{\mathfrak{p}_{\Lambda_{\cal N}}\omega} h-z)^{-1} \medskip \\
& = \partial_{\bm \alpha}\left ((\pi_\omega h-z)^{-1}(\pi_{\mathfrak{p}_{\Lambda_{\cal N}} \omega} h-\pi_\omega h)(\pi_{\mathfrak{p}_{\Lambda_{\cal N}}\omega} h-z)^{-1} \right ).
\end{split}
\end{equation}
A key observation here is that the difference $\pi_{\mathfrak{p}_{\Lambda_{\cal N}} \omega} h-\pi_\omega h$ is equal to the difference between the onsite random potentials, so the previous equation reduces to:
\begin{equation}
\partial_{\bm \alpha}\left ((\pi_\omega h-z)^{-1}(V_{\mathfrak{p}_{\Lambda_{\cal N}}\omega}-V_\omega)(\pi_{\mathfrak{p}_{\Lambda_{\cal N}}\omega} h-z)^{-1} \right )
\end{equation}
Note that the support of $ V_{\mathfrak{p}_{\Lambda_{\cal N}} \omega}-V_\omega$ is contained in the complement $\Lambda_{\cal N}^c=\mathbb{Z}^d-\Lambda_{\cal N}$ of the box $\Lambda_{\cal N}$. Furthermore, since we are considering only onsite disorder potentials, all derivatives of $V_{\mathfrak{p}_{\Lambda_{\cal N}} \omega} - V_\omega$ are identically zero so, by applying the Leibniz rule, the previous equation reduces to:
\begin{equation}
\ldots =\sum_{\bm \beta} \partial_{\bm \beta}(\pi_\omega h-z)^{-1}(V_{\mathfrak{p}_{\Lambda_{\cal N}}\omega}-V_\omega)\partial_{{\bm \beta}'}(\pi_{\mathfrak{p}_{\Lambda_{\cal N}} \omega} h-z)^{-1}.
\end{equation}
Thus the difference written in Eq.~\ref{Difference} is a finite sum of terms of the following generic form:
\begin{equation}\label{Generic}
\frac{1}{|\tilde{\Lambda}_{\cal N}|} \mathrm{Tr}\left \{\pi_\omega\left (\left(\prod_{j < k} f_j \right) f^-_k\right ) (V_{\mathfrak{p}_{\Lambda_{\cal N}} \omega}-V_\omega) \pi_{\mathfrak{p}_\Lambda \omega} \left (f^+_k\prod_{j > k}  f_j \right )\chi_{\tilde{\Lambda}_{\cal N}} \right   \},
\end{equation}
where $k=1,\ldots,N,$ and: 
\begin{equation}
f^\pm_k=\partial_{{\bm \alpha}_k^\pm}(h-z)^{-1},
\end{equation}
with ${\bm \alpha}_k^- + {\bm \alpha}_k^+={\bm \alpha}_k$.

We exploit now the exponential localization of the analytic calculus. For some arbitrary product of $f$'s, we have:
\begin{equation}
\left \|e^{i{\bm \xi}{\bm x}}\prod_{j} f_j \right \|=\left \| \prod_{j} (e^{i{\bm \xi}{\bm x}} f_j) \right \|\leq \prod_{j} \|e^{i{\bm \xi}{\bm x}}f_j\|\leq  \prod_{j}\frac{ C_{h,{\bm \alpha}_j}(\xi)}{\left (\kappa-2d\sinh \xi \right)^{|{\bm \alpha}_j|+1}},
\end{equation}
with $C_{h,{\bm \alpha}_j}(\xi)$ defined in Corollary~\ref{EX1}. Hence we can automatically conclude that, for an arbitrary $\omega \in \Omega$:
\begin{equation}
\begin{split}
&\left [\left(\prod_{j < k} f_j \right )f^-_k \right ](\omega,{\bm n})\leq \frac{C_{h,{\bm \alpha}_k^-}(\xi)}{\left (\kappa-2d \sinh \xi \right)^{|{\bm \alpha}_k^{-}|+1}}\prod\limits_{j < k} \frac{ C_{h,{\bm \alpha}_j}(\xi)}{\left (\kappa-2d\sinh \xi \right)^{|{\bm \alpha}_j|+1}}e^{-\xi |{\bm n}|}, \medskip  \\ 
&\left [ f_k^+  \left ( \prod_{j > k} f_j  \right ) \right ]  (\omega,{\bm n})\leq \frac{C_{h,{\bm \alpha}_k^+}(\xi) }{\left (\kappa-2d\sinh \xi \right)^{|{\bm \alpha}_k^+|+1}} \prod\limits_{j > k}\frac{C_{h,{\bm \alpha}_j}(\xi)}{\left (\kappa-2d \sinh \xi \right)^{|{\bm \alpha}_j|+1}} e^{-\xi |{\bm n}|},
\end{split}
\end{equation}
whenever $\xi<\sinh^{-1}  (\kappa/2d)$. 
In the following, we will denote the product of the two coefficients appearing in front of $e^{-\xi |{\bm n}|}$ by $C(\xi)$:
\begin{equation}
C(\xi)=\frac{C_{h,{\bm \alpha}_k^-}(\xi)C_{h,{\bm \alpha}_k^+}(\xi)\prod_{j \neq k} C_{h,{\bm \alpha}_j}(\xi)}{\left (\kappa-2d\sinh \xi \right)^{N+1+\sum_{j = 1}^N|{\bm \alpha}_j|}}.
\end{equation}
Since $C_{h,{\bm \alpha}_k^-}(\xi)C_{h,{\bm \alpha}_k^+}(\xi) \leq C_{h,{\bm \alpha}_k}(\xi)$, we can actually simplify the expression of $C(\xi)$ to:
\begin{equation}
C(\xi)=\frac{\prod_{j=1}^N C_{h,{\bm \alpha}_j}(\xi)}{\left (\kappa-2d\sinh \xi \right)^{N+1+\sum_{j = 1}^N|{\bm \alpha}_j|}},
\end{equation}
which is important because now $C(\xi)$ is independent of $k$ and is determined only by the initial data.

Now, as we have already seen in Corollary~\ref{EX3}, the exponential localization of the elements implies the exponential localization of the matrix elements of their operator representation. Since $\langle {\bm m}|V_\omega -V_{\mathfrak{p}_{\Lambda_{\cal N}}\omega}|{\bm m}\rangle$ is non-zero only if ${\bm m}$ is located in the complement $\Lambda_{\cal N}^c$ of the large box $\Lambda_{\cal N}$ and $|\langle {\bm m}|V_\omega -V_{\mathfrak{p}_{\Lambda_{\cal N}}\omega}|{\bm m}\rangle|\leq W$, we can conclude:
\begin{equation}
\begin{split}
&\left | \frac{1}{|\tilde{\Lambda}_{\cal N}|}\mathrm{Tr}\left \{\pi_\omega \left (\prod_{j=1}^N f_j \right ) \chi_{\tilde{\Lambda}_{\cal N}} \right \}-\frac{1}{|\tilde{\Lambda}_{\cal N}|}\mathrm{Tr}\left \{\pi_{\mathfrak{p}_{\Lambda_{\cal N}}\omega} \left (\prod_{j=1}^N f_j \right ) \chi_{\tilde{\Lambda}_{\cal N}} \right \} \right | \medskip \\
& \ \ \ \leq \left (\sum_{j=1}^N  2^{|{\bm \alpha}_j|} \right ) C(\xi) \frac{1}{|\tilde{\Lambda}_{\cal N}|}\sum_{{\bm n}\in \tilde{\Lambda}_{\cal N}}\sum_{{\bm  m}\in \Lambda^c} e^{-2\xi|{\bm n}-{\bm m}|}|\langle {\bm m}|V_\omega -V_{\mathfrak{p}_{\Lambda_{\cal N}}\omega}|{\bm m}\rangle| \medskip \\
& \ \ \ \leq  W \left (\sum_{j=1}^N  2^{|{\bm \alpha}_j|} \right ) C(\xi) \frac{1}{|\tilde{\Lambda}_{\cal N}|} \sum_{{\bm n}\in \tilde{\Lambda}_{\cal N}}\sum_{{\bm  m}\in \Lambda^c_{\cal N}} e^{-2\xi|{\bm n}-{\bm m}|},
\end{split}
\end{equation}
where $\sum_{j=1}^N  2^{|{\bm \alpha}_j|}$ counts the number of generic terms, like those in Eq.~\ref{Generic}, generated by the expansion of the original difference.
Following the notation from Fig.~\ref{SumDiagram}:
\begin{equation}\label{Suma}
\frac{1}{|\tilde{\Lambda}_{\cal N}|}\sum_{{\bm  n}\in \tilde{\Lambda}_{\cal N}}\sum_{{\bm  m}\in \Lambda_{\cal N}^c} e^{-2\xi|{\bm n}-{\bm m}|} = \frac{1}{|\tilde{\Lambda}_{\cal N}|}\sum_{j=1}^{2d} \sum_{{\bm  n}\in \tilde{\Lambda}_{\cal N}} \sum_{k=1}^\infty \sum_{{\bm \gamma}} e^{-2\xi\sqrt{({\cal N}-n_j+k)^2+|{\bm \gamma}|^2 } }.
 \end{equation}
 Due to the symmetry, each term in the sum over $j$ will contribute with the same amount. Also, we will let ${\bm \gamma}$ take values in the whole $\mathbb{Z}^{d-1}$ and use the inequality $\sqrt{2(a^2+b^2)}\geq a +b$, to continue:
\begin{equation}
\ldots \leq \frac{2d}{|\tilde{\Lambda}_{\cal N}|} \sum_{{\bm  n}\in \tilde{\Lambda}_{\cal N}} \sum_{k=1}^\infty \sum_{{\bm \gamma} \in \mathbb{Z}^{d-1}} e^{-\sqrt{2}\xi ({\cal N}-n_1+k+|{\bm \gamma}|) }.
\end{equation} 
At this point we combine the sums over $k$ and over ${\bm \gamma}$ into one sum over a ${\bm \gamma}$ in $\mathbb{Z}^d$, to continue as:
 \begin{equation}
 \begin{split}
\ldots  & \leq \frac{d}{|\tilde{\Lambda}_{\cal N}|} \sum_{{\bm  n}\in \tilde{\Lambda}_{\cal N}}  \sum_{{\bm \gamma} \in \mathbb{Z}^d} e^{-\sqrt{2}\xi ({\cal N}-n_1+|{\bm \gamma}|) } \medskip \\
 & = \frac{6d}{2{\cal N}+1} \sum_{n_1 =-{\cal N}/3}^{{\cal N}/3}e^{-\sqrt{2}\xi ({\cal N}-n_1)} \sum_{{\bm \gamma} \in \mathbb{Z}^{d}} e^{-\sqrt{2}\xi |{\bm \gamma}| } \medskip \\
 & \sim \frac{3d}{2} A_d(\sqrt{2}\xi) {\cal N}^{-1} e^{-\frac{2\sqrt{2}}{3}\xi {\cal N}}/(1-e^{-\sqrt{2}\xi}),
\end{split}
\end{equation}
where the last estimate holds in the asymptotic limit of large ${\cal N}$'s. Here $A_d(\xi)$ is defined by:
\begin{equation}\label{TheA}
A_d(\xi)=\sum_{{\bm \gamma} \in \mathbb{Z}^d} e^{-\xi |{\bm \gamma}| }.
\end{equation}  
Using a basic property of $A_d(\xi)$, we can actually simplify the above upper bound, to conclude:
\begin{equation}\label{TT1}
\frac{1}{|\tilde{\Lambda}_{\cal N}|}\sum_{{\bm  n}\in \tilde{\Lambda}_{\cal N}}\sum_{{\bm  m}\in \Lambda_{\cal N}^c} e^{-2\xi|{\bm n}-{\bm m}|} \leq  \frac{3d}{4} A_{d+1}(\sqrt{2}\xi) {\cal N}^{-1} e^{-\frac{2\sqrt{2}}{3}\xi {\cal N}}.
\end{equation} 
Collecting everything together:
\begin{equation}
\begin{split}
&\left | \frac{1}{|\tilde{\Lambda}_{\cal N}|}\mathrm{Tr}\left \{\pi_\omega \left (\prod_{j=1}^N f_j \right ) \chi_{\tilde{\Lambda}_{\cal N}} \right \}-\frac{1}{|\tilde{\Lambda}_{\cal N}|}\mathrm{Tr}\left \{\pi_{\mathfrak{p}_{\Lambda_{\cal N}}\omega} \left (\prod_{j=1}^N f_j \right ) \chi_{\tilde{\Lambda}_{\cal N}} \right \} \right | \medskip \\
& \leq  \frac{3d}{4}W \left ( \sum_{j=1}^N  2^{|{\bm \alpha}_j|} \right ) A_{d+1}(\xi)C(\xi) {\cal N}^{-1} e^{-\frac{2\sqrt{2}}{3}{\cal N}}.
\end{split}
\end{equation}

\begin{figure}
\center
  \includegraphics[width=5cm]{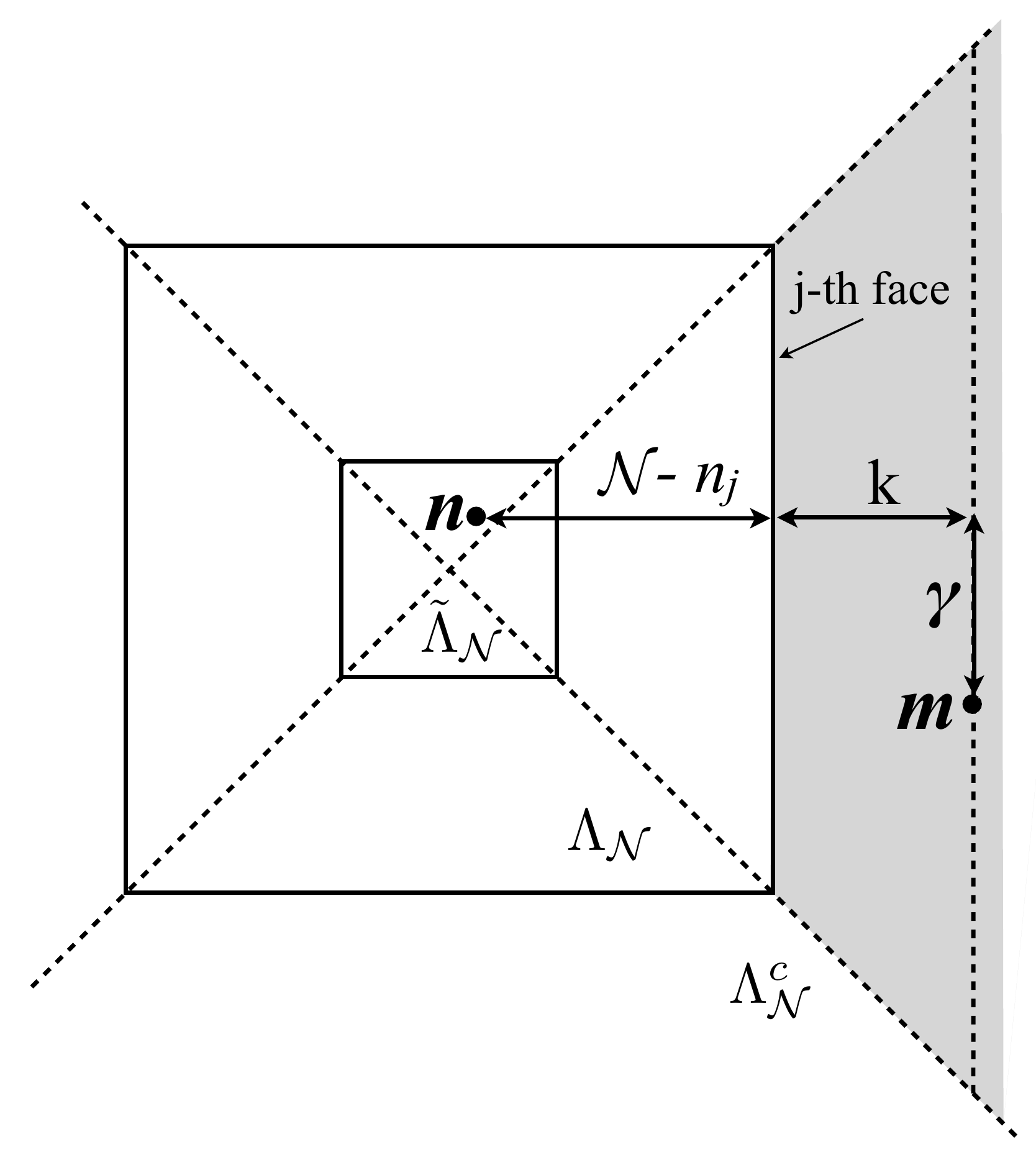}\\
  \caption{The summation over ${\bm m}$ in Eq.~\ref{Suma} is done by partitioning $\Lambda_{\cal N}^c$ is domains. There is one domain for each face of $\Lambda_{\cal N}$. The domain for the $j$-th face is shown as a shaded region. Then one fixes a distance $k$ from ${\bm m}$ to $\Lambda_{\cal N}$ and let ${\bm m}$ walk parallel to the $j$-th face, by varying the coordinate $\gamma$. The step is then repeated for all allowed $k$'s.}
 \label{SumDiagram}
\end{figure}

At this point we have an exponentially small upper bound on the difference written in Eq.~\ref{Difference}, but there is one unsatisfying aspect, related to the fact that the periodically repeating cell is $\Lambda_{\cal N}$ while the trace in Eq.~\ref{Difference} is taken only over the smaller box $\tilde{\Lambda}_{\cal N}$. We will correct this by using a special shuffling map $\mathfrak{s}$ which, in each repeating cell, switches the inner box with one of the remaining $3^d-1$ boxes inside that cell. Such shuffling map induces a natural map on $\Omega$ that leaves invariant the probability measure $dP(\omega)$. We will use the same symbol $\mathfrak{s}$ for this map. It is important to note that $\mathfrak{s}$ commutes with the projector $\mathfrak{p}_{\Lambda_{\cal N}}$, which is true because the shuffling occurs only between the small boxes from the same repeating cell. The shuffling map also induces a canonical map on $\ell^2(\mathbb{Z}^d)$, to be denoted by $\hat{\mathfrak{s}}$. Then we have successively:
\begin{equation}
\begin{split}
&\int dP(\omega) \ \mathrm{Tr}\left \{(\pi_{\mathfrak{p}_{\Lambda_{\cal N}}\omega} f)\chi_{\tilde {\Lambda}_{\cal N}} \right \}= \int dP(\omega) \ \mathrm{Tr}\left \{(\pi_{(\mathfrak{p}_{\Lambda_{\cal N}} \circ \mathfrak{s})\omega} f)\chi_{{\tilde {\Lambda}_{\cal N}}}\right \} \medskip \\
&=\int dP(\omega) \ \mathrm{Tr}\left \{(\pi_{(\mathfrak{s}\circ \mathfrak{p}_{\Lambda_{\cal N}})\omega} f)\chi_{{\tilde {\Lambda}_{\cal N}}}\right \}=\int dP(\omega) \ \mathrm{Tr}\left \{\hat{\mathfrak{s}}(\pi_{\mathfrak{p}_{\Lambda_{\cal N}} \omega} f)\hat{\mathfrak{s}}^{-1}\chi_{\tilde {\Lambda}_{\cal N}}\right \} \medskip \\
&=\int dP(\omega) \ \mathrm{Tr}\left \{(\pi_{\mathfrak{p}_{\Lambda_{\cal N}} \omega} f)\hat{\mathfrak{s}}^{-1}\chi_{\tilde {\Lambda}_{\cal N}}\hat{\mathfrak{s}}\right \}=\int dP(\omega) \ \mathrm{Tr}\left \{(\pi_{\mathfrak{p}_{\Lambda_{\cal N}} \omega} f)\chi_{\mathfrak{s}{\tilde {\Lambda}_{\cal N}}} \right \}.
\end{split}
 \end{equation}
 The conclusion is that, when averaging over $\omega$, the above trace does not change its value when $\tilde {\Lambda}_{\cal N}$ is shuffled for another small box, so we can use $3^d-1$ shuffling operation to patch the entire $\Lambda_{\cal N}$ and to conclude:
\begin{equation}
\frac{1}{|{\tilde {\Lambda}_{\cal N}}|}\int dP(\omega) \ \mathrm{Tr}\left \{(\pi_{\mathfrak{p}_{\Lambda_{\cal N}} \omega} f)\chi_{\tilde {\Lambda}} \right \} = \frac{1}{|\Lambda_{\cal N}|}\int dP(\omega) \ \mathrm{Tr}\left \{(\pi_{\mathfrak{p}_{\Lambda_{\cal N}} \omega} f ) \chi_{\Lambda_{\cal N}} \right \}.
\end{equation} 
Therefore:
\begin{equation}\label{Rez1}
\begin{split}
&\frac{3d}{4} W \left ( \sum_{j=1}^N  2^{|{\bm \alpha}_j|} \right ) A_{d+1}(\xi)C(\xi) {\cal N}^{-1} e^{-\frac{2\sqrt{2}}{3}{\cal N}} \geq \medskip \\
&\left | \int dP(\omega) \left [ \frac{1}{|{\tilde {\Lambda}_{\cal N}}|} \mathrm{Tr}\left \{\pi_\omega \left (\prod_{j=1}^N f_j \right ) \chi_{\tilde {\Lambda}_{\cal N}} \right \}-\frac{1}{|\Lambda_{\cal N}|} \mathrm{Tr}\left \{\pi_{\mathfrak{p}_{\Lambda_{\cal N}} \omega} \left (\prod_{j=1}^N f_j \right ) \chi_{\Lambda_{\cal N}} \right \} \right ] \right | \medskip \\
&=\left |{\cal T}\left(\prod_{j=1}^N f_j\right)-\frac{1}{|\Lambda_{\cal N}|}\int dP(\omega) \mathrm{Tr}\left \{\pi_{\mathfrak{p}_{\Lambda_{\cal N}}\omega} \left (\prod_{j=1}^N f_j \right ) \chi_{\Lambda_{\cal N}} \right \} \right |.
\end{split}
\end{equation}
Finally, the integrand of the remaining integral over $\omega$ is independent of the values of $\omega$ outside the box $\Lambda_{\cal N}$, so those components of $\omega$ can be integrated out and what is left is just an integral over $[-\frac{1}{2},\frac{1}{2}]^{|\Lambda_{\cal N}|}$ with the measure written in Eq.~\ref{ML}. Then, from Proposition~\ref{QQ}, it follows that the remaining integral is precisely ${\cal T}_{\mathrm{per}}(\prod_{j=1}^N \mathfrak{P}_{\Lambda_{\cal N}} f_j)$.

If $f_j$'s are of the form given in Eq.~\ref{fis}, we have to perform one additional operation, namely, we have to apply the following operation:
\begin{equation}
\left(\frac{i}{2\pi}\right)^N \int_{{\cal C}_\kappa} dz_1 \ \Phi_1(z_1)  \ldots \int_{{\cal C}_\kappa} dz_N \  \Phi_N(z_N) \ \big \{\ldots  \big \},
\end{equation}
on the difference we just estimated in Eq.~\ref{Rez1}. As before, ${\cal C}_\kappa$ is the contour defined by $\mathrm{dist}(z,\sigma(h))=\kappa$. The upper bound derived in Eq.~\ref{Rez1} is independent of the actual position of the $z_j$'s on the contour ${\cal C}_\kappa$, so we can automatically conclude:
\begin{equation}\label{Final23}
\begin{split}
&\left |{\cal T}\left (\prod_{j=1}^N \partial_{{\bm \alpha}_j}\Phi_j(h)\right )-{\cal T}_{\mathrm{per}} \left (\prod_{j=1}^N \partial_{{\bm \alpha}_j}\Phi_j(\mathfrak{P}_\Lambda h)\right) \right | \medskip \\ 
&\leq   \frac{3d}{4} W\left ( \sum_{j=1}^N  2^{|{\bm \alpha}_j|} \right ) A_{d+1}(\xi) C(\xi) \left (\prod\limits_{j=1}^N \bar{\Phi}_j \right ) {\cal N}^{-1} e^{-\frac{2\sqrt{2}}{3}{\cal N}},
\end{split}
\end{equation}
and this concludes the proof. The expression of the factor $\mathfrak{B}_1(\xi,\{{\bm \alpha}\})$, in the asymptotic limit of large ${\cal N}$, can be easily read from Eq.~\ref{Final23}.\qed
\end{proof}

\section{Approximating algebras: Second Round}

Here we go one step further and wrap the repeating cell $\Lambda_{\cal N}$ of the periodic algebra into a torus. We define a new $C^*$-algebra over a $d$-dimensional discrete torus and establish approximate morphisms between the periodic algebra and the torus algebra. The main result of the Section is that the analytic functional calculus with short-range elements from the periodic algebra is well approximated by the analytic functional calculus with elements from the algebra over the discrete torus. 

\subsection{\underline{A $C^*$-algebra over the discrete torus}}

We call a discrete circle ${\cal S}^1_D$ a discrete sequence of equally spaced points on a circle in a 2-dimensional plane. The $d$-dimensional discrete torus ${\mathbb T}_d$ will be $({\cal S}^1_D)^{\times d}$. We will use letters like ${\bm p}$, ${\bm q}$, etc., to specify points on the discrete torus, while reserving ${\bm n}$ and ${\bm m}$ exclusively for points of $\mathbb{Z}^d$. On this torus we pick an arbitrary point and call it the origin ${\bm o}$. Given a point ${\bm p}$ of the torus, we can use a succession of rigid rotations along the ${\cal S}^1_D$ circles, to rotate the torus  until the origin reaches the position of the point ${\bm p}$ before the rotations. After such action, each original point of the torus has been replaced by a different point, and this substitution defines a map $\mathfrak{r}_{\bm p}$ on the torus. It is clear that $\mathfrak{r}_{\bm p}({\bm o})={\bm p}$, and that $\mathfrak{r}_{\bm p}$ is independent of the specific sequence (which is not unique) of ${\cal S}^1_D$ rotations used to bring the origin at position ${\bm p}$. Furthermore, the rotations satisfy the following commutative group relation:
\begin{equation}
\mathfrak{r}_{\bm q}\circ \mathfrak{r}_{\bm p}=\mathfrak{r}_{\mathfrak{r}_{\bm q}{\bm p}}=\mathfrak{r}_{\bm p}\circ \mathfrak{r}_{\bm q}=\mathfrak{r}_{\mathfrak{r}_{\bm p}{\bm q}}.
\end{equation}

It is convenient to require that each ${\cal S}^1_d$ circles contain $2{\cal N}+1$ points, though this requirement is not essential. We introduce a coordinate system on ${\mathbb T}_d$ by naturally un-rolling the torus onto the points $\{-{\cal N},\ldots,0,\ldots,{\cal N}\}^d$ of $\mathbb{Z}^d$, with the origin pinned at ${\bm 0}$. The coordinates of a point ${\bm p} \in \mathbb{T}_d$ will be denoted by ${\bm n}_{\bm p}$. The set $\{-{\cal N},\ldots,0,\ldots,{\cal N}\}^d \subset \mathbb{Z}^d$ will be denoted by $\Lambda_{\cal N}$. This $\Lambda_{\cal N}$ will later be connected with the repeating cell of the periodic algebra, hence the notation.

We let $\omega=\{\omega_{\bm p}\}_{{\bm p}\in \mathbb{T}_d}$ be a sequence of identical independent random variables with a uniform distribution in $[-1/2,1/2]$. Then $\omega$ can be viewed as a point of the probability space $\Omega_{\cal N}=[-1/2,1/2]^{ |\mathbb{T}_d|}$ endowed with the probability measure: 
\begin{equation}
dP_{\mathbb{T}}(\omega)=\prod_{{\bm p}\in \mathbb{T}_d} d\omega_{\bm p}.
\end{equation}
The rotations $\mathfrak{r}$ introduced in the previous paragraph induce a group of automorphisms on $\Omega_{\cal N}$, which leave the measure $dP_{\mathbb{T}}(\omega)$ invariant. We will use the same notation $\mathfrak{r}$ for the elements of this group.

The $C^*$-algebra $C^*(\Omega_{\cal N} \times \mathbb{T}_d,{\bm F})$ over the torus is defined as follows. The elements are continous functions $\tilde{f}: \Omega_{\cal N}\times \mathbb{T}_d \rightarrow \mathbb{C}$ and the law of composition is:
\begin{equation}
(\tilde{f}*\tilde{g})(\omega,{\bm p}) = \sum_{{\bm q}\in \mathbb{T}_d} \tilde{f}(\omega,{\bm q})\tilde{g}(\mathfrak{r}_{\bm q}^{-1}\omega,\mathfrak{r}_{\bm q}^{-1} {\bm p}) e^{i \pi ({\bm n}_{\bm p}\cdot {\bm F} \cdot {\bm n}_{\bm q})}.
\end{equation}
The operator representations on $\ell^2(\mathbb{T}_d) $ can be defined as: 
\begin{equation}\label{TRep}
\left((\tilde{\pi}_\omega \tilde{f})\phi\right)({\bm p})=\sum_{{\bm q}\in \mathbb{T}_d} \tilde{f} (\mathfrak{r}_{\bm p}^{-1}\omega,\mathfrak{r}_{\bm p}^{-1}{\bm q}) e^{i \pi ({\bm n}_{\bm q}\cdot {\bm F} \cdot {\bm n}_{\bm p})}\phi({\bm q}).
\end{equation}

\begin{proposition}\label{Mo} The operator representation $\tilde{\pi}$ defines an injective algebra morphism:
\begin{equation}
\tilde{\pi}_\omega (\tilde{f} * \tilde{g}) = (\tilde{\pi}\tilde{f}) (\tilde{\pi}\tilde{g})
\end{equation}
if and only if all the entries of ${\bm F}$ are quantized as:
\begin{equation}\label{Q}
F_{ij} = \frac{2}{2{\cal N}+1} \times \mathrm{integer}.
\end{equation}
\end{proposition}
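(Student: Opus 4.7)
The strategy is a direct, term-by-term comparison of $\tilde{\pi}_\omega(\tilde{f}*\tilde{g})\phi({\bm p})$ and $((\tilde{\pi}_\omega\tilde{f})(\tilde{\pi}_\omega\tilde{g})\phi)({\bm p})$. After expanding both sides via the definitions of the product and the representation, and performing the change of variables ${\bm q}=\mathfrak{r}_{\bm p}{\bm s}$ in one of them to leverage the group law $\mathfrak{r}_{\bm s}^{-1}\circ \mathfrak{r}_{\bm p}^{-1}=\mathfrak{r}_{\bm q}^{-1}$, the kernel values of $\tilde{f}$ and $\tilde{g}$ appearing in the two triple sums will agree term by term, and the proposition will reduce to a pure phase comparison: for every ${\bm p},{\bm q},{\bm r}\in\mathbb{T}_d$, one asks when
\begin{equation*}
e^{i\pi({\bm n}_{\mathfrak{r}_{\bm p}^{-1}{\bm r}}\cdot {\bm F}\cdot {\bm n}_{\mathfrak{r}_{\bm p}^{-1}{\bm q}}+{\bm n}_{\bm r}\cdot {\bm F}\cdot {\bm n}_{\bm p})} = e^{i\pi({\bm n}_{\bm q}\cdot {\bm F}\cdot {\bm n}_{\bm p}+{\bm n}_{\bm r}\cdot {\bm F}\cdot {\bm n}_{\bm q})}.
\end{equation*}

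The critical computation is to track the wrap-around to the canonical fundamental domain: I write ${\bm n}_{\mathfrak{r}_{\bm p}^{-1}{\bm q}}={\bm n}_{\bm q}-{\bm n}_{\bm p}+(2{\cal N}+1){\bm k}_q$ and ${\bm n}_{\mathfrak{r}_{\bm p}^{-1}{\bm r}}={\bm n}_{\bm r}-{\bm n}_{\bm p}+(2{\cal N}+1){\bm k}_r$ with integer vectors ${\bm k}_q,{\bm k}_r\in\{-1,0,1\}^d$. The antisymmetry of ${\bm F}$ makes ${\bm n}_{\bm p}\cdot {\bm F}\cdot {\bm n}_{\bm p}=0$ and lets the term ${\bm n}_{\bm r}\cdot {\bm F}\cdot {\bm n}_{\bm p}$ cancel the unwanted cross-terms picked up from expanding $({\bm n}_{\bm r}-{\bm n}_{\bm p})\cdot {\bm F}\cdot({\bm n}_{\bm q}-{\bm n}_{\bm p})$, so the ``formal'' parts of the two exponents coincide and the entire discrepancy collapses to
\begin{equation*}
\pi(2{\cal N}+1)\left({\bm k}_r\cdot{\bm F}\cdot({\bm n}_{\bm q}-{\bm n}_{\bm p})+({\bm n}_{\bm r}-{\bm n}_{\bm p})\cdot{\bm F}\cdot{\bm k}_q+(2{\cal N}+1){\bm k}_r\cdot{\bm F}\cdot{\bm k}_q\right).
\end{equation*}

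Sufficiency is then immediate: under $F_{ij}=2m_{ij}/(2{\cal N}+1)$ with $m_{ij}\in\mathbb{Z}$, the prefactor $(2{\cal N}+1)F_{ij}$ is an even integer, so every monomial above lies in $2\mathbb{Z}$ and the discrepancy phase is trivial. For necessity I isolate one $F_{ij}$ at a time by an explicit configuration: taking ${\bm n}_{\bm p}={\cal N}{\bm e}_i$, ${\bm n}_{\bm q}={\cal N}{\bm e}_i+{\bm e}_j$ with $j\neq i$, and ${\bm n}_{\bm r}=-{\cal N}{\bm e}_i$, one checks directly that ${\bm k}_q={\bm 0}$ and ${\bm k}_r={\bm e}_i$, so two of the three terms in the bracket vanish and the condition collapses to $(2{\cal N}+1)F_{ij}\in 2\mathbb{Z}$; the diagonal entries $F_{ii}$ vanish automatically by antisymmetry, so the quantization is established for every pair.

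Injectivity of $\tilde{\pi}$ is independent of the quantization question and much easier: from the matrix-element formula $\langle{\bm p}|\tilde{\pi}_\omega\tilde{f}|{\bm q}\rangle=\tilde{f}(\mathfrak{r}_{\bm p}^{-1}\omega,\mathfrak{r}_{\bm p}^{-1}{\bm q})e^{i\pi({\bm n}_{\bm q}\cdot{\bm F}\cdot{\bm n}_{\bm p})}$, the vanishing of $\tilde{\pi}_\omega\tilde{f}$ for every $\omega$ forces $\tilde{f}(\mathfrak{r}_{\bm p}^{-1}\omega,\mathfrak{r}_{\bm p}^{-1}{\bm q})=0$ identically, and since $(\omega,{\bm q})\mapsto(\mathfrak{r}_{\bm p}^{-1}\omega,\mathfrak{r}_{\bm p}^{-1}{\bm q})$ is surjective on $\Omega_{\cal N}\times\mathbb{T}_d$, we conclude $\tilde{f}\equiv 0$. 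The main obstacle I anticipate is the bookkeeping in the necessity step: the ${\bm k}_r$, ${\bm k}_q$, and ${\bm n}_{\bm q}-{\bm n}_{\bm p}$ variables are not independent (they are all constrained by ${\bm n}_{\bm p},{\bm n}_{\bm q},{\bm n}_{\bm r}$ lying in the fundamental domain), so one has to carefully place ${\bm n}_{\bm p}$ at a corner to simultaneously activate a single wrap-around component of ${\bm k}_r$ and reduce the residual bracket to a single $F_{ij}$ term.
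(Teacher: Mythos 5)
Your proof follows the same route as the paper: expand both sides, reduce to a phase identity, decompose the wrap-around shifts as ${\bm n}_{\mathfrak{r}_{\bm p}^{-1}{\bm q}}={\bm n}_{\bm q}-{\bm n}_{\bm p}+(2{\cal N}+1){\bm k}$, and read off the quantization of ${\bm F}$; your residual bracket matches the paper's after an application of antisymmetry. The one place you go beyond the paper is the necessity step, where the paper merely asserts the quantization is ``clearly'' forced while you supply the explicit corner configuration ${\bm n}_{\bm p}={\cal N}{\bm e}_i$, ${\bm n}_{\bm q}={\cal N}{\bm e}_i+{\bm e}_j$, ${\bm n}_{\bm r}=-{\cal N}{\bm e}_i$ that isolates a single $F_{ij}$ — a welcome, correct elaboration.
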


\noindent {\bf Remark.} This is in line with Zak's finding that the magnetic translations accept finite representations only if the above quantization is satisfied \cite{ZakPR1964vy}.

\begin{proof} A direct calculation will show that:
\begin{equation}
\left((\tilde{\pi}_\omega (\tilde{f} * \tilde{g}) )\phi\right)({\bm p})= 
\sum\limits_{{\bm q},{\bm q}'\in \mathbb{T}_d}\tilde{f}(\mathfrak{r}_{\bm p}^{-1}\omega,\mathfrak{r}_{\bm p}^{-1}{\bm q})\tilde{g}(\mathfrak{r}_{\bm q}^{-1}\omega,\mathfrak{r}_{\bm q}^{-1}{\bm q}') e^{i \pi ({\bm n}_{\mathfrak{r}_{\bm p}^{-1}{\bm q}'} \cdot {\bm F} \cdot  {\bm n}_{\mathfrak{r}_{\bm p}^{-1}{\bm q}}    +{\bm n}_{{\bm q}'}\cdot {\bm F} \cdot {\bm n}_{\bm p})}\phi({\bm q}')
\end{equation}
and
\begin{equation}
\left((\tilde{\pi}_\omega \tilde{f}) (\tilde{\pi}_\omega \tilde{g}) )\phi\right)({\bm p}) = 
\sum\limits_{{\bm q},{\bm q}'\in \mathbb{T}_d}\tilde{f}(\mathfrak{r}_{\bm p}^{-1}\omega,\mathfrak{r}_{\bm p}^{-1}{\bm q})\tilde{g}(\mathfrak{r}_{\bm q}^{-1}\omega,\mathfrak{r}_{\bm q}^{-1}{\bm q}') e^{i \pi ({\bm n}_{\bm q} \cdot {\bm F} \cdot {\bm n}_{\bm p}+{\bm n}_{{\bm q}'}\cdot {\bm F} \cdot {\bm n}_{\bm q})}\phi({\bm q}').
\end{equation}
The equality between the two expressions will hold if and only if:
\begin{equation}\label{C}
{\bm n}_{\mathfrak{r}_{\bm p}^{-1}{\bm q}'} \cdot {\bm F} \cdot {\bm n}_{\mathfrak{r}_{\bm p}^{-1}{\bm q}}   +{\bm n}_{{\bm q}'}\cdot {\bm F} \cdot {\bm n}_{\bm p}={\bm n}_{\bm q} \cdot {\bm F} \cdot {\bm n}_{\bm p}+{\bm n}_{{\bm q}'}\cdot {\bm F} \cdot {\bm n}_{\bm q} + 2 \times \mathrm{integer}.
\end{equation}
We have:
\begin{equation}
{\bm n}_{\mathfrak{r}_{\bm p}^{-1}{\bm q}}={\bm n}_{\bm q}-{\bm n}_{\bm p}+{\bm \sigma}, \ \ {\bm n}_{\mathfrak{r}_{\bm p}^{-1}{\bm q}'}={\bm n}_{{\bm q}'}-{\bm n}_{\bm p}+{\bm \sigma}',
\end{equation}
where ${\bm \sigma}$ and ${\bm \sigma}'$ are vectors with entries equal to $0$ or $\pm (2{\cal N}+1)$. Eq.~\ref{C} then becomes:
\begin{equation}
{\bm \sigma}\cdot {\bm F}\cdot ({\bm n}_{\bm p}-{\bm n}_{{\bm q}'})+{\bm \sigma}' \cdot {\bm F} \cdot ({\bm n}_{\bm q} - {\bm n}_{\bm p})-{\bm \sigma}\cdot {\bm F} \cdot {\bm \sigma}' = 2 \times \mathrm{integer},
\end{equation}
which is clearly satisfied, for all allowed ${\bm n}_{\bm p}$, ${\bm n}_{\bm q}$ and ${\bm n}_{{\bm q}'}$, only if $F_{ij}$'s are quantized as in Eq.~\ref{Q}. The injective property follows directly from the definition of $\tilde \pi$. \qed
\end{proof}

From now on, we will assume the quantization of ${\bm F}$ stated in Eq.~\ref{Q}. We define the norm on $C^*(\Omega_{\cal N} \times \mathbb{T}_d,{\bm F})$ as:
\begin{equation}\label{TorusNorm}
\|\tilde{f}\|=\sup_{\omega \in \Omega_{\cal N}} \|\tilde{\pi}_\omega \tilde{f}\|,
\end{equation}
and, given the result proven in Proposition~\ref{Mo}, this norm has the fundamental property:
\begin{equation}
\|\tilde{f}*\tilde{g}\| \leq \|\tilde{f}\| \ \|\tilde{g}\|,
\end{equation}
transforming $C^*(\Omega_{\cal N} \times \mathbb{T}_d,{\bm F})$ into a Banach algebra. Furthermore, a $*$-operation can be defined as:
\begin{equation}\label{cs}
\tilde{f}^*(\omega,{\bm p})=\overline{\tilde{f}(\mathfrak{r}_{\bm p}^{-1}\omega,\mathfrak{r}_{\bm p}^{-1}{\bm o})},
\end{equation}
and the $\tilde{\pi}$ representation and the $*$-operation defined above satisfy the essential relation:
\begin{equation}
\tilde{\pi}_\omega \tilde{f}^* = (\tilde{\pi}_\omega \tilde{f})^*.
\end{equation}
As such, the norm defined in Eq.~\ref{TorusNorm} has the fundamental property:
\begin{equation}
\|\tilde{f}*\tilde{f}^*\|=\| \tilde{f}\|^2,
\end{equation}
which makes $C^*(\Omega_{\cal N} \times \mathbb{T}_d,{\bm F})$ into a $C^*$-algebra.

The magnetic rotations on $\ell^2({\mathbb{T}_d})$ can be defined as:
\begin{equation}
(U_{\bm q}\phi)({\bm p})=e^{i \pi ({\bm n}_{\bm q}\cdot F \cdot {\bm n}_{\bm p})}\phi(\mathfrak{r}_{\bm q}^{-1}{\bm p}).
\end{equation}
With the quantization condition of Eq.~\ref{Q}, the magnetic rotations give a projective representation of the rotation group $\mathfrak{r}$. Furthermore, the following covariance property holds:
\begin{equation}
U_{\bm q} (\tilde{\pi}_\omega \tilde{f})U_{\bm q}^{-1} = \tilde{\pi}_{\mathfrak{r}_{\bm q}\omega} \tilde{f}.
\end{equation}

Lastly, we introduce a trace on $C^*(\Omega_{\cal N} \times \mathbb{T}_d,{\bm F})$ by:
\begin{equation}\label{TorusTrace}
{\cal T}_{\mathbb{T}}(\tilde{f})= \int_{\Omega_{\cal N}}  dP_{\mathbb{T}}(\omega) \ \tilde{f}(\omega,{\bm o}).
\end{equation}

\begin{proposition} The linear functional defined in Eq.~\ref{TorusTrace} satisfies all the required properties of a trace.
\end{proposition}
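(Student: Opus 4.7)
The plan is to reduce all three required properties to standard facts about the ordinary operator trace on the finite-dimensional Hilbert space $\ell^2(\mathbb{T}_d)$. The key preliminary step is to establish an alternative expression for ${\cal T}_{\mathbb T}$ in the spirit of Proposition~\ref{QQ}, namely
\[
{\cal T}_{\mathbb T}(\tilde{f}) \;=\; \frac{1}{|\mathbb{T}_d|}\int_{\Omega_{\cal N}} dP_{\mathbb T}(\omega)\, \mathrm{Tr}\{\tilde{\pi}_\omega \tilde{f}\}.
\]
To verify this, I would compute a diagonal matrix element directly from Eq.~\ref{TRep}: setting $\phi=\delta_{\bm p}$, the Kronecker selector collapses the ${\bm q}$-sum and the antisymmetry of ${\bm F}$ kills the remaining phase ${\bm n}_{\bm p}\cdot {\bm F}\cdot {\bm n}_{\bm p}$, yielding $\langle {\bm p}|\tilde{\pi}_\omega \tilde{f}|{\bm p}\rangle = \tilde{f}(\mathfrak{r}_{\bm p}^{-1}\omega,{\bm o})$. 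Summing over ${\bm p}$, integrating, and applying the change of variable $\omega \to \mathfrak{r}_{\bm p}\omega$ term-by-term — legal because $dP_{\mathbb T}$ is $\mathfrak{r}$-invariant — makes every summand equal to $\int dP_{\mathbb T}(\omega)\tilde f(\omega,{\bm o}) = {\cal T}_{\mathbb T}(\tilde{f})$, and the $|\mathbb{T}_d|$-fold multiplicity cancels the prefactor.

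Given this alternative formula, cyclicity is immediate: by Proposition~\ref{Mo}, $\mathrm{Tr}\{\tilde{\pi}_\omega(\tilde{f}*\tilde{g})\} = \mathrm{Tr}\{(\tilde{\pi}_\omega\tilde{f})(\tilde{\pi}_\omega\tilde{g})\} = \mathrm{Tr}\{(\tilde{\pi}_\omega\tilde{g})(\tilde{\pi}_\omega\tilde{f})\} = \mathrm{Tr}\{\tilde{\pi}_\omega(\tilde{g}*\tilde{f})\}$ by the cyclicity of the ordinary matrix trace, and integrating against $dP_{\mathbb T}$ finishes the argument. For the normalization, the most direct route is simply to plug ${\bm 1}(\omega,{\bm p}) = \delta_{{\bm p},{\bm o}}$ into Eq.~\ref{TorusTrace} to obtain $\int dP_{\mathbb T}(\omega)\cdot 1 = 1$, since $dP_{\mathbb T}$ is a probability measure.

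Positivity is the only step that requires a bit of care. Combining the alternative formula with the identity $\tilde{\pi}_\omega \tilde{f}^* = (\tilde{\pi}_\omega \tilde{f})^*$ already recorded in this subsection gives
\[
{\cal T}_{\mathbb T}(\tilde{f}*\tilde{f}^*) \;=\; \frac{1}{|\mathbb{T}_d|}\int_{\Omega_{\cal N}} dP_{\mathbb T}(\omega)\, \mathrm{Tr}\{(\tilde{\pi}_\omega \tilde{f})(\tilde{\pi}_\omega \tilde{f})^*\} \;\geq\; 0,
\]
because the integrand is the trace of a positive operator. To upgrade to strict positivity when $\tilde{f}\neq 0$, I invoke the injectivity of $\tilde{\pi}$ from Proposition~\ref{Mo}: some $\omega_0$ must satisfy $\tilde{\pi}_{\omega_0}\tilde{f}\neq 0$, whence $\mathrm{Tr}\{(\tilde{\pi}_{\omega_0}\tilde{f})(\tilde{\pi}_{\omega_0}\tilde{f})^*\}>0$; continuity of this integrand in $\omega$, inherited from the assumed continuity of $\tilde f$ on the compact space $\Omega_{\cal N}$, promotes the integral to a strictly positive value. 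The only delicate computational ingredient in the whole argument is the change of variables that underlies the alternative trace formula, which rests on the commutativity of the rotation group $\mathfrak{r}$ and on the $\mathfrak{r}$-invariance of $dP_{\mathbb T}$ established earlier in the section; once that identity is set up, everything else is bookkeeping.
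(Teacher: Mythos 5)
Your proof is correct, and it takes a mildly different route from the paper's. The paper disposes of this proposition in one line, appealing directly to the $\mathfrak{r}$-invariance of $dP_{\mathbb{T}}$: the intended argument is a bare-hands computation of ${\cal T}_{\mathbb{T}}(\tilde f*\tilde g)$ from the definition of $*$ at the point $(\omega,{\bm o})$ (where the magnetic phase is trivial since ${\bm n}_{\bm o}={\bm 0}$), followed by the change of variables $\omega\to\mathfrak{r}_{\bm q}\omega$; the alternative formula ${\cal T}_{\mathbb{T}}(\tilde f)=|\mathbb{T}_d|^{-1}\int dP_{\mathbb{T}}(\omega)\,\mathrm{Tr}\{\tilde\pi_\omega\tilde f\}$ is then proved separately as Proposition~\ref{TraceTorus}. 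You instead prove that alternative formula first and reduce cyclicity and positivity to the cyclicity and positivity of the ordinary matrix trace on $\ell^2(\mathbb{T}_d)$, via Proposition~\ref{Mo} and the identity $\tilde\pi_\omega\tilde f^*=(\tilde\pi_\omega\tilde f)^*$. What your route buys is that the three trace properties become genuinely "bookkeeping," and you get Proposition~\ref{TraceTorus} for free along the way; what it costs is a dependence on the morphism property of $\tilde\pi_\omega$, hence on the flux quantization of Eq.~\ref{Q} (the direct computation of cyclicity needs neither), though since the paper assumes Eq.~\ref{Q} from Proposition~\ref{Mo} onward this is harmless in context. Two small points worth tightening: for strict positivity you should say explicitly that $dP_{\mathbb{T}}$, being a product of Lebesgue measures on $[-1/2,1/2]$, gives positive mass to every nonempty open set, so that continuity near $\omega_0$ really does force the integral to be positive; and note that the injectivity of $\tilde\pi$ can be bypassed entirely by writing $\mathrm{Tr}\{(\tilde\pi_\omega\tilde f)(\tilde\pi_\omega\tilde f)^*\}=\sum_{{\bm p},{\bm q}}|\tilde f(\mathfrak{r}_{\bm p}^{-1}\omega,\mathfrak{r}_{\bm p}^{-1}{\bm q})|^2$ and picking out one nonzero summand.
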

\begin{proof} It follows immediately from the invariance of the measure $dP_{\mathbb{T}}(\omega)$ relative to the rotations $\mathfrak{r}$.\qed
\end{proof} 

\begin{proposition}\label{TraceTorus} The trace can be computed via the equivalent formula:
\begin{equation}\label{AltTorusTrace}
{\cal T}_{\mathbb{T}}(\tilde{f})= \frac{1}{|\mathbb{T}_d|}\int_{\Omega_{\cal N}}  dP_{\mathbb{T}}(\omega) \ \mathrm{Tr}\{\tilde{\pi}_\omega \tilde{f} \}.
\end{equation}
\end{proposition}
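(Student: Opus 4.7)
The plan is to start from the right-hand side of Eq.~\ref{AltTorusTrace}, expand the operator trace using the explicit matrix elements of $\tilde{\pi}_\omega \tilde{f}$ provided by the definition in Eq.~\ref{TRep}, and then collapse the resulting sum over torus points by exploiting the invariance of $dP_{\mathbb{T}}(\omega)$ under the rotation group $\mathfrak{r}$.

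First, I would read off the matrix elements of $\tilde{\pi}_\omega \tilde{f}$ in the canonical basis of $\ell^2(\mathbb{T}_d)$:
\begin{equation}
\langle {\bm p}|\tilde{\pi}_\omega \tilde{f}|{\bm q}\rangle = \tilde{f}(\mathfrak{r}_{\bm p}^{-1}\omega,\mathfrak{r}_{\bm p}^{-1}{\bm q})\,e^{i\pi({\bm n}_{\bm q}\cdot{\bm F}\cdot{\bm n}_{\bm p})}.
\end{equation}
Two simplifications occur on the diagonal ${\bm q}={\bm p}$: by the very defining property of the rotation, $\mathfrak{r}_{\bm p}^{-1}{\bm p}={\bm o}$; and by the antisymmetry of ${\bm F}$, the exponent ${\bm n}_{\bm p}\cdot{\bm F}\cdot{\bm n}_{\bm p}$ vanishes. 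Consequently,
\begin{equation}
\mathrm{Tr}\{\tilde{\pi}_\omega \tilde{f}\}=\sum_{{\bm p}\in\mathbb{T}_d}\tilde{f}(\mathfrak{r}_{\bm p}^{-1}\omega,{\bm o}).
\end{equation}

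Second, I would substitute this into the right-hand side of Eq.~\ref{AltTorusTrace} and change variables in each summand. For fixed ${\bm p}$, the invariance of $dP_{\mathbb{T}}(\omega)$ under $\mathfrak{r}_{\bm p}$ (noted after Eq.~\ref{TorusTrace} and established in the definition of the rotation action on $\Omega_{\cal N}$) gives
\begin{equation}
\int_{\Omega_{\cal N}} dP_{\mathbb{T}}(\omega)\,\tilde{f}(\mathfrak{r}_{\bm p}^{-1}\omega,{\bm o}) = \int_{\Omega_{\cal N}} dP_{\mathbb{T}}(\omega)\,\tilde{f}(\omega,{\bm o}) = {\cal T}_{\mathbb{T}}(\tilde{f}).
\end{equation}
Summing over the $|\mathbb{T}_d|$ torus points and dividing by $|\mathbb{T}_d|$ collapses the ${\bm p}$-sum to a single copy and produces precisely ${\cal T}_{\mathbb{T}}(\tilde{f})$ as defined in Eq.~\ref{TorusTrace}.

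There is no real obstacle here; the argument is a two-line computation once one notices the two key facts (the origin-returning property $\mathfrak{r}_{\bm p}^{-1}{\bm p}={\bm o}$ and the vanishing of the diagonal magnetic phase). Both facts are already built into the setup, so the proof reduces to bookkeeping with Fubini and the measure-preserving change of variables.
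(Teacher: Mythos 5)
Your proof is correct and is exactly the argument the paper intends: the paper's proof consists of the single remark that the identity ``follows immediately from the invariance of the measure $dP_{\mathbb{T}}(\omega)$ relative to the rotations $\mathfrak{r}$,'' and your computation (diagonal matrix element equals $\tilde{f}(\mathfrak{r}_{\bm p}^{-1}\omega,{\bm o})$ via $\mathfrak{r}_{\bm p}^{-1}{\bm p}={\bm o}$ and antisymmetry of ${\bm F}$, followed by the measure-preserving change of variables) simply spells out the details.
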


\begin{proof} It follows immediately from the invariance of the measure $dP_{\mathbb{T}}(\omega)$ relative to the rotations $\mathfrak{r}$.\qed
\end{proof}

\subsection{\underline{A bridge between algebras}}

We will define two natural maps which will be uses to navigate between the algebras $C^*(\Omega_{\mathrm{per}}^{\cal N} \times \mathbb{Z}^d,{\bm F})$ and $C^*(\Omega_{\cal N} \times \mathbb{T}_d,{\bm F})$. First, let us note that $\Omega_{\mathrm{per}}^{\cal N}$ and $\Omega_{\cal N}$ can be identified in a canonical way. From now on, they will be considered one and same. The maps are:
\begin{equation}\label{P}
\begin{split}
&P:C^*(\Omega_{\mathrm{per}}^{\cal N} \times \mathbb{Z}^d,{\bm F})\rightarrow C^*(\Omega_{\cal N} \times \mathbb{T}_d,{\bm F}), \medskip \\
& (Pf)(\omega,{\bm p}) = f(\omega,{\bm n}_{\bm p})
 \end{split}
\end{equation}
and 
\begin{equation}\label{J}
\begin{split}
& J: C^*(\Omega_{\cal N} \times \mathbb{T}_d,{\bm F})\rightarrow C^*(\Omega_{\mathrm{per}}^{\cal N} \times \mathbb{Z}^d,{\bm F}), \medskip \\
&  (J\tilde{f})(\omega,{\bm n}) = \sum_{{\bm q}\in \mathbb{T}^d} \delta_{{\bm n},{\bm n}_{\bm q}} \tilde{f}(\omega,{\bm q}).
 \end{split}
\end{equation}
It is important to note that $(J\tilde{f})(\omega,{\bm n})$ is null for ${\bm n}$ outside of $\Lambda_{\cal N}$. These maps have several other  important properties, which we state in the following two Propositions. 

\begin{figure}
\center
  \includegraphics[width=6cm]{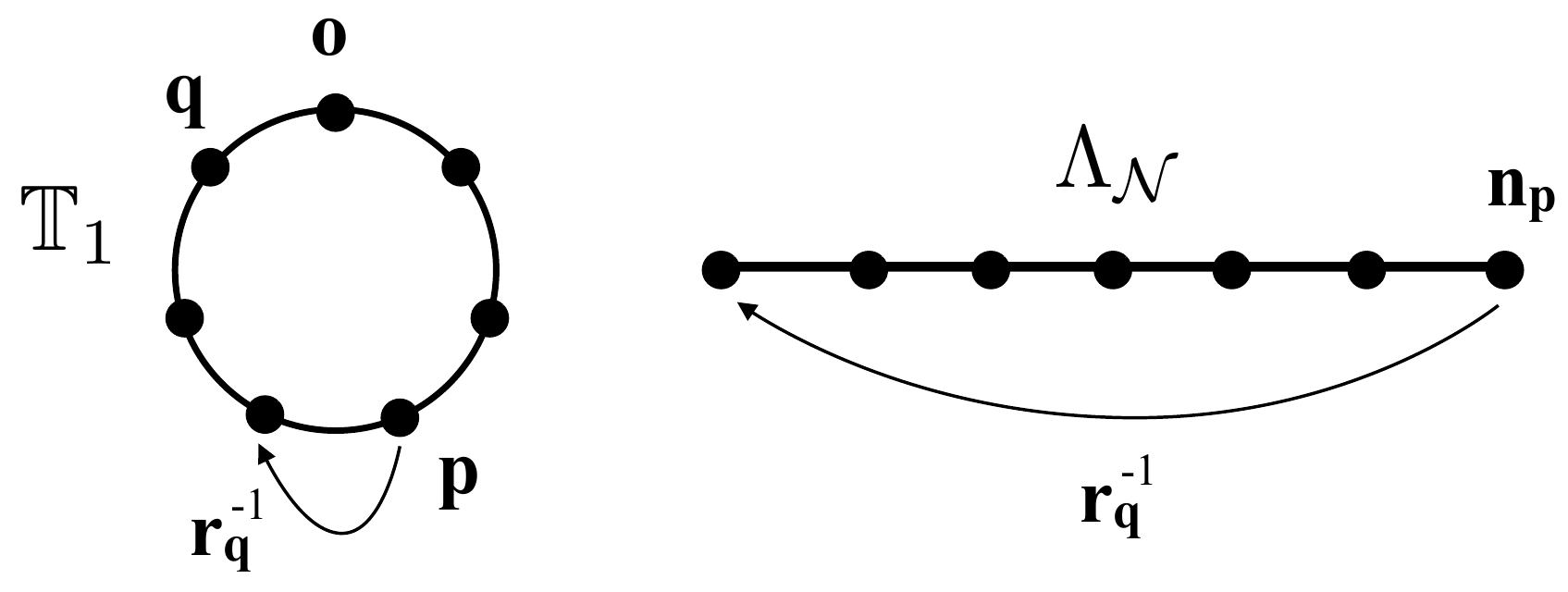}\\
  \caption{A point ${\bm p}$ with ${\bm n}_{\bm p}\in \partial \Lambda_{\cal N}$ and the set $\Sigma_{\bm p}$ for $\mathbb{T}_1$. In this simple case, $\Sigma_{\bm p}$ consists of only one point, the point ${\bm q}$. The actions of $\mathfrak{r}_{\bm q}^{-1}$ on ${\bm p}$ and ${\bm n}_{\bm p}$ are illustrated.}
 \label{Action}
\end{figure}

\begin{proposition} The maps $P$ and $J$ satisfy the relations:
\begin{equation}
\begin{split}
&P\circ J = 1_{C^*(\Omega_{\cal N} \times \mathbb{T}_d,{\bm F})} \medskip \\
&J \circ P = \chi_{\Lambda_{\cal N}}.
\end{split}
\end{equation}
\end{proposition}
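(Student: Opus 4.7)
The plan is to verify both identities by direct unpacking of the definitions, exploiting the single structural fact that the coordinate assignment ${\bm p}\mapsto {\bm n}_{\bm p}$ is a bijection between $\mathbb{T}_d$ and $\Lambda_{\cal N}$. No estimates, no limits, no use of the algebra structure (multiplication, norm, or $*$) will be needed; both statements are really statements about the underlying functions as elements of the sets of complex-valued maps $\Omega_{\cal N}\times\mathbb{T}_d\to\mathbb{C}$ and $\Omega_{\mathrm{per}}^{\cal N}\times\mathbb{Z}^d\to\mathbb{C}$.

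For the first identity, I would pick any $\tilde{f}\in C^*(\Omega_{\cal N}\times\mathbb{T}_d,{\bm F})$ and compute at an arbitrary $(\omega,{\bm p})$:
\begin{equation}
\bigl((P\circ J)\tilde{f}\bigr)(\omega,{\bm p}) = (J\tilde{f})(\omega,{\bm n}_{\bm p}) = \sum_{{\bm q}\in \mathbb{T}_d} \delta_{{\bm n}_{\bm p},{\bm n}_{\bm q}}\,\tilde{f}(\omega,{\bm q}).
\end{equation}
Because ${\bm p}\mapsto {\bm n}_{\bm p}$ is injective, the Kronecker delta selects the unique term ${\bm q}={\bm p}$, leaving $\tilde{f}(\omega,{\bm p})$. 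Hence $P\circ J$ is the identity on $C^*(\Omega_{\cal N}\times\mathbb{T}_d,{\bm F})$.

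For the second identity, I would take $f\in C^*(\Omega_{\mathrm{per}}^{\cal N}\times\mathbb{Z}^d,{\bm F})$ and compute
\begin{equation}
\bigl((J\circ P)f\bigr)(\omega,{\bm n}) = \sum_{{\bm q}\in\mathbb{T}_d}\delta_{{\bm n},{\bm n}_{\bm q}}(Pf)(\omega,{\bm q}) = \sum_{{\bm q}\in\mathbb{T}_d}\delta_{{\bm n},{\bm n}_{\bm q}}\,f(\omega,{\bm n}_{\bm q}).
\end{equation}
If ${\bm n}\in\Lambda_{\cal N}$, the surjectivity of ${\bm p}\mapsto {\bm n}_{\bm p}$ onto $\Lambda_{\cal N}$ provides a unique ${\bm q}$ with ${\bm n}_{\bm q}={\bm n}$, and the sum collapses to $f(\omega,{\bm n})$; if ${\bm n}\notin\Lambda_{\cal N}$, every term vanishes. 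Interpreting $\chi_{\Lambda_{\cal N}}$ as the multiplication operator that sends $f(\omega,{\bm n})$ to $\chi_{\Lambda_{\cal N}}({\bm n})f(\omega,{\bm n})$ (the standard convention already used in Proposition~\ref{QQ}), this is exactly the claim $J\circ P=\chi_{\Lambda_{\cal N}}$.

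There is no real obstacle here: the only point that could cause confusion is the reading of the right-hand side $\chi_{\Lambda_{\cal N}}$ in the second identity, which I would flag explicitly as ``pointwise multiplication by the characteristic function of $\Lambda_{\cal N}$ in the ${\bm n}$ variable.'' Everything else is a two-line evaluation using the injectivity and surjectivity of ${\bm p}\leftrightarrow {\bm n}_{\bm p}$.
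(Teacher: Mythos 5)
Your proof is correct and uses the same direct unpacking of the definitions of $P$ and $J$ that the paper gives, relying on the bijection ${\bm p}\leftrightarrow{\bm n}_{\bm p}$ between $\mathbb{T}_d$ and $\Lambda_{\cal N}$. The only difference is cosmetic: you make the injectivity/surjectivity argument and the interpretation of $\chi_{\Lambda_{\cal N}}$ explicit, whereas the paper leaves both implicit (and in fact has a small typo, writing $(Jf)(\omega,{\bm q})$ where it should read $(Pf)(\omega,{\bm q})$ in the second computation).
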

\begin{proof} Indeed:
\begin{equation}
((P \circ J)\tilde{f})(\omega,{\bm p})=(J\tilde{f})(\omega,{\bm n}_{\bm p})
=\sum_{{\bm q}\in \mathbb{Z}^d} \delta_{{\bm n}_{\bm p},{\bm n}_{\bm q}} \tilde{f}(\omega,{\bm q})=\tilde{f}(\omega,{\bm p}).
\end{equation}
For the second identity:
\begin{equation}
((J \circ P)f)(\omega,{\bm n})=\sum_{{\bm q}\in \mathbb{Z}^d} \delta_{{\bm n},{\bm n}_{\bm q}}(Jf)(\omega,{\bm q}) = \left \{ 
\begin{array}{c}
f(\omega,{\bm n}) \ \ \mathrm{if} \ {\bm n} \in \Lambda_{\cal N} \medskip \\
0, \ \ \mathrm{otherwise}.
\end{array}
\right .
\end{equation}\qed
\end{proof}
The following rule of calculus will prove to be extremely useful.
\begin{proposition}\label{Bhu} Let $\tilde{f} \in C^*(\Omega_{\cal N} \times \mathbb{T}_d,{\bm B})$ such that $\tilde{f}(\omega,{\bm p})=0$ if ${\bm n}_{\bm p}>1$ (i.e. if ${\bm p}$ is not a first neighbor of the origin). Then:
\begin{equation}
(\tilde{f}*Pg)(\omega,{\bm p})=((J\tilde{f})*g)(\omega,{\bm n}_{\bm p}) +{\cal R},
\end{equation}
where
\begin{equation}
{\cal R}= \left \{
\begin{array}{l}
0 \ \mathrm{if} \ {\bm n}_{\bm p} \notin \partial \Lambda_{\cal N} \medskip \\
\sum\limits_{{\bm q}\in \Sigma_{\bm p}} \tilde{f}(\omega,{\bm q})[g(\mathfrak{t}^{-{\bm n}_{\bm q}} \omega,{\bm n}_{\mathfrak{r}^{-1}_{\bm q}{\bm p}})-g(\mathfrak{t}^{-{\bm n}_{\bm q}} \omega,{\bm n}_{\bm p}-{\bm n}_{\bm q})]e^{i \pi ({\bm n}_{\bm p} \cdot {\bm F} \cdot {\bm n}_{\bm q})}, \ \mathrm{if} \ {\bm n}_{\bm p} \in \partial \Lambda_{\cal N}.
\end{array}
\right .
\end{equation}
Here $\Sigma_{\bm p}$ is the set of those ${\bm q}$'s who are first neighbors of the origin and their corresponding rotations ${\mathfrak{r}^{-1}_{\bm q}{\bm p}}$ send ${\bm n}_{\bm p}$, which is necessarily located on one or more faces of $\Lambda_{\cal N}$, to an opposing face of $\Lambda_{\cal N}$ (see Fig.~\ref{Action}). 
\end{proposition}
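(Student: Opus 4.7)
The plan is to expand both sides of the claimed identity using the composition laws in their respective algebras and then directly compare the resulting expressions.

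First I would write out the left-hand side using the multiplication in $C^*(\Omega_{\cal N}\times \mathbb{T}_d,{\bm F})$, combined with the definition $(Pg)(\omega',{\bm p}')=g(\omega',{\bm n}_{{\bm p}'})$:
\begin{equation}
(\tilde{f}*Pg)(\omega,{\bm p})=\sum_{{\bm q}\in \mathbb{T}_d}\tilde{f}(\omega,{\bm q})\, g(\mathfrak{r}_{\bm q}^{-1}\omega,{\bm n}_{\mathfrak{r}_{\bm q}^{-1}{\bm p}})\, e^{i\pi({\bm n}_{\bm p}\cdot {\bm F}\cdot {\bm n}_{\bm q})}.
\end{equation}
Under the canonical identification $\Omega_{\cal N}\cong \Omega^{\cal N}_{\mathrm{per}}$, the torus rotation $\mathfrak{r}_{\bm q}^{-1}$ intertwines with the translation $\mathfrak{t}_{{\bm n}_{\bm q}}^{-1}$ on the disorder space, so the first argument of $g$ becomes $\mathfrak{t}_{{\bm n}_{\bm q}}^{-1}\omega$.

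Next I would compute the right-hand side. Expanding $J$ according to Eq.~\ref{J} and using the product in $C^*(\Omega^{\cal N}_{\mathrm{per}}\times \mathbb{Z}^d,{\bm F})$ at ${\bm n}={\bm n}_{\bm p}$ gives
\begin{equation}
((J\tilde{f})*g)(\omega,{\bm n}_{\bm p})=\sum_{{\bm q}\in \mathbb{T}_d}\tilde{f}(\omega,{\bm q})\, g(\mathfrak{t}_{{\bm n}_{\bm q}}^{-1}\omega,{\bm n}_{\bm p}-{\bm n}_{\bm q})\, e^{i\pi({\bm n}_{\bm p}\cdot {\bm F}\cdot {\bm n}_{\bm q})}.
\end{equation}
Subtracting, the two sums differ only in the second argument of $g$, namely ${\bm n}_{\mathfrak{r}_{\bm q}^{-1}{\bm p}}$ versus ${\bm n}_{\bm p}-{\bm n}_{\bm q}$, with the same common weight $\tilde{f}(\omega,{\bm q})e^{i\pi({\bm n}_{\bm p}\cdot {\bm F}\cdot {\bm n}_{\bm q})}$.

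At this stage the nearest-neighbor hypothesis restricts the surviving ${\bm q}$'s to those with ${\bm n}_{\bm q}=\pm{\bm e}_j$. For any such ${\bm q}$, the unrolling of the torus rotation yields ${\bm n}_{\mathfrak{r}_{\bm q}^{-1}{\bm p}}={\bm n}_{\bm p}-{\bm n}_{\bm q}+{\bm \sigma}$, where ${\bm \sigma}\in\{0,\,\pm(2{\cal N}+1){\bm e}_j\}$ is the unique wrap-around vector that places the result back into $\Lambda_{\cal N}$. This shift is nonzero exactly when ${\bm n}_{\bm p}-{\bm n}_{\bm q}$ exits $\Lambda_{\cal N}$, which happens precisely when $({\bm n}_{\bm p})_j=\mp{\cal N}$, i.e.\ when ${\bm n}_{\bm p}$ lies on the face of $\Lambda_{\cal N}$ opposite to ${\bm n}_{\bm q}$. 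This is exactly the condition that defines the set $\Sigma_{\bm p}$. Hence the difference vanishes identically when ${\bm n}_{\bm p}\notin\partial\Lambda_{\cal N}$, while for ${\bm n}_{\bm p}\in\partial\Lambda_{\cal N}$ only the ${\bm q}\in\Sigma_{\bm p}$ contribute, reproducing the stated formula for ${\cal R}$.

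The main obstacle I anticipate is not analytic but combinatorial: one must carefully verify that the unrolling ${\bm p}\mapsto {\bm n}_{\bm p}$ does intertwine $\mathfrak{r}_{\bm q}^{-1}$ with $\mathbb{Z}^d$-translation by $-{\bm n}_{\bm q}$ up to the correct wrap-around vector ${\bm \sigma}$, and that this behaves correctly when ${\bm n}_{\bm p}$ belongs simultaneously to several faces of $\Lambda_{\cal N}$ (i.e.\ at edges or corners). Fortunately, since a nearest-neighbor ${\bm q}$ affects only one coordinate direction, each such ${\bm q}$ is treated independently of the others, and the corner configurations pose no additional difficulty.
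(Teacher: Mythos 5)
Your proposal is correct and follows essentially the same route as the paper's proof: expand $\tilde{f}*Pg$ via the torus composition law, identify $(Pg)(\mathfrak{r}_{\bm q}^{-1}\omega,\mathfrak{r}_{\bm q}^{-1}{\bm p})$ with $g(\mathfrak{t}^{-{\bm n}_{\bm q}}\omega,{\bm n}_{\mathfrak{r}_{\bm q}^{-1}{\bm p}})$, and observe that ${\bm n}_{\mathfrak{r}_{\bm q}^{-1}{\bm p}}={\bm n}_{\bm p}-{\bm n}_{\bm q}$ except for the wrap-around terms indexed by $\Sigma_{\bm p}$, which produce exactly ${\cal R}$. Your explicit remark that each nearest-neighbor ${\bm q}$ affects a single coordinate, so edges and corners need no special treatment, is a point the paper leaves implicit.
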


\begin{proof} 
\begin{equation}
\begin{split}
(\tilde{f}*Pg)(\omega,{\bm p})&=\sum_{{\bm q}\in \mathbb{T}_d} \tilde{f}(\omega,{\bm q})(Pg)(\mathfrak{r}^{-1}_{\bm q} \omega,\mathfrak{r}^{-1}_{\bm q}{\bm p})e^{i \pi ({\bm n}_{\bm p} \cdot {\bm F} \cdot {\bm n}_{\bm q})} \medskip \\
&=\sum_{{\bm q}\in \mathbb{T}_d} \tilde{f}(\omega,{\bm q})g(\mathfrak{t}^{-{\bm n}_{\bm q}} \omega,{\bm n}_{\mathfrak{r}^{-1}_{\bm q}{\bm p}})e^{i \pi ({\bm n}_{\bm p} \cdot {\bm F} \cdot {\bm n}_{\bm q})} \medskip \\
&=\sum_{{\bm q}\in \mathbb{T}_d} ( (P\circ J)\tilde{f})(\omega,{\bm q})g(\mathfrak{t}^{-{\bm n}_{\bm q}} \omega,{\bm n}_{\mathfrak{r}^{-1}_{\bm q}{\bm p}})e^{i \pi ({\bm n}_{\bm p} \cdot {\bm F} \cdot {\bm n}_{\bm q})} \medskip \\
&=\sum_{{\bm q}\in \mathbb{T}_d} ( J\tilde{f})(\omega,{\bm n}_{\bm q})g(\mathfrak{t}^{-{\bm n}_{\bm q}} \omega,{\bm n}_{\mathfrak{r}^{-1}_{\bm q}{\bm p}})e^{i \pi ({\bm n}_{\bm p} \cdot {\bm F} \cdot {\bm n}_{\bm q})} \medskip \\
\end{split}
\end{equation}
If ${\bm n}_{\bm p} \notin \partial \Lambda_{\cal N}$, then ${\bm n}_{\mathfrak{r}^{-1}_{\bm q}{\bm p}}={\bm n}_{\bm p}-{\bm n}_{\bm q}$ because ${\bm n}_{\bm q}$ is a first neighbor of ${\bm 0}$ and we can continue as:
\begin{equation}
\begin{split}
\ldots &= \sum_{\bm q} ( J\tilde{f})(\omega,{\bm n}_{\bm q})g(\mathfrak{t}^{-{\bm n}_{\bm q}} \omega,{\bm n}_{\bm p}-{\bm n}_{\bm q})e^{i \pi ({\bm n}_{\bm p} \cdot {\bm F} \cdot {\bm n}_{\bm q})} \medskip \\
& = \sum_{{\bm m}\in \mathbb{Z}^d} ( J\tilde{f})(\omega,{\bm m})g(\mathfrak{t}^{-{\bm m}} \omega,{\bm n}_{\bm p}-{\bm m})e^{i \pi ({\bm n}_{\bm p} \cdot {\bm F} \cdot {\bm m})} \medskip \\
&=\left((J\tilde{f})*g\right)(\omega,{\bm n}_{\bm p}).
\end{split}
\end{equation}
If ${\bm n}_{\bm p} \in \partial \Lambda_{\cal N}$, then the rotation $\mathfrak{r}^{-1}_{\bm q}$ can send ${\bm n}_{\mathfrak{r}^{-1}_{\bm q}{\bm p}}$ to the opposite facet of $\Lambda_{\cal N}$, relative to ${\bm n}_{\bm p}$. Thus, the product is no longer equal to $((J\tilde{f})*g)(\omega,{\bm n}_{\bm p})$, but instead:
\begin{equation}
(\tilde{f}*Pg)(\omega,{\bm p})=\left((J\tilde{f})*g\right)(\omega,{\bm n}_{\bm p})  + \sum_{{\bm q}\in \Sigma_{\bm p}} ( J\tilde{f})(\omega,{\bm n}_{\bm q})[g(\mathfrak{t}^{-{\bm n}_{\bm q}} \omega,{\bm n}_{\mathfrak{r}^{-1}_{\bm q}{\bm p}})-g(\mathfrak{t}^{-{\bm n}_{\bm q}} \omega,{\bm n}_{\bm p}-{\bm n}_{\bm q})]e^{i \pi ({\bm n}_{\bm p} \cdot {\bm F} \cdot {\bm n}_{\bm q})}. \qed
\end{equation}
\end{proof}

\noindent {\bf Remark.} The above result can be used to demonstrate that $J$ and $P$ are approximate homomorphisms.

\subsection{\underline{The analytic functional calculus: Comparative estimates}}

Let $h \in C^*(\Omega \times \mathbb{Z}^d,{\bm F})$ be the Hamiltonian of a quantum system. Our string of thoughts was to restrict $h$ to periodic $\omega$'s via the map $\mathfrak{P}_{\Lambda_{\cal N}}$, and then to approximate $\mathfrak{P}_{\Lambda_{\cal N}}h$ on the torus by $\tilde{h}=P(\mathfrak{P}_{\Lambda_{\cal N}}h)$. If there are only a finite number of hoppings from one site to the neighboring sites, then $\mathfrak{P}_{\Lambda_{\cal N}}h$ can be recovered from the Hamiltonian on the torus via $\mathfrak{P}_{\Lambda_{\cal N}}h=J\tilde{h}$, so no truncation or loss of information occurs for the Hamiltonian itself during the last step of approximation. However, the functional calculus with the Hamiltonian will be different for the two algebras. Indeed, although $\mathfrak{P}_{\Lambda_{\cal N}}h$ and $\tilde{h}$ are similar as complex functions, the inverse of $\tilde{h} - z {\bm 1}$, which generates the analytic functional calculus on $C^*(\Omega_{\cal N} \times \mathbb{T}_d,{\bm F})$, will be sought inside $C^*(\Omega_{\cal N} \times \mathbb{T}_d,{\bm F})$, while the inverse of $\mathfrak{P}_{\Lambda_{\cal N}}h - z {\bm 1}$, which generates the analytic functional calculus on $C^*(\Omega_{\mathrm{per}}^{\cal N} \times \mathbb{Z}^d,{\bm F})$, will be sought inside $C^*(\Omega_{\mathrm{per}}^{\cal N} \times \mathbb{Z}^d,{\bm F})$. 

Our task now is to demonstrate that the two generators of the analytic functional calculi converge to each other exponentially fast when the size of the torus is increased to infinity. To keep the technical complications to the minimum, we will consider only  nearest-neighbor hopping Hamiltonian of Eq.~\ref{MainModel}. However, the main conclusion of this subsection, contained in Theorem~\ref{Comp2}, can be easily extended to any model with short-range Hamiltonian.

\begin{proposition} The following relations between the spectra always holds: 
\begin{equation}
\sigma({\tilde h}) \subset \sigma(J{\tilde h}).
\end{equation}
\end{proposition}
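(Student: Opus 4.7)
I prove the inclusion by producing, for every spectral value of $\tilde h$ in the torus algebra, a Weyl sequence for $J\tilde h$ in the periodic algebra. The first step is to pass to the operator representations: because the norm in each algebra is $\sup_\omega\|\pi_\omega\,\cdot\,\|$ and the direct sum $\bigoplus_\omega\pi_\omega$ is isometric, hence faithful, the spectrum of any self-adjoint element equals the closure of the union over $\omega$ of the operator spectra. Under the canonical identification $\Omega_{\cal N}\equiv\Omega^{\cal N}_{\mathrm{per}}$, it then suffices to show $\sigma(\tilde\pi_\omega\tilde h)\subset\sigma(\pi_\omega(J\tilde h))$ for each fixed $\omega$.

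Fix $z\in\sigma(\tilde\pi_\omega\tilde h)$. Because $\ell^2(\mathbb T_d)$ is finite-dimensional, $z$ is an actual eigenvalue with some unit eigenvector $\phi$. I lift $\phi$ periodically to $\mathbb Z^d$ via
\[
\Phi({\bm n})=\phi({\bm p}({\bm n})),\qquad{\bm p}\colon\mathbb Z^d\to\mathbb Z^d/\bigl((2{\cal N}+1)\mathbb Z\bigr)^d\cong\mathbb T_d.
\]
The function $\Phi$ is bounded and $(2{\cal N}+1)$-periodic, but not in $\ell^2(\mathbb Z^d)$.

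The key computational step is to check that $(\pi_\omega(J\tilde h)-z)\Phi=0$ pointwise on $\mathbb Z^d$. Using the kernel expression
\[
\langle{\bm n}|\pi_\omega(J\tilde h)|{\bm m}\rangle=(J\tilde h)(\mathfrak t^{-1}_{\bm n}\omega,{\bm m}-{\bm n})\,e^{i\pi({\bm m}\cdot{\bm F}\cdot{\bm n})},
\]
the identities $\mathfrak t_{\bm n}\omega=\mathfrak r_{{\bm p}({\bm n})}\omega$ (valid for periodic $\omega$) and $\tilde h(\omega,{\bm q})=h(\omega,{\bm n}_{\bm q})$ reduce the computation, in the interior of a fundamental domain, to a direct match with Eq.~\ref{TRep}. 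When a nearest-neighbor hopping wraps across a period boundary, the quantization $F_{ij}=2k_{ij}/(2{\cal N}+1)$ of Proposition~\ref{Mo} renders the cross phase $e^{i\pi{\bm \sigma}\cdot{\bm F}\cdot{\bm n}_{\bm p}}$ trivial (with ${\bm \sigma}$ the wrap-around vector), and the periodicity of $\Phi$ relabels the wrapped site as the correct torus neighbor of ${\bm p}$. The outcome is $(\pi_\omega(J\tilde h)\Phi)({\bm n}_{\bm p})=(\tilde\pi_\omega\tilde h\,\phi)({\bm p})=z\phi({\bm p})=z\Phi({\bm n}_{\bm p})$, extended to all of $\mathbb Z^d$ because both $\Phi$ and $\pi_\omega(J\tilde h)\Phi$ are invariant under the period translations $T_{(2{\cal N}+1){\bm e}_j}$.

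I then construct an $\ell^2$ Weyl sequence by cutting off: set $\Phi_R=\chi_{B_R}\Phi$ with $B_R$ the discrete ball of radius $R$ in $\mathbb Z^d$. Because $\pi_\omega(J\tilde h)$ has hopping range one and annihilates $\Phi$ pointwise, $(\pi_\omega(J\tilde h)-z)\Phi_R$ is supported in a shell of thickness one about $\partial B_R$, so $\|(\pi_\omega(J\tilde h)-z)\Phi_R\|^2=O(R^{d-1})$, while $\|\Phi_R\|^2\geq cR^d$ for some $c>0$ depending on $\|\phi\|$. Hence $\|(\pi_\omega(J\tilde h)-z)\Phi_R\|/\|\Phi_R\|=O(R^{-1/2})\to 0$ as $R\to\infty$, and Weyl's criterion delivers $z\in\sigma(\pi_\omega(J\tilde h))$, completing the proof. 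The main obstacle is the phase-bookkeeping in the previous paragraph: verifying that the boundary defect encoded in the remainder ${\cal R}$ of Proposition~\ref{Bhu} really does collapse on periodic lifts once the quantization of ${\bm F}$ is invoked; everything else is a standard Weyl-sequence estimate.
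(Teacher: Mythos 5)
Your proof is correct, and it takes a genuinely different and more elementary route than the paper. The paper's argument is top-down: it observes that the quantization of ${\bm F}$ makes $\pi_\omega(J\tilde h)$ commute with the ordinary translations $T_{(2{\cal N}+1){\bm e}_j}$ (the magnetic phase factor collapses to $1$ for shifts by a full period), invokes the Bloch direct-integral decomposition $J\tilde h=\int^\oplus J\tilde h({\bm k})\,d{\bm k}$, identifies $\tilde h=J\tilde h({\bm 0})$, and concludes from $\sigma(J\tilde h)=\bigcup_{\bm k}\sigma(J\tilde h({\bm k}))$. Your argument is bottom-up: you reduce via the faithful direct-sum representation to a fixed $\omega$, lift an eigenvector $\phi$ of $\tilde\pi_\omega\tilde h$ to the bounded $(2{\cal N}+1)$-periodic function $\Phi$ on $\mathbb Z^d$ (the ${\bm k}={\bm 0}$ Bloch wave), check by hand that the magnetic phases match across period boundaries exactly because $(2{\cal N}+1)F_{ij}$ is an even integer, and then run a standard Weyl-sequence cutoff. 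In effect you are reproving the ${\bm k}={\bm 0}$ slice of the Bloch decomposition directly rather than invoking it as a black box. Both arguments hinge on the same quantization condition, but yours is more self-contained and makes the role of the phase bookkeeping explicit, while the paper's is shorter and would give for free the full statement $\sigma(J\tilde h)=\overline{\bigcup_{\bm k}\sigma(J\tilde h({\bm k}))}$. One small point worth noting: your reduction step $\sigma(f)=\overline{\bigcup_\omega\sigma(\pi_\omega f)}$ uses self-adjointness (for general elements the direct-sum spectrum can be strictly larger because the fiberwise resolvents need not be uniformly bounded); you correctly flagged this restriction, and it is harmless here since $\tilde h$ and $J\tilde h$ are self-adjoint.
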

\begin{proof} With the quantization of Eq.~\ref{Q}, $J{\tilde h}$ becomes a periodic Hamiltonian on $\mathbb{Z}^d$. Then one can decompose $J{\tilde h}$ according to the irreducible representations of the discrete translations, which are indexed by a ${\bm k}$ point of the $d$-dimensional torus: $h=\int^{\oplus} J{\tilde h}({\bm k})d{\bm k}$. It is straightforward to show that $\tilde{h}$ is equal to $J{\tilde h}(0)$. The statement follow because $\sigma(J{\tilde h})=\cup_{\bm k} \sigma\left(J{\tilde h}({\bm k})\right)$.\qed
\end{proof}

\begin{lemma}\label{CT1} Let ${\tilde h}$ be the Hamiltonian on the torus corresponding to the nearest-neighbor hopping Hamiltonian of Eq.~\ref{MainModel}, and let $z \in \mathbb{C}$ be outside the spectrum of $J{\tilde h}$. Then:
\begin{equation}
\|(\tilde{h} - z {\bm 1})^{-1} - P( J\tilde{h}-z{\bm 1})^{-1}\|\leq \frac{4d^2 A_{d-1}(\xi/\sqrt{2})e^{-\frac{\xi}{\sqrt{2}}{\cal N}}}{\mathrm{dist}(z,\sigma(\tilde{h}))\left ( \mathrm{dist}(z,\sigma(J\tilde{h}))-2d\sinh \xi \right )},
\end{equation}
whenever $\xi<\sinh^{-1} (\mathrm{dist}(z,\sigma(J\tilde{h}))/2d)$.
\end{lemma}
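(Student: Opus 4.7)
The plan is to use Proposition~\ref{Bhu} to produce a resolvent identity whose remainder is localized on the boundary of $\Lambda_{\cal N}$, and then control that remainder through the Combes--Thomas decay of the periodic resolvent (Corollary~\ref{EX1}). First I would check that $\tilde{f}=\tilde{h}-z\tilde{{\bm 1}}$ satisfies the hypothesis of Proposition~\ref{Bhu}: its kernel vanishes for $|{\bm n}_{\bm p}|>1$, since $\tilde{h}$ is nearest-neighbor and $\tilde{{\bm 1}}$ is supported at the origin. Taking $g=(J\tilde{h}-z{\bm 1})^{-1}$ and using $J\tilde{{\bm 1}}={\bm 1}_{\mathrm{per}}$, the proposition yields
\begin{equation}
(\tilde{h}-z\tilde{{\bm 1}})*P(J\tilde{h}-z{\bm 1})^{-1}=P(({\bm 1}_{\mathrm{per}}))+{\cal R}=\tilde{{\bm 1}}+{\cal R},
\end{equation}
where ${\cal R}$ is supported on those ${\bm p}$ with ${\bm n}_{\bm p}\in\partial\Lambda_{\cal N}$. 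Multiplying on the left by $\tilde{R}(z)=(\tilde{h}-z\tilde{{\bm 1}})^{-1}$ gives the key identity
\begin{equation}
\tilde{R}(z)-P(J\tilde{h}-z{\bm 1})^{-1}=-\tilde{R}(z)*{\cal R},
\end{equation}
so the quantity to be bounded reduces to $\|\tilde{R}(z)\|\,\|{\cal R}\|$. Self-adjointness of $\tilde{h}$ yields $\|\tilde{R}(z)\|=1/\mathrm{dist}(z,\sigma(\tilde{h}))$, producing the first factor of the denominator in the stated bound.

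The second step is to estimate $\|{\cal R}\|$ via the kernel estimate of the form in Eq.~\ref{NormA} adapted to the torus algebra. From the explicit formula in Proposition~\ref{Bhu}, for each boundary ${\bm p}$ the kernel ${\cal R}(\omega,{\bm p})$ is a sum over ${\bm q}\in\Sigma_{\bm p}$ of two resolvent values of $g$ at spatial arguments ${\bm n}_{\mathfrak{r}_{\bm q}^{-1}{\bm p}}$ and ${\bm n}_{\bm p}-{\bm n}_{\bm q}$, multiplied by phases of modulus one. The wrap-around mechanism of $\Sigma_{\bm p}$ guarantees that both of these vectors have a coordinate of magnitude at least ${\cal N}$ along the axis of ${\bm q}$. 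Since $J\tilde{h}$ is a genuine nearest-neighbor Hamiltonian on $\mathbb{Z}^d$, Corollary~\ref{EX1} supplies
\begin{equation}
|g(\,\cdot\,,{\bm m})|\leq\frac{e^{-\xi|{\bm m}|}}{\mathrm{dist}(z,\sigma(J\tilde{h}))-2d\sinh\xi}
\end{equation}
for any $\xi<\sinh^{-1}\!\bigl(\mathrm{dist}(z,\sigma(J\tilde{h}))/2d\bigr)$, yielding the second denominator factor.

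Finally, I would organize the sum over ${\bm p}\in\partial\Lambda_{\cal N}$ face by face. On a face with normal $\pm{\bm e}_j$, the transverse coordinates $p_{-j}$ range freely over $\{-{\cal N},\ldots,{\cal N}\}^{d-1}$ while the normal coordinate is pinned at $\pm{\cal N}$. The elementary inequality $\sqrt{{\cal N}^2+|p_{-j}|^2}\geq ({\cal N}+|p_{-j}|)/\sqrt{2}$ splits the exponential into a uniform ${\cal N}$-decay and a transverse lattice sum,
\begin{equation}
\sum_{p_{-j}\in\mathbb{Z}^{d-1}}e^{-\xi|p_{-j}|/\sqrt{2}}=A_{d-1}(\xi/\sqrt{2}),
\end{equation}
which matches the combinatorial factor in the statement. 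Summing over the $2d$ faces, the two $g$-terms inside each ${\cal R}(\omega,{\bm p})$, and absorbing the multiplicity $|\Sigma_{\bm p}|\leq d$ at corners and edges assembles the prefactor $4d^2$ and completes the estimate.

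The main obstacle is the last step: correctly cataloguing $\Sigma_{\bm p}$ for ${\bm p}$ on edges and corners of $\Lambda_{\cal N}$ (where several wrap-around directions compete), and choosing the norm-splitting so that the decay rate comes out exactly to $\xi/\sqrt{2}$ with a clean $A_{d-1}$ factor. Every other ingredient is a direct invocation of machinery already established earlier in the paper.
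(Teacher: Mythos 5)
Your proposal follows the same route as the paper: apply Proposition~\ref{Bhu} with $\tilde f=\tilde h - z\tilde{\bm 1}$ and $g=(J\tilde h - z{\bm 1})^{-1}$ to get $(\tilde h - z{\bm 1})*P(J\tilde h - z{\bm 1})^{-1}={\bm 1}+{\cal R}$ with ${\cal R}$ supported on $\partial\Lambda_{\cal N}$, bound ${\cal R}$ pointwise by the Combes--Thomas decay of the periodic resolvent (giving the factor $2d$ from $|\Sigma_{\bm p}|\leq d$ and the two $g$-terms), sum over the $2d$ faces with the inequality $\sqrt{{\cal N}^2+|{\bm m}|^2}\geq({\cal N}+|{\bm m}|)/\sqrt{2}$ to produce $A_{d-1}(\xi/\sqrt{2})\,e^{-\xi{\cal N}/\sqrt{2}}$ and the $4d^2$, and finally left-multiply by $(\tilde h - z{\bm 1})^{-1}$ (valid since $\sigma(\tilde h)\subset\sigma(J\tilde h)$) to pick up the $1/\mathrm{dist}(z,\sigma(\tilde h))$. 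The obstacle you flag about cataloguing $\Sigma_{\bm p}$ at corners is handled in the paper exactly as you anticipate, simply by the coarse bound $|\Sigma_{\bm p}|\leq d$, so the argument goes through as you sketched.
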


\begin{proof} Let $\tilde{r}\in  C^*(\Omega_{\cal N} \times \mathbb{T}_d,{\bm F})$ be defined by
\begin{equation}\label{S}
{\bm 1}+\tilde{r} = (\tilde{h}-z {\bm 1})*P(J\tilde{h}-z{\bm 1})^{-1}
\end{equation}
According to Proposition~\ref{Bhu}:
\begin{equation}\label{W}
\tilde{r}(\omega,{\bm p})=\left \{
\begin{array}{l}
0 \ \ \mathrm{if} \ {\bm n}_{\bm p} \notin \partial \Lambda_{\cal N} \medskip \\
\begin{split}
&-\sum_{{\bm q} \in \Sigma_{\bm p}} \tilde{h}(\omega,{\bm n}_{\bm q})[(J\tilde{h}-z{\bm 1})^{-1}(\mathfrak{t}^{-{\bm n}_{\bm q}} \omega,{\bm n}_{\mathfrak{r}^{-1}_{\bm q}{\bm p}}) \medskip \\
 &\ \ \ -  (J\tilde{h}-z{\bm 1})^{-1}(\mathfrak{t}^{-{\bm n}_{\bm q}} \omega,{\bm n}_{\bm p} -{\bm n}_{\bm q})]e^{i \pi ({\bm n}_{\bm p} \cdot {\bm F} \cdot {\bm n}_{\bm q})} \ \ \mathrm{if} \ {\bm n}_{\bm p} \in \partial \Lambda_{\cal N}.
\end{split}
\end{array}
\right.
\end{equation}
The number of points in $\Sigma_{\bm p}$ is less or equal to $d$, the equality occuring when ${\bm n}_{\bm p}$ is located at a corner of $\Lambda_{\cal N}$. Note that the second argument of $(J\tilde{h}-z{\bm 1})^{-1}$ in Eq.~\ref{W} is on the boundary or outside $\Lambda_{\cal N}$ and the key here is that $(J\tilde{h}-z{\bm 1})^{-1}$ is exponentially localized. As such:
\begin{equation}
|\tilde{r}(\omega,{\bm p})| \left \{
\begin{split}
&=0 \ \ \mathrm{if} \ {\bm n}_{\bm p} \notin \partial \Lambda_{\cal N} \medskip \\
&\leq  \frac{2d e^{-\xi|{\bm n}_{\bm p}|}}{\mathrm{dist}(z,\sigma(J\tilde{h}))-2d\sinh \xi} \ \ \mathrm{if} \ {\bm n}_{\bm p} \in \partial \Lambda_{\cal N}.
\end{split}
\right .
\end{equation}
To compute the norm, we proceed as follows:
\begin{equation}\label{B1}
\begin{split}
\|\tilde{r}\| & \leq \sum_{ {\bm n}_{\bm p} \in \partial \Lambda_{\cal N}} \sup_\omega\{|r(\omega,{\bm p})|\} \medskip \\
&\leq \frac{2d }{\mathrm{dist}(z,\sigma(J\tilde{h}))-2d\sinh \xi}\sum_{ {\bm n}_{\bm p} \in \partial \Lambda_{\cal N}}e^{-\xi|{\bm n}_{\bm p}|} \medskip \\
&\leq \frac{2d}{\mathrm{dist}(z,\sigma(J\tilde{h}))-2d\sinh \xi}2d\sum_{ {\bm m} \in \mathbb{Z}^{d-1}}e^{-\xi\sqrt{{\cal N}^2 +|{\bm m}|^2}}\medskip \\
&\leq \frac{4d^2 }{\mathrm{dist}(z,\sigma(J\tilde{h}))-2d\sinh \xi}\sum_{ {\bm m} \in \mathbb{Z}^{d-1}}e^{-\frac{\xi}{\sqrt{2}}({\cal N} +|{\bm m}|)},
\end{split}
\end{equation}
to conclude:
\begin{equation}
\|\tilde{r}\|\leq \frac{4d^2 A_{d-1}(\xi/\sqrt{2})e^{-\frac{\xi}{\sqrt{2}}{\cal N}}}{\mathrm{dist}(z,\sigma(J\tilde{h}))-2d\sinh \xi}.
\end{equation}
Since $\sigma(\tilde{h}) \subset \sigma(J\tilde{h})$, we can multiply Eq.~\ref{S} with $(\tilde{h}-z{\bm 1})^{-1}$, to obtain:   
\begin{equation}
(\tilde{h}-z {\bm 1})^{-1} - P(J\tilde{h}-z{\bm 1})^{-1}=-(\tilde{h}-z {\bm 1})^{-1}*\tilde{r}
\end{equation}
and the statement follows.\qed
\end{proof}

\begin{theorem}\label{Comp2} The analytic functional calculi with the Hamiltonians corresponding to the nearest-neighbor hopping Hamiltonian of Eq.~\ref{MainModel}, in the $C^*$-algebras $C^*(\Omega_{\mathrm{per}}^{\cal N} \times \mathbb{Z}^d,{\bm F})$ and $C^*(\Omega_{\cal N} \times \mathbb{T}_d,{\bm F})$, become exponentially close as ${\cal N}$ is taken to infinity. Specifically:
\begin{equation}
\|\Phi(\tilde{h}) - P\Phi(J\tilde{h})\|\leq \frac{4d^2 A_{d-1}(\xi/\sqrt{2})\bar{\Phi}e^{-\frac{\xi}{\sqrt{2}}{\cal N}}}{\kappa\left (\kappa-2d\sinh \xi\right )},
\end{equation}
for any function $\Phi(z)$ analytic in the neighborhood $\mathrm{dist}(z,\sigma(J\tilde{h}))\leq \kappa$ of $\sigma(J\tilde{h})$ and for $\xi < \sinh^{-1} (\kappa/2d)$.
\end{theorem}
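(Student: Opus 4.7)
The plan is to reduce the problem to Lemma~\ref{CT1} by means of the analytic functional calculus, and then to do a direct contour estimate. Since $\sigma(\tilde h)\subset \sigma(J\tilde h)$ (as established immediately before Lemma~\ref{CT1}), the contour ${\cal C}_\kappa$ defined by $\mathrm{dist}(z,\sigma(J\tilde h))=\kappa$ encloses both spectra and lies inside the analyticity domain of $\Phi$. Therefore both $\Phi(\tilde h)$ and $\Phi(J\tilde h)$ admit the standard Cauchy representation over the same contour ${\cal C}_\kappa$:
\begin{equation}
\Phi(\tilde h)=\frac{i}{2\pi}\oint_{{\cal C}_\kappa}\Phi(z)(\tilde h-z{\bm 1})^{-1}dz,
\qquad
\Phi(J\tilde h)=\frac{i}{2\pi}\oint_{{\cal C}_\kappa}\Phi(z)(J\tilde h-z{\bm 1})^{-1}dz.
\end{equation}

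The next step is to pull the linear map $P$ inside the integral defining $P\Phi(J\tilde h)$. The map $P$ from Eq.~\ref{P} is just pointwise restriction to the torus, hence bounded on $C^*(\Omega^{\cal N}_{\mathrm{per}}\times\mathbb{Z}^d,{\bm F})$; since $z\mapsto (J\tilde h-z{\bm 1})^{-1}$ is norm-continuous on ${\cal C}_\kappa$, the Riemann sums defining the integral converge in norm and $P$ commutes with the limit. Subtracting the two formulas yields
\begin{equation}
\Phi(\tilde h)-P\Phi(J\tilde h)=\frac{i}{2\pi}\oint_{{\cal C}_\kappa}\Phi(z)\left[(\tilde h-z{\bm 1})^{-1}-P(J\tilde h-z{\bm 1})^{-1}\right]dz.
\end{equation}

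The estimate is then immediate from Lemma~\ref{CT1} once we observe that on ${\cal C}_\kappa$ we have $\mathrm{dist}(z,\sigma(J\tilde h))=\kappa$, and, because $\sigma(\tilde h)\subset\sigma(J\tilde h)$, also $\mathrm{dist}(z,\sigma(\tilde h))\geq\kappa$. Substituting these two lower bounds into the denominator of the Lemma~\ref{CT1} bound (for any admissible $\xi<\sinh^{-1}(\kappa/2d)$) gives an $z$-independent factor in front of $|\Phi(z)|$:
\begin{equation}
\|\Phi(\tilde h)-P\Phi(J\tilde h)\|\leq \frac{4d^2 A_{d-1}(\xi/\sqrt 2)e^{-\frac{\xi}{\sqrt 2}{\cal N}}}{\kappa(\kappa-2d\sinh\xi)}\cdot \frac{1}{2\pi}\oint_{{\cal C}_\kappa}|\Phi(z)||dz|.
\end{equation}
Recognizing the remaining contour integral as $\bar\Phi$ (see Eq.~\ref{PhiBar}) completes the proof.

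There is no real obstacle; the argument is essentially a transfer of the resolvent estimate of Lemma~\ref{CT1} through the holomorphic functional calculus. The only point requiring a line of justification is the commutation of $P$ with the contour integral, but this is routine given the boundedness of $P$ and the norm-continuity of the resolvent on ${\cal C}_\kappa$. The inclusion $\sigma(\tilde h)\subset\sigma(J\tilde h)$, already proved, is what makes it possible to use a single contour and to replace $\mathrm{dist}(z,\sigma(\tilde h))$ by the uniform lower bound $\kappa$ in the denominator.
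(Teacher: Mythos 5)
Your proof is correct and follows essentially the same route as the paper's: represent both $\Phi(\tilde h)$ and $P\Phi(J\tilde h)$ as contour integrals over the same contour ${\cal C}_\kappa$, pull $P$ through the integral, and apply Lemma~\ref{CT1} pointwise on the contour. You make explicit two small points the paper leaves implicit — the justification for commuting $P$ with the Riemann-sum limit, and the use of $\sigma(\tilde h)\subset\sigma(J\tilde h)$ to bound $\mathrm{dist}(z,\sigma(\tilde h))\geq\kappa$ on ${\cal C}_\kappa$ — but the argument is the same.
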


\begin{proof} The statement follows from the representation:
\begin{equation}
\Phi(\tilde{h}) - P\Phi(J\tilde{h})=\frac{i}{2\pi}\int_{{\cal C}_\kappa} \Phi(z) [(\tilde{h} - z {\bm 1})^{-1} - P(J\tilde{h} - z{\bm 1})^{-1}]dz,
\end{equation}
and from the upper bound of Lemma~\ref{CT1}.\qed
\end{proof}

\subsection{\underline{An approximate differential calculus on the torus}}

Our goal here is to demonstrate that an approximate differential calculus can be defined over the torus algebra such that, for elements generated by the analytic functional calculus with a short-range Hamiltonian, the errors introduced by the approximate derivations vanish exponentially fast as the size of the discrete torus is taken to infinity.

\begin{theorem} \label{DiffCalcTorus} Let $\mathfrak{x}: {\cal S}^1_D \rightarrow \mathbb{R}$ be a continuous function such that ($n_p$ is the coordinate of the point $p\in {\cal S}^1_D$):
\begin{equation}
|\mathfrak{x}(p)-n_p|=0 \ \mathrm{if} \ |n_p| < {\cal N}/\sqrt{2}
\end{equation}
and, 
\begin{equation}
|\mathfrak{x}(p) |\leq |n_p| \  \mathrm{for \ all} \ p \in {\cal S}^1_D.
\end{equation} 
We define an approximate differential calculus on the torus by:
\begin{equation}
(\tilde{\partial}_j \tilde{f})({\bm p}) = i\mathfrak{x}(p_j) \tilde{f}({\bm p}), \ j=1,\ldots,d.
\end{equation}
Here, $p_j$ denotes the $j$-th component of ${\bm p}$. Let $\tilde{h}$ be the Hamiltonian on the torus corresponding to the nearest-neighbor hopping Hamiltonian of Eq.~\ref{MainModel}, and let $\Phi(z)$ be an analytic function  in the neighborhood $\mathrm{dist}(z,\sigma(J\tilde{h}))\leq \kappa$ of $\sigma(J\tilde{h})$. Then for any $0<\xi<\sinh^{-1} (\kappa/2d)$, there exists a finite and fully identifiable parameter $\mathfrak{B}_2(\xi,{\bm \alpha})$ such that:
\begin{equation}
\|\tilde{\partial}_{\bm \alpha} \Phi(\tilde{h}) - P \partial_{\bm \alpha} \Phi(J\tilde{h})\| \leq \mathfrak{B}_2(\xi,{\bm \alpha})\bar{\Phi}{\cal N}^{|{\bm \alpha}|+d}  \ e^{-\frac{\xi}{\sqrt{2}} {\cal N}},
\end{equation}
for any multi-index ${\bm \alpha}=\{\alpha_1,\alpha_2, \ldots, \alpha_d\}$.
\end{theorem}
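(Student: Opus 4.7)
The plan is to split the target difference into two terms, corresponding to two distinct sources of error, and estimate each separately:
\begin{equation*}
\tilde{\partial}_{\bm \alpha}\Phi(\tilde{h}) - P\partial_{\bm \alpha}\Phi(J\tilde{h}) = \underbrace{\tilde{\partial}_{\bm \alpha}\bigl[\Phi(\tilde{h}) - P\Phi(J\tilde{h})\bigr]}_{A} + \underbrace{\bigl[\tilde{\partial}_{\bm \alpha}\circ P - P\circ\partial_{\bm \alpha}\bigr]\Phi(J\tilde{h})}_{B}.
\end{equation*}
Term $A$ isolates the ``algebra discrepancy'' (replacing the periodic resolvent by the torus resolvent) and term $B$ isolates the ``derivative discrepancy'' (replacing the coordinate $n_j$ by $\mathfrak{x}$). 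Both are controlled by reducing to pointwise bounds and then applying the sum-upper-bound $\|\tilde{f}\|\leq \sum_{\bm p}\sup_{\omega}|\tilde{f}(\omega,{\bm p})|$, which holds on $C^*(\Omega_{\cal N}\times \mathbb{T}_d,{\bm F})$ by the same reasoning that produces Eq.~\ref{NormA}.

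For $B$, a direct calculation gives
\begin{equation*}
B(\omega,{\bm p}) = i^{|{\bm \alpha}|}\left[\prod_{j=1}^{d}\mathfrak{x}(p_j)^{\alpha_j} - \prod_{j=1}^{d}(n_{p_j})^{\alpha_j}\right]\Phi(J\tilde{h})(\omega,{\bm n}_{\bm p}).
\end{equation*}
By hypothesis on $\mathfrak{x}$, the bracket vanishes whenever $\max_j|n_{p_j}|<{\cal N}/\sqrt{2}$, and is otherwise bounded in modulus by $2{\cal N}^{|{\bm \alpha}|}$; its support in ${\bm p}$ therefore lies in $\{{\bm p}: |{\bm n}_{\bm p}|\geq {\cal N}/\sqrt{2}\}$. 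Invoking the pointwise exponential decay of $\Phi(J\tilde{h})(\omega,{\bm n})$ supplied by Corollary~\ref{EX2} and using a standard shell estimate for $\sum_{|{\bm n}|\geq M} e^{-\xi|{\bm n}|}$, this yields $\|B\|=O({\cal N}^{|{\bm \alpha}|+d-1}\bar{\Phi}e^{-\xi{\cal N}/\sqrt{2}})$, a subleading contribution.

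For $A$, the key observation is that although $\tilde{\partial}_{\bm \alpha}$ is not a $*$-derivation (Leibniz fails since $\mathfrak{x}$ is not additive under the torus rotations $\mathfrak{r}$), it acts pointwise as multiplication by $i^{|{\bm \alpha}|}\prod_j\mathfrak{x}(p_j)^{\alpha_j}$, a function of modulus at most ${\cal N}^{|{\bm \alpha}|}$. Combined with the trivial bound $|\tilde{f}(\omega,{\bm p})|=|\langle{\bm o}|\tilde{\pi}_\omega\tilde{f}|{\bm p}\rangle|\leq \|\tilde{f}\|$ and summing over ${\bm p}\in\mathbb{T}_d$ (of which there are $|\mathbb{T}_d|=(2{\cal N}+1)^d$), one gets
\begin{equation*}
\|A\|\leq {\cal N}^{|{\bm \alpha}|}\,|\mathbb{T}_d|\cdot\|\Phi(\tilde{h})-P\Phi(J\tilde{h})\|.
\end{equation*}
Plugging in the estimate of Theorem~\ref{Comp2} produces $\|A\|=O({\cal N}^{|{\bm \alpha}|+d}\bar{\Phi}e^{-\xi{\cal N}/\sqrt{2}})$, which is the dominant contribution and already matches the target polynomial power ${\cal N}^{|{\bm \alpha}|+d}$. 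Collecting the explicit constants from Theorem~\ref{Comp2}, Corollary~\ref{EX2}, and the lattice sum $A_d$ from Eq.~\ref{TheA} then defines the factor $\mathfrak{B}_2(\xi,{\bm \alpha})$.

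The main obstacle is structural: since $\tilde{\partial}$ fails the Leibniz rule, one cannot differentiate the resolvent identity ${\bm 1}+\tilde{r}=(\tilde{h}-z{\bm 1})* P(J\tilde{h}-z{\bm 1})^{-1}$ from Lemma~\ref{CT1} via the familiar trick $\tilde{\partial}(f^{-1})=-f^{-1}*(\tilde{\partial}f)*f^{-1}$, so the sharper resolvent-level control employed in the proof of Theorem~\ref{Comp2} is partly unavailable. The workaround is the coarse chain ``operator norm $\to$ pointwise supremum $\to$ sum over ${\bm p}$'' in the treatment of $A$, which costs an extra factor ${\cal N}^{d}$ from summing over $|\mathbb{T}_d|$ points; this polynomial prefactor is harmless because it is exponentially dominated by $e^{-\xi{\cal N}/\sqrt{2}}$ for large ${\cal N}$.
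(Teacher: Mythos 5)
Your proposal is correct and follows essentially the same route as the paper: the two terms $A$ and $B$ are, pointwise, exactly the paper's splitting $i^{|{\bm \alpha}|}\mathfrak{x}_{\bm \alpha}({\bm p})[\Phi(\tilde{h})-P\Phi(J\tilde{h})](\omega,{\bm p})+i^{|{\bm \alpha}|}[\mathfrak{x}_{\bm \alpha}({\bm p})-n_{\bm \alpha}({\bm p})]\Phi(J\tilde{h})(\omega,{\bm n}_{\bm p})$, with $A$ controlled via Theorem~\ref{Comp2} and the bound $|\mathfrak{x}(p)|\leq |n_p|$, $B$ via Corollary~\ref{EX2} and the vanishing of $\mathfrak{x}_{\bm \alpha}-n_{\bm \alpha}$ for $|{\bm n}_{\bm p}|<{\cal N}/\sqrt{2}$, and the final norm obtained by the same coarse passage from pointwise suprema to the operator norm at the cost of the factor $(2{\cal N}+1)^d$. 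The paper's closing step uses $\|\cdot\|\leq (2{\cal N}+1)^d\sup_{\omega,{\bm p}}|\cdot|$ rather than your sum over ${\bm p}$, but this is the same estimate and yields the identical ${\cal N}^{|{\bm \alpha}|+d}e^{-\xi{\cal N}/\sqrt{2}}$ bound.
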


\noindent {\bf Remark.} It is very likely that the above estimates can be improved and the power law factor can be entirely eliminated.

\begin{proof} We will use the notations:
\begin{equation}
\mathfrak{x}_{\bm \alpha}({\bm p}) :=\mathfrak{x}(p_1)^{\alpha_1} \ldots \mathfrak{x}(p_d)^{\alpha_d}, \ \mathrm{and} \ n_{\bm \alpha}({\bm p}):=n_{p_1}^{\alpha_1} \ldots n_{p_d}^{\alpha_d}.
\end{equation}
We have:
\begin{equation}
\begin{split}
&[\tilde{\partial}_{\bm \alpha} \Phi(\tilde{h}) - P \partial_{\bm \alpha} \Phi(J\tilde{h})](\omega,{\bm p}) \medskip \\
&=i^{|{\bm \alpha}|}\mathfrak{x}_{\bm \alpha}({\bm p}) \Phi(\tilde{h})(\omega,{\bm p})-i^{|{\bm \alpha}|}n_{\bm \alpha}({\bm p}) (P\Phi(J\tilde{h}))(\omega,{\bm p}) \medskip \\
&=i^{|{\bm \alpha}|}\mathfrak{x}_{\bm \alpha}({\bm p}) [\Phi(\tilde{h})-P\Phi(J\tilde{h})](\omega,{\bm p})
+i^{|{\bm \alpha}|} [\mathfrak{x}_{\bm \alpha}({\bm p})- n_{\bm \alpha}({\bm p})] \Phi(J\tilde{h})(\omega,{\bm n}_{\bm p}).
\end{split}
\end{equation}
From Theorem~\ref{Comp2} and the conditions assumed on function $\mathfrak{x}$:
\begin{equation}
\left | \mathfrak{x}_{\bm \alpha}({\bm p}) [\Phi(\tilde{h})-P\Phi(J\tilde{h})](\omega,{\bm p}) \right |
\leq \frac{4d^2 A_{d-1}(\xi/\sqrt{2})\bar{\Phi}|n_{\bm \alpha}({\bm p})|e^{-\frac{\xi}{\sqrt{2}}{\cal N}}}{\kappa\left (\kappa-2d\sinh \xi \right )}.
\end{equation}
Furthermore, from Corollary~\ref{EX2}:
\begin{equation}
\left | [\mathfrak{x}_{\bm \alpha}({\bm p})- n_{\bm \alpha}({\bm p})] \Phi(J\tilde{h})(\omega,{\bm n}_{\bm p}) \right | 
\leq \frac{  \chi_{\Lambda^c_{{\cal N}/\sqrt{2}}}({\bm n}_{\bm p}) \bar{\Phi} \ |n_{\bm \alpha}({\bm p})| \ e^{-\xi |{\bm n}_{\bm p}|}}{\kappa- 2d\sinh \xi  }.
\end{equation}
These estimates and the following coarse upper bound on the operator norm:
\begin{equation}
\|\tilde{\partial}_{\bm \alpha} \Phi(\tilde{h}) - P \partial_{\bm \alpha} \Phi(J\tilde{h})\| 
\leq (2{\cal N}+1)^d\sup\limits_{{\omega},{\bm p}} \left | [\tilde{\partial}_{\bm \alpha} \Phi(\tilde{h}) - P \partial_{\bm \alpha} \Phi(J\tilde{h})](\omega,{\bm p})\right | ,
\end{equation}
allow us to conclude that, in the asymptotic limit of large ${\cal N}$'s: 
\begin{equation}
\|\tilde{\partial}_{\bm \alpha} \Phi(\tilde{h}) - P \partial_{\bm \alpha} \Phi(J\tilde{h})\|  
\leq  2^d \left [4d^2 A_{d-1}(\xi/\sqrt{2})/\kappa + 1 \right]\frac{\bar{\Phi}{\cal N}^{|{\bm \alpha}|+d}  \ e^{-\frac{\xi}{\sqrt{2}} {\cal N}}}{\kappa- 2d\sinh \xi},
\end{equation}
and the statement follows. The parameter $\mathfrak{B}_2(\xi,{\bm \alpha})$ can be identified from the last inequality.\qed 
\end{proof}

It remains to determine how the approximate differential calculus works in the operator representation and to explicitly construct and test a practical function $\mathfrak{x}$.

\begin{proposition}\label{AppDiffCalc} Let ${\cal O}$ be the discrete unit circle in the complex plane defined by the solutions of the equation $z^{2{\cal N}+1}=1$, and let 
\begin{equation}
\mathfrak{x}(p) = \sum_{\lambda \in {\cal O}} b_\lambda \lambda^{n_p}
\end{equation}
be the discrete Fourier decomposition of the function $\mathfrak{x}$. Then:
\begin{equation}
\tilde{\pi}_\omega (\tilde{\partial}_j f) = i\sum_{\lambda\in {\cal O}} b_\lambda \lambda^{-\tilde{x}_j}(\tilde{\pi}_\omega f) \lambda^{\tilde{x}_j},
\end{equation}
where $\tilde{x}_j$ is the operator $(\tilde{x}_j \phi)({\bm p})=n_{p_j} \phi({\bm p})$ on $\ell^2(\mathbb{T}_d)$.
\end{proposition}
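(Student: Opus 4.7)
The plan is to unpack both sides starting from the defining formula~\ref{TRep} and perform a direct computation that exploits only the Fourier decomposition of $\mathfrak{x}$ together with the fact that every $\lambda\in{\cal O}$ satisfies $\lambda^{2{\cal N}+1}=1$. First I substitute the definition of the approximate derivation,
\begin{equation}
(\tilde{\partial}_j \tilde{f})(\omega,{\bm p})=i\mathfrak{x}(p_j)\tilde{f}(\omega,{\bm p}),
\end{equation}
into Eq.~\ref{TRep}, obtaining
\begin{equation}
\bigl((\tilde{\pi}_\omega \tilde{\partial}_j \tilde{f})\phi\bigr)({\bm p}) = i\sum_{{\bm q}\in\mathbb{T}_d}\mathfrak{x}\bigl((\mathfrak{r}_{\bm p}^{-1}{\bm q})_j\bigr)\,\tilde{f}(\mathfrak{r}_{\bm p}^{-1}\omega,\mathfrak{r}_{\bm p}^{-1}{\bm q})\,e^{i\pi({\bm n}_{\bm q}\cdot{\bm F}\cdot{\bm n}_{\bm p})}\phi({\bm q}).
\end{equation}

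The second step is to insert the Fourier expansion $\mathfrak{x}(r)=\sum_\lambda b_\lambda \lambda^{n_r}$. The key observation is that the coordinate of $\mathfrak{r}_{\bm p}^{-1}{\bm q}$ satisfies $n_{(\mathfrak{r}_{\bm p}^{-1}{\bm q})_j}\equiv n_{q_j}-n_{p_j}\pmod{2{\cal N}+1}$, and since $\lambda^{2{\cal N}+1}=1$ this modular ambiguity is invisible to the exponential, so
\begin{equation}
\lambda^{n_{(\mathfrak{r}_{\bm p}^{-1}{\bm q})_j}}=\lambda^{-n_{p_j}}\lambda^{n_{q_j}}.
\end{equation}
Thus $\mathfrak{x}\bigl((\mathfrak{r}_{\bm p}^{-1}{\bm q})_j\bigr)=\sum_\lambda b_\lambda \lambda^{-n_{p_j}}\lambda^{n_{q_j}}$, and the factor $\lambda^{-n_{p_j}}$ can be pulled out of the sum over ${\bm q}$.

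Finally I recognize the remaining inner sum. With $\tilde{x}_j$ the multiplication operator on $\ell^2(\mathbb{T}_d)$ defined by $(\tilde{x}_j\phi)({\bm q})=n_{q_j}\phi({\bm q})$, the operator $\lambda^{\tilde{x}_j}$ acts as $(\lambda^{\tilde{x}_j}\phi)({\bm q})=\lambda^{n_{q_j}}\phi({\bm q})$, so
\begin{equation}
\sum_{{\bm q}}\lambda^{n_{q_j}}\tilde{f}(\mathfrak{r}_{\bm p}^{-1}\omega,\mathfrak{r}_{\bm p}^{-1}{\bm q})\,e^{i\pi({\bm n}_{\bm q}\cdot{\bm F}\cdot{\bm n}_{\bm p})}\phi({\bm q})=\bigl((\tilde{\pi}_\omega \tilde{f})\lambda^{\tilde{x}_j}\phi\bigr)({\bm p}),
\end{equation}
again by Eq.~\ref{TRep}. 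The prefactor $\lambda^{-n_{p_j}}$ is precisely the value at ${\bm p}$ of $\lambda^{-\tilde{x}_j}$ applied to the above function. Collecting these identifications delivers the claimed formula.

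The only subtle point, and the single step that requires care, is the reduction modulo $2{\cal N}+1$ of $n_{(\mathfrak{r}_{\bm p}^{-1}{\bm q})_j}$; everywhere else the argument is just a rearrangement of sums. Once one notes that the $b_\lambda$'s are attached to genuine $(2{\cal N}+1)$-th roots of unity, the modular reduction becomes harmless and the identification of the two halves of $\lambda^{n_{q_j}-n_{p_j}}$ with the outer and inner multiplication operators $\lambda^{-\tilde{x}_j}$ and $\lambda^{\tilde{x}_j}$ is immediate.
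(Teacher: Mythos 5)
Your proof is correct and follows essentially the same route as the paper: substitute the definition of $\tilde{\partial}_j$ into Eq.~\ref{TRep}, use the fact that ${\bm n}_{\mathfrak{r}_{\bm p}^{-1}{\bm q}}$ differs from ${\bm n}_{\bm q}-{\bm n}_{\bm p}$ only by multiples of $2{\cal N}+1$ (which the $(2{\cal N}+1)$-th roots of unity cannot see), and then identify the two halves of $\lambda^{n_{q_j}-n_{p_j}}$ with the conjugating operators $\lambda^{\mp\tilde{x}_j}$. The subtle point you flag is exactly the one the paper's proof hinges on.
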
 

\begin{proof} Using the explicit expression of the operator representation given in Eq.~\ref{TRep}:
\begin{equation}
\begin{split}
[(\tilde{\pi}_\omega (\tilde{\partial}_j f))\phi]({\bm p})&=\sum_{{\bm q}\in \mathbb{T}_d} (\tilde{\partial}_j \tilde{f}) (\mathfrak{r}_{\bm p}^{-1}\omega,\mathfrak{r}_{\bm p}^{-1}{\bm q}) e^{i \pi ({\bm n}_{\bm q}\cdot {\bm F} \cdot {\bm n}_{\bm p})}\phi({\bm q}) \medskip \\
&=i\sum_{{\bm q}\in \mathbb{T}_d} \mathfrak{x}(({\mathfrak{r}_{\bm p}^{-1}{\bm q}})_j) \tilde{f}  (\mathfrak{r}_{\bm p}^{-1}\omega,\mathfrak{r}_{\bm p}^{-1}{\bm q}) e^{i \pi ({\bm n}_{\bm q}\cdot {\bm F} \cdot {\bm n}_{\bm p})}\phi({\bm q}).
\end{split}
\end{equation}
A key observation is that ${\bm n}_{{\mathfrak{r}_{\bm p}^{-1}{\bm q}}}={\bm n}_{\bm q}-{\bm n}_{\bm p}+{\bm \sigma}$ where the components of ${\bm \sigma}$ are either 0 or $\pm (2{\cal N}+1)$. As such, for all $\lambda \in {\cal O}$ we can write:
\begin{equation}
\lambda^{n_{\left(\mathfrak{r}_{\bm p}^{-1}{\bm q}\right)_j}}=\lambda^{n_{q_j}-n_{p_j}}.
\end{equation}
Then we can continue:
\begin{equation}
\begin{split}
[(\tilde{\pi}_\omega (\tilde{\partial}_j f))\phi]({\bm p})&=i\sum_{{\bm q}\in \mathbb{T}_d} \sum_{\lambda \in {\cal O}} b_\lambda \lambda^{n_{q_j}-n_{p_j}} \tilde{f}  (\mathfrak{r}_{\bm p}^{-1}\omega,\mathfrak{r}_{\bm p}^{-1}{\bm q}) e^{i \pi ({\bm n}_{\bm q}\cdot {\bm F} \cdot {\bm n}_{\bm p})}\phi({\bm q}) \medskip \\
&=i\sum_{\lambda \in {\cal O}} b_\lambda \lambda^{-n_{p_j}} \sum_{{\bm q}\in \mathbb{T}_d} \tilde{f}  (\mathfrak{r}_{\bm p}^{-1}\omega,\mathfrak{r}_{\bm p}^{-1}{\bm q}) e^{i \pi ({\bm n}_{\bm q}\cdot {\bm F} \cdot {\bm n}_{\bm p})} \lambda^{n_{q_j}} \phi({\bm q}) \medskip \\
&= i\sum_{\lambda \in {\cal O}} b_\lambda \lambda^{-n_{p_j}} \sum_{{\bm q}\in \mathbb{T}_d} \tilde{f}  (\mathfrak{r}_{\bm p}^{-1}\omega,\mathfrak{r}_{\bm p}^{-1}{\bm q}) e^{i \pi ({\bm n}_{\bm q}\cdot {\bm F} \cdot {\bm n}_{\bm p})} (\lambda^{\tilde{x}_j} \phi)({\bm q}) \medskip \\
&= i\sum_{\lambda \in {\cal O}} b_\lambda \lambda^{-n_{p_j}} [(\tilde{\pi}_\omega \tilde{f}) (\lambda^{\tilde{x}_j} \phi)]({\bm p}) \medskip \\
&= i\sum_{\lambda \in {\cal O}} b_\lambda  [\lambda^{-\tilde{x}_j}(\tilde{\pi}_\omega \tilde{f}) (\lambda^{\tilde{x}_j} \phi)]({\bm p}). \qed
\end{split}
\end{equation}
\end{proof}

To construct a practical solution, it is convenient to perform a rescaling of the coordinate system on ${\cal S}^1_D$ (hence on $\mathbb{T}_d$):
\begin{equation}
n_p \rightarrow x_p:=n_p/(2{\cal N}+1).
\end{equation}
Note that the new variable $x_p$ always take values in the interval $[-\frac{1}{2},\frac{1}{2}]$. With this rescaling,
\begin{equation}
\lambda^{n_p} \rightarrow \lambda^{(2{\cal N}+1)x_p}=\tilde{\lambda}^{x_p},
\end{equation}
where $\tilde{\lambda}$ takes the values $e^{2k\pi i}$, with $k=-{\cal N},\ldots,{\cal N}$. The function $\mathfrak{x}$ needs to be rescaled too:
\begin{equation}
\mathfrak{x} \rightarrow \tilde{\mathfrak{x}}(p)=\mathfrak{x}(p)/(2{\cal N}+1).
\end{equation}
The Fourier series of the function $\tilde{\mathfrak{x}}$ takes the form:
\begin{equation}
\tilde{\mathfrak{x}}(p)=\sum_{k=-{\cal N}}^{\cal N} b_k \tilde{\lambda}_k^{x_p},
\end{equation}
and since $\mathfrak{x}(p)$ is real and can be chosen odd relative to the reflection relative to the origin, the Fourier series reduces to:
\begin{equation}
\tilde{\mathfrak{x}}(p)= \frac{1}{i} \sum_{k=1}^{\cal N} b_k (\tilde{\lambda}_k^{x_p}-\tilde{\lambda}_k^{-x_p}), \ \mathrm{Im}(b_k) = 0.
\end{equation}
A key observation about the above rescaling is that the $\tilde{\lambda}_k$'s do not depend on ${\cal N}$ anymore (only the upper and lower limits of $k$ depend on ${\cal N}$). 
Furthermore, since the elements obtained from the analytic functional calculus are exponentially localized near the origin, the rescaling of the coordinate system concentrates these elements more and more towards the origin, as ${\cal N}$ is increased. Thus, a good approximation for the differential calculus is obtained by requiring $\tilde{\mathfrak{x}}$ to generate a good representation of the function $ix$ near the origin. In the same time, we need to make sure that the function closes continuously on the discrete circle. These two requirements are fulfilled by a Fourier series $i\sum_{k=1}^{\cal Q} b_k (\tilde{\lambda}_k^{x}-\tilde{\lambda}_k^{-x})$, where the coefficients $b_k$ are chosen such that the difference:
\begin{equation}
x-\frac{1}{i}\sum_{k=1}^{\cal Q} b_k (\tilde{\lambda}_k^{x}-\tilde{\lambda}_k^{-x})\sim o(x^{{\cal Q}+1}),
\end{equation}
asymptotically for $x$ near the origin. Here, ${\cal Q}$ is taken to be a fraction of ${\cal N}$. This condition is fulfilled if $b_k$'s are solutions to the following system of linear equations:
\begin{equation}\label{BLinSys}
\sum_{k=1}^{\cal Q}k^{2j-1}b_k = \frac{1}{4\pi} \delta_{j,1}, \ j\in \{1,\ldots,{\cal Q}\}.
\end{equation}
The matrix on the left side of the above linear system of equations is a Vandermonde matrix and, as such, its explicit inverse is known \cite{MaconAMM1958ty} and the linear system of equations can be easily solved.

To summarize, we propose:
\begin{equation}
\mathfrak{x}(p)= i (2{\cal N}+1)\sum_{k=1}^{\cal Q} b_k (\lambda_k^{n_p}-\lambda_k^{-n_p}),
\end{equation}  
with $b_k$'s satisfying the linear system of equation from Eq.~\ref{BLinSys}. With this choice:
\begin{equation}
\mathfrak{x}(p)-n_p \sim o(n_p/{\cal N})^Q,
\end{equation}
in the asymptotic limit $n_p <<{\cal N}$, which is relevant for us. Furthermore $\mathfrak{x}(p)$ closes smoothly on the discrete circle. Fig.~\ref{ApproxDiff} illustrates the behavior of $\mathfrak{x}(p)$ as ${\cal Q}$ is increased from 2 to 10, for a typical system size (${\cal N}=60$) used in our computer simulations. As one can see, with a good approximation, $\mathfrak{x}(p)$ satisfies all the requirements stated in Proposition~\ref{AppDiffCalc} when ${\cal Q}$ is given a reasonable value. In our numerical simulations we take ${\cal Q}=10$.

\begin{figure}
\center
  \includegraphics[width=9cm]{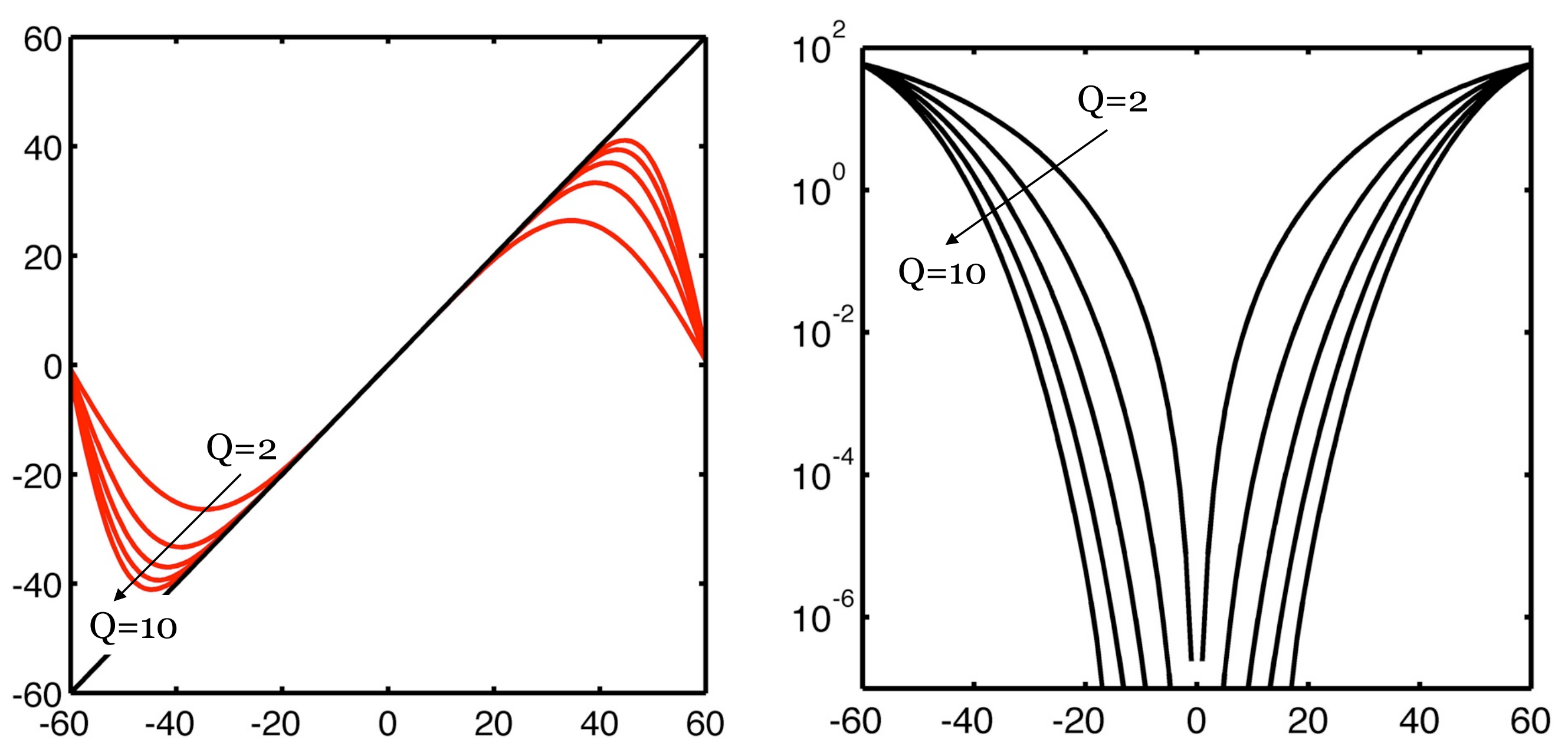}\\
  \caption{Left: Representation of the function $\mathfrak{x}(p)= i (2{\cal N}+1)\sum_{k=1}^{\cal Q} b_k (\lambda_k^{n_p}-\lambda_k^{-n_p})$, when ${\cal Q}$ is varied from 2 to 10 in increments of 2. Right: Plot of the difference $n_p-\mathfrak{x}(p)$ for the same values of ${\cal Q}$.}
 \label{ApproxDiff}
\end{figure}

\subsection{\underline{Comparative estimates for the correlation functions}}

\begin{lemma} Let $h$ be the Hamiltonian in $C^*(\Omega_{\mathrm{per}}^{\cal N} \times \mathbb{Z}^d,{\bm F})$ corresponding to the nearest-neighbor hopping Hamiltonian of Eq.~\ref{MainModel},  and $\Phi_j$ ($j=1, \ldots, N$) be a set of analytic functions  in the neighborhood of $\sigma(h)$ defined by $\mathrm{dist}(z,\sigma(h))\leq \kappa$. Let $f_j$ ($j=1,\ldots,N$) be  elements of  the $C^*$-algebra $C^*(\Omega_{\mathrm{per}}^{\cal N} \times \mathbb{Z}^d,{\bm F})$ of the form:
\begin{equation}
f_j = \partial_{{\bm \alpha}_j}\Phi_j(h), \ j=1,\ldots,N.
\end{equation} 
Then for any $0<\xi< \sinh^{-1}(\kappa/2d)$, there exists a finite and fully identifiable parameter $\mathfrak{B}_3(\xi,\{{\bm \alpha}\})$ such that:
\begin{equation}
\left |{\cal T}_{\mathrm{per}}\left(\prod_{j=1}^N f_j\right)-{\cal T}_{\mathbb{T}}\left(\prod_{j=1}^N  P f_j\right) \right | \leq  \mathfrak{B}_3(\xi,\{{\bm \alpha}\}) \left ( \prod_{j=1}^N \bar{\Phi}_j \right )e^{-\frac{\xi}{2}{\cal N}}.
\end{equation}
\end{lemma}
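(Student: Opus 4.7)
The plan is to reduce the correlation-function discrepancy to a torus-norm estimate on the failure of the restriction map $P$ to be a morphism, and then to bound that failure via the exponential localization results of Section~5. First, the analytic calculus gives
\[
f_j \;=\; \frac{i}{2\pi}\oint_{{\cal C}_\kappa} \Phi_j(z)\,\partial_{{\bm \alpha}_j}(h-z{\bm 1})^{-1}\,dz,
\]
so by linearity I can pull the $N$ contour integrals outside the traces and reduce the problem to a bound, uniform in $z_j\in {\cal C}_\kappa$, with $f_j$ replaced by $\partial_{{\bm \alpha}_j}(h-z_j{\bm 1})^{-1}$. The product $\prod_j\bar\Phi_j$ then appears automatically from the contour estimate at the end.

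The second step is the identity ${\cal T}_{\mathbb{T}}\circ P={\cal T}_{\mathrm{per}}$, which is immediate from the definitions~\ref{TraceP} and \ref{TorusTrace} after identifying $\Omega_{\mathrm{per}}^{\cal N}$ with $\Omega_{\cal N}$: both sides reduce to $\int dP(\omega)\,g(\omega,{\bm 0})$. Combined with $|{\cal T}_{\mathbb{T}}(g)|\leq \|g\|$, this reduces the task to estimating the torus norm of the telescoped ``morphism defect''
\[
\prod_{j=1}^N Pf_j \;-\; P\!\prod_{j=1}^N f_j \;=\; \sum_{k=1}^N (Pf_1)*\cdots*(Pf_{k-1}) * \Delta_k,\qquad \Delta_k \;=\; Pf_k * PG_k \;-\; P(f_k * G_k),
\]
with $G_k = f_{k+1}*\cdots*f_N$. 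Each $\Delta_k$ generalizes the remainder in Proposition~\ref{Bhu} from the nearest-neighbor setting to an arbitrary pair of exponentially localized elements.

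To bound $\|\Delta_k\|$, I would expand $(Pa*Pb)(\omega,{\bm p}) - P(a*b)(\omega,{\bm p})$ from the two multiplication laws and split the result into (i) a tail of the periodic sum over ${\bm m}\notin \Lambda_{\cal N}$, and (ii) a wrap-around correction from those ${\bm q}\in \mathbb{T}_d$ for which ${\bm n}_{\bm p}-{\bm n}_{\bm q}$ exits $\Lambda_{\cal N}$, so that ${\bm n}_{\mathfrak{r}_{\bm q}^{-1}{\bm p}}$ differs from ${\bm n}_{\bm p}-{\bm n}_{\bm q}$ by $\pm(2{\cal N}+1)$ in one or more coordinates. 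Using Corollary~\ref{EX2} to bound both $|a(\omega,{\bm n})|$ and $|b(\omega,{\bm n})|$ by $Ce^{-\xi|{\bm n}|}$, part (i) is immediately $O(e^{-\xi{\cal N}})$. For part (ii), a direct calculation in the wrapping coordinate $j$ gives $|({\bm n}_{\bm q})_j|+|({\bm n}_{\mathfrak{r}_{\bm q}^{-1}{\bm p}})_j|=(2{\cal N}+1)-({\bm n}_{\bm p})_j$, hence
\[
|{\bm n}_{\bm q}|+|{\bm n}_{\mathfrak{r}_{\bm q}^{-1}{\bm p}}| \;\geq\; (2{\cal N}+1)-|{\bm n}_{\bm p}| \;\geq\; {\cal N}+1,
\]
which forces one of the two arguments to be large. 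Splitting the decay rate as $\xi=\xi/2+\xi/2$ between the factors $a$ and $b$ then yields an $e^{-\xi{\cal N}/2}$ bound, which is the rate stated in the lemma. Combined with the uniform estimate $\|Pf_j\|\leq \sum_{{\bm n}\in \Lambda_{\cal N}}\sup_\omega|f_j(\omega,{\bm n})|<\infty$ (the torus analogue of Eq.~\ref{NormA}) to control the prefactors in each telescoping term, summation over $k$ and performance of the $z$-contour integrals give the announced inequality, with $\mathfrak{B}_3(\xi,\{{\bm \alpha}\})$ absorbing all ${\cal N}$-independent combinatorial and geometric constants.

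The main obstacle is the wrap-around estimate for $\Delta_k$: unlike the truncation tail, where $a$ itself supplies the decay through $|{\bm m}|>{\cal N}$, in the wrap regime ${\bm n}_{\mathfrak{r}_{\bm q}^{-1}{\bm p}}$ can be arbitrarily close to the origin, so $b(\cdot,{\bm n}_{\mathfrak{r}_{\bm q}^{-1}{\bm p}})$ alone need not be small; the decay must be read off the geometric constraint above. The exponent $\xi/2$ in the final bound (rather than $\xi$ or $\xi/\sqrt{2}$ as in Theorems~\ref{ThR1} and \ref{Comp2}) is the unavoidable price of sharing the rate between the two factors in every one of the $N-1$ products on the torus.
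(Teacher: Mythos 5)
Your proposal follows essentially the same strategy as the paper's proof: reduce via ${\cal T}_{\mathbb T}\circ P = {\cal T}_{\mathrm{per}}$, telescope the morphism defect $\prod Pf_j - P\prod f_j$ into a sum of single-factor remainders, and control each remainder by exponential localization (Corollary~\ref{EX2}) plus a budget-splitting argument to extract $e^{-\xi\mathcal N/2}$. The paper telescopes on the left — it sets $r_k = P\prod_{j\le k}f_j - (P\prod_{j<k}f_j)*Pf_k$ and writes the defect as $\sum_k r_k * \prod_{j>k}Pf_j$ — while you telescope on the right with $\Delta_k = Pf_k*PG_k - P(f_k*G_k)$ and prefactor $A_{k-1}$; the two decompositions are algebraically equivalent and both work. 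The paper also does not pull the contour integrals out first; it applies Corollary~\ref{EX2} directly to $f_j = \partial_{\bm\alpha_j}\Phi_j(h)$, which is a slight shortcut but not a substantive difference.

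There is a small error in your inequality chain for the wrap-around term that you should fix. Starting from the coordinate identity $|(\bm n_{\bm q})_j| + |(\bm n_{\mathfrak r_{\bm q}^{-1}{\bm p}})_j| = (2\mathcal N+1) - |(\bm n_{\bm p})_j|$ (with an absolute value on $(\bm n_{\bm p})_j$, to cover both wrap directions), you wrote the chain
\[
|\bm n_{\bm q}|+|\bm n_{\mathfrak r_{\bm q}^{-1}{\bm p}}| \;\ge\; (2\mathcal N+1)-|\bm n_{\bm p}| \;\ge\; \mathcal N+1,
\]
and the last step is false for $d\ge 2$: $|\bm n_{\bm p}|$ is the full norm of a point of $\Lambda_{\mathcal N}$, which can be as large as $\sqrt{d}\,\mathcal N$, so $(2\mathcal N+1)-|\bm n_{\bm p}|$ can be negative. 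The intermediate detour through $|\bm n_{\bm p}|$ is both wrong and unnecessary; the bound you want follows directly from the coordinate identity using only $|(\bm n_{\bm p})_j|\le \mathcal N$:
\[
|\bm n_{\bm q}|+|\bm n_{\mathfrak r_{\bm q}^{-1}{\bm p}}| \;\ge\; |(\bm n_{\bm q})_j|+|(\bm n_{\mathfrak r_{\bm q}^{-1}{\bm p}})_j| \;=\; (2\mathcal N+1)-|(\bm n_{\bm p})_j| \;\ge\; \mathcal N+1.
\]
With that replacement the $\xi/2$-splitting goes through as you describe, and after summing over $\bm p$ and the wrap set (using the torus analogue of Eq.~\ref{NormA}, which is legitimate since $\mathfrak r_{\bm q}$ is a bijection of $\mathbb T_d$), you recover the rate $e^{-\xi\mathcal N/2}$ with $\mathcal N$-independent constants, matching the paper's bound up to the exact form of the $A_d(\cdot)$ factors.
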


\begin{proof} In general:
\begin{equation}
{\cal T}_{\mathbb{T}}(P f)={\cal T}_{\mathrm{per}}(f),
\end{equation}
so we can write:
\begin{equation}
{\cal T}_{\mathrm{per}}\left (\prod_{j=1}^N f_j \right )-{\cal T}_{\mathbb{T}}\left (\prod_{j=1}^N  P f_j\right) = {\cal T}_{\mathbb{T}}\left(P\prod_{j=1}^N f_j -\prod_{j=1}^N  P f_j \right ).
\end{equation}
We see that the problems has been reduced to comparing $P\prod_{j=1}^N f_j$ and $\prod_{j=1}^N  P f_j$. We proceed as follows. Let $r_k$ denote the difference:
\begin{equation}
r_k=P\prod_{j=1}^k f_j -\left (P\prod_{j=1}^{k-1} f_j\right)*Pf_k.
\end{equation}
Then:
\begin{equation}\label{12}
P\prod_{j=1}^N f_j -\prod_{j=1}^N Pf_j =\sum_{k=1}^N r_k*\left(\prod_{j>k} Pf_j\right).
\end{equation}
We show in the following that each $r_k$'s are exponentially small. From definitions:
\begin{equation}
r_k(\omega,{\bm p})=\sum_{{\bm m}\in \mathbb{Z}^d} \left ( \prod_{j=1}^{k-1} f_j \right )(\omega,{\bm m}) f_k(\mathfrak{t}_{\bm m}^{-1}\omega,{\bm n}_{\bm p}-{\bm m}) e^{i \pi ({\bm n}_{\bm p}\cdot {\bm F} \cdot {\bm m})} 
- \sum_{{\bm q}\in \mathbb{T}_d} \left ( \prod_{j=1}^{k-1} f_j \right )(\omega,{\bm n}_{\bm q}) f_k(\mathfrak{t}_{{\bm n}_{\bm q}}^{-1}\omega,{\bm n}_{\mathfrak{r}_{\bm q}^{-1}{\bm p}}) e^{i \pi ({\bm n}_{\bm p}\cdot {\bm F} \cdot {\bm n}_{\bm q})}.
\end{equation}
Since ${\bm n}_{\mathfrak{r}_{\bm q}^{-1}{\bm p}}={\bm n}_{\bm p}-{\bm n}_{\bm q}$ if ${\bm n}_{\bm p}-{\bm n}_{\bm q} \in \Lambda_{\cal N}$, many terms cancel in the above difference, and all that remains is:
\begin{equation}
\begin{split}
r_k(\omega,{\bm p})&=\sum\limits_{{\bm m} \in \mathbb{Z}^d,{\bm n}_{\bm p}-{\bm m}\in \Lambda_{\cal N}^c} \left ( \prod_{j=1}^{k-1} f_j \right )(\omega,{\bm m}) f_k(\mathfrak{t}_{\bm m}^{-1}\omega,{\bm n}_{\bm p}-{\bm m}) e^{i \pi ({\bm n}_{\bm p}\cdot {\bm F} \cdot {\bm m})} \medskip \\
&- \sum\limits_{{\bm q}\in \mathbb{T}_d,{\bm n}_{\bm p}-{\bm n}_{\bm q}\in \Lambda_{\cal N}^c} \left ( \prod_{j=1}^{k-1} f_j \right )(\omega,{\bm n}_{\bm q}) f_k(\mathfrak{t}_{{\bm n}_{\bm q}}^{-1}\omega,{\bm n}_{\mathfrak{r}_{\bm q}^{-1}{\bm p}}) e^{i \pi ({\bm n}_{\bm p}\cdot {\bm F} \cdot {\bm n}_{\bm q})}.
\end{split}
\end{equation}
We now make use of the following bounds:
\begin{equation}
\left | \left (\prod_{j=1}^{k-1} f_j \right )(\omega,{\bm n}) \right |\leq 
e^{-\xi |{\bm n}|} \prod_{j=1}^{k-1} \frac{ C_{h,{\bm \alpha_j}}(\xi) \bar{\Phi}_j}{ ( k- 2d\sinh \xi  )^{|{\bm \alpha_j}|+1}} 
\end{equation}
and
\begin{equation}
| f_k(\omega,{\bm n})|\leq \frac{ C_{h,{\bm \alpha}_k}(\xi) \bar{\Phi}_k}{ ( k- 2d\sinh \xi )^{|{\bm \alpha}_k|+1}} e^{-\xi |{\bm n}|},
\end{equation}
to write:
\begin{equation}
|r_k(\omega,{\bm p})|\leq \frac{\prod_{j=1}^k C_{h,{\bm \alpha_j}}(\xi) \bar{\Phi}_j}{ ( k- 2d\sinh \xi )^{\sum_{j=1}^k(|{\bm \alpha_j}|+1)}} 
\times \left [ \sum\limits_{{\bm m} \in \mathbb{Z}^d,{\bm n}_{\bm p}-{\bm m}\in \Lambda_{\cal N}^c}e^{-\xi (|{\bm m}|+|{\bm n}_{\bm p}-{\bm m}|} + \sum\limits_{{\bm q}\in \mathbb{T}_d,{\bm n}_{\bm p}-{\bm n}_{\bm q}\in \Lambda_{\cal N}^c} e^{-\xi (|{\bm n}_{\bm q}|+|{\bm n}_{\mathfrak{r}_q {\bm p}}|}\right ].
\end{equation}
We use the fact that the exponent $|{\bm m}|+|{\bm n}_{\bm p}-{\bm m}|$ is larger than ${\cal N}$, $|{\bm m}|$ or $|{\bm n}_{\bm p}|$, to write:
\begin{equation}
|{\bm m}|+|{\bm n}_{\bm p}-{\bm m}|>\frac{1}{2}{\cal N}+\frac{1}{4}|{\bm m}|+\frac{1}{4}|{\bm n}_{\bm p}|.
\end{equation}
 Similarly, $|{\bm n}_{\bm q}|+|{\bm n}_{\mathfrak{r}_q {\bm p}}|$ is larger than ${\cal N}$, $|{\bm n}_{\bm q}|$ or $|{\bm n}_{\bm p}|$, so
\begin{equation}
|{\bm n}_{\bm q}|+|{\bm n}_{\mathfrak{r}_q {\bm p}}|>\frac{1}{2}{\cal N}+\frac{1}{4}|{\bm n}_{\bm q}|+\frac{1}{4}|{\bm n}_{\bm p}|.
\end{equation} 
As such:
\begin{equation}
|r_k(\omega,{\bm p})|\leq \left ( \prod_{j=1}^k\frac{ C_{h,{\bm \alpha_j}}(\xi) \bar{\Phi}_j}{ ( k- 2d\sinh \xi )^{|{\bm \alpha_j}|+1}} \right ) 2 e^{-\frac{\xi}{2}{\cal N}} e^{-\frac{\xi}{4}|{\bm n}_{\bm p}|} \sum_{{\bm m}\in \mathbb{Z}^d} e^{-\frac{\xi}{4}|{\bm m}|}.
\end{equation}
We can now estimate the norm of $r_k$ using the bound of Eq.~\ref{NormA}:
\begin{equation}
\|r_k\|\leq \frac{2A_d(\xi/4)^2 \prod_{j=1}^k  C_{h,{\alpha}_j}(\xi) \bar{\Phi}_j}{( k- 2d\sinh \xi )^{\sum_{i=1}^k(|{\bm \alpha_i}|+1)}} e^{-\frac{\xi}{2}{\cal N}}.
\end{equation}
Also:
\begin{equation}
\left \|Pf_j \right \| \leq    C_{h,{\alpha}_j}(0)\bar{\Phi}_j/\kappa^{|{\bm \alpha_j}|+1}.
\end{equation}
Finally, based on Eq.~\ref{12}, we can write:
\begin{equation}
\left |{\cal T}_{\mathbb{T}}\left (P\prod_{j=1}^N f_j -\prod_{j=1}^N Pf_j  \right ) \right |\leq \sum_{k=1}^N \|r_k\| \prod_{j>k} \|Pf_j\|,
\end{equation}
and conclude that:
\begin{equation}
\left |{\cal T}_{\mathrm{per}}\left ( \prod_{j=1}^N f_j \right ) - {\cal T}_{\mathbb{T}}\left (\prod_{j=1}^N P f_j \right ) \right | 
\end{equation}
is bounded by:
\begin{equation} 
2A_d(\xi/4)^2 \left ( \sum_{k=1}^N \prod_{j\leq k}\frac{ C_{h,{\alpha}_j}(\xi)}{( \kappa- 2d\sinh \xi )^{|{\bm \alpha_j}|+1}}  \prod_{j>k} \frac{C_{h,{\alpha}_j}(0)}{\kappa^{|{\bm \alpha_j}|+1}} \right ) \left ( \prod_{j=1}^N \bar{\Phi}_j  \right )  e^{-\frac{\xi}{2}{\cal N}}.
\end{equation}
The parameter $\mathfrak{B}_3(\xi,\{{\bm \alpha}\})$ can be explicitly read from above.\qed
\end{proof}

\begin{theorem}\label{CompEst2} Let $\tilde{h}$ be the Hamiltonian on the torus corresponding to the nearest-neighbor hopping Hamiltonian of Eq.~\ref{MainModel}, and $\Phi_j$ ($j=1, \ldots, N$), be a set of analytic functions  in the neighborhood of $\sigma(J{\tilde h})$ defined by $\mathrm{dist}(z,\sigma(J\tilde{h}))\leq \kappa$. Let $\tilde{f}_j$, $j=1,\ldots,N$, be elements of  the $C^*$-algebra $C^*(\Omega_{\cal N} \times \mathbb{T}_d,{\bm F})$ of the form:
\begin{equation}
\tilde{f}_j = \tilde{\partial}_{{\bm \alpha}_j}\Phi_j(\tilde{h}), \ j=1,\ldots,N,
\end{equation}
and let $f_j$, $j=1,\ldots,N$, be  elements of  the $C^*$-algebra $C^*(\Omega_{\mathrm{per}}^{\cal N} \times \mathbb{Z}^d,{\bm F})$ defined by:
\begin{equation}
f_j = \partial_{{\bm \alpha}_j}\Phi_j(J\tilde{h}), \ j=1,\ldots,N.
\end{equation} 
Then for any $0<\xi< \sinh^{-1}(\kappa/2d)$, there exists a finite and fully identifiable parameter $\mathfrak{B}_4(\xi,\{ {\bm \alpha}\})$ such that:
\begin{equation}
\left |{\cal T}_{\mathrm{per}}\left(\prod_{j=1}^N f_j\right)-{\cal T}_{\mathbb{T}}\left(\prod_{j=1}^N  \tilde{f}_j\right) \right | <  \mathfrak{B}_4(\xi,\{ {\bm \alpha} \}) \left (\prod_{j=1}^N \bar{\Phi}_j \right)e^{-\frac{\xi}{2}{\cal N}}.
\end{equation}
\end{theorem}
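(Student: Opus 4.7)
The plan is to reduce the statement to a combination of the two previously established results via an intermediate quantity. First I would introduce ${\cal T}_{\mathbb{T}}\bigl(\prod_{j=1}^N Pf_j\bigr)$ and apply the triangle inequality
\begin{equation}
\left|{\cal T}_{\mathrm{per}}\!\Bigl(\prod_{j=1}^N f_j\Bigr) - {\cal T}_{\mathbb{T}}\!\Bigl(\prod_{j=1}^N \tilde f_j\Bigr)\right| \le \mathcal{E}_1 + \mathcal{E}_2,
\end{equation}
where $\mathcal{E}_1=\bigl|{\cal T}_{\mathrm{per}}(\prod f_j)-{\cal T}_{\mathbb{T}}(\prod Pf_j)\bigr|$ and $\mathcal{E}_2=\bigl|{\cal T}_{\mathbb{T}}(\prod Pf_j)-{\cal T}_{\mathbb{T}}(\prod \tilde f_j)\bigr|$. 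The quantity $\mathcal{E}_1$ is exactly what the preceding lemma controls: it is bounded by $\mathfrak{B}_3(\xi,\{{\bm\alpha}\})\bigl(\prod \bar{\Phi}_j\bigr)e^{-\xi{\cal N}/2}$. So the only task left is to bound $\mathcal{E}_2$ by an expression of the same form (up to an absorbable polynomial prefactor).

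For $\mathcal{E}_2$, I would use the standard telescoping identity
\begin{equation}
\prod_{j=1}^N Pf_j-\prod_{j=1}^N\tilde f_j=\sum_{k=1}^N\Bigl(\prod_{j<k}\tilde f_j\Bigr)*(Pf_k-\tilde f_k)*\Bigl(\prod_{j>k}Pf_j\Bigr),
\end{equation}
apply ${\cal T}_{\mathbb{T}}$, and use the fact that $|{\cal T}_{\mathbb{T}}(g)|\le\|g\|$ together with sub-multiplicativity of the $C^*$-norm. This gives
\begin{equation}
\mathcal{E}_2\le \sum_{k=1}^N\Bigl(\prod_{j<k}\|\tilde f_j\|\Bigr)\,\|Pf_k-\tilde f_k\|\,\Bigl(\prod_{j>k}\|Pf_j\|\Bigr).
\end{equation}
The individual norms $\|Pf_j\|$ are uniformly controlled by $C_{h,{\bm\alpha}_j}(0)\bar\Phi_j/\kappa^{|{\bm\alpha}_j|+1}$ via the simple kernel-sum upper bound (Eq.~\ref{NormA}) combined with Corollary~\ref{EX2}. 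The norms $\|\tilde f_j\|$ are bounded by $\|Pf_j\|$ plus the exponentially small correction from Theorem~\ref{DiffCalcTorus}, hence uniformly by (say) twice the same quantity for ${\cal N}$ large enough.

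The crucial ingredient is then Theorem~\ref{DiffCalcTorus} applied term-wise: $\|Pf_k-\tilde f_k\|=\|P\partial_{{\bm\alpha}_k}\Phi_k(J\tilde h)-\tilde\partial_{{\bm\alpha}_k}\Phi_k(\tilde h)\|\le\mathfrak{B}_2(\xi,{\bm\alpha}_k)\bar\Phi_k{\cal N}^{|{\bm\alpha}_k|+d}e^{-\xi{\cal N}/\sqrt{2}}$. Substituting these bounds into the telescoping sum yields an estimate of the form
\begin{equation}
\mathcal{E}_2\le K(\xi,\{{\bm\alpha}\})\Bigl(\prod_{j=1}^N\bar\Phi_j\Bigr)\,{\cal N}^{\max_k|{\bm\alpha}_k|+d}\,e^{-\xi{\cal N}/\sqrt{2}}.
\end{equation}
Since $\xi/\sqrt{2}>\xi/2$, the factor ${\cal N}^{\max_k|{\bm\alpha}_k|+d}e^{-(\xi/\sqrt{2}-\xi/2){\cal N}}$ is uniformly bounded in ${\cal N}$, so the polynomial prefactor can be absorbed into a (larger) constant, producing the desired decay rate $e^{-\xi{\cal N}/2}$. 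Adding the bounds on $\mathcal{E}_1$ and $\mathcal{E}_2$ and collecting all coefficients defines $\mathfrak{B}_4(\xi,\{{\bm\alpha}\})$.

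The argument is structurally routine; the only mildly subtle point is the trade-off in the last paragraph, namely that one must spend a fraction of the stronger exponent $\xi/\sqrt{2}$ available from Theorem~\ref{DiffCalcTorus} to kill the polynomial $\mathcal{N}^{|{\bm\alpha}|+d}$ growth and still be left with the weaker exponent $\xi/2$ that matches the rate coming from the lemma controlling $\mathcal{E}_1$. I do not anticipate a genuine obstacle; the only bookkeeping nuisance is writing $\mathfrak{B}_4$ explicitly in terms of $\mathfrak{B}_2$, $\mathfrak{B}_3$, and the Combes--Thomas constants $C_{h,{\bm\alpha}_j}(\xi)$, which follows by direct substitution.
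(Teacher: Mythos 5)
Your proposal is correct and follows essentially the same route as the paper: the same triangle-inequality split through the intermediate quantity ${\cal T}_{\mathbb{T}}\bigl(\prod_j Pf_j\bigr)$, the same telescoping expansion bounded via submultiplicativity, the same term-wise application of Theorem~\ref{DiffCalcTorus}, and the same absorption of the polynomial prefactor by trading the exponent $\xi/\sqrt{2}$ down to $\xi/2$ (the paper does this via $c_k(\xi)=\sup_{\cal N}{\cal N}^{|{\bm\alpha}_k|+1}e^{-\frac{\sqrt{2}-1}{2}\xi{\cal N}}$). No gaps.
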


\begin{proof} We write:
\begin{equation}
{\cal T}_{\mathrm{per}}\left(\prod_{j=1}^N f_j\right)-{\cal T}_{\mathbb{T}}\left(\prod_{j=1}^N  \tilde{f}_j\right) 
= {\cal T}_{\mathrm{per}}\left(\prod_{j=1}^N f_j\right)-{\cal T}_{\mathbb{T}}\left(\prod_{j=1}^N P f_j\right)+{\cal T}_{\mathbb{T}}\left(\prod_{j=1}^N P f_j- \prod_{j=1}^N  \tilde{f}_j\right).
\end{equation}
The absolute value of the first difference, $|{\cal T}_{\mathrm{per}}(\prod_{j=1}^N f_j)-{\cal T}_{\mathbb{T}}(\prod_{j=1}^N P f_j)|$, was estimated in the previous Lemma, and the following upper bound:
\begin{equation}
\mathfrak{B}_3(\xi,\{ {\bm \alpha} \} ) \left ( \prod_{j=1}^N \bar{\Phi}_j \right )e^{-\frac{\xi}{2}{\cal N}}
\end{equation}
was established for it. The second difference can be expanded as:
\begin{equation}\label{345}
{\cal T}_{\mathbb{T}}\left (\prod_{j=1}^N P f_j - \prod_{j=1}^N  \tilde{f}_j \right )
=\sum\limits_{k=1}^N {\cal T}_{\mathbb{T}} \left ( \left (\prod_{j<k} P f_j \right )*(Pf_k - \tilde{f}_k)*\left ( \prod_{j>k}  \tilde{f}_j \right ) \right ),
\end{equation}
so we can write:
\begin{equation}
\left | {\cal T}_{\mathbb{T}}\left(\prod_{j=1}^N P f_j-\prod_{j=1}^N  \tilde{f}_j\right) \right | \leq \sum_{k=1}^N \prod_{j<k} \|Pf_j\| \ \|Pf_k - \tilde{f}_k\| \prod_{j>k}  \|\tilde{f}_j\|,
\end{equation}
and, using Theorem~\ref{DiffCalcTorus}, we can establish that this is  less than or equal to:
\begin{equation}
 \left ( \sum_{k=1}^N \mathfrak{B}_2(\xi,{\bm \alpha}_k){\cal N}^{|{\bm \alpha}_k|+d}\prod_{j \neq k}\frac{C_{h,{\bm \alpha}_j}(\xi)}{\kappa^{|{\bm \alpha}_j|+1}}  \right ) \left (\prod_{j=1}^N \bar{\Phi}_j \right)e^{-\frac{\xi}{\sqrt{2}}{\cal N}}.
\end{equation}
If $c_k(\xi)=\sup_{\cal N} {\cal N}^{|{\bm \alpha}_k|+1} e^{-\frac{\sqrt{2}-1}{2}\xi {\cal N}}$, then the absolute value of the difference of Eq.~\ref{345} is less than or equal than:
\begin{equation}
 \left ( \sum_{k=1}^N c_k(\xi)\mathfrak{B}_2(\xi,{\bm \alpha}_k)\prod_{j \neq k}\frac{C_{h,{\bm \alpha}_j}(\xi)}{\kappa^{|{\bm \alpha}_j|+1}} \right ) \left (\prod_{j=1}^N \bar{\Phi}_j \right)e^{-\frac{\xi}{2}{\cal N}},
\end{equation}
and the statement follows.\qed
\end{proof}

\section{The noncommutative Kubo formula on the discrete torus}

The noncommutative Kubo formula with a dissipation $\Gamma$ on the torus is a straightforward translation of the exact Kubo formula from Eq.~\ref{KuboFormula}. This translation can be achieved using the maps $P$ and $J$ defined in Eqs.~\ref{P} and \ref{J}, and the approximate differential calculus introduced in the previous section. It takes the form:
\begin{equation}\label{KuboFormulaTorus}
\boxed{
\tilde{\sigma}_{i j}({\cal N})= {\cal T}_{\mathbb{T}}\left ((\tilde{\partial}_i \tilde{h}) * (P \circ \Gamma \circ J+ {\cal L}_{\tilde{h}})^{-1} \tilde{\partial}_j \Phi_{\mathrm{FD}}(\tilde{h}) \right ),}
\end{equation}
where $\tilde{h}=(P \circ \mathfrak{P}_{\Lambda_{\cal N}})h$, $h$ being the Hamiltonian in the thermodynamic limit. We are now in the position to make two important statement about the above approximate Kubo. The first statement is that Eq.~\ref{KuboFormulaTorus} can be efficiently evaluated on a computer, to a point where, for example, it enabled us to gain unprecedented insight into the  transport properties of the Quantum Hall systems and topological insulators. The second statement is that, at least for the relaxation time approximation $\Gamma=1/\tau_r$, the errors decay exponentially fast as the size of the torus is increased. As such, the bulk transport coefficients can be computed with good accuracy on relatively small simulation boxes. 

\subsection{\underline{Error bounds}}

We will assume the relaxation time approximation, where $\Gamma$ is equal to $1/\tau_r$ times the identity map. The parameter $\tau_r < \infty $ is the so called relaxation time, an empirical parameter that is usually extracted from the experimental data. 

\begin{theorem} Let $\kappa$ be defined as: $\kappa=\min\{1/2\tau_r,\pi k T\}-\epsilon$, with $\epsilon$ an arbitrarily small but strictly positive constant. Then, for the nearest-neighbor hopping Hamiltonian of Eq.~\ref{MainModel}, there exists a finite and fully identifiable constant $\mathfrak{B}(\xi)$ such that:
\begin{equation}
|\sigma_{ij}-\tilde{\sigma}_{ij}({\cal N})|\leq \mathfrak{B}(\xi)e^{-\frac{1}{2}\xi {\cal N}},
\end{equation} 
for all $0<\xi< \sinh^{-1}(\kappa/2d)$.
\end{theorem}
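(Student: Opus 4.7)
The plan is to linearise both Kubo formulas through the Laplace representation of the dissipative resolvent, and then compare them integrand-by-integrand using Theorems~\ref{ThR1} and~\ref{CompEst2}. Under the relaxation-time approximation $\Gamma=\tfrac{1}{\tau_r}\mathbb{I}$ we have $P\circ\Gamma\circ J=\tfrac{1}{\tau_r}\mathbb{I}$ because $P\circ J=\mathbb{I}$, so Proposition~\ref{Laplace} gives
\begin{equation}
(\Gamma+\mathcal{L}_h)^{-1}g=\int_0^\infty e^{-t/\tau_r}\,e^{-ith}*g*e^{ith}\,dt,
\end{equation}
and a verbatim analogue on the torus. Substituting into Eqs.~\ref{KuboFormula} and~\ref{KuboFormulaTorus} yields
\begin{equation}
\sigma_{ij}-\tilde{\sigma}_{ij}({\cal N})=-\int_0^\infty e^{-t/\tau_r}\bigl[\mathcal{T}(\mathcal{I}_t)-\mathcal{T}_{\mathbb{T}}(\tilde{\mathcal{I}}_t)\bigr]\,dt,
\end{equation}
with $\mathcal{I}_t=(\partial_i h)*e^{-ith}*\partial_j\Phi_{\mathrm{FD}}(h)*e^{ith}$ and $\tilde{\mathcal{I}}_t$ its torus analogue built from $\tilde{h}$ and $\tilde{\partial}$.

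For each fixed $t$ the integrand is the trace of a four-fold product of elements obtained by analytic functional calculus with the nearest-neighbour Hamiltonian of Eq.~\ref{MainModel}, with $N=4$ and multi-indices of order at most one, which is exactly the setting of Theorems~\ref{ThR1} and~\ref{CompEst2}. Chaining the two estimates—first passing from $\mathcal{T}$ on $C^*(\Omega\times\mathbb{Z}^d,{\bm F})$ down to $\mathcal{T}_{\mathrm{per}}$ on $C^*(\Omega^{\cal N}_{\mathrm{per}}\times\mathbb{Z}^d,{\bm F})$ via Theorem~\ref{ThR1}, then from $\mathcal{T}_{\mathrm{per}}$ down to $\mathcal{T}_{\mathbb{T}}$ on $C^*(\Omega_{\cal N}\times\mathbb{T}_d,{\bm F})$ via Theorem~\ref{CompEst2}—gives
\begin{equation}
\bigl|\mathcal{T}(\mathcal{I}_t)-\mathcal{T}_{\mathbb{T}}(\tilde{\mathcal{I}}_t)\bigr|\le M(\xi)\,\bar{\Phi}_{\mathrm{FD}}\,(\bar{\Phi}_t)^2\,e^{-\xi{\cal N}/2},
\end{equation}
since $e^{-\xi{\cal N}/2}$ from Theorem~\ref{CompEst2} is the slower of the two rates. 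The quantitative input is that $\bar{\Phi}_t:=\frac{1}{2\pi}\int_{\mathcal{C}_\kappa}|e^{\pm itz}|\,|dz|\le K_\kappa e^{t\kappa}$ because $|e^{\pm itz}|\le e^{t|\mathrm{Im}(z)|}\le e^{t\kappa}$ on $\mathcal{C}_\kappa$, while $\bar{\Phi}_{\mathrm{FD}}$ is finite and independent of ${\cal N}$ and $t$ provided the contour $\mathcal{C}_\kappa$ avoids the Matsubara poles of $\Phi_{\mathrm{FD}}$ at $E_F\pm i\pi kT(2m+1)$, i.e.\ provided $\kappa<\pi kT$.

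Integrating in $t$ gives
\begin{equation}
|\sigma_{ij}-\tilde{\sigma}_{ij}({\cal N})|\le M(\xi)\,\bar{\Phi}_{\mathrm{FD}}\,K_\kappa^2\,e^{-\xi{\cal N}/2}\int_0^\infty e^{-t/\tau_r+2t\kappa}\,dt,
\end{equation}
which is finite if and only if $2\kappa<1/\tau_r$. Combined with $\kappa<\pi kT$, this forces the hypothesis $\kappa=\min\{1/(2\tau_r),\pi kT\}-\epsilon$, and one reads off $\mathfrak{B}(\xi)=M(\xi)\,\bar{\Phi}_{\mathrm{FD}}\,K_\kappa^2/(1/\tau_r-2\kappa)$. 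The principal obstacle is precisely the integrability of the $t$-integral: the exponential growth $\bar{\Phi}_t\sim e^{t\kappa}$ of the propagator norms dictates the $1/(2\tau_r)$ cap on $\kappa$ and is the reason the argument breaks down at zero dissipation. A secondary point is the polynomial prefactor ${\cal N}^{|\bm{\alpha}|+d}$ contributed through Theorem~\ref{DiffCalcTorus} to Theorem~\ref{CompEst2}; it is harmlessly absorbed into the exponential by replacing $\xi$ with $\xi-\delta$ for arbitrarily small $\delta>0$, which only inflates $\mathfrak{B}(\xi)$ while preserving the advertised rate.
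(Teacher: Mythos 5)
Your proposal is correct and follows essentially the same route as the paper: the Laplace representation of $(1/\tau_r+\mathcal{L}_h)^{-1}$, the identification of the integrand as a four-fold product amenable to Theorems~\ref{ThR1} and~\ref{CompEst2} with $\Phi_1(z)=z$, $\Phi_{2,4}(z)=e^{\mp itz}$, $\Phi_3=\Phi_{\mathrm{FD}}$, and the two constraints $\kappa<\pi kT$ (Matsubara poles) and $2\kappa<1/\tau_r$ (integrability in $t$) that together force the stated choice of $\kappa$. The paper's argument is identical in structure and arrives at the same final bound with the $e^{-\xi\mathcal{N}/2}$ rate dominating.
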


\begin{proof} The statement will follow after we put together the comparative estimates derived in the previous sections. We will follow our general strategy and compare the exact noncommutative Kubo formula with an approximate Kubo formula from the periodic algebra, which will be subsequently compared with the approximate Kubo formula from the torus algebra. We start from the exact conductivity tensor:
\begin{equation}
\sigma_{j k}= {\cal T}\left ((\partial_j h) * (1/\tau_r + {\cal L}_h)^{-1}
\partial_k \Phi_{\mathrm{FD}}(h) \right ),
\end{equation}
and use Proposition~\ref{Laplace} in reverse to write:
\begin{equation}
\sigma_{j k}= \int_0^\infty dt e^{-t/\tau_r} {\cal T}\left ((\partial_j h) * u_h(t,0)
[\partial_k \Phi_{\mathrm{FD}}(h)] \right ).
\end{equation}
With the explicit representation of the time evolution given in Eq.~\ref{TE}, the above expression becomes:
\begin{equation}
\sigma_{jk}= \int_0^\infty dt e^{-t/\tau_r} {\cal T}\left ((\partial_j h) * e^{-ith}*
[\partial_k \Phi_{\mathrm{FD}}(h)] * e^{ith}\right )
\end{equation}
Now one can see that we can connect with the Theorem~\ref{ThR1}, which allows us to write the following comparative estimate on the integrand:
\begin{equation}\label{144}
\begin{split}
&\left | {\cal T}\left ((\partial_j h) * e^{-ith}*
[\partial_k \Phi_{\mathrm{FD}}(h)] * e^{ith}\right ) - {\cal T}_{\mathrm{per}}\left ( (\partial_j h_p) * e^{-ith_p}*
[\partial_k \Phi_{\mathrm{FD}}(h_p)] * e^{ith_p} \right ) \right | \medskip \\
&\ \ \ \leq \mathfrak{B}_1(\xi,\{{\bm \alpha}\}) \left (\prod_{j=1}^4 \bar{\Phi}_j \right ) {\cal N}^{-1} e^{-\frac{2\sqrt{2}}{3}\xi {\cal N}}.
\end{split}
\end{equation} 
where $h_p= \mathfrak{P}_{\Lambda_{\cal N}}h$. The $\Phi$'s can be easily identified as: $\Phi_1(z)=z$, $\Phi_{2,4}(z)=e^{\mp iz t}$ and $\Phi_3=\Phi_{\mathrm{FD}}(z)$. 

Furthermore, with $\tilde{h}$ defined as $\tilde{h}=(P \circ \mathfrak{P}_{\Lambda_{\cal N}})h$, one can see that $h_p=J\tilde{h}$. Hence, we  can use Theorem~\ref{CompEst2} to write:
\begin{equation}\label{133}
\begin{split}
&\left |{\cal T}_{\mathrm{per}}\left ( (\partial_j h_p) * e^{-ith_p}*
[\partial_k \Phi_{\mathrm{FD}}(h_p)] * e^{ith_p} \right )     - {\cal T}_{\mathbb{T}}\left ((\tilde{\partial}_j \tilde{h}) * e^{-it\tilde{h}}*
[\tilde{\partial}_k \Phi_{\mathrm{FD}}(\tilde{h})] * e^{it\tilde{h}}\right )\right | \medskip \\
&\ \ \ \leq \mathfrak{B}_4(\xi,\{ {\bm \alpha} \}) \left (\prod_{j=1}^4 \bar{\Phi}_j \right)e^{-\frac{\xi}{2}{\cal N}},
\end{split}
\end{equation} 
where ${\Phi}_j$'s are same functions as before. 

With $\kappa$ chosen as indicated in the text of the Theorem, the complex set defined by those $z$'s for which $\mathrm{dist}(z,\sigma(J\tilde{h}))\leq \kappa$ (which is a subset of the complex strip $|\mathrm{Im}(z)|\leq \kappa$) resides inside the analytic domain of the functions $\Phi_i(z)$'s. In fact the only function that needs to be checked is $\Phi_3$, the Fermi-Dirac function, which has poles at $E_{\bm F}+i(2n+1)\pi  k T$, $n\in \mathbb{Z}$. The functions $\Phi_{2,4}(z)$ depend on the parameter $t$ which is to be integrated from $0$ to $\infty$, with a weight factor $e^{-t/\tau_r}$. Since, $\bar{\Phi}_{2,4}(t)\leq  const. \ e^{\kappa t}$, the integration over $t$ remains finite when $\kappa$ is chosen less than $1/2\tau_r$. We then conclude:
\begin{equation}
\begin{split}
&\left |\int_0^\infty dt \ e^{-t/\tau_r} {\cal T}\left ((\partial_j h) * e^{-ith}*
[\partial_k \Phi_{\mathrm{FD}}(h)] * e^{ith}\right ) - \int_0^\infty dt \ e^{-t/\tau_r} {\cal T}_{\mathbb{T}}\left ( (\tilde{\partial}_j \tilde{h}) * e^{-it\tilde{h}}*
[\tilde{\partial}_k \Phi_{\mathrm{FD}}(\tilde{h})] * e^{it\tilde{h}} \right ) \right | \medskip \\
&\ \ \ \leq \frac{const.}{1/\tau_r - 2\kappa} \left [ \mathfrak{B}_1(\xi,\{{\bm \alpha}\})  {\cal N}^{-1} e^{-\frac{2\sqrt{2}}{3}\xi{\cal N}} + \mathfrak{B}_4(\xi,\{ {\bm \alpha} \}) e^{-\frac{1}{2}\xi{\cal N}} \right ],
\end{split}
\end{equation} 
and the statement follows.\qed
\end{proof}

\subsection{\underline{Numerical Implementation}}

The only issue remaining to be clarified is how to invert the map $1/\tau_r + {\cal L}_{\tilde{h}}$ (we will restrict the discussion to the relaxation time approximation). Formally, this does not present any fundamental difficulty because we can simply assume to work in the Hilbert space ${\cal H}$ defined by the scalar product:
\begin{equation}
(\tilde{f},\tilde{g})={\cal T}_{\mathbb{T}}\left (\tilde{f}^**\tilde{g} \right ),
\end{equation}
which was already discussed in our introductory sections. It is straightforward to demonstrate that ${\cal L}_{\tilde{h}}$ becomes a bounded normal operator (more precisely, $i{\cal L}_{\tilde{h}}$ is self-adjoint) so the inverse $(1/\tau + {\cal L}_{\tilde{h}})^{-1}$ can be computed using the ordinary spectral decomposition of ${\cal L}_{\tilde{h}}$. This is precisely the route we chose to follow here. In the following, we explain how to compute the spectral decomposition of ${\cal L}_{\tilde{h}}$. 

We first introduce an equivalence relation on $\Omega_{\cal N}$ by saying that $\omega \sim \omega'$ if there is a $q\in \mathbb{T}_d$ such that $\omega = r_{\bm q}^{-1}\omega'$. We will denote the equivalence classes by $[\omega]$. The factorization of $\Omega_{\cal N}$ in equivalence classes will be denoted by $\Omega_{\cal N}/\sim$, and the factorization of the probability measure $dP_{\mathbb{T}}(\omega)$ by $dP_{\mathbb{T}}[\omega]$ ($= dP_{\mathbb{T}}(\omega)/\sim$).

Using Proposition~\ref{TraceTorus}, the scalar product can be written as:
\begin{equation}
(\tilde{f},\tilde{g})=\frac{1}{|\mathbb{T}_d|} \int_{\Omega_{\cal N}} dP_{\mathbb{T}} (\omega) \ \mathrm{Tr}\left \{(\tilde{\pi}_\omega \tilde{f}^*) (\tilde{\pi}_\omega \tilde{g}) \right \},
\end{equation}
and due to the covariant property of the representation $\tilde{\pi}$, we observe that:
\begin{equation}\label{90}
\mathrm{Tr}\left \{(\tilde{\pi}_\omega \tilde{f}^*) (\tilde{\pi}_\omega \tilde{g})\right \}
\end{equation}
remains unchanged when $\omega$ is replaced by $\mathfrak{r}_{\bm q}\omega$, with ${\bm q}$ an arbitrary point of the torus. In other words, the trace in Eq.~{90} depends only on the equivalence class of $\omega$ so we can conclude:
\begin{equation}
(\tilde{f},\tilde{g})=\frac{1}{|\mathbb{T}_d|} \int_{\Omega_{\cal N}/\sim} dP_{\mathbb{T}} [\omega] \ \mathrm{Tr}\left \{(\tilde{\pi}_\omega \tilde{f}^*) (\tilde{\pi}_\omega \tilde{g}) \right \},
\end{equation}
where $\omega$ appearing inside the trace can be any element from the equivalence class $[\omega]$. Furthermore, since the $\tilde{\pi}_\omega \tilde{f}$ representation of an element involves only the values of $f$ on $[\omega]\times \mathbb{T}_d$, we can draw a more fundamental conclusion, namely, that the original Hilbert space ${\cal H}$ decomposes as a direct integral:
\begin{equation}
{\cal H} = \int^{\oplus}_{\Omega_{\cal N}/\sim} dP_{\mathbb{T}} [\omega] \ {\cal H}_{[\omega]}
\end{equation}
where the elements of the Hilbert space ${\cal H}_{[\omega]}$ are functions $\tilde{f}:[\omega]\times \mathbb{T}_d \rightarrow \mathbb{C}$ and the scalar product is:
\begin{equation}
(\tilde{f},\tilde{g})_{[\omega]}=\mathrm{Tr} \left \{(\tilde{\pi}_\omega \tilde{f}^*) (\tilde{\pi}_\omega \tilde{g}) \right \}.
\end{equation}
Here, again, the $\omega$ appearing inside the trace can be any element from the equivalence class $[\omega]$. The product and the star operations of the torus algebra remain well defined when restricted to ${\cal H}_{[\omega]}$. 

\begin{figure}
\center
  \includegraphics[width=6cm]{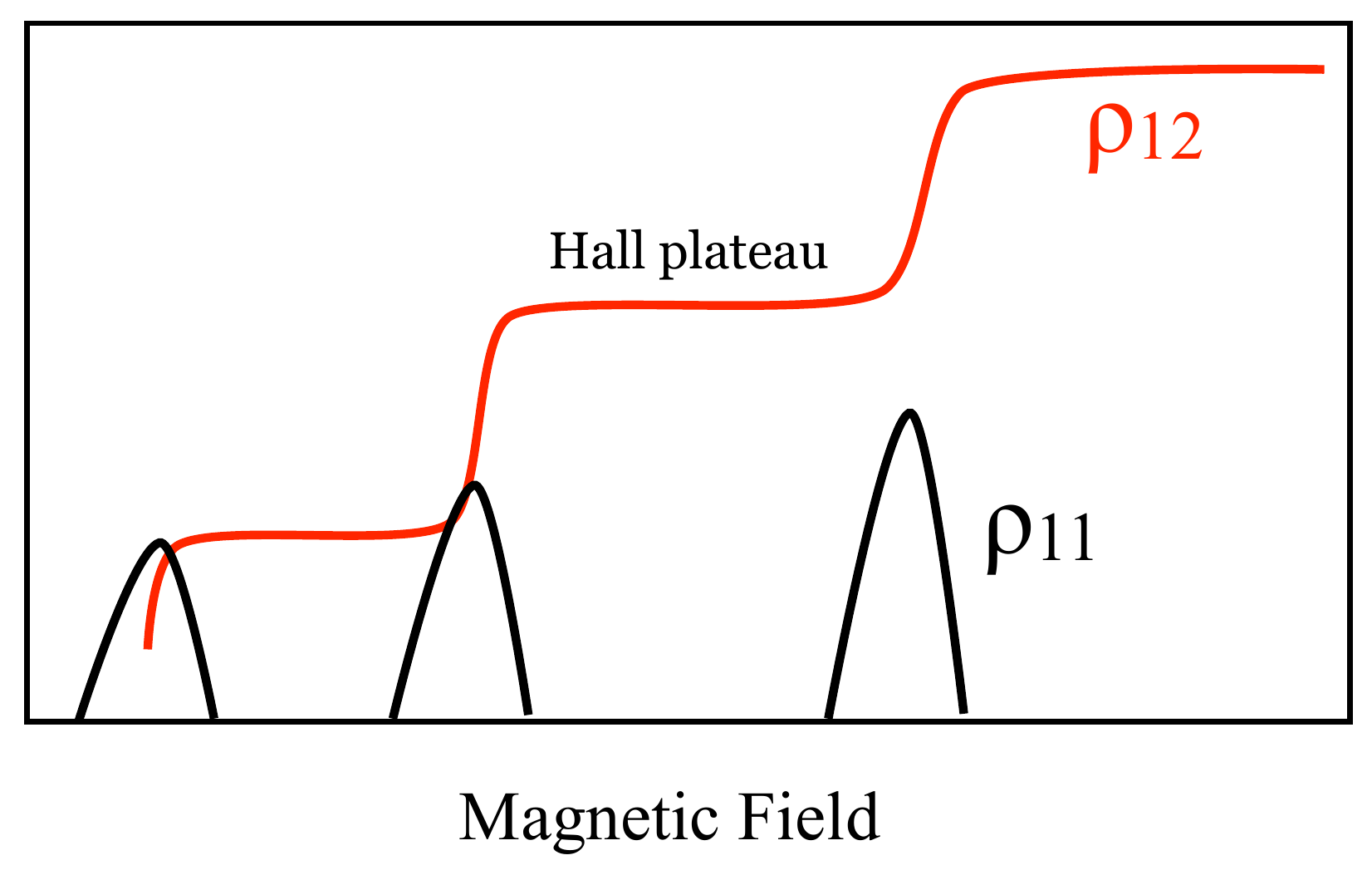}\\
  \caption{Typical output of a Hall measurement.}
 \label{ExpDiagram}
\end{figure}

Now, the operator ${\cal L}_{\tilde{h}}$ leaves invariant the Hilbert spaces ${\cal H}_{[\omega]}$ so
\begin{equation}
{\cal L}_{\tilde{h}} = \int^\oplus_{\Omega_{\cal N}/\sim} dP_{\mathbb{T}} [\omega] \ {\cal L}_{\tilde{h}}{[\omega]},
\end{equation}
where ${\cal L}_{\tilde{h}}{[\omega]}$ is simply the restriction of the map ${\cal L}_{\tilde{h}}$ to the space of functions defined over $[\omega]\times \mathbb{T}_d$. The spectral decomposition of ${\cal L}_{\tilde{h}}{[\omega]}$ can be achieved in the following way. We pick a representative $\omega$ from the class $[\omega]$ and instruct a computer to diagonalize the operator $\tilde{\pi}_\omega \tilde{h}$. Since the Hilbert space $\ell^2(\mathbb{T}_d)$ is finite, this can be achieved with the diagonalization routines from any standard linear algebra library. Let $\{\epsilon_a,\phi_a^{\omega}\}$ be the resulting eigensystem:
\begin{equation}
(\tilde{\pi}_\omega \tilde{h})\phi_a^\omega = \epsilon_a \phi_a^\omega, \ a=1,\ldots,|\mathbb{T}_d|.
\end{equation}
Now note that all $\omega$'s from $[\omega]$ can be uniquely written as $\mathfrak{r}_{\bm q}\omega$, with ${\bm q}$ a point of the torus. For ${\bm p}$ and ${\bm q}$ arbitrary points of the torus, we define:
\begin{equation}\label{57}
\tilde{f}_{ab}(\mathfrak{r}_{\bm q}\omega,{\bm p})=\phi_a^\omega ({\bm q})\overline{\phi_b^\omega(\mathfrak{r}_{\bm q}{\bm p})}e^{i \pi ({\bm n}_{\bm q}\cdot F \cdot {\bm n}_{\bm p})}, \ a,b=1,\ldots,|\mathbb{T}_d|.
\end{equation}
Eq.~\ref{57} defines genuine functions from $[\omega]\times \mathbb{T}_d$ to $\mathbb{C}$, which will be viewed as elements of ${\cal H}_{[\omega]}$. A straightforward calculation will show that:
\begin{enumerate}
\item The elements are orthonormal:
\begin{equation}
(\tilde{f}_{ab},\tilde{f}_{cd})_{[\omega]}=\delta_{ac}\delta_{bd}.
\end{equation}
\item The sequence $\{\tilde{f}_{ab}\}_{a,b=1,\ldots,|\mathbb{T}_d|}$ is complete, for if:
\begin{equation}
(\tilde{f},\tilde{f}_{ab})_{[\omega]}=0
\end{equation}
for all $a,b=1,\ldots,|\mathbb{T}_d|$, then $\tilde{f}$ is necessary the null element.
\item The elements are eigenvectors of ${\cal L}_{\tilde{h}}[\omega]$:
\begin{equation}
{\cal L}_{\tilde{h}}[\omega](\tilde{f}_{ab})=i(\epsilon_a-\epsilon_b) \tilde{f}_{ab}.
\end{equation}
\item The operator representation of $\{\tilde{f}_{ab}\}_{a,b=1,\ldots,|\mathbb{T}_d|}$ is:
\begin{equation}
\tilde{\pi}_\omega \tilde{f}_{ab}=|\phi_a^\omega\rangle \langle \phi_b^\omega|
\end{equation}
\end{enumerate}
Thus, we have accomplished the spectral decomposition of ${\cal L}_{\tilde{h}}[\omega]$ and we can conclude:
\begin{equation}
(1/\tau_r +{\cal L}_{\tilde{h}}[\omega])^{-1}(f)=\sum_{a,b=1}^{|\mathbb{T}_d|}\frac{\tilde{f}_{ab}(\tilde{f}_{ab},f)_{[\omega]}}{1/\tau_r + i(\epsilon_a-\epsilon_b)}.
\end{equation}
Given that $\tilde{\pi}_\omega \tilde{f}_{ab}=|\phi_a^\omega\rangle \langle \phi_b^\omega|$, we can now write the explicit expression for the conductivity tensor:
\begin{equation}\label{PracticalKubo}
\tilde{\sigma}_{jk} = - \int\limits_{\Omega_{\cal N}/\sim} dP_{\mathbb{T}}[\omega] \  \frac{1}{|\mathbb{T}_d|}\sum_{a,b=1}^{|\mathbb{T}_d|}\frac{\langle \phi_b^\omega|\tilde{\pi}_\omega(\tilde{\partial}_j\tilde{h})|\phi_a^\omega \rangle \langle \phi_a^\omega | \tilde{\pi}_\omega (\tilde{\partial}_k \Phi_{\mathrm{FD}}(\tilde{h})|\phi_b^\omega \rangle}{1/\tau_r + i(\epsilon_a-\epsilon_b)}.
\end{equation}
This exact formula was implemented on a computer and used in our numerical simulations.

We should comment here that computing the inverse $(1/\tau_r +{\cal L}_{\tilde{h}}[\omega])^{-1}$ is, after all, a classic linear algebra problem so it can be achieved in many different ways. All that is required is a clever implementation of the action of ${\cal L}_{\tilde{h}}[\omega]$ on the elements of ${\cal H}_{[\omega]}$. We are currently exploring various other paths to invert $1/\tau_r +{\cal L}_{\tilde{h}}[\omega]$, but for the present work we chose to use the spectral decomposition because this way we can understand and minimize the numerical errors. In fact, the computations of $\tilde{\sigma}_{jk}({\cal N})$ shown in the next section are done with the machine precision.

\section{Application: The Integer Quantum Hall Effect}

There is a tremendous amount of literature dedicated to the Integer Quantum Hall Effect (IQHE), so here we will be brisk and stay to the point. The reader can find a comprehensive discussion of the effect in the classic textbook put together by Prange and Girvin \cite{PrangeBook1987cu}. Most of the facts stated below can be found in this textbook. There have been several important developments after the publication of this book, namely the observation of IQHE in graphene \cite{Zhang:2005mu,Novoselov:2005rk,Novoselov:2006nl,Novoselov:2007fa}, in 2-dimensional Quantum spin-Hall systems \cite{Koenig:2008so} and at the surface of a 3-dimensional topological insulator \cite{ChengPRL2010gj,HanaguriPRB2010ch,XiongPhysicaE2012vb}.  It is particularly important to mention that IQHE in graphene was observed at room temperature \cite{Novoselov:2007fa}, hence making our finite temperature analysis even more relevant. An explicit lattice analysis of the IQHE in clean graphene can be found in Refs.~\cite{Bernevig:2007rr,Arai:2009na} and the effect of disorder (and interaction) at zero temperature has been analyzed in \cite{Sheng:2007pr}. Our goal here is to demonstrate what we can achieve with the new numerical procedure. The analysis of the IQHE itself, based on the new numerical results, is beyond the scope of this work, simply because that by itself could be an entirely separate research project.  

In short, IQHE is observed in the transport measurements on a 2-dimensional electron gas when subjected to a perpendicular magnetic field. Experimentally, the 2-dimensional electron gas can be obtained by trapping free-moving electrons at the interface between two semiconducting materials. A 2-dimensional electron gas is also forming at the metallic surface of a 3-dimensional topological insulator and it occurs naturally in the atomically thin graphene. The density of the 2-dimensional electron gas can be adjusted in a finely controlled manner by applying gate voltages. If the samples are of high quality and the temperature is low enough (except for graphene where IQHE can be observed at room temperatures in very high magnetic fields), the transport curves of the Hall resistivity $\rho_{12}$ (to be defined shortly) as function of the $B$-field (while holding the electron density constant), or as function of electron density (while holding $B$ constant) display a staircase pattern as schematically illustrated in Fig.~\ref{ExpDiagram}, with the horizontal plateaus being quantized to several digits of precision. In the same time, the diagonal resistivity $\rho_{11}$ is observed to vanish inside the Hall plateaus, and to have sharp peaks at the transitions between plateaus. The experimental data for IQHE is most often presented in terms of the resistivity tensor $\rho_{ij}=(\sigma^{-1})_{ij}$, rather than the conductivity tensor $\sigma_{ij}$, because the resistivity can be measured with much greater precision. Explicitly, for our isotropic model:
\begin{equation}
\begin{array}{c}
\rho_{11}=\sigma_{11}/(\sigma_{11}^2+\sigma_{12}^2) \medskip \\
\rho_{12}=\sigma_{12}/(\sigma_{11}^2+\sigma_{12}^2).
\end{array}
\end{equation} 
There is a special feature in 2-dimensions, in that the resistivity has same physical units as the resistance and conductivity as conductance.

\begin{figure}
\center
  \includegraphics[width=11cm]{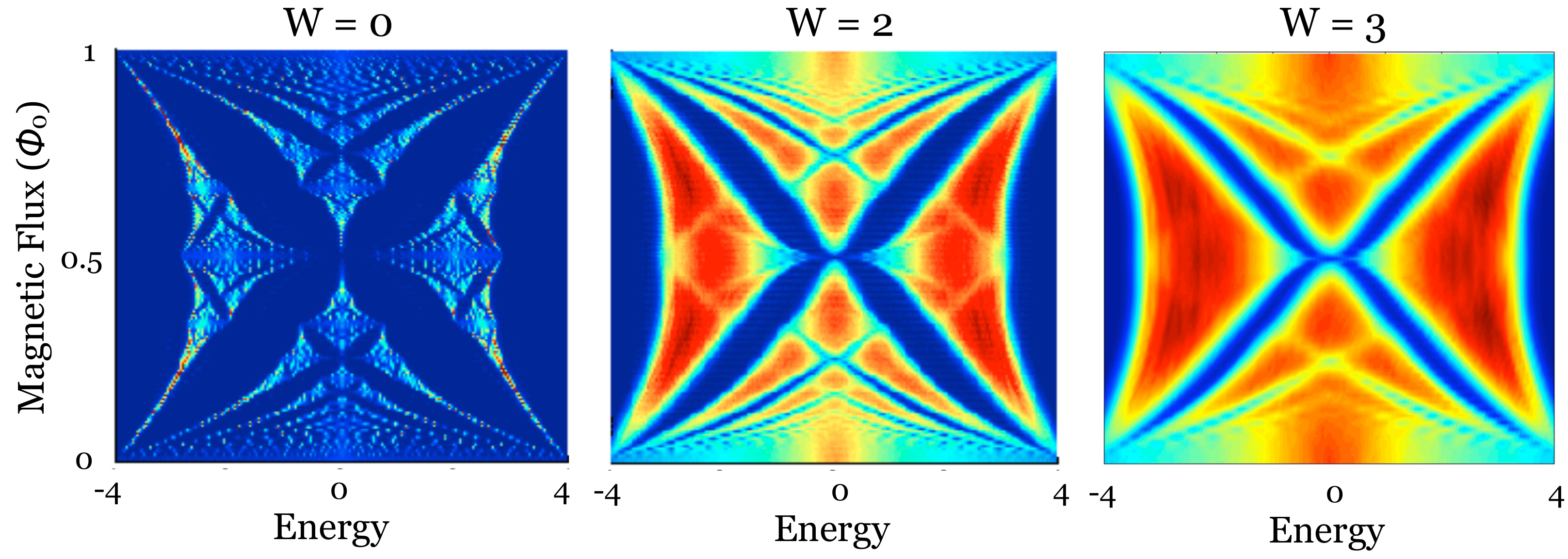}\\
  \caption{The density of states of our model Hamiltonian as function of energy and magnetic flux through the unit cell, for three degrees of disorder: W=0, 2 and 3. At W=0, one can recognize the Hofstadter butterfly.}
 \label{Spectrum}
\end{figure}

\begin{figure}
\center
  \includegraphics[width=11cm]{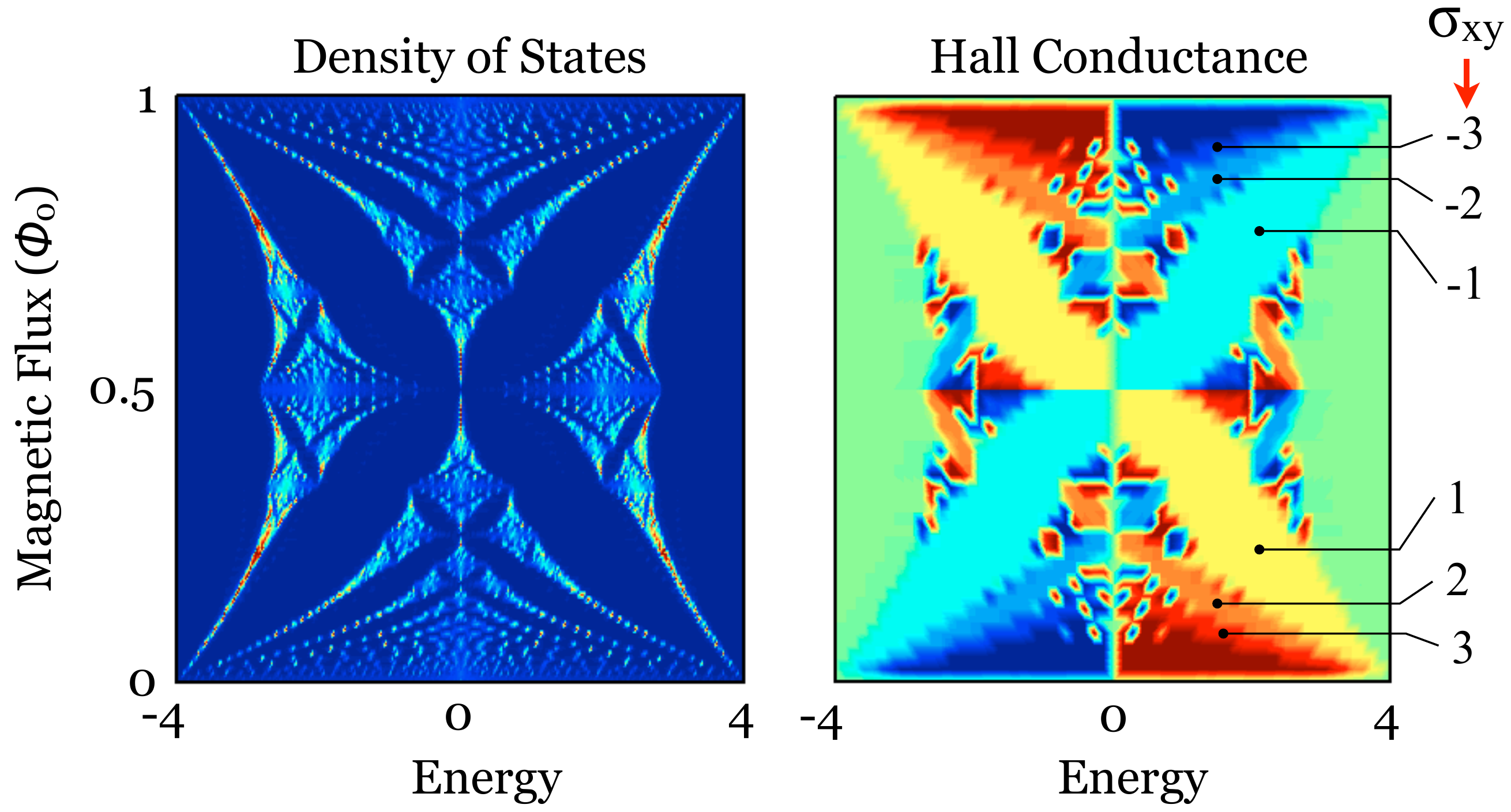}\\
  \caption{DOS (left) and color map of the Hall conductivity (right) for $W=0$. The regions of quantized Hall conductivity, which appear as well defined patches of same color and, are indicated at the right.}
 \label{HallMap1}
\end{figure}

In the absence of a periodic potential, the energy spectrum of the 2-dimensional electron gas (in a mean-filed, non-interacting approximation) consists of equally spaced Landau levels separated by spectral gaps. One can easily compute the Hall conductance $\sigma_{12}$ at $T=0$, assuming the lowest $n$ Landau levels as completely filled, in which case the result will be $\sigma_{12}=n\frac{e^2}{h}=nG_0$. While this can provide a clue about the origin of the quantized values, it cannot explain the existence of the plateaus. That is because the probability of having some Landau levels completely filled while the rest completely empty is practically zero. There is no way one can adjust the electron density with such precision! Also, within such a simplistic picture, it is impossible to explain the variation of the electron density when traversing a Hall plateau. Presently, the accepted view is that, in between the Landau levels, there are localized impurity states, in which case the Fermi level transitions smoothly (instead of jumping)  from one Landau level to another, when increasing the $B$-field or the electron density. The localized spectrum is thought to be generated by a disordered potential, inherently present even in high quality samples. Thus, the presence of disorder is crucial for the explanation of the IQHE. A complete, rigorous theory of the IQHE at zero temperature was derived in a series of papers by Bellissard and his collaborators. In particular, Ref.~\cite{BELLISSARD:1994xj} developed the noncommutative theory of transport that we followed here, together with a set of optimal conditions for the quantization and invariance of the Hall resistance in the presence of strong disorder.

The hypothesis of localized states was experimentally tested in several ways, and the conclusions were always positive. For example, it was observed that the width of the plateaus increases when the mobility of the electrons at zero field decreases. However, too much disorder can lead to the disappearance of the plateaus. Other interesting and relevant experimental observations is that the shape of the curves at the beginning or at the end of a Hall plateau is not always smooth as illustrated in Fig.~\ref{ExpDiagram}. Dips and sometime even oscillations can show up in the experimental curves. There is also a very interesting dependence on temperature of the transport coefficients. One standard observation is the gradual appearance and sharpening of the Hall plateaus as the temperature is lowered. The direct resistance displays an activation behavior which can be understood from the simple shape of the Fermi-Dirac distribution, but when the temperature becomes extremely low a different behavior sets in, determined by the type of disorder and dissipation present in the system. 

\begin{figure}
\center
  \includegraphics[width=10cm]{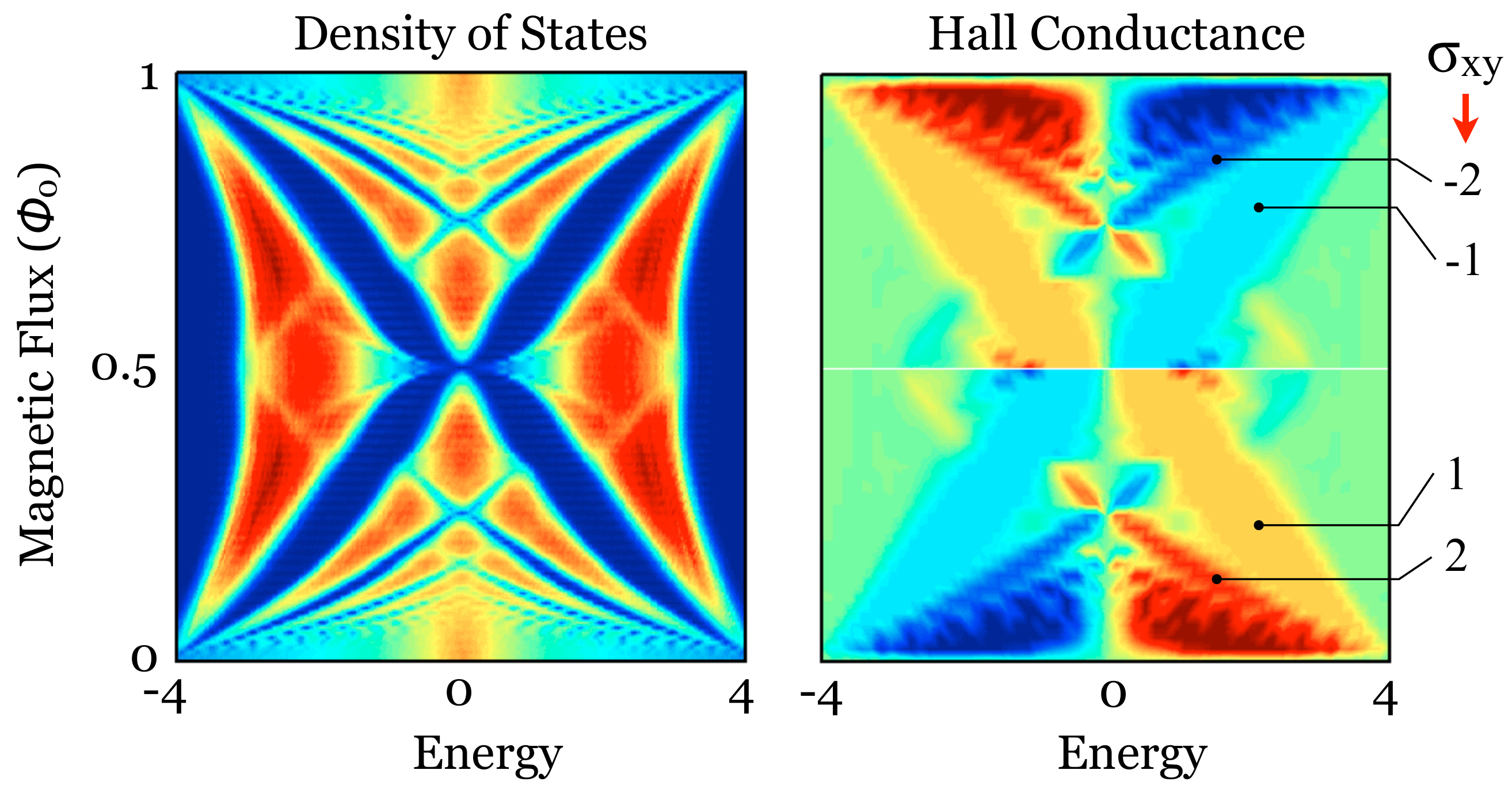}\\
  \caption{DOS (left) and color map of the Hall conductivity (right) for $W=2$. The regions of quantized Hall conductivity, which appear as well defined patches of same color and, are indicated at the right.}
  \label{HallMap2}
\end{figure}

There are many published simulations of the IQHE in the presence of disorder \cite{CzychollSolStComm1988ds,CzychollZPhysB1988jd,MandalPhysB1998as,TanJPhysCondMatt1994rt,ShengPRL1997cy,RochePRB1999te,YangPRB1999de,ShengPRB2001se,SteffenPRB2004bn,ZhouPRB2004ft,MelintePRL2004by,KoshinoPRB2006gh,Sheng:2007pr,MaitiPhysLettA2012gh,Dutta2012cv}. However, many of these simulations are restricted to just the diagonal component of the conductivity tensor. The computation of $\sigma_{12}$ or $\rho_{12}$ is more difficult and, as such, simulations of  $\sigma_{12}$ or $\rho_{12}$ are scarce. Almost all of the existing simulations present $\sigma_{12}$ as function of Fermi energy at fixed magnetic field. Such plots, of course, show wide and well quantized Hall plateaus when the Fermi energy crosses the gaps between the Landau levels.  In experiments, however, what is controlled is the electron density and not the Fermi level. Since the variation of the electron density is very small when the Fermi level crosses the gap between the Landau levels, the wide and well quantized Hall plateaus would appear much narrower if $\sigma_{12}$ was plotted as function of electron density. In fact, it is not clear at all if any of the Hall plateaus observed in these numerical studies would remain visible if the data was plotted as function of electron density. Furthermore, we were not able to find any record of a simulation which analyzes the conductivity or resistivity tensor as function of magnetic field at fixed electron density.   

\begin{table}[htdp]\scriptsize
\caption{The numerical values of $\sigma_{11}$ at $kT=1/\tau_r=0.1$, $W=\phi =0$ and various Fermi energies, obtained with the new algorithm for increasing lattice sizes. The last column displays the value of $\sigma_{11}$ computed with machine precision via Eq.~\ref{Bloch}.}
\begin{center}
\rowcolors{1}{white}{lightgray}
\begin{tabular}{|c|c|c|c|c|c|}
\hline
$E_F$ & $80\times 80$ & $100 \times 100$ & $120 \times 120$ &  $140 \times 140$ & Exact \\
\hline
  0.0   &4.0339628247 & 4.0339630615  & 4.0339630708  &   4.0339630712  & 4.0339630712 \\   
 -0.4   & 3.9394154619    & 3.9394154735   & 3.9394154621 &  3.9394154624   & 3.9394154624     \\
 -0.8   & 3.7040304262   & 3.7040301193   & 3.7040301310 &  3.7040301307   & 3.7040301307     \\
  -1.3   & 3.3684805414     & 3.3684801617   & 3.3684801517 &  3.3684801516  & 3.3684801516     \\
  -1.7    &  2.9522720814   & 2.9522713926 & 2.9522714007 & 2.9522714009   & 2.9522714009     \\
  -2.2   &  2.4678006935     & 2.4678005269   & 2.4678005093 &  2.4678005104  & 2.4678005104     \\
  -2.6   & 1.9239335953    & 1.9239338070  & 1.9239338090 & 1.9239338089   & 1.9239338089     \\
  -3.1   & 1.3274333126   & 1.3274333067  & 1.3274333084 &  1.3274333085   & 1.3274333085     \\
  -3.5    & 0.6854442914    & 0.6854442923 & 0.6854442923 & 0.6854442923   & 0.6854442923     \\
  -4.0   & 0.1086465150  & 0.1086465150  & 0.1086465150 &  0.1086465150 & 0.1086465150 \\
  \hline
\end{tabular}
\end{center}
\label{TableT0p1}
\end{table}%

\begin{table}[htdp]\scriptsize
\caption{The numerical values of $\sigma_{11}$ at $kT=1/\tau_r=0.025$, $W=\phi=0$ and various Fermi energies, obtained with the new algorithm for increasing lattice sizes. The last column displays the value of $\sigma_{11}$ computed with machine precision via Eq.~\ref{Bloch}. The numbers in parentheses represent the relative errors, computed as $|\sigma_{11}(\mathrm{approx})-\sigma_{11}(\mathrm{exact})|/ \sigma_{11}(\mathrm{exact})$.}
\begin{center}
\rowcolors{1}{white}{lightgray}
\begin{tabular}{|c|c|c|c|c|c|c|}
\hline
$E_F$ & $100 \times 100$ & $120 \times 120$ & $140 \times 140$ & $160 \times 160 $ & Exact \\
\hline
  0.0                                     & 16.204612406 (3.7e-5) & 16.204972260 (1.5e-5) &     16.205119575 (6.6e-6) & 16.205181208 (2.8e-6) & 16.205227112\\   
 -$\nicefrac{4}{9}$      & 15.800901211 (8.2e-5) & 15.799191631 (2.5e-5) &     15.799439976 (9.7e-6) & 15.799694877 (6.3e-6) & 15.799593904     \\
 -2 $\times \ \nicefrac{4}{9}$    & 14.845640534 (8.3e-5) &14.847746111 (5.8e-5) &     14.846643808 (1.5e-5) & 14.846886090 (6.1e-7) & 14.846876954     \\
  -3 $\times \ \nicefrac{4}{9}$    & 13.501282967 (1.7e-4) & 13.498925093 (1.3e-6) &    13.498688837 (1.6e-5) & 13.498843113 (4.6e-6) & 13.498906270     \\
  -4 $\times \ \nicefrac{4}{9}$    & 11.830160006 (1.1e-5) & 11.830983739 (5.8e-5) &    11.830916078 (5.2e-5) & 11.830155334 (1.1e-5) & 11.830294392     \\
  -5 $\times \ \nicefrac{4}{9}$    & 9.8929640489 (3.3e-4) & 9.8886679730 (1.0e-4) &  9.8899094634 (2.2e-5)  & 9.8896881580 (4.7e-7) & 9.8896834893     \\
  -6 $\times \ \nicefrac{4}{9}$     & 7.7119398206 (2.4e-5) & 7.7130531856 (1.1e-4) &  7.7127298903 (7.7e-5)  & 7.7123144240 (2.3e-5) & 7.7121326232     \\
  -7 $\times \ \nicefrac{4}{9}$    & 5.3242604491  (2.4e-5) & 5.3236541115 (1.3e-4) &  5.3241123103  (5.2e-5)  & 5.3243169737 (1.3e-5) & 5.3243912832     \\
  -8 $\times \ \nicefrac{4}{9}$  & 2.7469871332 (2.4e-4) & 2.7475442292 (4.0e-5) &  2.7476748910 (7.4e-6) & 2.7476680440 (4.9e-6) & 2.7476545340     \\
  -9 $\times \ \nicefrac{4}{9}$  & 0.1099066152 (2.8e-9) &0.1099066156 (0e-10) &  0.1099066156 (0e-10) & 0.1099066156 (0e-10) & 0.1099066156 \\
  \hline
\end{tabular}
\end{center}
\label{TableT0p025}
\end{table}%

\begin{table}[h]\scriptsize
\caption{The numerical values of $\sigma_{11}$ at $kT=1/\tau_r=0.01$, $W= \phi =0$ and various Fermi energies, obtained with the new algorithm for increasing lattice sizes. The last column displays the value of $\sigma_{11}$ computed with machine precision via Eq.~\ref{Bloch}. The numbers in parentheses represent the relative errors, computed as $|\sigma_{11}(\mathrm{approx})-\sigma_{11}(\mathrm{exact})|/ \sigma_{11}(\mathrm{exact})$.}
\begin{center}
\rowcolors{1}{white}{lightgray}
\begin{tabular}{|c|c|c|c|c|c|}
\hline
$E_F$  & $100 \times 100$ & $120 \times 120$ & $140 \times 140 $ & $160 \times 160 $ & Exact \\
\hline
  0.0                                                     & 40.520686410 (1.2e-4) & 40.522558677 (7.5e-5) &  40.523641888 (4.8e-5) & 40.524308038 (3.2e-5) & 40.525626855 \\    
 -1 $\times \ \nicefrac{4}{9}$      & 39.517833421 (3.3e-4) & 39.497150004 (1.9e-4)   & 39.499629103 (1.2e-4) & 39.508996214 (1.0e-4) & 39.504691905     \\
 -2 $\times \ \nicefrac{4}{9}$      & 37.110642808 (2.9e-4) & 37.138820375 (4.6e-4) &     37.114579007 (1.8e-4) & 37.120997917 (1.2e-5) & 37.121462109     \\
  -3 $\times \ \nicefrac{4}{9}$    & 33.777005514 (7.7e-4)   & 33.748897618 (5.4e-5) &    33.742729517 (2.3e-4) & 33.747792038 (8.7e-5) & 33.750748672     \\
  -4 $\times \ \nicefrac{4}{9}$    & 29.585203369 (2.1e-4) & 29.596597708 (6.0e-4) &    29.596944121 (6.1e-4) & 29.573808787 (1.6e-4) & 29.578697042     \\
  -5 $\times \ \nicefrac{4}{9}$         & 24.787127553 (2.4e-3)   & 24.713893823 (5.2e-4) &  24.729175598 (9.6e-5)  & 24.724104241 (1.0e-4) & 24.726791044     \\
  -6 $\times \ \nicefrac{4}{9}$      & 19.265355553 (8.9e-4)  & 19.304253858 (1.1e-3) &  19.305176839 (1.1e-3)  & 19.288293494 (2.9e-4) & 19.282623910     \\
  -7 $\times \ \nicefrac{4}{9}$     & 13.320411822 (5.5e-4)  & 13.288432535 (1.8e-3) &  13.302525433 (7.8e-4)   & 13.305757131 (5.4e-4) & 13.313040334     \\
  -8 $\times \ \nicefrac{4}{9}$       & 6.8441390250 (3.9e-3) & 6.8590106166 (1.7e-3) &  6.8736981172 (3.9e-4)  & 6.8765917049 (8.1e-4) & 6.8710107299     \\
  -9 $\times \ \nicefrac{4}{9}$    & 0.1101512994 (2.3e-5)  & 0.1101540276 (1.7e-6) & 0.1101538290 (7.9e-8)  & 0.1101538378 (0e-10) & 0.1101538378  \\
  \hline
\end{tabular}
\end{center}
\label{TableT0p01}
\end{table}%

We now present what we were able to achieve (so far) with the formalism presented in the previous sections for the model Hamiltonian of Eq.~\ref{MainModel}. As we already mentioned, at the algebra level, this Hamiltonian is generated by $h$ which takes the values $h(\omega,{\bm n})=1$ if $|{\bm n}|=1$, $h(\omega,{\bm 0})=W \omega_{\bm 0}$, and zero in rest. The calculations will be carried out on a finite lattice which is wrapped into a torus. The size of the finite lattice will be indicated as $Nr\times Nr$, where $Nr$ is the total number of nodes in one direction. In the old notation, $Nr$ would be given by $2{\cal N}+1$ but here we relax the requirement that $Nr$ be an odd number. In two dimensions ($d=2$), the matrix ${\bm F}$ is simply $F_{i,j}=\epsilon_{ij3}\phi$, where $\phi$ is the magnetic flux through the repeating cell, expressed in units of flux quantum $\phi_0=h/e$. The Hamiltonians at fluxes $\phi$ and $\phi+1$ are unitarily equivalent so we can restrict the values of $\phi$ to the interval $[0,1]$. Since the calculations are carried out on the torus, the magnetic flux will always be quantized in our calculations according to the rule given in Eq.~\ref{Q}. As such, the larger the torus the more magnetic field values we can sample.

\begin{figure}
\center
  \includegraphics[width=10cm]{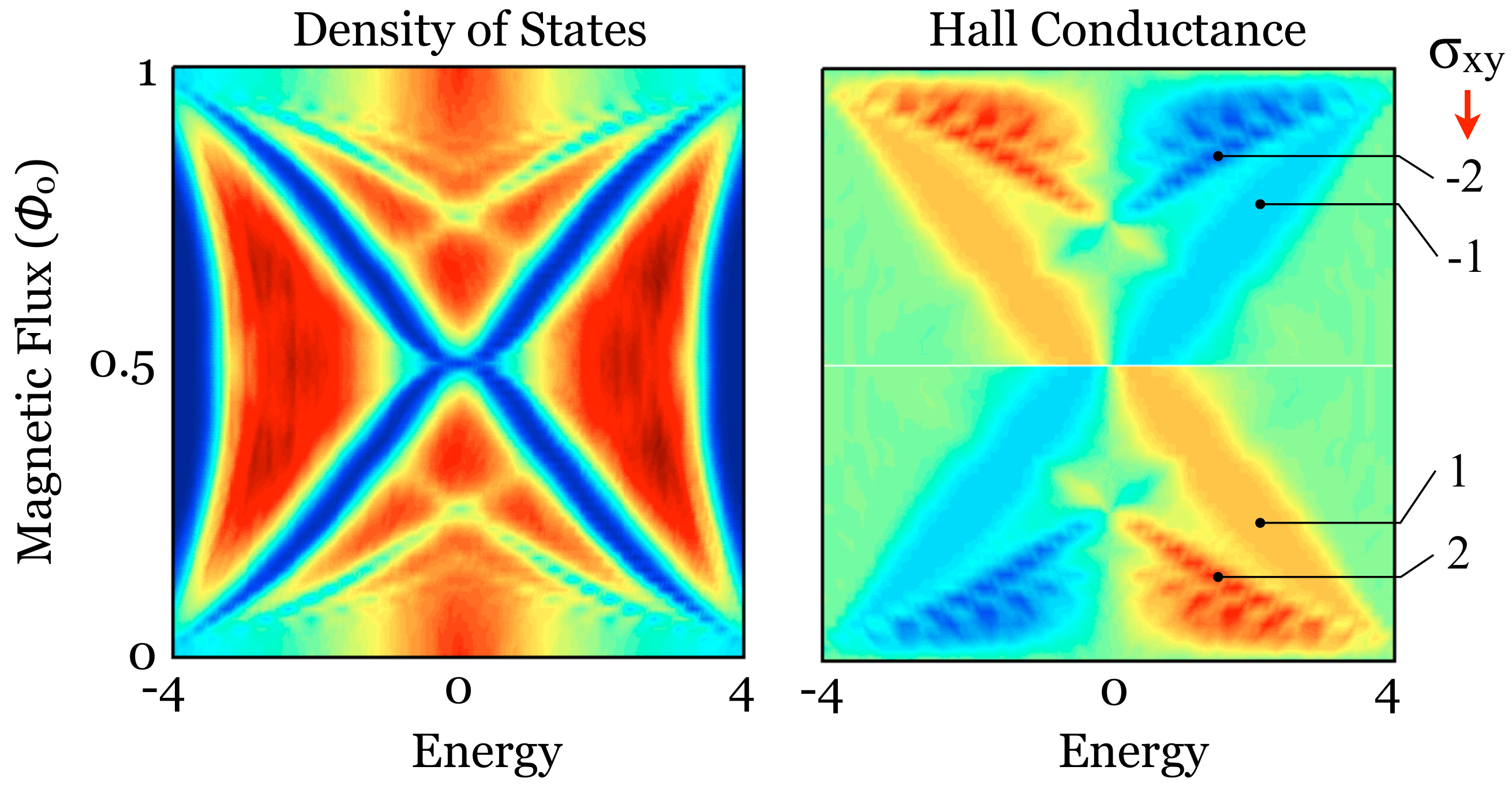}\\
  \caption{DOS (left) and color map of the Hall conductivity (right) for $W=3$. The regions of quantized Hall conductivity, which appear as well defined patches of same color and, are indicated at the right.}
 \label{HallMap3}
\end{figure}

We start with some qualitative analysis. The spectrum of $h$, obtained on a $120\times 120$ lattice, is shown in Fig.~\ref{Spectrum} for three values of $W$: 0, 2 and 3. The spectrum is represented via the smoothed density of states (DOS), defined as:
\begin{equation}
\begin{array}{c}
D(\epsilon)=\frac{1}{\pi}{\cal T}\left (\mathrm{Im}(h-\epsilon-i\delta)^{-1}\right ),
\end{array}
\end{equation}
where $\delta$ was given the small value $0.01$. The DOS is plotted as an intensity map in the ($\epsilon$,$\phi$) plane. The plot at $W=0$ reproduces the famous Hofstadter butterfly \cite{HofstadterPRB1976km}. This can serve as a reminder that the simple Landau level picture becomes more complicated when a periodic potential is present. The Hofstadter spectrum has a fractal structure with the patterns seen in the first panel of Fig.~\ref{Spectrum} repeating over and over when zooming to finer and finer scales. At weak magnetic fields, and especially towards the edges of the energy spectrum, one can identify well defined Landau bands (a more careful scrutiny will reveal that these bands are made of many sub-bands). If the Fermi level is somehow fixed in between these bands, then one expects quantized values of $\sigma_{12}$ in units of $G_0$. The DOS at $W=2$ and $W=3$ look like a blurred version of the DOS at $W=0$. The blur originates from the impurity states generated inside the empty spaces between the Landau bands by the disordered potential. Maps of the Hall conductivity are reported in Figs.~\ref{HallMap1}, \ref{HallMap2} and \ref{HallMap3} for $W=0$, 2, and 3 cases, respectively. These calculations were performed on a $100 \times 100$ lattice with $kT=1/\tau_r=0.01$. A single disorder configuration was used, since the fluctuations due to the disorder will not have any visible effects for the intensity maps of Figs.~\ref{Spectrum}, \ref{HallMap1}, \ref{HallMap2} and \ref{HallMap3}. The Fermi energy was varied over the entire energy spectrum, which was sampled at 60 equally spaced points. In all three cases, the maps display regions where $\sigma_{12}$ takes the expected quantized values. In the middle of these regions, the quantization occurs with five digits of precision or better. For the cases with disorder, it is important to notice that this quantization occurs beyond the regions empty of spectrum, that is, in the regions where $E_F$ is emersed in the localized spectrum. The Hall conductivity maps from Figs.~\ref{HallMap1}, \ref{HallMap2} and \ref{HallMap3} give a useful panoramic view of the Hall conductivity but, even though we see sharp quantized values, the results are actually far from demonstrating the IQHE in our model. That is because in the real experiments, there is no control over the Fermi level. What is being controlled is the density of electrons and the Fermi level adjusts itself correspondingly. What it is known for sure is that the Fermi level will never stay in the regions empty of spectrum. As such, to actually demonstrate the occurrence of the IQHE in our model, we need to map the transport coefficients as function of density while holding the magnetic flux fixed, or as function of magnetic flux while holding the electron density fixed, allowing the Fermi level to adjust itself to the appropriate value. Revealing the Hall plateaus in these new conditions is a much more difficult task. At last, it is interesting to note the similarity of the Hall conductivity map at $W=0$ in Fig.~\ref{HallMap1} and the Chern number map computed in Ref~\cite{OsadchyJMP2001vb}. There are expected differences between the two results because the map in Fig.~\ref{HallMap1} was computed at a finite temperature while the Chern number map gives the Hall conductivity at zero temperature.

\begin{figure}
\center
  \includegraphics[width=11cm]{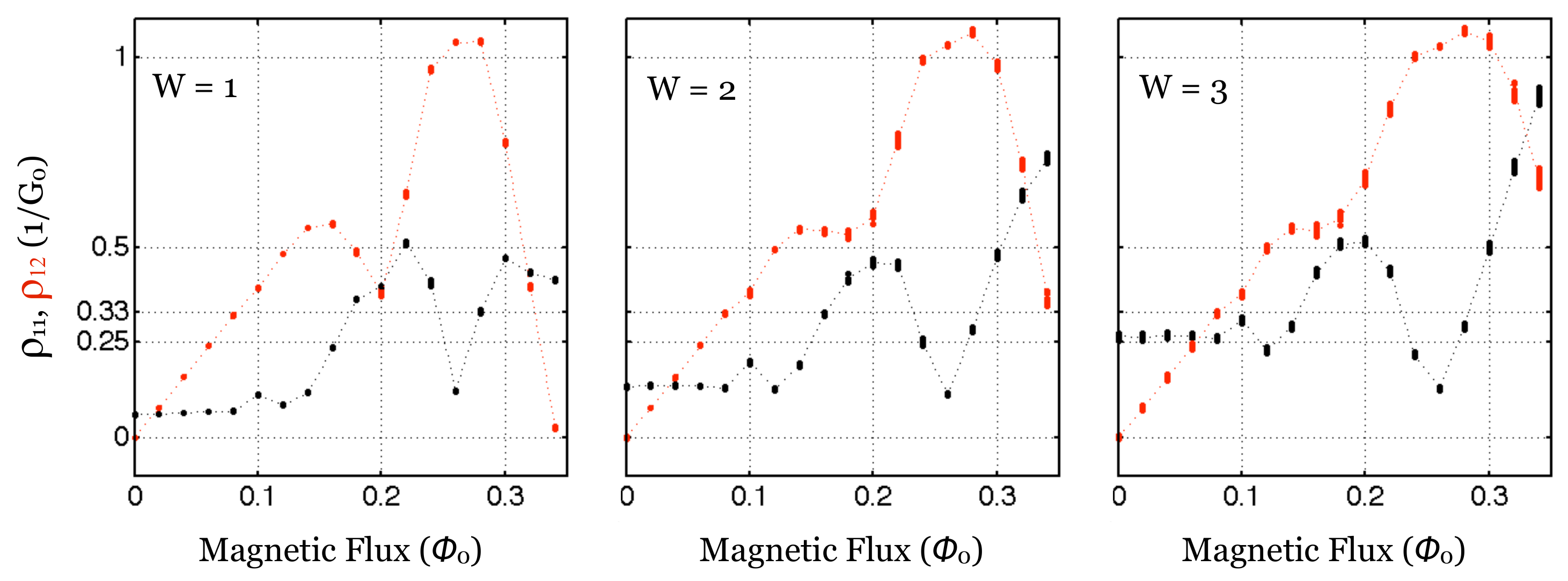}\\
  \caption{Diagonal and Hall resistivities computed with Eq.~\ref{RPracticalKubo} for 10 random disorder configurations, at disorder strengths $W=1$, 2 and 3. The simulation was performed on a $100 \times 100$ lattice with $kT=1/\tau_r=0.1$. The electron density was fixed at $n_e=0.25$. The standard deviation was observed to be less than 2\%.}
 \label{Average}
\end{figure}

At this point we start our quantitative analysis. First, we present a simple but direct test of the numerical algorithm. In the absence of disorder and magnetic field, the Kubo formula takes the extremely simple form:
\begin{equation}\label{Bloch}
\sigma_{11}=\frac{1}{2 \pi kT/\tau_r}\int_{-\pi}^\pi dk_1 \int_{-\pi}^\pi dk_2 \frac{\sin^2 k_1}{\cosh^2\left[\frac{1}{2}(E_{\bm k}-E_F)/kT\right]},
\end{equation} 
where $E_{\bm k}$ is the band energy of the model: $E_{\bm k}=2\cos k_1 + 2 \cos k_2$. This formula can be computed with the machine precision and we will call such a computation ``exact." In Tables~\ref{TableT0p1}, \ref{TableT0p025} and \ref{TableT0p01}, we report a comparison between the exact value of $\sigma_{11}$ and the values obtained with the new algorithm on lattices of varying sizes.  In Table~\ref{TableT0p1}, we fixed the temperature and dissipation at relatively large values: $kT=1/\tau_r=0.1$. Four lattice sizes were considered: $80\times 80$, $100 \times 100$, $120\times 120$ and $140 \times 140$. The Fermi level was varied from the bottom to the middle of the energy spectrum. By examining the columns of Table~\ref{TableT0p1}, we see that, already for the smallest lattice size, the output from the new algorithm displays 7 digits of precision for all $E_F$'s. As predicted, the degree of precisions increases rapidly with the lattice size and for the largest size we see 11 digits of precision for all $E_F$'s. This indicates that all the lattice sizes appearing in Table~\ref{TableT0p1} are in the asymptotic regime where the exponentially-fast convergence occurs. In Table~\ref{TableT0p025}, the temperature and dissipation were fixed at lower values: $kT=1/\tau_r=0.025$. Four lattice sizes were considered: $100\times 100$, $120 \times 120$, $140\times 140$ and $160 \times 160$. In this table too, we see an improvement of the accuracy as the size of the system is increased, though not as dramatic as in Table~\ref{TableT0p1}. Looking at the relative errors, we see that, on average (with respect to $E_F$), the accuracy was improved by more than one order of magnitude when the lattice size was increased from $100 \times 100$ to $160 \times 160$. For the largest lattice size, there are more than 5 digits of precision. Based on the monotonic reduction of the relative errors from one column to the other (with one or two exceptions), we believe that the asymptotic regime where the exponentially fast convergence occurs has been already reached in Table~\ref{TableT0p025}. In Table~\ref{TableT0p01}, the temperature and dissipation were fixed at even lower values: $kT=1/\tau_r=0.01$. Here we still see a weak improvement of the accuracy with the size of the system. The precision here is about 4 digits for the largest two lattices. Based on the information contained in these tables, we expect the algorithm to be well converged on the $120 \times 120$ or $140 \times 140$ lattices when $kT$ and $1/\tau_r$ are equal to or larger than 0.025, and we expect reasonable outputs even for  $kT=1/\tau_r=0.01$. We will always verify these statements explicitly, by repeating the calculations for increasing lattice sizes.

\begin{figure}
\center
  \includegraphics[width=11cm]{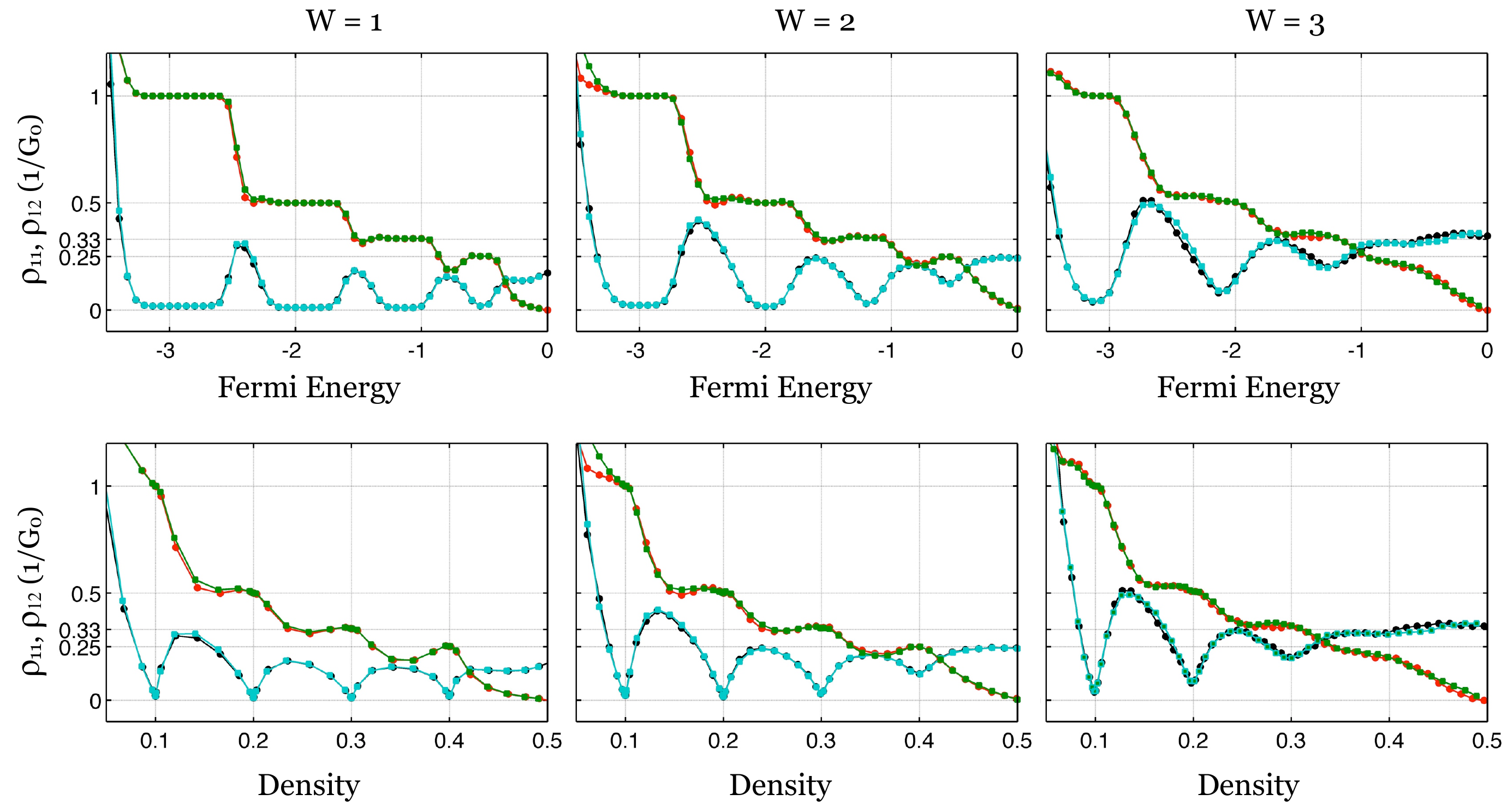}
  \caption{First row (Second row): The diagonal and the Hall resistivities as function of Fermi energy (density) at fixed magnetic flux $\phi=0.1\phi_0$, for $kT=1/\tau_r=0.025$ and disorder strengths $W=1$, 2, 3. Each panel compares the data obtained on the $100 \times 100$ lattice (circles) and on the $120 \times 120$ lattice (squares).}
 \label{EfDenComp100vs120BF0p1T0p025}
\end{figure}

Next we discuss the issue of averaging over disorder. The integrand in Eq.~\ref{PracticalKubo}:
\begin{equation}\label{RPracticalKubo}
 \  \frac{1}{|\mathbb{T}_d|}\sum_{a,b=1}^{|\mathbb{T}_d|}\frac{\langle \phi_b^\omega|\tilde{\pi}_\omega(\tilde{\partial}\tilde{h})|\phi_a^\omega \rangle \langle \phi_a^\omega | \tilde{\pi}_\omega (\tilde{\partial} \Phi_{\mathrm{FD}}(\tilde{h})|\phi_b^\omega \rangle}{1/\tau + i(\epsilon_a-\epsilon_b)},
\end{equation}
 is self-averaging, meaning it converges to its disorder-average as the size of the torus is taken larger and larger. This is an important observation because it practically solves the issue of disorder average. For the sizes considered in this study, we can demonstrate that indeed, the integrand is practically independent of the particular disorder configuration, so we can eliminate the disorder-average step entirely. Fig.~\ref{Average} presents a computation of $\rho_{11}$ and of $\rho_{12}$ as functions of the magnetic flux for 10 randomly generated disorder configurations, and for three disorder strengths, $W=1$, 2 and 3. The calculation was performed on a $100 \times 100$ lattice at fixed electron density $n_e=0.25$, and at $kT=1/\tau_r=0.1$. The standard deviation of the fluctuations due to disorder is less than 2\% for the data reported in Fig.~\ref{Average}. This calculation confirms that, starting from this lattice size, the fluctuations of the resistivities from one disorder configuration to another are very small, at least for this temperature and dissipation strength. From now on, the transport coefficients will be computed with just one disorder configuration, using the formula from Eq.~\ref{RPracticalKubo}, and we will visually estimate the fluctuations due to the disorder by examining the data for different lattice sizes.  
 
 \begin{figure}
\center
  \includegraphics[width=11cm]{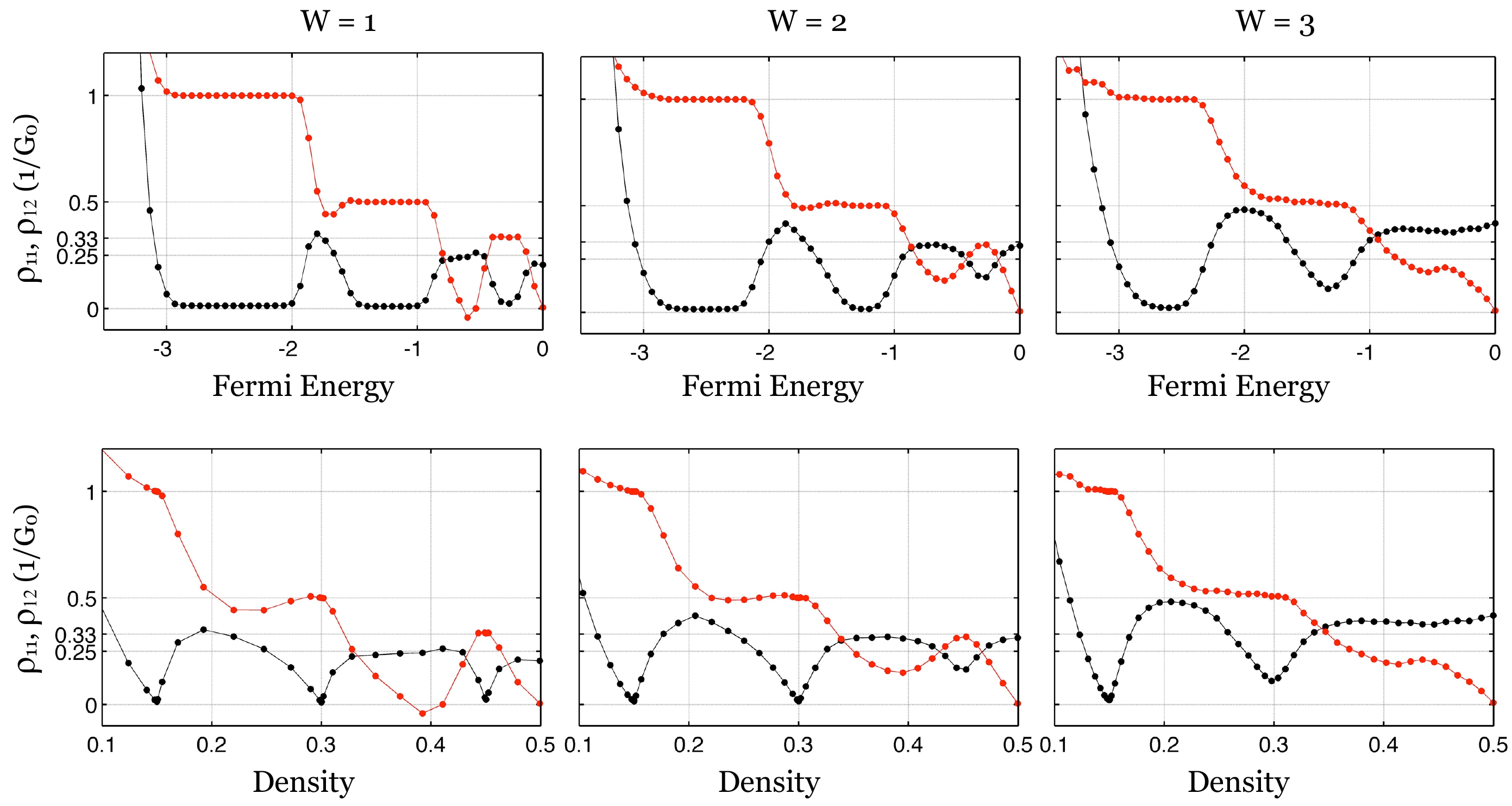}
  \caption{First row (Second row): The diagonal and the Hall resistivities as function of Fermi energy (density) at fixed magnetic flux $\phi=0.15\phi_0$, for $kT=1/\tau_r=0.025$ and disorder strengths $W=1$, 2, 3. For clarity, we show only the data obtained on the $120 \times 120$ lattice.}
 \label{EfDenComp100vs120BF0p15T0p025}
\end{figure}

We now present the actual simulation of the IQHE. The first row of Fig.~\ref{EfDenComp100vs120BF0p1T0p025} reports $\rho_{11}$ and $\rho_{12}$ as function of Fermi energy, with the magnetic flux hold fixed at $0.1\phi_0$ and for $kT=1/\tau_r=0.025$. The calculations were performed on the $100 \times 100$ lattice and then repeated on the $120 \times 120$ lattice, for three disorder strengths: $W=1$, 2 and 3. The most important observation here is that the data corresponding to the two lattice sizes overlap almost perfectly for all three disorder strengths, indicating a very good convergence of the results. Also, there are no visible fluctuations due to disorder. For $W=1$ and $W=2$ we can see four well defined and quantized Hall plateaus. The diagonal resistivity takes very small values in the region Hall plateau regions. For $W=3$, only the first two Hall plateaus display good quantization. The same data on the resistivity tensor is plotted in the second row of Fig.~\ref{EfDenComp100vs120BF0p1T0p025} as function of density $ n_e={\cal T}\left(1/(1+e^{(h-E_F)/kT})\right)$. Note that $n_e$ varies from 0 (no electrons) to 1 (one electron per lattice site). As we already pointed out, the quantized Hall plateaus are drastically reduced in this representation. The disorder seems to slightly widen the Hall plateaus, especially when one compares $W=1$ and $W=2$ plots, but we do not see the anticipated dramatic effects of the disorder. Calculations on even larger lattices (see the following discussion) makes us confident that the calculations shown in Fig.~\ref{EfDenComp100vs120BF0p1T0p025} are converged. Additional calculations will also show that there are no visible changes as the temperature and dissipation are lowered, so the narrow plateaus seen in our simulations must be a characteristic of the model at the particular value of the magnetic flux considered here. Looking back at Figs.~\ref{HallMap2} and \ref{HallMap3}, one can see an increase of the spreading of the Landau bands with the disorder when the magnetic flux approaches $0.5 \phi_0$, so we have repeated the calculations by fixing the magnetic flux at the larger value $\phi=0.15\phi_0$, in a hope to improve outcome.  The results are shown in Fig.~\ref{EfDenComp100vs120BF0p15T0p025} and, indeed, here we can see the first and second quantized Hall plateaus forming when the disorder is increased, even when the data is plotted as function of electron density. Still, we think there is a lot of work to be done until we will see a quantitative similarity with the experimental data. To our knowledge, this has been never achieved.

\begin{figure}
\center
  \includegraphics[width=10cm]{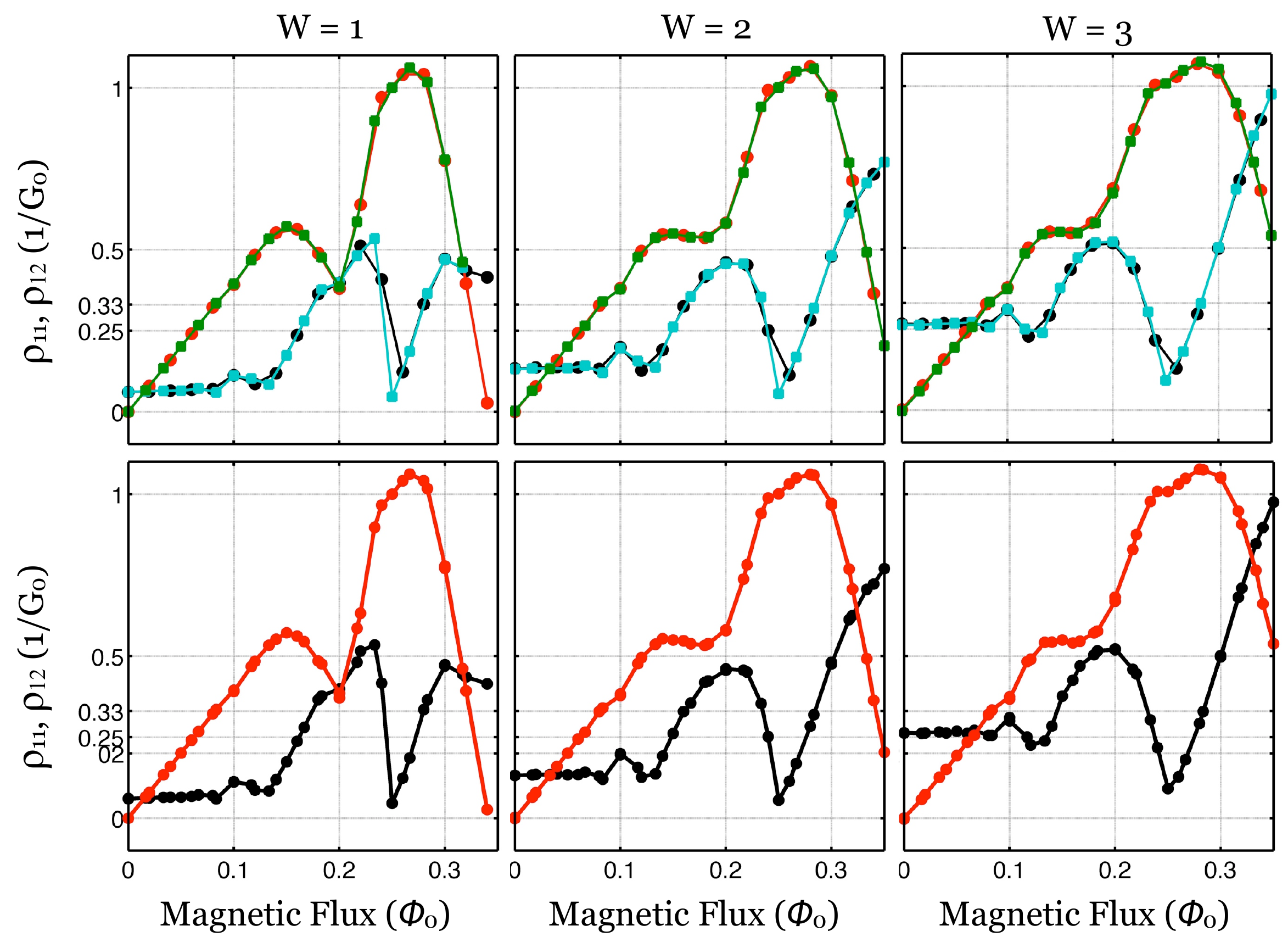}\\
  \caption{The diagonal and the Hall resistivities as function of magnetic flux, at $kT=1/\tau_r=0.1$, $n_e=0.25$ and disorder strengths $W=1$, 2, 3. The first row shows a comparison between the data obtained on the $100 \times 100$ lattice (circles) and on the $120 \times 120$ lattice (squares). The two sets of data shown in the first row were joined together and plotted in the second row.}
 \label{CompT0p1}
\end{figure}

Next, we fix the electron density and vary the magnetic flux. Fig.~\ref{CompT0p1} reports $\rho_{11}$ and $\rho_{12}$ as functions of the magnetic flux, with the electron density hold fixed at 0.25 and for $kT=1/\tau_r=0.1$. The calculations were performed on the $100 \times 100$ lattice and then it was repeated on the $120 \times 120$ lattice, for three disorder strengths: $W=1$, 2 and 3. The results are compared in the first row of Fig.~\ref{CompT0p1}. We see again a very good overlap between the data corresponding to the two lattice sizes, for all three disorder strengths, and there are no visible fluctuations due to disorder. For each $W$ values, we can then gather together the data obtained on the two lattices together and practically double the magnetic flux sampling values. The resulting data is plotted in the second row of Fig.~\ref{CompT0p1}, and we believe these graphs are quite accurate representation of the resistivity tensor of the model system, at this temperature and degree of dissipation. The features seen in these graphs, especially those for $W=2$ and 3, resemble qualitatively the experimental IQHE curves at higher temperatures, with the Hall plateaus just about to form and with the diagonal resistivity dipping in the plateau regions.

\begin{figure}
\center
  \includegraphics[width=10cm]{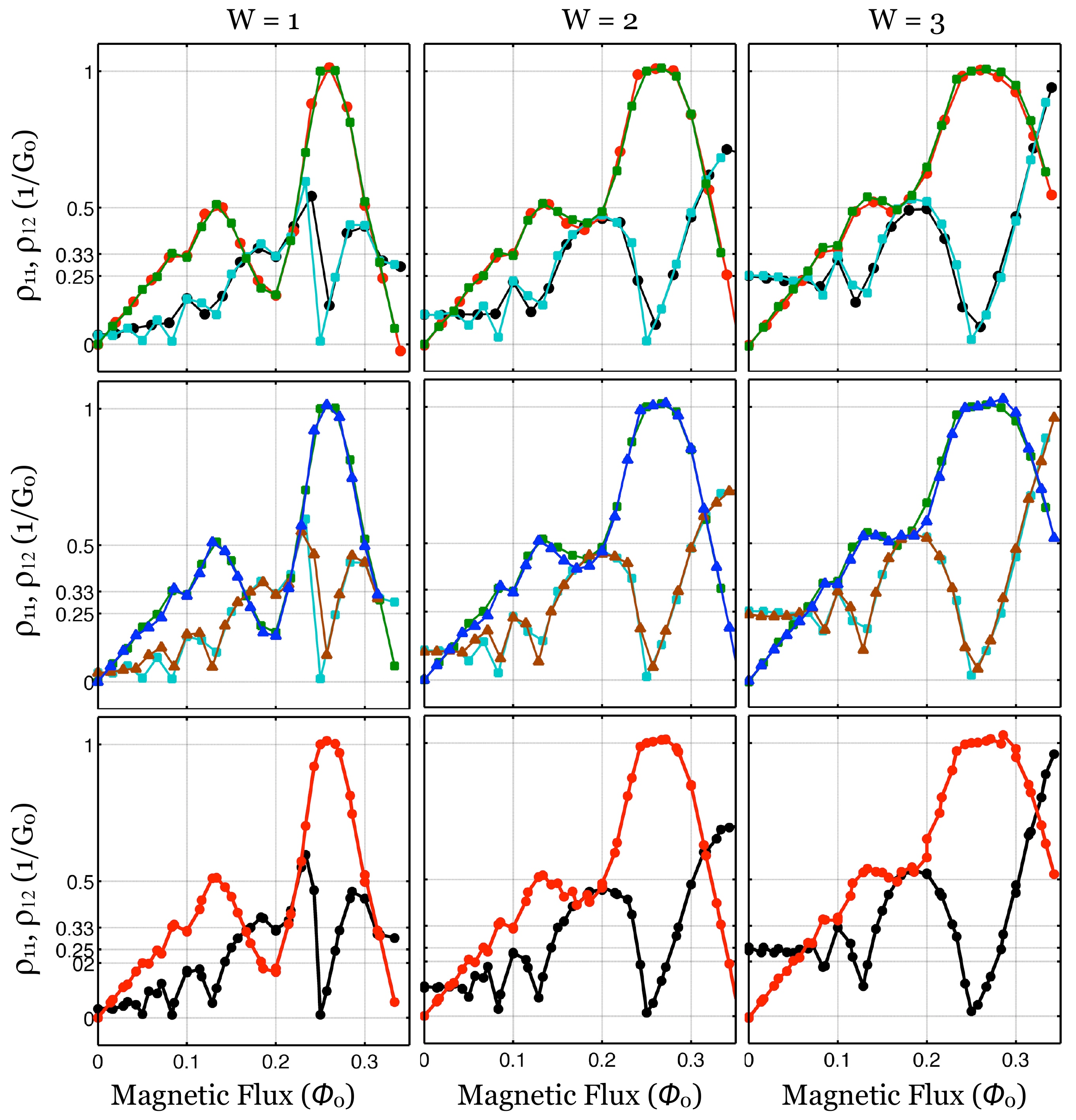}\\
  \caption{The diagonal and the Hall resistivities as function of magnetic flux, at $kT=1/\tau_r=0.025$, $n_e=0.25$ and disorder strengths $W=1$, 2, 3. The first row shows a comparison between the data obtained on the $100 \times 100$ lattice (circles) and on the $120 \times 120$ lattice (squares). The second row shows a comparison between the data obtained on the $120 \times 120$ lattice (squares) and on the $140 \times 140$ lattice (triangles).The two sets of data shown in the second row were joined together and plotted again in the third row.}
 \label{CompT0p025}
\end{figure}

In Fig.~\ref{CompT0p025}, the temperature and dissipation were reduced to $kT=1/\tau_r=0.025$. All the other parameters were held at the same values as above. The first row presents a comparison between the data obtained on the $100 \times 100$ and $120 \times 120$ lattices, and the second row presents a comparison between the data obtained on the $120 \times 120$ and $140 \times 140$ lattices. The overlap between the two sets of data is already very good in the first row, and it is almost perfect in the second row. The visible discrepancies are due to different sampling of the magnetic flux. We join together the data from the $120 \times 120$ and $140 \times 140$ lattices in order to increase the magnetic flux sampling values, and the plot of the resulting data is shown in the third row of Fig~\ref{CompT0p025}. Examining the Hall resistivity, we see the first Hall plateau to be very well defined and relatively flat, at all three disorder strengths. The first plateau is seen to widen out as the disorder strength is increased. The calculations also resolved higher order Hall plateaus. For $W=1$ and 2, we distinguish as many as five Hall plateaus. For all these five Hall plateaus (excepting the 4th one), the diagonal resistivity dips to very low values, as expected. The fifth and fourth Hall plateaus are washed out when disorder is increased to $W=3$.

\begin{figure}
\center
  \includegraphics[width=10cm]{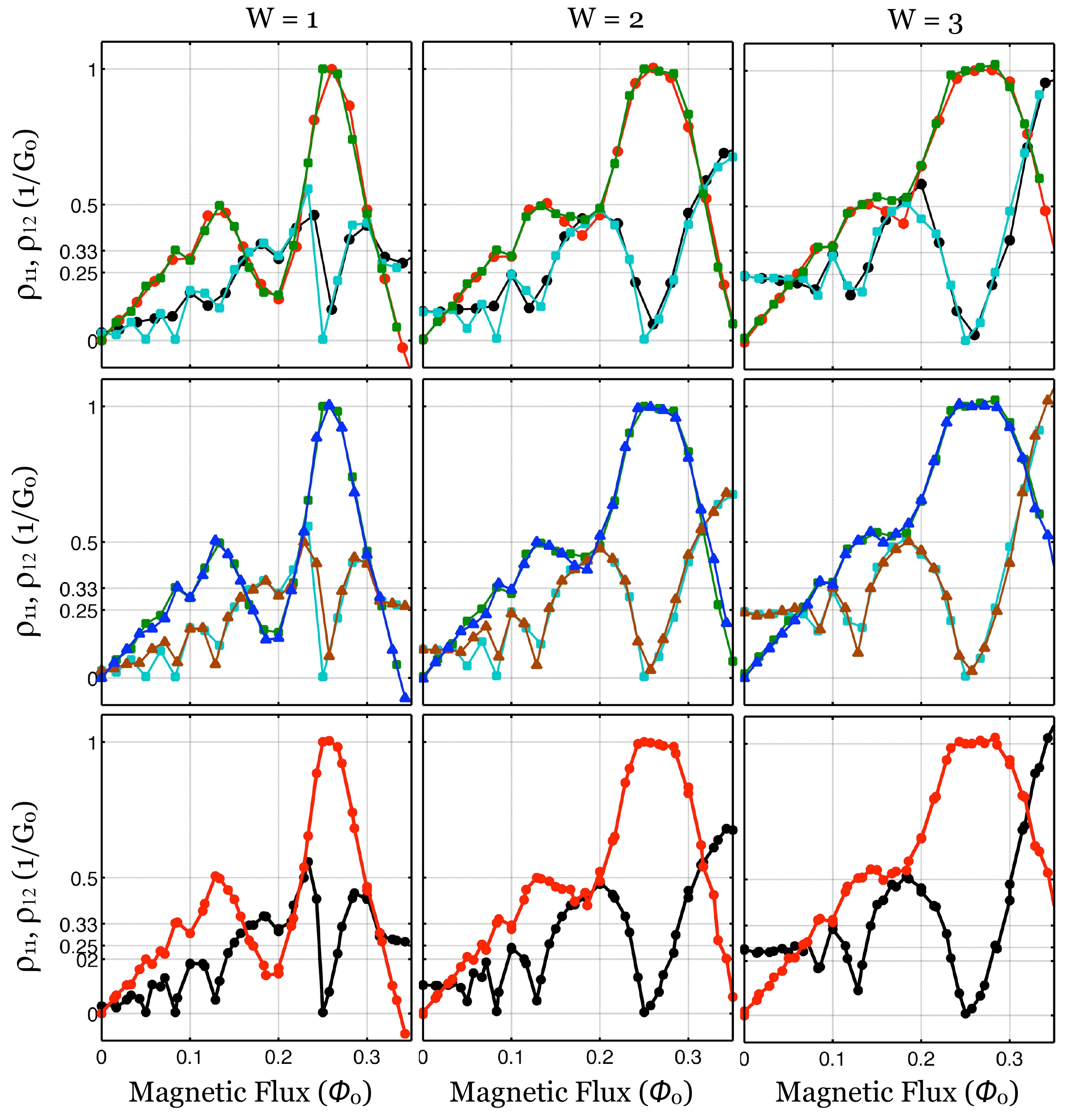}\\
  \caption{The diagonal and the Hall resistivities as function of magnetic flux, at $kT=1/\tau_r=0.001$ and disorder strengths $W=1$, 2, 3. The first row shows a comparison between the data obtained on the $100 \times 100$ lattice (circles) and on the $120 \times 120$ lattice (squares). The second row shows a comparison between the data obtained on the $120 \times 120$ lattice (squares) and on the $140 \times 140$ lattice (triangles).The two sets of data shown in the second row were joined together and plotted again in the third row.}
 \label{CompT0p01}
\end{figure}

Finally, in Fig.~\ref{CompT0p01}, the temperature and the dissipation were reduced even more to $kT=1/\tau_r=0.01$. The layout of this figure is the same as for Fig.~\ref{CompT0p025} and there are practically no major changes from the previous case, except that the first Hall plateau became sharper. In fact, a direct comparison shows a good quantitative similarity between the data at $kT=1/\tau_r=0.025$ and $kT=1/\tau_r=0.01$, when the graphs were visually inspected.    

\section{Conclusions}

The framework of the $C^*$-algebras and noncommutative calculus enabled us to develop a ``canonical" finite-volume approximation to the exact noncommutative Kubo formula with dissipation. This approximate formula is amenable on a computer. Within the same framework and for the relaxation time approximation for the dissipation effects, we were able to establish rigorous error bounds for this approximate formula, and to demonstrate that the errors vanish exponentially fast in the thermodynamic limit. Given the general upper bounds derived here for generic correlation functions, we believe that this last conclusion also applies to situations beyond the relaxation time approximation. The $C^*$-algebra formalism was absolutely instrumental for both aspects of our analysis, namely, for the derivation of the approximate Kubo formula and for the analysis of the errors. We believe that the formalism of $C^*$-algebras will also lead to better numerical implementations of the finite-volume approximate Kubo formula. 

Based on the abstract theoretical ideas, we devised a numerical algorithm for the computation of the conductivity tensor for disordered systems under magnetic fields. The main computational advantage of the algorithm is that it does not require twisted boundary conditions and integration over a Brillouin torus in order to achieve exponentially fast convergence for the thermodynamic limit. The time saved by not having to repeat the calculations for many different twisted boundary conditions was fully re-invested to increase the size of system, leading to a substantial boost of the simulation box over the existing simulations.   

We presented an application to the Integer Quantum Hall Effect. Working with a lattice model of a disordered 2-dimensional electron gas in perpendicular magnetic fields, we were able to compute the resistivity tensor on lattices containing as many as $140 \times 140$ sites, placing the calculations quite close to their thermodynamic limit for the temperatures and the degree of dissipation considered in our study. We computed the resistivity tensor while varying the Fermi energy (or equivalently the electron density) while holding the magnetic field constant, and as function of the magnetic flux while holding the electron density fixed. We were able to resolve as many as five Hall plateaus, with the first plateau appearing sharp and well quantized as the temperature and dissipation is reduced.  We could also witness how the first and second plateaus become wider and wider as the disorder strength is was being increased. To our knowledge, this is the first time such simulation results have been reported.

\ack
This research was supported by the U.S. NSF grants DMS-1066045 and DMR-1056168.

\bibliographystyle{spmpsci}

\end{document}